\documentclass[smallextended]{svjour3}       % onecolumn (second format)

\smartqed  % flush right qed marks, e.g. at end of proof

\usepackage{newtxtext,newtxmath}   % use Times fonts if available on your TeX system

\journalname{Journal}
\def\makeheadbox{} % ARXIV ONLY!
\usepackage{algorithm}
\usepackage[spaceRequire=false]{algpseudocodex}
\algnewcommand{\algorithmicassumption}{\textbf{Requirement:}}
\algnewcommand{\algorithmicsubprocedure}{\textbf{Subprocedure:}}
\makeatletter
\algnewcommand{\Assume}{%
  \algpx@endCodeCommand%
  \algpx@drawInItem{\algorithmicassumption}%
}
\algnewcommand{\Subprocedure}{%
  \algpx@endCodeCommand%
  \algpx@drawInItem{\algorithmicsubprocedure}%
}
\makeatother
\algnewcommand{\InlineIf}[2]{% single line if-then
  \algorithmicif\ #1\ \algorithmicthen\ #2}
\algnewcommand{\InlineElseIf}[2]{% single line else if-then
  \algorithmicelse\ \algorithmicif\ #1\ \algorithmicthen\ #2}
\algnewcommand{\InlineElse}[1]{% single line else
  \algorithmicelse\ #1}
\algnewcommand{\InlineIfElse}[3]{% single line if-then-else
  \algorithmicif\ #1\ \algorithmicthen\ #2\ \algorithmicelse\ #3}
\algnewcommand{\InlineFor}[2]{\algorithmicfor\ #1\ \algorithmicdo\ #2} % single line for loop
\algnewcommand{\MyComment}[1]{\Comment{\small #1}} % comment with small font
\algnewcommand{\algorithmicand}{\textbf{and}}
\algnewcommand{\algorithmicor}{\textbf{or}}
\algnewcommand{\FOR}{\algorithmicfor}
\algnewcommand{\OR}{\algorithmicor}
\algnewcommand{\AND}{\algorithmicand}
\algnewcommand{\IF}{\algorithmicif}
\algnewcommand{\THEN}{\algorithmicthen}
\algnewcommand{\ELSE}{\algorithmicelse}
\algnewcommand{\True}{\textsc{True}}
\algrenewcommand\textproc{\texttt}   %%% was textsc
\algrenewcommand\Call[2]{\nameref{#1}\ifthenelse{\equal{#2}{}}{}{(#2)}}
\newcommand{\algoName}[1]{Algorithm \nameref{#1}}
\makeatletter
  \newcommand{\StateX}[1]{%
    \setlength\@tempdima{\algorithmicindent}%
    \Statex\hskip\dimexpr#1\@tempdima\relax%
  }
  \newcommand{\algoCaptionLabel}[2]{
     \caption[\textproc{\upshape #1}]{\textproc{\upshape #1}\ifthenelse{\equal{#2}{}}{}{\((#2)\)}}%
     \NR@gettitle{\textproc{\upshape #1}}%
      \label{algo:#1}%
     }%
\makeatother
\usepackage[shortlabels]{enumitem}
\usepackage{bm}
\usepackage[skins]{tcolorbox} % 'skins' for boxed title
\tcbuselibrary{theorems}
\usetikzlibrary{shapes}
\usetikzlibrary{positioning}
\usetikzlibrary{arrows}

\usepackage{amsmath}

\usepackage[colorlinks,linkcolor=cyan,citecolor=cyan,urlcolor=cyan]{hyperref}
\makeatletter  % this is because of cleveref
\let\cl@chapter\undefined
\makeatletter
\usepackage[capitalise]{cleveref}
\newcommand{\hyptime}[1]{\cref{hyp:timepm}.\ref{#1}} % referencing an assumption of timepm
\newtheorem{assumption}{Assumption}

\newtcbtheorem[
  crefname={Problem}{Problems},
  Crefname={Problem}{Problems},
  ]
{problembox}{Problem}{%
  boxsep=0cm,%
  left=0.2cm, top=0.2cm, right=0.2cm, bottom=0.10cm,%
  enhanced,%
  attach boxed title to top left={xshift=0.2cm, yshift=-2mm},%
  colbacktitle=white, coltitle=black, fonttitle=\itshape, %
  colback=white, %
  boxed title style={ boxrule=0.04em, colframe=black, boxsep=0cm },%
}{pbm}
\newcommand{\problemName}[1]{Problem~\nameref{#1}} % reference to problem name
\newcommand{\bigO}[1]{\mathchoice{O\!\left(#1\right)}{O(#1)}{O(#1)}{O(#1)}} % big O for complexity
\newcommand{\cstmatmul}{\mu} % constant for 2 x 2 polynomial matrix multiplication
\newcommand{\cstpolmulrep}[1]{\pi_{#1}} % constant for multiplying polynomials f1, ..., fk by the same g
\newcommand{\cstgeneral}{C} % general letter for constants for relation problems
\newcommand{\timepm}[1]{\mathchoice{\mathsf{M}\!\left(#1\right)}{\mathsf{M}(#1)}{\mathsf{M}(#1)}{\mathsf{M}(#1)}} % time for polynomial multiplication
\newcommand{\timemp}[1]{\mathchoice{\mathsf{MP}\!\left(#1\right)}{\mathsf{MP}(#1)}{\mathsf{MP}(#1)}{\mathsf{MP}(#1)}} % time for polynomial middle product
\newcommand{\Cxity}[1]{\mathchoice{\mathsf{C}\!\left(#1\right)}{\mathsf{C}(#1)}{\mathsf{C}(#1)}{\mathsf{C}(#1)}} % time complexity
\makeatletter
\newcommand*{\rem}{%
  \nonscript\mskip-\medmuskip\mkern5mu%
  \mathbin{\operator@font rem}\penalty900\mkern5mu%
  \nonscript\mskip-\medmuskip
}
\makeatother
\DeclareMathOperator{\lcm}{lcm}
\newcommand{\lc}[1]{\operatorname{lc}(#1)}
\newcommand{\modulus}{\mathfrak{M}}
\newcommand{\bmodulus}{\bar{\mathfrak{M}}}
\newcommand{\nodulus}{\mathfrak{N}}
\newcommand{\valuation}[1]{\operatorname{val}(#1)} % valuation of a univariate polynomial
\newcommand{\abs}[1]{\operatorname{abs}(#1)}  % absolute value
\newcommand{\ZZp}{\mathbb{Z}_{>0}} % positive integers
\newcommand{\NN}{\mathbb{N}} % nonnegative integers
\newcommand{\ZZ}{\mathbb{Z}} % integers

\newcommand{\field}{\mathbb{K}}
\newcommand{\mKK}[2]{\field^{#1\times#2}}
\newcommand{\pR}{\field[x]}
\newcommand{\mpR}[2]{\pR^{#1\times#2}}
\newcommand{\relmod}[2]{\mathcal{R}_{#1}(#2)} % module of relations for #2 mod #1
\newcommand{\appmod}[2]{\mathcal{A}_{#1}({#2})} % module of approximants
\newcommand{\intmod}[2]{\mathcal{I}_{#1}({#2})} % module of interpolants
\newcommand{\apo}{a} % poly 1 in equation relmod a,b modulo M
\newcommand{\bpo}{b} % poly 2 in equation relmod a,b modulo M
\newcommand{\apor}{\alpha} % poly 1, residual version
\newcommand{\bpor}{\beta} % poly 2, residual version
\newcommand{\bapo}{\bar{a}} % poly 1, divided by gcd or reversed
\newcommand{\bbpo}{\bar{b}} % poly 2, divided by gcd or reversed
\newcommand{\bapor}{\bar{\alpha}} % poly 1, residual version, reversed
\newcommand{\bbpor}{\bar{\beta}} % poly 2, residual version, reversed
\newcommand{\ash}{s} % shift, entry 1
\newcommand{\ashr}{t} % shift, entry 1, residual version
\newcommand{\bshr}{\tau} % shift, entry 2, residual version
\newcommand{\aev}{a} % one evaluation of poly 1
\newcommand{\bev}{b} % one evaluation of poly 2
\newcommand{\aevs}{\bm{\aev}} % list of evaluations of poly 1
\newcommand{\bevs}{\bm{\bev}} % list of evaluations of poly
\newcommand{\arev}{\alpha} % one evaluation of poly res 1
\newcommand{\brev}{\beta} % one evaluation of poly res 2
\newcommand{\arevs}{\bm{\arev}} % list of evaluations of poly res 1
\newcommand{\brevs}{\bm{\brev}} % list of evaluations of poly res 2
\newcommand{\pg}{g} % poly gcd in xgcd
\newcommand{\pp}{p} % poly 1 in first row of relation basis
\newcommand{\pq}{q} % poly 2 in first row of relation basis
\newcommand{\num}{q} % numerator in rat recon
\newcommand{\den}{p} % denominator in rat recon
\newcommand{\dnum}{\delta_1} % degree bound on numerator in rat recon
\newcommand{\dden}{\delta_0} % degree bound on denominator in rat recon
\newcommand{\basis}{P} % basis matrix for modules (2 x 2 polynomial matrix)
\newcommand{\basisr}{Q} % basis matrix for modules (2 x 2 polynomial matrix), alternative
\newcommand{\ddegP}{\mathring{p}} % diagonal degree of basis P
\newcommand{\ddegQ}{\mathring{q}} % diagonal degree of basis P
\newcommand{\resbasis}{P}  % output basis
\newcommand{\firstbasis}{P}  % output basis of the first recursive call
\newcommand{\secondbasis}{Q}   % output basis of the second recursive call
\newcommand{\evbasis}{\bm{\basis}} % evaluations of basis
\newcommand{\evfirstbasis}{\bm{\firstbasis}} % evaluations of the first output basis
\newcommand{\evsecondbasis}{\bm{\secondbasis}} % evaluations of the second output basis
\newcommand{\lenP}{\delta} % length of basis P (deg(P) < lenP)
\newcommand{\dreca}{d_{1}}  % order for first recursive call in approx basis
\newcommand{\drecb}{d_{2}}  % order for second recursive call in approx basis
\newcommand{\dsyz}{d_{\mathrm{syz}}} % order for finding syzygy
\newcommand{\dred}{d_{\mathrm{red}}} % parameter of complexity for reduction
\newcommand{\rdeg}[2][]{\operatorname{rdeg}_{#1}(#2)}  % shifted row degree
\newcommand{\diag}[1]{\operatorname{Diag}(#1)} % diagonal matrix with diag entries #1
\newcommand{\point}{\omega} % one point
\newcommand{\points}{\bm{\point}} % list of points
\newcommand{\ratio}{r} % ratio of a geometric sequence
\newcommand{\bitrev}[1]{[#1]_N} % index in bit reversed sequence
\newcommand{\bitrevlen}[2]{[#1]_{#2}} % index in bit reversed sequence
\newcommand{\evaluate}{\textproc{evaluate}}
\newcommand{\interpolate}{\textproc{interpolate}}
\newcommand{\extrapolate}{\textproc{extrapolate}}
\newcommand{\dinp}{m} % nb of points in input
\newcommand{\dout}{n} % nb of points in output
\newcommand{\ptsinp}{\points_{\mathrm{in}}} % points in input
\newcommand{\ptsinprange}[1]{\points_{\mathrm{in},#1}} % points in input
\newcommand{\ptsout}{\points_{\mathrm{out}}} % points in output
\newcommand{\midprod}[1]{\textproc{MidProd}(#1)}  % middle product
\newcommand{\midprodBig}[1]{\textproc{MidProd}\left(#1\right)}  % middle product, big paren
\begin{document}

% NOTE [refs]
% Burgisser et al, sec 3.1: GCD, Euclidean remainder scheme, continued fraction expansion, genericity of degree pattern (n-m, 1, ..., 1, 0), half-gcd presentation
% Burgisser et al, sec 3.2: Strassen's analysis of the half-gcd (constant not given for complexity upper bound; but bound does involve degrees of individual quotients during computation through the entropy function)
%   -> Strassen. The computational complexity of continued fractions. In ACM Symposium on Symbolic and Algebraic Computation, pages 51-67, 1981
%   -> Strassen. The computational complexity of continued fractions. SIAM J. Comp. 12:1-27, 1983
% Burgisser et al, exo 3.2: Uniqueness of cofactors with strong degree bounds
% Burgisser et al, exo 3.7+3.8: direct link between XGCD and Pade

% NOTE other relevant refs:
% theorem 4.10 Gathen Gerhard
% book by Pan, Structured Matrices, page 40 et sqq: reductions PAdé / GCD (non X) / Berlekamp-Massey, link with structured matrices, Figure 2.2, ..

\title{Refined complexity bounds for rational reconstruction
and XGCD through Pad\'e approximants and Cauchy interpolants%\thanks{Grants or other notes
%about the article that should go on the front page should be
%placed here. General acknowledgments should be placed at the end of the article.}
}
% \subtitle{Do you have a subtitle?\\ If so, write it here}

\titlerunning{Refined complexity bounds for rational reconstruction, XGCD, approximants, and interpolants}        % if too long for running head

\author{Vincent Neiger     \and
        Mohab Safey El Din \and
        Kevin Tran
}

%\authorrunning{Short form of author list} % if too long for running head

\institute{Vincent Neiger \at
              Sorbonne Universit\'e, CNRS, LIP6,
              F-75005 Paris, 75252, France \\
              \email{vincent.neiger@lip6.fr}
           \and
           Mohab Safey El Din \at
              Sorbonne Universit\'e, CNRS, LIP6,
              F-75005 Paris, 75252, France \\
              \email{mohab.safey@lip6.fr}
           \and
           Kevin Tran \at
              Sorbonne Universit\'e, CNRS, LIP6,
              F-75005 Paris, 75252, France \\
              \email{kevin.tran@lip6.fr}
}

% \date{Received: date / Accepted: date}
\date{17 September 2026}
% The correct dates will be entered by the editor

\maketitle

\begin{abstract}
  When computing with univariate polynomials, two fundamental and related
  problems are the extended greatest common divisor (XGCD) and rational
  reconstruction, classically solved in quasi-linear complexity using the
  half-gcd algorithm. These problems have various applications in algebraic
  computations and bear strong connections to linearly recurrent sequences,
  structured matrices, and continued fractions.

  This article first gives a collection of algorithmic reductions, showing that
  rational reconstruction and XGCD can be solved via the computation of bases
  of relations modulo a freely-chosen polynomial \(\modulus(x)\). In
  particular, one recovers the folklore idea that bases of Pad\'e approximants
  (i.e., \(\modulus(x) = x^d\)) can be used to perform quasi-linear rational
  reconstruction or XGCD, extending to fast algorithms the well-known link
  between the Berlekamp-Massey algorithm and the extended Euclidean algorithm.
  One highlight of these reductions is that, instead of approximants, one may
  rather rely on Cauchy interpolants (i.e., \(\modulus(x)\) vanishes at chosen
  points).

  In a second part, this article describes divide-and-conquer algorithms for
  approximants and interpolants along with complexity analyses showing an
  explicit leading constant in front of the dominant term. For interpolants,
  the best leading constant is obtained through a variant that stores
  polynomials represented by evaluations, and exploits fast extrapolation in
  order to avoid repeated conversions to the monomial basis; this requires
  special points, in geometric or arithmetic progression, or FFT
  points when the base field allows them.

  Combining the analyses with the reductions leads to the best complexity
  bounds we are aware of for rational reconstruction and XGCD. Perhaps
  surprisingly, even Pad\'e approximants or Berlekamp-Massey-like computations,
  which intrinsically involve \(\modulus(x) = x^d\), are accelerated by
  reducing them to Cauchy interpolation at well-chosen points.
%Include keywords, PACS and mathematical
%subject classification numbers as needed.
%\keywords{First keyword \and Second keyword \and More}
% \PACS{PACS code1 \and PACS code2 \and more}
%\subclass{MSC code1 \and MSC code2 \and more}
\end{abstract}

\section{Introduction}
\label{sec:intro}

Amongst the most fundamental computational problems on univariate polynomials,
one finds rational reconstruction \cite[\S5.7]{GathenGerhard2013}, with
important special cases such as Pad\'e approximation and Cauchy interpolation
\cite[\S5.8 and \S5.9]{GathenGerhard2013}.

\begin{problembox}{RatRecon}{ratrecon}
  \emph{Input:}
     nonzero modulus \(\modulus \in \pR\) of degree \(d \in \NN\),
     polynomial \(\apo \in \pR\) of degree \(< d\),
     integers \(\delta_0\) and \(\delta_1\) both in \(\{0,\ldots,d\}\).

  \emph{Output:}
  nonzero pair \((\den, \num) \in \pR^2\) such that
  \(\den \apo = \num \bmod \modulus\),
  \(\deg(\den) \le \dden\) and \(\deg(\num) < \dnum\);
  or flag \textsc{None} if no such pair exists.
\end{problembox}

Here and hereafter, \(\field\) is a field and \(\pR\) is the ring of univariate
polynomials over \(\field\). The typical instance has \(\dnum = d - \dden\),
which ensures the existence of a solution. Pad\'e approximation corresponds to
\(\modulus = x^d\), while Cauchy interpolation is with \(\modulus = \prod_{0
\le i < d}(x - \point_i)\) for known pairwise distinct elements \(\points =
(\point_i)_{0 \le i < d}\) from \(\field\).

There are strong links between rational reconstruction and other core problems,
notably finding minimal generators for linearly recurrent sequences, computing
greatest common divisors, and solving structured linear systems with small
displacement rank \cite{SuKaHiNa75,BrentGustavsonYun1980,Dornstetter87}.
Well-known algorithms of quadratic complexity for these tasks include the
Berlekamp-Massey algorithm \cite{Berlekamp68,Massey69} and the extended
Euclidean Algorithm \cite[Sec.\,3]{GathenGerhard2013}. The latter computes the
greatest common divisor along with cofactors, thereby solving
\problemName{pbm:xgcd}. This was the first of these problems to be solved in
quasi-linear complexity, by the so-called half-gcd algorithm
\cite{Knu70,Sch71,Moe73}.
Quasi-linear algorithms have been described for rational reconstruction
problems as well
\cite{GustavsonYun1979,BrentGustavsonYun1980,BeckermannLabahn1994,BeckermannLabahn97}.

\begin{problembox}{XGCD}{xgcd}
  \emph{Input:}
  nonzero polynomials \(\apo, \bpo \in \pR \setminus \{0\}\) with \(\deg(\apo) < \deg(\bpo)\).

  \emph{Optional input:}
  integer bound \(\ell \in \NN\) such that \(\ell \le \deg(\gcd(\apo,\bpo))\),
  defaults to \(\ell=0\).

  \emph{Output:} the unique polynomials \(u,v,\pg\) in \(\pR\) with \(\pg\)
  monic such that
  {
    \setlength{\abovedisplayskip}{2pt} %% locally reduce vertical space before displayed equation
    \[
      u\apo + v\bpo = \pg = \gcd(\apo,\bpo),
      ~ \deg(u) < \deg(\bpo/\pg),
      ~\text{and } \deg(v) < \deg(\apo/\pg).
    \]
  }
\end{problembox}

%% NOTE concerning degree assumption. Imagine the only constraint was a and b nonzero.
% Since this problem is symmetric in \(\apo\) and
% \(\bpo\), we may assume \(\deg(\apo) \le \deg(\bpo)\). Up to performing a
% single step of the extended Euclidean algorithm if these degrees are equal
% (i.e., replacing \(\apo\) by \(\apo - \lc{\bpo}^{-1}\lc{\apo}\bpo\)), which has
% linear cost in that case, we further assume \(\deg(\apo) < \deg(\bpo)\).

The complexity of the half-gcd algorithm has attracted attention
\cite{Strassen1981,Strassen1983},
\cite[Sec.\,3.2]{BurgisserClausenShokrollahi2010}, \cite[Sec.\,11.1,
Pbm.\,11.11]{GathenGerhard2013}. For cryptographic purposes, a constant-time
version of the algorithm has been designed  \cite{BernsteinYang2019}. Recently,
refinements of the half-gcd algorithm have been described, along with a
thorough complexity analysis leading to a good estimate and understanding of
the leading constant in the asymptotic complexity bound \cite{vdHoeven2025}.
Following a similar path concerning Pad\'e approximation and Cauchy
interpolation on special points, the first set of contributions of this paper
presents optimized algorithms for these problems together with complexity
analyses that exhibit the leading constants (\cref{sec:approx,sec:interp},
outlined in \cref{sec:intro:approx_interp}). The second set of contributions
provides efficient algorithmic reductions implying that the latter leading
constants are also valid for the general case of \problemName{pbm:ratrecon} as
well as for \problemName{pbm:xgcd} (\cref{sec:rel_xgcd}, outlined in
\cref{sec:intro:rel_xgcd,sec:intro:approx_interp}).

\subsection{Bases of relations with minimal degrees}
\label{sec:intro:bases}

The above-mentioned algorithms for \cref{pbm:ratrecon,pbm:xgcd} manipulate,
implicitly or explicitly, polynomial matrices of size \(2 \times 2\). The
half-gcd algorithm computes a collection of matrices of determinant \(-1\),
more precisely of the form \([\begin{smallmatrix} 0 & 1 \\ 1 & -f
\end{smallmatrix}]\) where \(f \in \pR\) is a quotient arising at one step of
the Euclidean algorithm. Multiplying these matrices yields a
unimodular matrix which transforms the input polynomials into \(\pg =
\gcd(\apo,\bpo)\) and \(0\), such as
\(
[\begin{smallmatrix}
  u & v \\
  \bpo/\pg & -\apo/\pg
\end{smallmatrix}]
[\begin{smallmatrix}
  \apo \\ \bpo
\end{smallmatrix}]
=
[\begin{smallmatrix}
  \pg \\ 0
\end{smallmatrix}]
\).
On the other hand, algorithms for Pad\'e approximation compute a \(2\times 2\)
matrix, called approximant basis, whose determinant divides \(\modulus = x^d\)
and whose rows generate all solutions \((\den,\num)\) to the equation \(\den
\apo = \num \bmod x^d\) \cite{BrentGustavsonYun1980,BeckermannLabahn1994}. One
computes such a matrix with some degree minimality property so that at least
one of its rows satisfies the prescribed degree constraints \(\deg(\den) \le
\dden\) and \(\deg(\num) < \dnum\), unless no such solution exists.

Similarly, our algorithms will compute \(2 \times 2\) matrices which are bases
of some set of relations. For polynomials \(\modulus,\apo,\bpo \in \pR\) with
\(\modulus\) nonzero, we define the \(\pR\)-module
\begin{equation}
  \label{eqn:relmod}%
  \relmod{\modulus}{\apo,\bpo} = \{(p,q) \in \pR^2 \mid \apo \pp + \bpo \pq = 0 \bmod \modulus\}.
\end{equation}
Note that the relations of \problemName{pbm:ratrecon} correspond to \(\bpo =
-1\). We introduce some notation for two particular cases that play an
important role in this paper: the module of approximants \(\appmod{d}{\apo,\bpo}
= \relmod{\modulus}{\apo,\bpo}\), for a given approximation order \(d \in \NN\) and
where \(\modulus = x^d\); and the module of interpolants
\(\intmod{\points}{\apo,\bpo} = \relmod{\modulus}{\apo,\bpo}\), for points
\(\points  = (\point_0, \ldots, \point_{d-1})\in \field^{d}\) and where
\(\modulus = \prod_{i=0}^{d-1} (x-\point_i)\). We will focus on pairwise
distinct points \(\points\).

The module \(\relmod{\modulus}{\apo,\bpo}\) has rank \(2\), so that its bases
can be represented as matrices \(P \in \mpR{2}{2}\) whose rows generate all
relations of \(\relmod{\modulus}{\apo,\bpo}\) via \(\pR\)-linear combinations.
We call such a matrix a relation basis. To account for the degree constraints
in \(\deg(\den) \le \dden\) and \(\deg(\num) < \dnum\), we use the notion of
shifted weak Popov form \cite{BeckermannLabahnVillard1999}, recalled in
\cref{sec:prelim:polmat}. Roughly, given some integer \(\ash \in \ZZ\), a basis with
this form has rows of minimal degree after some shifting by \(\ash\). Targeting
the above degree constraints amounts to taking \(\ash = \dnum-\dden\).

This leads us to reformulate \problemName{pbm:ratrecon} into the following
problem, which is the one that our algorithms will actually solve, for some of
them with \(\bpo=-1\).

\begin{problembox}{RelBasis\textsubscript{2}}{rel}
  \emph{Input:}
  monic polynomial \(\modulus \in \pR\) of degree \(d \in \NN\),
  polynomials \((\apo,\bpo) \in \pR^2\) of degree less than \(d\),
  shift \(\ash \in \ZZ\).

  \emph{Output:} a basis of \(\relmod{\modulus}{\apo,\bpo}\) in \(\ash\)-weak Popov form.
\end{problembox}

\subsection{Complexity framework}
\label{sec:intro:complexity}

This paper studies the complexity of algebraic algorithms operating on
univariate polynomials. For a given algorithm, we are interested in finding an
asymptotic upper bound on the number of basic operations in the base field
\(\field\) it performs. We count at unit cost each of these operations, namely
addition, subtraction, multiplication, inversion, and testing equality.
Following \cite[Sec.\,8.3]{GathenGerhard2013}, we denote by \(\timepm{\cdot}\)
a time function for polynomial multiplication: two polynomials in \(\pR\) of
degree \(\le d\) can be multiplied using at most
\(\timepm{d}\) operations in \(\field\). One can always take \(\timepm{d} \in
\bigO{d \log(d) \log(\log(d))}\) \cite{CantorKaltofen1991}.

Most nontrivial algorithms we will rely on run
%% either in \(\bigO{d}\), such as polynomial addition or multiplication by a constant, or
in \(\bigO{\timepm{d}}\), such as inversion modulo \(x^d\) and polynomial
division with remainder. The half-gcd algorithm uses \(\bigO{\timepm{d}
\log(d)}\) field operations, and the variants of \problemName{pbm:ratrecon}
studied in this paper are also already solved within this complexity bound by
the above-referenced results. We aim to get refined bounds of the form \(\cstgeneral\,
\timepm{d} \log(d) + \bigO{\timepm{d}}\), with a leading constant \(\cstgeneral > 0\) which is as
low as possible, through careful complexity analyses combined with algorithmic
optimizations and reductions.

We will determine such leading constants for three classical contexts. The
first one is general: we make no assumption on \(\field\). More details on our
complexity framework in this case are described in
\cref{sec:approx:polmul_midprod}. The second context is when \(\field\)
possesses an element \(\ratio\) of order at least \(d\), which allows us to
exploit fast evaluation and interpolation at the points from the geometric
progression \(\points = (\point_0 \ratio^i)_{0 \le i < d}\); points in arithmetic
progression \(\points = (\point_0 + i\ratio)_{0 \le i < d}\) are also
considered in our algorithms, although to a lesser extent. The third context
is when the radix-2 fast Fourier transform (FFT) is feasible, that is,
\(\field\) possesses a principal \(d\)-th root of unity \(\ratio\) where \(d\)
is a power of \(2\). We will simply say that \(\field\) supports geometric
progressions or supports FFT. More details on these latter two contexts are
given in \cref{sec:interp:eval_interp_extrap}. While the second
context remains fairly general, since it roughly covers all ``sufficiently
large'' fields \(\field\), the third one is quite more restrictive. It is
nevertheless interesting, notably for multimodular methods where one works in
fields \(\ZZ/p\ZZ\) with \(p\) that can be chosen to be an FFT prime.

For each of these contexts concerning \(\field\), we give two different
complexity leading constants. One is for arbitrary input, using at every stage
of the analysis the worst-case degree bounds available for the manipulated
polynomials. The other constant is lower, by exploiting better bounds that are
valid for input that satisfies some genericity property discussed in
\cref{app:approx:genericity}, akin to the so-called normal case for
\nameref{pbm:xgcd} about inputs for which the sequence of degrees of the
remainders produced by the Euclidean algorithm decreases by exactly one at each
iteration after the first one \cite[Sec.\,3.1]{BurgisserClausenShokrollahi2010}
\cite{GathenGerhard2013,vdHoeven2025}. Note that our algorithms handle
arbitrary input, and it is only in the analysis of their complexity that we
distinguish these two cases. For input that does not meet the genericity
requirements but where most of the computations in the algorithm behave as in
the generic case, we expect the actual leading constant to sit somewhere
between the two extremes we report.

We let \(\cstmatmul > 0\) be a constant such that two matrices in
\(\mpR{2}{2}\) of degree at most \(d\) can be multiplied using
\(\cstmatmul\timepm{d} + \bigO{d}\) operations in \(\field\). The naive
algorithm yields \(\cstmatmul = 8\), and one can take \(\cstmatmul = 7\) using
Strassen's algorithm or \(\cstmatmul = 4\) when FFT techniques are used for
multiplication. Note that the latter does not require the specific radix-2 FFT
to be feasible in \(\field\). Similarly, for a fixed \(k \ge 1\), we let
\(\cstpolmulrep{k} > 0\) be a constant such that one can perform the \(k\) multiplications \(p_1 q,
\ldots, p_k q\) in \(\cstpolmulrep{k}\timepm{d} + \bigO{d}\) operations in
\(\field\), where the \(p_i\)'s and \(q\) are polynomials in \(\pR\) of degree
\(\le d\). One obviously has \(\cstpolmulrep{k} \le k\); furthermore, one can
take \(\cstpolmulrep{k} = (2k+1)/3\) when FFT techniques are available.

\subsection{Algorithmic reductions for relation reconstruction and XGCD}
\label{sec:intro:rel_xgcd}

A common strategy of the quasi-linear half-gcd and Pad\'e approximation
algorithms, at the core of their improvement over quadratic algorithms, is to
leverage fast polynomial multiplication through a divide and conquer approach.
For input polynomials of degree \(\le d\), this is done by splitting each of
them in two polynomials of degree \(\le d/2\), and solving two subproblems with
half the input size. These subproblems are not independent: the input of the
second one, often called residual, is built from the output of the first
subproblem. One has two choices: handle the high degrees first, as usually done
in the half-gcd algorithm, or handle the low degrees first, as usually done for
Pad\'e approximation. Roughly, going one direction or the other can be
simulated by reversing the order of input or output coefficients, which is one
main intuition behind the links between the different problems and algorithms
\cite[Sec.\,4]{BrentGustavsonYun1980} \cite{Dornstetter87}.

In this paper, we describe reductions that provide a broad understanding of
the computational links between \cref{pbm:ratrecon,pbm:xgcd,,pbm:rel}. They
allow us to solve both \problemName{pbm:xgcd} and the general case of
\problemName{pbm:ratrecon} with arbitrary modulus \(\modulus\) by solving an
instance of \problemName{pbm:rel} with another modulus \(\nodulus\) of our
choice. The only restriction is that \(\deg(\nodulus)\) must be sufficiently
large: the required lower bound, denoted by \(\dred\) below, is roughly between
\(0\) and \(2d\) depending on the original instance.
One can thus pick \(\nodulus\) that leads to the most efficient solutions for
\problemName{pbm:rel}. For example, one can solve \Call{pbm:xgcd}{} or find
relations modulo any \(\modulus\) by relying on approximant bases (i.e.,
\(\nodulus=x^{\dred}\)), which concurs with the known links mentioned in the
previous paragraph.

What is a new observation, to the best of our knowledge, is that one can also
rely on interpolant bases, i.e., use \(\nodulus = \prod_{i=0}^{\dred-1}
(x-\point_i)\) for points \((\point_i)_i\) of our choice. As we will see,
combined with our optimized algorithms for interpolant bases, this approach
yields the best leading constant we are aware of, including for
\problemName{pbm:xgcd}. Furthermore, this sheds a light on the relative
difficulty of instances with different moduli, as highlighted by the following
summary of some interesting consequences of our reductions:

\begin{itemize}[noitemsep,topsep=2pt]
  \item The leading constant for \problemName{pbm:ratrecon} is not larger than
    the one for \problemName{pbm:rel} with a modulus of our choice of degree
    \(\ge d + \dden - \dnum\).

  \item The leading constant for \problemName{pbm:xgcd} is not larger than
    that for \problemName{pbm:rel} with a modulus of our choice of degree
    \(\ge 2(\deg(\apo) - 1 - \ell)\).

  \item To obtain a rational reconstruction or an XGCD algorithm of complexity
    \(\bigO{\timepm{d}}\), it is sufficient to design an algorithm of
    complexity \(\bigO{\timepm{d}}\) for \problemName{pbm:rel} with modulus
    \(\prod_{i=0}^{d-1} (x-\point_i)\), where the points \((\point_i)_i\) can
    be chosen freely. This includes choosing points in arithmetic or geometric
    progression, or FFT points, if the field \(\field\) allows them.
\end{itemize}
The third item answers a question that arises naturally from the knowledge
that the problems of multipoint evaluation and interpolation are solved in
\(\bigO{\timepm{d}\log(d)}\) for general points, but in \(\bigO{\timepm{d}}\)
for points in geometric progression. If one manages to remove the \(\log(d)\)
factor for Cauchy interpolation at points in geometric progression, this would
directly imply that general points can be handled in \(\bigO{\timepm{d}}\); in
addition, \problemName{pbm:xgcd} would then be solved in \(\bigO{\timepm{d}}\)
too.
The first two items are detailed in \cref{thm:rel,cor:xgcd} and proved in
\cref{sec:rel_xgcd}.

\begin{theorem}
  \label{thm:rel}%
  Let \(\modulus \in \pR\) be monic of degree \(d \in \NN\), let \(\apo \in
  \pR_{<d}\), let \(\ash \in \ZZ\), and define \(\dred = 2\deg(\apo)
  - d - \ash - 1 + \min(0,\ash+d)\), which satisfies \(\dred < \deg(\apo) - \ash\).
  \begin{itemize}[noitemsep,topsep=2pt]
    \item If \(\dred \le 0\), one can solve \(\Call{pbm:rel}{\modulus,\apo,-1,\ash}\)
      using \(\bigO{\timepm{d}}\) operations in \(\field\).
    \item Assume \(\dred > 0\). Let \(\nodulus \in \pR\) have degree \(\delta
      \ge \dred\) and be either \(\nodulus = x^{\delta}\) or such that
      \(\valuation{\nodulus} = 0\). There are polynomials \(\bapo,
      \bbpo \in \pR_{<\delta}\) which can be computed in
      \(\bigO{\timepm{d}}\) operations in \(\field\) and are such that from the
      output of \(\Call{pbm:rel}{\nodulus,\bapo,\bbpo,\delta-\dred}\)
      one can solve \(\Call{pbm:rel}{\modulus,\apo,-1,\ash}\)
      using \(\bigO{\timepm{d}}\) operations in \(\field\).
  \end{itemize}
\end{theorem}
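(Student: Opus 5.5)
The plan is to turn the reconstruction modulo \(\modulus\) into a problem on the \emph{high-degree part} of the relation, as in the half-gcd algorithm, and then pose that high-degree problem modulo a modulus of our choice. I will use the row-degree and \(\ash\)-weak Popov conventions of \cref{sec:prelim:polmat}, and check the precise index bookkeeping (placement of the shift, \(<\) versus \(\le\)) against them.

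The starting observation is the following. For any \((\den,\num)\in\relmod{\modulus}{\apo,-1}\) there is a unique \(k\in\pR\) with \(\den\apo-k\modulus=\num\). Since \(\deg(\apo)<d\), we have \(\deg(k)<\deg(\den)\), and \(k\) is determined by \(\den\) whenever \(\deg(\num)<d\).

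The rows \((1,\apo)\) and \((0,\modulus)\) always form a basis of \(\relmod{\modulus}{\apo,-1}\), since their determinant is \(\modulus\). I will first show that its \(\ash\)-weak Popov form can only have a row of degree deviating from these two trivial candidates when \(\deg(\apo)-\ash\) is large compared with \(d\). Quantifying this gives exactly the threshold \(\dred\). The claimed inequality \(\dred<\deg(\apo)-\ash\) follows from \(\deg(\apo)\le d-1\) and \(\min(0,\ash+d)\le 0\).

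\textbf{Case \(\dred\le 0\).} I will check directly from the degree bounds that one of the following holds:
\begin{itemize}[noitemsep,topsep=2pt]
  \item \([\begin{smallmatrix}1&\apo\\0&\modulus\end{smallmatrix}]\) is already in \(\ash\)-weak Popov form;
  \item it becomes so after one division step \(\modulus=f\apo+r\), i.e.\ after replacing a row by a combination involving the quotient \(f\), or after swapping the rows.
\end{itemize}
Either way the cost is \(\bigO{\timepm{d}}\). The case analysis splits on the sign of \(\ash+d\), which is why \(\min(0,\ash+d)\) appears in \(\dred\).

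\textbf{Case \(\dred>0\).} I will encode the relation through the pair \((\den,k)\) rather than \((\den,\num)\). The condition on \(\num=\den\apo-k\modulus\) is a degree bound, which says that the top coefficients of \(\den\apo\) and \(k\modulus\) cancel. Reversing the polynomials, with reversal degrees chosen so that \(\deg(\den)\le\dred\) is the relevant range, turns this into a low-order condition
\[
\bar\den\,\mathrm{rev}(\apo)-\bar k\,\mathrm{rev}(\modulus)\equiv 0 \bmod x^{\dred},
\]
where \(\mathrm{rev}(\modulus)\) is invertible modulo \(x^{\dred}\) because \(\modulus\) is monic.

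For \(\nodulus=x^{\delta}\), I take \(\bapo,\bbpo\) to be suitable truncations of \(x^{\delta-\dred}\mathrm{rev}(\apo)\) and \(-\mathrm{rev}(\modulus)\) modulo \(x^\delta\). This padding explains the shift \(\delta-\dred\) in the reduced problem. The output basis \(\basis\) of \(\relmod{\nodulus}{\bapo,\bbpo}\) then yields candidates \((\den,k)\) by reversing its rows. From each candidate I recover \(\num=\den\apo-k\modulus\), which costs \(\bigO{\timepm{d}}\). Finally I will show that a final fix-up, consisting of at most one reduction against the row \((0,\modulus)\) and one normalization as in the first case, gives an \(\ash\)-weak Popov basis of \(\relmod{\modulus}{\apo,-1}\). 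Proving this requires a degree-minimality argument: since \(\deg\det\basis\le\delta\) and the minimal rows have degree below \(\dred\), no spurious rows appear.

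\textbf{Main obstacle: general \(\nodulus\) with \(\valuation{\nodulus}=0\).} Here the congruence modulo \(x^\delta\) is not available. My first attempt is to rewrite the high-part cancellation as an \emph{exact} identity \(\bar\den\bar\apo+\bar k\bar\bpo=x^{e}\bar\num\) with \(\deg\bar\num\) small. I would then use that the degree-constrained solutions of such an identity are exactly the low-degree elements of a rank-2 module. That module is determined by the truncated power series \(\mathrm{rev}(\apo)/\mathrm{rev}(\modulus)\bmod x^{\dred}\), i.e.\ by a rational function with denominator coprime to \(x\).

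Since \(\gcd(\nodulus,x)=1\), I would set \(\bapo=h\) and \(\bbpo=-1\), where \(h\) is the image modulo \(\nodulus\) of the rational function represented by the truncated series. Concretely, \(h\) is computed by reconstructing that truncated series, and I expect this to be where a genuinely different choice of \(\bapo,\bbpo\) is needed. I would then argue that the relations modulo \(\nodulus\) of degree below \(\delta/2\) coincide with those modulo \(x^{\delta}\) for the same low-degree data, using a uniqueness argument in the style of Padé approximation.

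If this route fails, the fallback is to split \(\apo\) so that only \(\dred\) coefficients matter, and reduce the polynomial identity directly modulo \(\nodulus\). Making this rigorous, while keeping every preparation step within \(\bigO{\timepm{d}}\), is the step I expect to require the most care.
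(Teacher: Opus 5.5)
\textbf{Where your proposal matches the paper.} Your treatment of \(\dred\le 0\) is in line with the paper: either the Hermite basis \([\begin{smallmatrix}1&\apo\\0&\modulus\end{smallmatrix}]\) already works, or one division step fixes it. The same holds for \(\nodulus=x^{\delta}\): you reverse to a \(0\)-shift approximant problem, then pad to order \(\delta\) with shift \(\delta-\dred\). One detail is missing there. To land at order exactly \(\dred\) rather than \(d-1-\ash\), you should first perform the division \(\modulus=q\apo+r\) and pass to \(\relmod{\apo}{r,-1}\) with shift \(\ash+d-\deg(\apo)\) (\cref{prop:handle_dega}). Very negative shifts must also be clipped to \(-d\), which is where \(\min(0,\ash+d)\) comes from.

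\textbf{The gap: the case \(\valuation{\nodulus}=0\).} This case is the substance of the theorem, and your route for it fails. Suppose \(h\) is obtained by reconstructing a low-degree rational function from the truncated series. That reconstruction is a Pad\'e approximation at order \(\dred\), i.e.\ the very problem being reduced. This is circular, and it is not in \(\bigO{\timepm{d}}\). Suppose instead \(h\) is the image modulo \(\nodulus\) of \(\mathrm{rev}(\apo)/\mathrm{rev}(\modulus)\) itself. Then \(\relmod{\nodulus}{h,-1}\) depends on all coefficients of \(\apo\) and \(\modulus\), not on the truncated series, and \(\mathrm{rev}(\modulus)\) need not even be invertible modulo \(\nodulus\). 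Finally, a Pad\'e-style uniqueness argument can at best pin down one relation of degree below \(\delta/2\). The output must be a full basis in weak Popov form, whose diagonal degrees sum to \(\delta\). So at least one row lies outside the range where uniqueness applies, and bases modulo \(x^{\delta}\) and modulo \(\nodulus\) do not coincide.

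\textbf{The missing idea: an exact change-of-modulus identity} (\cref{prop:ratrecon_reduction_niceshift}). It requires coprime moduli of equal degree and a nonnegative shift. Those requirements are exactly why one pads to order \(\delta\) with shift \(\delta-\dred\ge 0\) (\cref{lem:approx_increase_order}).
\begin{itemize}
\item Let \(\appmod{\dred}{\apo_2,-1}\) be the \(0\)-shift approximant instance obtained after reversal, and set \(\apo_3=x^{\delta-\dred}\apo_2\).
\item Take \(\bapo=\apo_3\nodulus\rem x^{\delta}\) and \(\bbpo=x^{\delta}-\nodulus\), both of degree less than \(\delta\).
\item Given a \((\delta-\dred)\)-weak Popov basis \(\basis=[p_{ij}]\) of \(\relmod{\nodulus}{\bapo,\bbpo}\), add to the second entry of row \(i\) the exact quotient \(\alpha_i=(p_{i0}\bapo+p_{i1}\bbpo)/\nodulus\).
\end{itemize}
The new rows lie in \(\appmod{\delta}{\apo_3,-1}\). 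Indeed, modulo \(x^{\delta}\) one has \(\bapo\equiv\apo_3\nodulus\) and \(\bbpo\equiv-\nodulus\), and \(\nodulus\) is invertible modulo \(x^{\delta}\).

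The new matrix is a basis. Since \(\bbpo\) is coprime with \(\nodulus\), one has \(\det(\basis)=\lambda\nodulus\). Cramer's rule then gives \(\det[\begin{smallmatrix}p_{00}&\alpha_0\\p_{10}&\alpha_1\end{smallmatrix}]=\lambda\bbpo\). So the new determinant is \(\lambda\nodulus+\lambda(x^{\delta}-\nodulus)=\lambda x^{\delta}\).

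The weak Popov form is preserved. Because \(\deg(\alpha_i)<\max(\deg p_{i0},\deg p_{i1})\), the first row keeps its pivot. Because the shift is nonnegative, the second row keeps its pivot as well. Computing \(\bapo,\bbpo\) and the quotients, and undoing the padding, the reversal and the division, all cost \(\bigO{\timepm{d}}\).
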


\begin{corollary}
  \label{cor:xgcd}%
  Let \(\apo, \bpo \in \pR \setminus \{0\}\) with \(\deg(\apo) < \deg(\bpo)\)
  and let \(\ell \in \NN\) with \(\ell \le \deg(\gcd(\apo,\bpo))\).
  Let \(\dred %% = 2\deg(\apo) - \deg(\bpo) - 1 - \ell + \deg(\bpo) - 1 - \ell
  = 2(\deg(\apo) - 1 - \ell) \in \NN \cup \{-2\}\).
  If \(\dred < 0\), then the solution to \cref{pbm:xgcd} is \((u,v,\pg) =
  (\lc{\apo}^{-1}, 0, \lc{\apo}^{-1}\apo)\).
  Assume \(\dred \ge 0\). Let \(\nodulus \in \pR\) have degree \(\delta \ge
  \dred\) and be either \(\nodulus = x^{\delta}\) or such that
  \(\valuation{\nodulus} = 0\). There are polynomials \(\bar{\apo}, \bar{\bpo}
  \in \pR_{<\delta}\) which can be computed in \(\bigO{\timepm{\deg(\bpo)}}\)
  operations in \(\field\) and are such that
  from the output of \(\Call{pbm:rel}{\nodulus,\bapo,\bbpo,\delta-\dred}\)
  one can solve \cref{pbm:xgcd} using \(\bigO{\timepm{\deg(\bpo)}}\)
  operations in \(\field\).
\end{corollary}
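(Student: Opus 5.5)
The plan is to derive the corollary from \cref{thm:rel}, applied to the rational-reconstruction-type relations modulo \(\bpo\). First, the degenerate case. If \(\dred < 0\), then \(\deg(\apo) \le \ell\). Since also \(\ell \le \deg(\gcd(\apo,\bpo)) \le \deg(\apo)\), we get \(\deg(\gcd(\apo,\bpo)) = \deg(\apo)\), so \(\apo\) divides \(\bpo\). Hence \(\pg = \lc{\apo}^{-1}\apo\), \(u = \lc{\apo}^{-1}\), \(v = 0\). The degree constraints \(\deg(u) = 0 < \deg(\bpo/\pg)\) and \(\deg(v) < \deg(\apo/\pg) = 0\), with the convention \(\deg 0 = -\infty\), hold because \(\deg(\apo) < \deg(\bpo)\). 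Uniqueness is the standard one for cofactors with these degree bounds.

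For \(\dred \ge 0\), set \(\modulus = \lc{\bpo}^{-1}\bpo\), which is monic of degree \(d = \deg(\bpo)\) and satisfies \(\apo \in \pR_{<d}\). Every relation \((\pp,\pq) \in \relmod{\modulus}{\apo,-1}\) means \(\pp\apo \equiv \pq \bmod \bpo\), hence \(\pg \mid \pq\). Two particular relations are \((u,\pg)\) and \((\bpo/\pg, 0)\). I choose the shift so that the quantity \(\dred\) of \cref{thm:rel} coincides with \(2(\deg(\apo)-1-\ell)\). Taking \(\ash = 2\ell + 1 - d\) gives \(\ash + d = 2\ell+1 > 0\), so \(\min(0,\ash+d) = 0\) and
\[
2\deg(\apo) - d - \ash - 1 = 2(\deg(\apo) - 1 - \ell).
\]
Since \(\dred \ge 0\), either \(\dred > 0\) and the second item of \cref{thm:rel} applies verbatim, producing \(\bapo,\bbpo \in \pR_{<\delta}\) in \(\bigO{\timepm{d}}\) operations; or \(\dred = 0\), where the first item solves the problem directly. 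Only the first situation actually needs the call \(\Call{pbm:rel}{\nodulus,\bapo,\bbpo,\delta-\dred}\), and \(\Call{pbm:rel}{\modulus,\apo,-1,\ash}\) is then recovered in \(\bigO{\timepm{d}}\) operations.

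It remains to extract \((u,v,\pg)\) from an \(\ash\)-weak Popov basis \(\basis\) of \(\relmod{\modulus}{\apo,-1}\) in \(\bigO{\timepm{d}}\) operations. This is the main obstacle: I must show that some row of \(\basis\) is a nonzero scalar multiple of \((u,\pg)\). The argument goes as follows. The shift \(\ash\) weights the degree of the \(\pq\)-entry against the \(\pp\)-entry, so \(\ash = 2\ell+1-d\) is the shift of rational reconstruction with bounds \(\deg(\den) \le \dden\), \(\deg(\num) < \dnum\) and \(\dnum - \dden = \ash\). The relation \((u,\pg)\) has \(\deg(u) < d - \deg(\pg)\) and \(\deg(\pg) \ge \ell\), while any relation with \(\pq = 0\) is a multiple of \((\bpo/\pg,0)\). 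Using \(\ell \le \deg(\pg)\), I will check that \((u,\pg)\) has \(\ash\)-degree determined by its \(\pq\)-entry (pivot on \(\pq\)) and that \((\bpo/\pg,0)\) has pivot on \(\pp\). A \(2\times2\) weak Popov basis has one row with each pivot and minimal \(\ash\)-degrees. Comparing determinants, \(\det(\basis) \sim \bpo\) up to a constant, because \(\relmod{\modulus}{\apo,-1} = \pR(\bpo/\pg,0) + \pR(u,\pg)\) and the matrix formed by these two generators has determinant \((\bpo/\pg)\pg = \bpo\). This forces the \(\pq\)-pivot row to be \(c\,(u,\pg)\) for some \(c \in \field\setminus\{0\}\). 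If the pivot analysis is delicate at the boundary \(\deg(\pg) = \ell\), I would fall back on the uniqueness of the cofactor \(u\) under \(\deg(u) < \deg(\bpo/\pg)\): any other relation with \(\pq\)-entry of degree \(\deg(\pg)\) and small \(\pp\) equals \(c\,(u,\pg)\).

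Given that row \((\pp,\pq) = c\,(u,\pg)\), I normalize by \(\lc{\pq}^{-1}\) to obtain \(u\) and the monic \(\pg\). Then \(v = (\pg - u\apo)/\bpo\) is an exact division, costing \(\bigO{\timepm{\deg(\bpo)}}\), and the bound \(\deg(v) < \deg(\apo/\pg)\) follows from the degree bounds on \(u\) and \(\pg\). All costs are \(\bigO{\timepm{\deg(\bpo)}}\) since \(d = \deg(\bpo)\).
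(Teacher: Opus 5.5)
\textbf{Gap in the extraction step.} Your reduction to \cref{thm:rel} with \(\modulus=\lc{\bpo}^{-1}\bpo\) and \(\ash=2\ell+1-d\) is the paper's, and so is your treatment of \(\dred<0\). The step that fails is the claim that the row of an \(\ash\)-weak Popov basis with pivot on the right is a scalar multiple of \((u,\pg)\).

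Membership in the module, together with minimality, only forces that row to be \((\beta u+\alpha\,\bpo/\pg,\ \beta\pg)\) with \(\beta\in\field\setminus\{0\}\) and \(\alpha\in\pR\). The weak Popov condition gives only \(\deg(p_{10})\le\deg(\pg)-\ash=\deg(\pg)+d-2\ell-1\). This bound is at least \(\deg(\bpo/\pg)\) exactly when \(\deg(\pg)>\ell\), and then \(\alpha\) need not vanish.

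Concretely, suppose \(\deg(\pg)>\ell\). Consider
\[
\begin{bmatrix} \bpo/(\lc{\bpo}\pg) & 0 \\ u+\bpo/(\lc{\bpo}\pg) & \pg \end{bmatrix}.
\]
This matrix is \([\begin{smallmatrix}1&0\\1&1\end{smallmatrix}]\) times the Hermite basis of \cref{lem:xgcd_to_rel}, so it is a basis of \(\relmod{\bpo}{\apo,-1}\). Its second row has \(\ash\)-degree \(\max(2\ell+1-\deg(\pg),\deg(\pg))=\deg(\pg)\), reached only on the right. Hence the matrix is an \(\ash\)-weak Popov basis, yet neither row is a multiple of \((u,\pg)\).

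Since \cref{pbm:rel} only promises \emph{some} \(\ash\)-weak Popov basis, this situation cannot be excluded. Moreover \(\ell\) is merely a lower bound on \(\deg(\pg)\), defaulting to \(0\), so this is the typical case and not a corner case. You placed the difficulty at the wrong boundary: \(\deg(\pg)=\ell\) is precisely where your argument does work. Your fallback on uniqueness of the cofactor also does not help, because it needs \(\deg(p_{10})<\deg(\bpo/\pg)\), which the weak Popov form does not guarantee.

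\textbf{How to close the gap.} The missing idea is a normalization of the second row. You can first read off \(\pg=p_{11}/\lc{p_{11}}\), obtain \(\bpo/\pg\) by exact division, and set \(u=(\lc{p_{11}}^{-1}p_{10})\rem(\bpo/\pg)\). Alternatively, as the paper does, convert the weak Popov basis into its \(\ash\)-Popov form with \cref{lem:weak_to_popov_2x2}, at cost \(\bigO{\timepm{d}}\). By uniqueness of the shifted Popov form, that form is the Hermite basis of \cref{lem:xgcd_to_rel}, which is \(\ash\)-Popov for every \(\ell\le\deg(\pg)\); you then read \(u\) and \(\pg\) off its second row. With this step added, your computation of \(v=(\pg-u\apo)/\bpo\) by exact division completes the proof.
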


When applying \cref{thm:rel} to solve \problemName{pbm:ratrecon}, by taking
\(\ash = \dnum-\dden\) as mentioned above, it is sufficient to choose
\(\nodulus\) such that \(\deg(\nodulus) \ge \deg(\apo) + \dden - \dnum\), where
typically \(\deg(\apo) \approx d\). For the frequent case where one seeks a
solution with uniform degree constraints \(\dden = \dnum\), one can take
\(\nodulus\) of degree \(\deg(\apo)\). In the extreme case with \(\dden = d-1\)
and \(\dnum = 1\), the modulus degree is increased from \(\deg(\modulus) = d\)
to \(\deg(\nodulus) \approx \deg(\apo) + d\), which means a growth factor
around \(2\) when \(\deg(\apo) \approx d\). This is consistent with the factor
\(2\) in \cref{cor:xgcd}: for these degree constraints, \nameref{pbm:ratrecon}
corresponds to modular inversion \(\den \apo = 1 \bmod \modulus\), typically
done via \nameref{pbm:xgcd}.
It is however not satisfactory to have this factor \(2\) increase
when \(\modulus = x^d\), since it is known how to perform
modular inversion in \(\bigO{\timepm{d}}\) by a Newton iteration. Following
this observation, in \cref{sec:rel_xgcd:app_to_modN} we prove more generally
that for \(\modulus = x^d\) and arbitrary constraints
\((\dden,\dnum)\), one can take \(\nodulus\) of degree \(d - \abs{\dnum -
\dden} \le d\), where \(\abs{\cdot}\) is the absolute value.

To summarize, these reductions show that both Problems \nameref{pbm:xgcd} and
\nameref{pbm:ratrecon} can be solved through the computation of relation bases
with a modulus of our choice. Consequently, complexity bounds for these two
problems are readily deduced from those established in
\cref{sec:approx,sec:interp} for approximant and interpolant bases, and
discussed below in \cref{sec:intro:approx_interp}. The reductions and the
resulting leading constants are summarized in
\cref{fig:constants_via_reductions}. Prior to this work, the best leading
constants found in the literature for quasi-linear \nameref{pbm:xgcd} and
\nameref{pbm:ratrecon} were obtained through variants of the half-gcd
algorithm, and are summarized in \cref{fig:tabhgcd}.

\begin{figure}[htb]
  \centering
  \begin{tikzpicture}
    \node[draw] (relbasisgen) {
        \nameref{pbm:rel}, \(\relmod{\nodulus}{\apo,\bpo}\)
        for any \(\nodulus\) with \(\deg(\nodulus) \ge \dred\)
      };

    \node[below=1.5cm of relbasisgen.south west, draw, text width=0.29\textwidth] (relbasis) {
        \nameref{pbm:rel}, \(\relmod{\modulus}{\apo,-1}\) \\
        \(\dred \le \deg(\apo) - \ash\)
      };
    \draw[-stealth] (relbasis) --
      node[left] {\small \cref{thm:rel}, \cref{algo:RelationBasis-via-modN}}
        (relbasisgen);

    \node[below=2.0cm of relbasis.south west, xshift=1.0cm, draw, text width=0.25\textwidth] (ratrecon) {
        \nameref{pbm:ratrecon}, general \(\modulus\) \strut \\
        \(\dred \le \deg(\apo) + \dden - \dnum\)
      };
    \draw[-stealth] (ratrecon) --
      node[left] {\small \(\ash = \dnum-\dden\)}
        (relbasis);

    \node[below=2.0cm of relbasis.south east, xshift=1.0cm, draw, text width=0.29\textwidth] (xgcd) {
        \nameref{pbm:xgcd}\((\apo,\bpo)\) \\
        \(\dred \le 2(\deg(\apo) - 1 - \ell)\)
      };
    \draw[-stealth] (xgcd) --
      node[right, xshift=0.5cm, yshift=-0.3cm, text width=0.35\textwidth] {
        \small
        \cref{cor:xgcd}, \cref{sec:rel_xgcd:xgcd_to_rel} \\
        \(\modulus = \bpo, \ash = 2\ell-d+1\) \\
      }
      (relbasis);

    \node[right=4.5cm of relbasis, draw, text width=0.22\textwidth] (appbasis) {
        \nameref{pbm:app}, \(\appmod{d}{\apo,\bpo}\) \\
        \(\dred \le d - \abs{\ash}\)
      };
    \draw[-stealth] (appbasis) --
      node[above, text width=0.33\textwidth] {
        \small 
        \cref{sec:rel_xgcd:app_to_modN} \\
        \(\modulus = x^{d-v}\), \(v \in \{\valuation{\apo}, \valuation{\bpo}\}\) \\
      }
      node[below, text width=0.33\textwidth] {
        \small
        \(\apo \gets (x^{-v} \bpo)^{-1} \apo \rem x^{d-v}\) \\
        ~~ or \((x^{-v} \apo)^{-1} \bpo \rem x^{d-v}\) \\
      }
      (relbasis);

    \node[below=2.0cm of appbasis, draw, text width=0.24\textwidth] (ratreconxd) {
       \nameref{pbm:ratrecon}, \(\modulus=x^d\) \\
       \(\dred \le d - \abs{\dnum-\dden}\)
      };
    \draw[-stealth] (ratreconxd) --
      node[left,text width=0.11\textwidth] {\small \(\ash = \dnum-\dden\) \\ \(\bpo=-1\) \\ }
        (appbasis);
  \end{tikzpicture}

  \bigskip

  {\small
  \setlength{\tabcolsep}{4pt}
  \begin{tabular}{|c||c|c||c|c|c||c|c|c|}
    \hline
    field & \multicolumn{2}{c||}{\(\field\) is arbitrary}
                           & \multicolumn{3}{c||}{\(\field\) supports geometric}
                           & \multicolumn{3}{c|}{\(\field\) supports FFT}
    \\ \hline \hline
    case & worst & generic & worst & generic & persp. & worst & generic & persp. \\ \hline
    \rule{0pt}{1.3\normalbaselineskip} % top strut
    \rule[-10pt]{0pt}{0pt} % bottom strut
    constant & \(\dfrac{7}{2}\) & \(\min\left(\dfrac{\cstmatmul+6}{4}, \dfrac{\cstmatmul}{2}\right)\) & \(3\) & \(\dfrac{5\cstpolmulrep{4}}{8}\) & \(\dfrac{3\cstpolmulrep{3} + 2\cstpolmulrep{2}}{8}\) & \(1\) & \(\dfrac{5}{6}\) & \(\dfrac{13}{24}\) \\ 
    \rule[-10pt]{0pt}{0pt} % bottom strut
             &                  & \(\in [2, \frac{7}{2}] \) &       & \(\in [\frac{15}{8}, \frac{5}{2}]\) & \(\in [\frac{31}{24}, \frac{13}{8}]\) &       &                  &                    \\ 
    \hline
    approach & \multicolumn{2}{c||}{approximants}
                           & \multicolumn{3}{c||}{interpolants}
                           & \multicolumn{3}{c|}{interpolants}
    \\ \hline
    section & \multicolumn{2}{c||}{\ref{sec:approx:pmbasis}}
            & \multicolumn{2}{c|}{\ref{sec:interp:eval_intbasis:arith_geom}}
            & \ref{sec:interp:eval_intbasis_opti} 
            & \multicolumn{2}{c|}{\ref{sec:interp:eval_intbasis:fft}}
            & \ref{sec:interp:eval_intbasis_opti} \\ \hline
  \end{tabular}
  }

  \caption{The diagram shows reductions between different problems and
    instances, with for each of them an upper bound on the degree \(\dred\) of the
    modulus \(\nodulus\) in the target instance of relation bases. The
    reduction itself costs \(\bigO{\timepm{d}}\) operations in \(\field\),
    where \(d = \deg(\modulus)\) for all problems except \(d = \deg(\bpo)\) for
    XGCD. The bottom table shows resulting values of the leading constant \(\cstgeneral\)
    in the overall complexity bound \(\cstgeneral\, \timepm{\lceil \dred/2 \rceil}
    \log_2(\lceil \dred/2 \rceil) + \bigO{\timepm{d}}\), for different
  choices of \(\nodulus\). For the leading constants in the ``perspective''
  columns, we sketch the algorithmic approach but do not provide a full
study, in particular concerning genericity aspects.}
  \label{fig:constants_via_reductions}
\end{figure}

\begin{figure}[htb]
  \centering
  \begin{tabular}{|c||c|c||c|c||c|c|}
    \hline
    field & \multicolumn{2}{c||}{\(\field\) is arbitrary} &
    \multicolumn{2}{c||}{\(\field\) supports \(\cstmatmul=4\)}
                           & \multicolumn{2}{c|}{\(\field\) supports FFT}
    \\ \hline \hline 
    case & worst & generic & worst & generic & worst & generic \\ \hline
    \cite[Chap.\,11]{GathenGerhard2013} & \(22\) & \(10\) & -- & -- & -- & -- \\ \hline
    \cite[Chap.\,8]{Mateer2008} & \(19/2\) & \(5\) & -- & -- & \(15/4\) & \(2\) \\ \hline
    \cite{vdHoeven2025} & \(17/4\) & \(7/2\) & \(11/4\) & \(2\) & \(19/12\) & \(4/3\) \\ \hline
  \end{tabular}
  \caption{Leading constants in complexity bounds found in the literature about
  variants and analyses of the half-gcd algorithm.}%
  \label{fig:tabhgcd}
\end{figure}

% NOTE [refs]
% Mateer's work (see page 242 of Master's thesis)
% -> constant  19/2 (general) or 5 (generic), cf equations 8.48 to 8.51 p225
% -> constant 7.5/2 (FFT, general) or 2 (FFT, generic), cf equations 8.58 to 8.61 p230
% here, we gather M_M(2n) + A_M(2n) as being the number of operations for one degree-n multiplication
% and we exploit the 1/2 factor that arises from assumption M(n) / nlog(n) increases

\subsection{Algorithms for approximant and interpolant bases, and leading constants}
\label{sec:intro:approx_interp}

Thanks to the above reductions, we may restrict our attention to solving instances of
\cref{pbm:rel} with ``favorable'' moduli, that is, ones that help in the design
of fast algorithms. We focus on such instances where the modulus \(\modulus\)
splits into known linear factors: this property allows one to use Beckermann
and Labahn's recursive approach
\cite{BeckermannLabahn1994,BeckermannLabahn97}, which leads to quasi-linear
complexity when it is carried out in a divide and conquer fashion. More
precisely we study the cases of approximant bases and interpolant bases,
detailed in \cref{pbm:app,pbm:int}, and we design and analyze variants of this
approach specialized to our setting with two polynomials in input.

\begin{problembox}{AppBasis\textsubscript{2}}{app}
  \emph{Input:}
  integer \(d \in \NN\),
  polynomials \((\apo,\bpo) \in \pR_{<d}^2\),
  shift \(\ash \in \ZZ\).

  \emph{Output:} a basis of \(\appmod{d}{\apo,\bpo}\) in \(\ash\)-weak Popov form.
\end{problembox}

\begin{problembox}{IntBasis\textsubscript{2}}{int}
  \emph{Input:}
  pairwise distinct points \(\points \in \field^d\) with \(d\in\NN\),
  polynomials \((\apo,\bpo) \in \pR_{<d}^2\),
  shift \(\ash \in \ZZ\).

  \emph{Output:} a basis of \(\intmod{\points}{\apo,\bpo}\) in \(\ash\)-weak Popov form.
\end{problembox}

The main idea behind Beckermann and Labahn's recursion is summarized in
\cref{lem:dnc_relbas}, and its dominant computational steps are as follows:
\begin{itemize}[noitemsep,topsep=2pt]
  \item \emph{compute the residual:} use the basis obtained from the first
    recursive call to update the input polynomials, yielding the input for the
    second recursive call;
  \item \emph{multiply bases:} compute the product of the two bases obtained
    from the recursive calls.
\end{itemize}
These operations take several forms according to the tackled problem
(\cref{pbm:app} or \cref{pbm:int}) and to the designed algorithm, as outlined
in the next paragraphs. Their complexity depends on the degrees of the
polynomial entries in the computed bases, which are \(2\times 2\) matrices over
\(\pR\). All our analyses are based on the same upper bounds for these degrees.
Instead of using the general bound \(d\) for the degree of each of the four
entries, we use four individual degree bounds and observe and exploit the fact
that the sum of these bounds is always at most \(2d\), which is lower than the
naive bound by a factor of \(2\). Although considering heterogeneous degree bounds in
the bases prevents one from applying some multiplication techniques with
\(\cstmatmul<8\), this still proved beneficial for the worst-case analysis
compared to using the same naive bound \(d\) for all entries. For generic
input, the four entries all have similar degree close to \(d/2\), which
makes both steps a bit faster and leads to better leading constants.

For the original version of fast approximant bases
\cite{BeckermannLabahn1994,GiorgiJeannerodVillard2003}, the residual is
computed via middle products and the second step uses standard \(2\times 2\)
matrix multiplication with leading constant either \(\cstmatmul=8\) in general
due to heterogeneous degree bounds, or \(\cstmatmul \le 7\) for generic input.
These are the same operations that are used in the half-gcd algorithm, since
one can also exploit middle products there \cite{vdHoeven2025}. Our analysis
(\cref{sec:approx:pmbasis}) yields a leading constant of
\(\min(\frac{\cstmatmul+6}{4}, \frac{\mu}{2})\), which is \(7/2\) in general,
and \(2\) for generic input in the most favorable
case \(\cstmatmul=4\). Depending on the cases, these constants for
\cref{pbm:app} either match or are slightly lower than the best known ones
from \cite{vdHoeven2025} for \cref{pbm:xgcd}.
In \cref{sec:approx:appbasis}, we augment this classical approximant basis
algorithm with optimizations to handle possible valuations of the input
polynomials and to handle large shifts. As summarized in the next theorem, this
does not impact the above leading constant, but ensures that the efficiency is
sensitive to the shift \(\ash\), to the degrees of the input \(\apo\) and
\(\bpo\), and also to the degree their GCD (still without requiring the latter
degree to be known a priori).

\begin{theorem}
  \label{thm:appbasis:cx}%
  Let \(d \in \mathbb{N}\), \(\ash \in \mathbb{Z}\), and \((\apo,\bpo) \in
  \pR_{<d}^2\). If \(\apo=0\) or \(\bpo=0\), \cref{algo:AppBasis2} returns the
  \(\ash\)-Popov basis of \(\appmod{d}{\apo,\bpo}\) using \(\bigO{d}\)
  operations in \(\field\). Now suppose \(\apo\) and \(\bpo\) are both nonzero.
  Let \(\dreca = \min(d,\abs{\ash})\) be the minimum of the
  order and of the amplitude of the shift. Let \(n = \deg(\apo)\) and \(m =
  \deg(\bpo)\), and consider the positive integer
  \[
    \dsyz = \min(d, \;\; \max(m+n, 2n - \ash, 2m + \ash) + 1 - 2\deg(\gcd(\apo,\bpo))).
  \]
  Then, for \(\bar{d} = \max(0,\dsyz-\dreca)\), \cref{algo:AppBasis2}
  computes an \(\ash\)-weak Popov basis of \(\appmod{d}{\apo,\bpo}\) using
  \[
    {\textstyle\min(\frac{\cstmatmul+6}{4}, \frac{\mu}{2})} \, \timepm{\lceil \bar{d}/2 \rceil}\log_2(\lceil \bar{d}/2 \rceil)
    + \bigO{\timepm{d}}
  \]
  operations in \(\field\), where \(\cstmatmul = 8\) in the general case, and
  \(\cstmatmul \le 7\) is a constant for \(2\times 2\) polynomial matrix
  multiplication in the case where \((\apo \rem x^{\dsyz},\bpo \rem
  x^{\dsyz})\) satisfies the genericity condition of
  \cref{cor:approx_generic_mindeg}, that is,
  \(\bar\Delta_{\dsyz,\ash}\) does not vanish at the first \(\dsyz\)
  coefficients of \(\apo\) and \(\bpo\).
\end{theorem}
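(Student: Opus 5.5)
The proof follows the structure of \cref{algo:AppBasis2}. We treat the trivial and reduction stages first and isolate the divide-and-conquer core, whose analysis supplies the leading constant.

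\emph{Degenerate case.} If \(\apo=0\), the module \(\appmod{d}{0,\bpo}\) is generated by \((1,0)\) and \((0,x^{d-\valuation{\bpo}})\). Indeed, \(\bpo\pq = 0 \bmod x^d\) holds if and only if \(x^{d-\valuation{\bpo}}\) divides \(\pq\). This diagonal matrix is already in \(\ash\)-Popov form, and computing \(\valuation{\bpo}\) takes \(\bigO{d}\) operations. The case \(\bpo=0\) is symmetric.

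\emph{Preprocessing stages.} Assume both polynomials are nonzero. First remove valuations: let \(v=\min(\valuation{\apo},\valuation{\bpo})\) and divide both polynomials by \(x^v\). This replaces order \(d\) by \(d-v\), and a basis for the original module is recovered by multiplying the appropriate column by \(x^{v}\) or keeping rows; this costs \(\bigO{d}\). Next, handle the large shift. Suppose \(\ash \ge 0\), the other sign being symmetric. The first \(\dreca=\min(d,\abs{\ash})\) orders can be processed by a closed-form step: with \(\ash\) large, the minimal shifted relation is obtained via a truncated power-series division \(\apo/\bpo \bmod x^{\dreca}\), a Newton inversion costing \(\bigO{\timepm{d}}\). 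This gives a basis of simple triangular shape, and one then updates the residual for the remaining orders. By \cref{lem:dnc_relbas}, the product of this basis with a weak Popov basis of the residual problem, under the updated shift, is an \(\ash\)-weak Popov basis. The update and final product cost \(\bigO{\timepm{d}}\) because one factor has a simple structure.

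\emph{Early termination at \(\dsyz\).} Once the order exceeds \(\dsyz\), the computed basis contains the syzygy \((\bpo/\pg,-\apo/\pg)\) up to a scalar, where \(\pg=\gcd(\apo,\bpo)\). This follows from the degree minimality of \(\ash\)-weak Popov bases: the pivot degrees sum to at most the order, and the shifted degree of the syzygy, \(\max(m+n,2n-\ash,2m+\ash)/2 - \deg\pg\) roughly, falls below the other row's shifted degree once the order reaches \(\dsyz\). Hence the basis for order \(d\) is obtained from the basis for order \(\dsyz\) by an \(\bigO{\timepm{d}}\) completion step that raises the other row to the full order. The key lemma to prove is that, at order \(\dsyz\), the row whose leading position differs from that of the syzygy has sufficient shifted degree that the syzygy row must appear. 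I expect this to be the main obstacle, since it requires a careful degree-counting argument together with the \(\ash\)-weak Popov pivot characterization.

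\emph{Core cost.} The remaining work is the classical divide-and-conquer approximant basis computation on order \(\bar d=\max(0,\dsyz-\dreca)\). Write \(T(\bar d)\) for its cost. Two recursive calls on order \(\bar d/2\) are combined with a residual computation, done as middle products, and a \(2\times 2\) basis product. The paper's analysis uses individual degree bounds whose sum is at most \(2\cdot\) the order. In the worst case, the middle products and schoolbook product cost about \(\frac{\cstmatmul+6}{4}\cdot 2\timepm{\bar d/4}\)-type terms per level. Alternatively, one uses uniform degree bounds with \(\cstmatmul\)-multiplication and the middle-product transposition principle to obtain \(\frac{\cstmatmul}{2}\). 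Taking the minimum and unrolling \(T(\bar d)=2T(\bar d/2)+c\,\timepm{\bar d/2}+\bigO{\bar d}\) gives \(c\,\timepm{\lceil\bar d/2\rceil}\log_2\lceil\bar d/2\rceil+\bigO{\timepm{d}}\). This step uses superlinearity of \(\timepm{\cdot}\).

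\emph{Generic case.} For \(\cstmatmul<8\), the genericity condition of \cref{cor:approx_generic_mindeg} forces all intermediate bases to have balanced degrees near half the order. This allows a fast \(2\times2\) product with constant \(\cstmatmul\) at every level.
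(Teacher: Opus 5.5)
\textbf{Main gap: the early-termination step.} \Cref{algo:AppBasis2} has no step that stops at order \(\dsyz\) and then completes, and it cannot have one. The quantity \(\dsyz\) depends on \(\deg(\gcd(\apo,\bpo))\), which the algorithm does not know. After the shift step it calls \cref{algo:AppBasis2-rec} at order \(d-\dreca\). That recursion always splits the order into \(\lfloor\cdot/2\rfloor\) and \(\lceil\cdot/2\rceil\), at points unrelated to \(\dsyz\).

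So ``classical divide and conquer on order \(\bar d\) plus one \(\bigO{\timepm{d}}\) completion'' is the cost of a hypothetical algorithm that knows \(\dsyz\), not of the stated one. Nothing in your argument uses the valuation branch \emph{inside} the recursion. That branch is exactly what makes \cref{algo:AppBasis2-rec} gcd-sensitive; without it, the main term would involve \(d\) rather than \(\dsyz\).

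The missing content is \cref{prop:appbasis:gcdsensitive}, which works as follows.
\begin{itemize}
\item Once the total order processed reaches \(\dsyz\), \cref{lem:kernel_approx} puts the syzygy into the current first basis, so its residual vanishes identically.
\item Every later call then takes the branch of \cref{lem:appbas:opti:valuation}. It returns \([\begin{smallmatrix} x^j & 0\\ 0 & 1\end{smallmatrix}]\) or \([\begin{smallmatrix} 1 & 0\\ 0 & x^j\end{smallmatrix}]\) with no arithmetic, and the product with it is free.
\item The actual recursion tree must then be accounted for. The nontrivial work is a chain of ``normal'' calls at orders \(e_1,\dots,e_r\), produced by repeated halving, with \(e_1+\dots+e_r\le\dsyz+1\). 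By superlinearity of the bounds in \cref{prop:pmbasis:cx_worst_case,prop:pmbasis:cx_generic}, these cost at most one call at order about \(\dsyz\).
\item On top of that there is one residual computation per level, summing to \(\bigO{\timepm{d}}\), and one nontrivial basis product per level, summing to \(\bigO{\timepm{\dsyz}}\).
\end{itemize}
Applying this to the residual call of \cref{prop:appbasis:shiftreduction}, where the syzygy appears at local order \(\dsyz-\dreca\), yields \(\bar d\). The syzygy lemma you single out as the main obstacle is already \cref{lem:kernel_approx}; the real work is this tree accounting.

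\textbf{Smaller points.}
\begin{itemize}
\item Dividing both inputs by \(x^{\min(\valuation{\apo},\valuation{\bpo})}\) can leave, say, \(\valuation{\bpo}>0\) with \(\ash>0\). Then \(\apo\bpo^{-1}\bmod x^{\dreca}\) is undefined. \Cref{lem:appbas:opti:shift} needs the polynomial being inverted to have zero valuation. The algorithm guarantees this by reducing until both valuations vanish, updating the shift by \(v_\bpo-v_\apo\).
\item Unrolling \(T(\bar d)=2T(\bar d/2)+c\,\timepm{\bar d/2}+\bigO{\bar d}\) gives leading constant \(c/2\), not \(c\). This requires \cref{lem:nondecr_cst_factor_general}, hence \hyptime{hyp:timepm:nondecr}; superlinearity alone only gives \(c\). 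The true per-level constants are \(\frac{\cstmatmul+6}{2}\) and \(\cstmatmul\), so as written your constants are off by a factor of \(2\).
\item Balanced generic degrees are only guaranteed for shifts of amplitude at most \(1\). You need the second part of \cref{cor:approx_generic_mindeg} to transfer genericity from \((\apo,\bpo)\) at shift \(\ash\) to the shift-\(0\) residual instance after the first \(\dreca\) orders.
\end{itemize}
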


\begin{remark}
  For clarity, we only highlight the constant in front of the term which will
  most often be the dominant one; the other terms are constant multiples of
  \(\timepm{\bar{d}}\), \(\timepm{\dreca}\), and \(\timepm{d}\), without extra
  log factor, and are gathered in the \(\bigO{\timepm{d}}\). If seeking more
  information on the constant factors hidden by this \(\bigO{\timepm{d}}\), the
  reader may refer to
  \cref{prop:appbasis:gcdsensitive,prop:appbasis:shiftreduction}.
  \qed
\end{remark}

In particular, as explained in \cref{sec:rel_xgcd:gcd_syz}, this provides a
direct route to solve \problemName{pbm:xgcd} without the reductions mentioned
above. For this, one takes the shift \(\ash=n-m\) and the order \(d = m+n+1\).
Suppose \(n \ge m\); the other case is similar. Then \(\dreca = n-m\), \(\dsyz
= m+n+1-2\ell\) where \(\ell = \deg(\gcd(\apo,\bpo))\), and \(\bar{d} =
2m+1-2\ell\). Thus the above bound becomes
\({\textstyle\min(\frac{\cstmatmul+6}{4}, \frac{\mu}{2})}
\timepm{m-\ell+1}\log_2(m-\ell+1) + \bigO{\timepm{d}}\), which coincides with
the one reported in \cref{cor:xgcd,fig:constants_via_reductions} based on
reductions. In fact, it improves upon the latter: here, one does
not need any a priori knowledge about \(\ell = \deg(\gcd(\apo,\bpo))\), and the
cost really is sensitive to this degree, whereas in the bound via reductions
the quantity \(\ell\) is only a known lower bound on this degree. This refined
sensitivity is made possible by the above-mentioned optimizations described in
this article (without them, the main term would involve \(\lceil (m+n+1) / 2
\rceil\) instead of \(m-\ell+1\)).

For interpolants, using Beckermann and Labahn's approach on an arbitrary set of
points leads to a cost of \(\bigO{\timepm{d}\log(d)^2}\) operations in
\(\field\); one should then rather rely on the above reductions to avoid the
squared logarithmic factor. Thus, for \cref{pbm:int}, the focus of this paper
is on algorithms for the special sets of points introduced in
\cref{sec:intro:complexity}: points in geometric or arithmetic progression, and
radix-2 FFT points. One may also use other types of points; the efficiency will
then depend on how fast one can evaluate, interpolate, or extrapolate (see
below) at such points. We observe that the classical divide and conquer
algorithm yields leading constants that are very similar to the ones of
approximants (\cref{sec:interp:pmintbasis}). The main novelty here, leading to
improved complexity bounds, is the introduction of a variant of this algorithm,
which we call \nameref{algo:Eval-IntBasis2}, and which operates mainly on
evaluated representations: all relevant polynomials are known through their
evaluations at sufficiently many points among the input \(\points \in
\field^d\). The algorithm and its complexity analysis are presented in
\cref{sec:interp:eval_intbasis}. One of its core tools is extrapolation, which
consists in converting an evaluated representation of a polynomial at some
points to one at other points. Using fast extrapolation at the special points,
\nameref{algo:Eval-IntBasis2} avoids repeated conversions between
representations on monomial bases and on interpolation bases throughout the
recursion. In addition, working on evaluated representations helps in storing
and re-using the same evaluations for both main steps (the residual computation
and the basis multiplication) and, to some extent, across different recursive
levels; this may be seen as a systematic use of the folklore strategy of
``caching evaluations'', often carried out with FFT points
\cite[Sec\,2.9]{Bernstein2008-mult} \cite{vdHoeven2025}.

Detailed complexity bounds are provided in \cref{sec:interp:eval_intbasis}. The
resulting leading constants are summarized in the table of
\cref{fig:constants_via_reductions}, and are better than those for approximant
bases and than those previously known when solving the problem through
\nameref{pbm:xgcd}. As highlighted in the columns ``persp.'' of that table,
we also sketch some possible optimizations
(\cref{sec:interp:eval_intbasis_opti}), which bring a speedup factor around 1.5
by exploiting specific properties of the computed bases to save some of the
polynomial extrapolations.

\subsection{Perspectives}
\label{sec:intro:perspectives}

The most obvious perspective is to carry out the details of the latter
optimizations for solving Cauchy interpolation at special points
(\cref{sec:interp:eval_intbasis_opti}). In particular, one could prove the
genericity of the properties that are exploited to design the optimizations;
until now, we have observed this experimentally, checking that the properties
hold with very high probability on random input instances.

A second perspective is to develop, in the case of interpolants, optimizations
that are similar to those we describe for approximants in
\cref{sec:approx:appbasis}. We expect this to be feasible without major
obstacles. This would make the complexity more sensitive to the GCD degree
\(\deg(\gcd(\apo,\bpo))\), as discussed above for the approximant case.

Last, but not least, an important perspective is the design of software
implementations. We have done preliminary experiments, based on the FLINT
library \cite{flint}, and for a start focusing on prime finite fields of
cardinality \(\operatorname{card}(\field) < 2^{64}\). The observed running
times are promising and already improve upon FLINT's native XGCD and
Berlekamp-Massey implementations for such fields, even though we have only
integrated a subset of the optimizations developed in this paper. We think that
an in-depth study of implementations of the algorithms in this paper
should be made, to better assess their impact on the efficiency of practical
computations.

\section{Preliminaries}
\label{sec:preliminaries}

\subsection{Notation}
\label{sec:prelim:notation}

For a given integer \(d\in\NN\), we denote by \(\pR_{<d}\) (resp.\ \(\pR_{\le
d}\)) the set of univariate polynomials of degree less than \(d\) (resp.\ less
than or equal to \(d\)). For \(\apo \in \pR\), its valuation is written
\(\valuation{\apo}\) (with \(\valuation{0} = +\infty\)), and if \(\apo \neq 0\)
we write \(\lc{\apo}\) for its leading coefficient. For \(\apo, \modulus \in
\pR\) with \(\modulus\) nonzero, we write \(\apo \rem \modulus\) for the
remainder in the polynomial Euclidean division of \(\apo\) by \(\modulus\); we
give it higher precedence than addition but lower than multiplication (for
example, \(a + bc \rem \modulus = a + ((bc) \rem \modulus)\)).

Matrix spaces over \(\pR\) are denoted by \(\mpR{m}{n}\): for example
\(\mpR{2}{2}\) for square \(2 \times 2\) matrices and \(\mpR{1}{2}\) for row
vectors of length \(2\). The diagonal matrix with diagonal entries
\((\apo,\bpo) \in \pR^2\) is denoted by \(\diag{\apo,\bpo}\).
When considering properties of \(2\times 2\) polynomial matrices, we will
manipulate integers \(\ash \in \ZZ\) that we call shifts. The amplitude of such
a shift is defined as its absolute value \(\abs{\ash}\), and the shift
\(\ash\) is said to be uniform if \(\ash = 0\).

\begin{remark}
  This terminology comes from the polynomial matrix literature (see for example
  \cite{m_pade,BeckermannLabahnVillard1999}): There, for \(2 \times 2\)
  matrices, a shift would rather be a pair \((\ash_0,\ash_1) \in \ZZ^2\). Yet, in
  basically all occurrences of shifts, using the pair \((\ash_0,\ash_1)\) is
  equivalent to using \((\ash_0+u,\ash_1+u)\) for any \(u\in \ZZ\). This means
  that \((\ash_0,\ash_1)\) can be replaced by \((\ash_0-\ash_1,0)\), and for
  simplicity we choose to represent it by the single integer \(\ash =
  \ash_0-\ash_1\). The word ``uniform'' when \(\ash = 0\) refers to the fact that
  \(\ash_0=\ash_1\). Note that uniformity also occurs when the shift \(\ash = \dnum - \dden\) used
  for solving \cref{pbm:ratrecon} is zero (see \cref{sec:intro}),
  which corresponds to situations where the input degree constraints \(\dden\)
  and \(\dnum\) are uniform (i.e., \(\dden = \dnum\)).
  \qed
\end{remark}

\subsection{Modules and weak Popov forms}
\label{sec:prelim:polmat}

We recall basic facts about submodules of \(\pR^2\) and their bases; for more
details, the reader may refer to \cite{abstract_algebra}. The set of relations
we consider, in its most general form \(\relmod{\modulus}{\apo,\bpo}\) as in
\cref{eqn:relmod}, is a \(\pR\)-submodule of \(\pR^2\). It has rank \(2\),
since it contains \((\modulus,0)\) and \((0,\modulus)\). Any such rank \(2\)
submodule \(\mathcal{R} \subseteq \pR^2\) has bases, and they consist of two
\(\pR\)-linearly independent vectors \((p_{00},p_{01})\) and
\((p_{10},p_{11})\) both in \(\mathcal{R}\), which are conveniently represented
as a nonsingular \(2 \times 2\) polynomial matrix
\[
  \basis =
  \begin{bmatrix}
    p_{00} & p_{01} \\
    p_{10} & p_{11}
  \end{bmatrix}
  \in \mpR{2}{2}.
\]
Other bases of \(\mathcal{R}\) all have the form \(U P\) for some \(U \in
\mpR{2}{2}\) which is unimodular, that is, \(\det(U) \in
\field\setminus\{0\}\). In particular, all bases of \(\mathcal{R}\) have the
same determinant up to multiplication by a nonzero constant.

Fix a matrix \(P = [p_{ij}] \in \mpR{2}{2}\) with no zero row, and a shift
\(\ash\in\ZZ\). The \(\ash\)-row degree of \(P\) is the integer pair
\[
  \rdeg[\ash]{P} = (\rho_0,\rho_1) \;\;\;\text{where } \rho_i = \max(\deg(p_{i0}) + \ash, \deg(p_{i1}))
  ,
\]
and the \(\ash\)-pivot of the \(i\)th row of \(P\) is the rightmost
entry of that row for which the maximum \(\rho_i\) is reached. Following
\cite{Kailath1980,BeckermannLabahnVillard1999}, \(P\) is said to be
\begin{itemize}[noitemsep,topsep=2pt]
  \item in \(\ash\)-reduced form if \(\rho_0 + \rho_1 = \deg(\det(P)) + \ash\);
  \item in \(\ash\)-weak Popov form if the \(\ash\)-pivots of
    \(P\) are on its diagonal, which simply means that one has both
    \(\deg(p_{01}) < \deg(p_{00}) + \ash = \rho_0\) and
    \(\deg(p_{10}) + \ash \le \deg(p_{11}) = \rho_1\);
  \item in \(\ash\)-Popov form if the \(\ash\)-pivots of \(P\) are monic, on
    the diagonal, and furthermore one has both \(\deg(p_{10}) < \deg(p_{00})\)
    and \(\deg(p_{01}) < \deg(p_{11})\).
\end{itemize}
We have ordered these items by increasing strength: \(\ash\)-Popov implies
\(\ash\)-weak Popov, which implies \(\ash\)-reduced. Note that
the inequality \(\deg(\det(P)) + \ash \le \rho_0 + \rho_1\) always
holds, so \(\ash\)-reduced forms are those matrices for which it is
reached.

If the shift has large amplitude --- for example, beyond \(\deg(\det(P))\) is
always sufficient --- then being in \(\ash\)-Popov form implies being
triangular; lower or upper, depending on the sign of \(\ash\). These forms
are classically called the (lower or upper) Hermite normal forms. (See
\cref{cor:shift_amplitude} for a similar remark in the context of relation
bases.)

For a given rank \(2\) submodule \(\mathcal{R} \subseteq \pR^2\) and a given
shift \(\ash\), there exists a unique basis \(\basis \in \mpR{2}{2}\) of
\(\mathcal{R}\) which is in \(\ash\)-Popov form, and all other
\(\ash\)-weak Popov bases of \(\mathcal{R}\) have the same diagonal
degrees as \(\basis\) \cite{BeckermannLabahnVillard1999}. This invariant pair \((\deg(p_{00}),
\deg(p_{11}))\) is called the \(\ash\)-minimal degree of
\(\mathcal{R}\). It has a minimality property: for any
\(r = (r_0, r_1) \in \mathcal{R}\setminus\{0\}\), if the \(\ash\)-pivot
of \(r\) is \(r_0\) (resp.\ \(r_1\)) then one has \(\deg(r_0) \ge
\deg(p_{00})\) (resp.\ \(\deg(r_1) \ge \deg(p_{11})\)).

We present a short, useful lemma about degrees in \(\ash\)-weak Popov
forms. It will be instrumental in deriving sharp degree bounds for our
complexity analyses.

\begin{lemma}
  \label{lem:degree_bounds_2x2}%
  Let \(\ash\in\ZZ\) and let \(P = [p_{ij}] \in \mpR{2}{2}\) be a
  matrix in \(\ash\)-weak Popov form. Then, one has \(\deg(p_{01}) +
  \deg(p_{10}) < \deg(p_{00}) + \deg(p_{11}) = \deg(\det(P))\).
\end{lemma}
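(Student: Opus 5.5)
The proof uses only the two weak Popov inequalities from the definition in \cref{sec:prelim:polmat}, namely \(\deg(p_{01}) < \deg(p_{00}) + \ash\) and \(\deg(p_{10}) + \ash \le \deg(p_{11})\). Both diagonal entries are nonzero: \(p_{00}\) because \(\deg(p_{00}) + \ash\) is the maximum \(\rho_0\), which is finite since rows are nonzero, and \(p_{11}\) similarly. The off-diagonal entries may be zero, with degree \(-\infty\).

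\textbf{Step 1: the strict inequality.} If \(p_{01} = 0\) or \(p_{10} = 0\), the left side is \(-\infty\) and the inequality is trivial. Otherwise, adding the two inequalities gives
\[
\deg(p_{01}) + \deg(p_{10}) + \ash < \deg(p_{00}) + \ash + \deg(p_{11}),
\]
and cancelling \(\ash\) yields \(\deg(p_{01}) + \deg(p_{10}) < \deg(p_{00}) + \deg(p_{11})\).

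\textbf{Step 2: the determinant degree.} Write \(\det(P) = p_{00}p_{11} - p_{01}p_{10}\). The product \(p_{00}p_{11}\) is nonzero and has degree \(\deg(p_{00}) + \deg(p_{11})\). By Step 1, the product \(p_{01}p_{10}\) is either zero or has strictly smaller degree. Hence the leading term of \(p_{00}p_{11}\) cannot cancel, and \(\deg(\det(P)) = \deg(p_{00}) + \deg(p_{11})\). In particular \(P\) is nonsingular.

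The only point needing care is the degenerate case of zero off-diagonal entries. There is no real obstacle.
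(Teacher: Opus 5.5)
Your proof is correct. The strict inequality is obtained exactly as in the paper: add the two weak Popov conditions \(\deg(p_{01}) < \deg(p_{00}) + \ash\) and \(\deg(p_{10}) + \ash \le \deg(p_{11})\).

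The difference is in the equality. The paper does not compute \(\det(P)\). It takes from the literature, as recalled in \cref{sec:prelim:polmat}, the fact that an \(\ash\)-weak Popov matrix is \(\ash\)-reduced, i.e.\ \(\rho_0 + \rho_1 = \deg(\det(P)) + \ash\). It then substitutes \(\rho_0 = \deg(p_{00}) + \ash\) and \(\rho_1 = \deg(p_{11})\). You reverse the order: you prove the inequality first, then use it to show that \(p_{01}p_{10}\) cannot cancel the leading term of \(p_{00}p_{11}\) in \(\det(P)\).

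Your route is self-contained. For \(2\times 2\) matrices it actually proves the implication ``weak Popov \(\Rightarrow\) reduced'' that the paper cites, and it also gives nonsingularity of \(P\). Your check that both diagonal entries are nonzero (using that \(P\) has no zero row) makes the \(-\infty\) degree conventions airtight. The paper's route is shorter because it leans on a known fact about the hierarchy of forms that holds for matrices of any size.
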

\begin{proof}
  For the equality, combine \(\rho_0 + \rho_1 = \deg(\det(P)) + \ash\)
  (from the fact that \(P\) is \(\ash\)-reduced) with \(\rho_0 =
  \deg(p_{00}) + \ash\) and \(\rho_1 = \deg(p_{11})\) (from the fact
  \(P\) is \(\ash\)-weak Popov). The inequality follows from
  \(
    \deg(p_{01}) + (\deg(p_{10}) + \ash)
    <
    (\deg(p_{00}) + \ash) + \deg(p_{11})
  \)
  (again using the fact \(P\) is \(\ash\)-weak Popov).
  \qed
\end{proof}

In this paper, we focus on computing weak Popov forms, as it is the forms
naturally returned by the customary divide and conquer approach presented in
\cref{lem:dnc_relbas}. In case one is interested in rather computing the
corresponding canonical Popov form, the next lemma highlights that it can be
deduced at a low computational cost. For completeness, we also consider the
transformation from reduced forms to weak Popov forms.

\begin{lemma}
  \label{lem:weak_to_popov_2x2}%
  Let \(\ash\in\ZZ\) and let \(P = [p_{ij}] \in \mpR{2}{2}_{<d}\).
  If \(P\) is in \(\ash\)-reduced form, a matrix
  which is left-unimodularly equivalent to \(P\) and in \(\ash\)-weak
  Popov form can be computed using \(\bigO{d}\) operations in \(\field\).
  If \(P\) is in \(\ash\)-weak Popov form, its
  \(\ash\)-Popov form can be computed using \(\bigO{\timepm{d}}\)
  operations in \(\field\).
\end{lemma}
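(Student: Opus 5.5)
The plan has two independent parts, one for each claim of the lemma. In both parts I only perform elementary row operations, so the transformation is unimodular.

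\emph{Reduced to weak Popov.} Write \(\rho = (\rho_0,\rho_1)\) for the \(\ash\)-row degree of \(P\), and let \(c_i\) denote the \(\ash\)-leading coefficient vector of row \(i\). Its \(j\)th entry is the coefficient of degree \(\rho_i - \ash\) of \(p_{i0}\) when \(j=0\), and the coefficient of degree \(\rho_i\) of \(p_{i1}\) when \(j=1\). That \(P\) is \(\ash\)-reduced is equivalent to the \(2\times 2\) constant matrix \([c_0;c_1]\) being invertible, since its determinant is the coefficient of degree \(\rho_0+\rho_1-\ash\) of \(\det(P)\). I then split into cases on the pivot positions.

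If the \(\ash\)-pivots are already in distinct columns, a row swap (if needed) puts them on the diagonal, and we are done.

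Otherwise both pivots lie in the same column \(j\). Assume \(\rho_0 \ge \rho_1\) after possibly swapping rows. Replace row \(0\) by
\[
\text{row}_0 - \lambda x^{\rho_0-\rho_1}\,\text{row}_1,
\]
where \(\lambda\) is chosen to cancel the leading term in column \(j\). Invertibility of the leading matrix guarantees that the new row keeps \(\ash\)-degree \(\rho_0\). Its pivot must move to the other column, because otherwise the leading vectors would become collinear, contradicting reducedness (the \(\ash\)-row degree sum cannot drop, since \(\det\) is preserved). One such step costs \(\bigO{d}\) operations, and a final swap yields the \(\ash\)-weak Popov form.

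\emph{Weak Popov to Popov.} By \cref{lem:degree_bounds_2x2} and the weak Popov inequalities, the pivots are \(p_{00}\) and \(p_{11}\). First I normalize the pivots to be monic by scaling each row by the inverse of its pivot's leading coefficient, which costs \(\bigO{d}\). Then I need to reduce the off-diagonal entries:
\begin{itemize}[noitemsep,topsep=2pt]
\item If \(\deg(p_{10}) \ge \deg(p_{00})\), replace row \(1\) by \(\text{row}_1 - q\,\text{row}_0\) with \(q = p_{10} \operatorname{quo} p_{00}\).
\item Symmetrically, replace row \(0\) by \(\text{row}_0 - q'\,\text{row}_1\) with \(q' = p_{01}\operatorname{quo} p_{11}\).
\end{itemize}
Each step is one Euclidean division followed by two multiplications, hence costs \(\bigO{\timepm{d}}\).

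The main obstacle is to check that each reduction preserves the pivot of the modified row and does not break the other condition. For row \(1\): we have \(\deg(q) = \deg(p_{10})-\deg(p_{00})\), and
\[
\deg(q\,p_{01}) < \deg(p_{10})-\deg(p_{00})+\deg(p_{00})+\ash = \deg(p_{10})+\ash \le \deg(p_{11}).
\]
So \(p_{11}\) keeps its degree and leading coefficient, while the new \(p_{10}\) has degree \(< \deg(p_{00})\), so that \(\deg(p_{10})+\ash < \deg(p_{00})+\ash\). The analogous computation for row \(0\) uses \(\deg(q') = \deg(p_{01})-\deg(p_{11}) < \deg(p_{00})+\ash-\deg(p_{11})\), which together with \(\deg(p_{10})+\ash\le\deg(p_{11})\) gives \(\deg(q'p_{10}) < \deg(p_{00})\). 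It is a routine check that the second reduction does not undo the first, since rows are modified only by multiples of the unchanged other row. All degrees stay below \(d\), because \(\ash\)-reduced forms are minimal, so the total cost is \(\bigO{\timepm{d}}\).
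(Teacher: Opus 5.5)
Your proposal is correct and follows essentially the paper's route: for the first claim you spell out the Gaussian elimination on the \(\ash\)-leading matrix that the paper delegates to the cited references. For the second claim you perform the same Euclidean divisions by the monic pivots, and your bound \(\deg(q\,p_{01}) < \deg(p_{11})\) is exactly the inequality of \cref{lem:degree_bounds_2x2}; the paper additionally observes that, depending on whether \(\rho_0 \le \rho_1\), only one of your two divisions can be nontrivial.

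There is one small fix to make. To keep the pivot of the modified second row on the diagonal you need \(\deg(p_{10} - q\,p_{00}) + \ash \le \deg(p_{11})\). This follows from \(\deg(p_{10} - q\,p_{00}) < \deg(p_{00}) \le \deg(p_{10})\) in that branch, not from the inequality \(\deg(p_{10})+\ash < \deg(p_{00})+\ash\) that you wrote. Similarly, the bound \(< d\) on all degrees comes from these explicit degree estimates, not from minimality of reduced forms.
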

\begin{proof}
  Finding an \(\ash\)-weak Popov form of an
  \(\ash\)-reduced matrix \(P\) amounts to finding a constant
  transformation \(U\) in \(\mKK{2}{2}\) via Gaussian elimination and
  left-multiplying \(P\) by a simple matrix built from \(U\) and the degrees in
  \(P\), and costs \(\bigO{d}\) field operations
  \cite[Sec.\,3]{SarkarStorjohann2011}  \cite[Sec.\,4]{NeigerPernet2021}.

  Now suppose \(P\) is in \(\ash\)-weak Popov form. Its
  \(\ash\)-pivots are thus already on the diagonal, and from now on
  we assume that they are monic, which can be ensured using
  \(\bigO{d}\) field operations. Denote by \((\rho_0,\rho_1)\) the
  \(\ash\)-row degree of \(P\). To begin with, assume that \(\rho_0 \le
  \rho_1\). Then \(\deg(p_{01}) < \deg(p_{11})\) is already satisfied,
  since we have \(\deg(p_{01}) < \rho_0 \le \rho_1 = \deg(p_{11})\).
  Now compute the quotient and remainder \((q,r)\) such that \(p_{10} =
  q p_{00} + r\) with \(\deg(r) < \deg(p_{00})\), and consider the unimodular
  transformation
  \[
    \begin{bmatrix}
      1 & 0 \\
      -q  & 1
    \end{bmatrix}
    P
    =
    Q,
    \quad\text{where we define }
    Q =
    \begin{bmatrix}
      p_{00} & p_{01} \\
      r & p_{11} - q p_{01}
    \end{bmatrix}
    \in \mpR{2}{2}.
  \]
  The key observation here is that \(\deg(q p_{01}) \le \deg(p_{10}) -
  \deg(p_{00}) + \deg(p_{01}) < \deg(p_{11})\)
  where the latter strict inequality follows from \cref{lem:degree_bounds_2x2}.
  This property ensures that \(p_{11} - q p_{01}\) is monic, that the second
  row of \(Q\) has its \(\ash\)-pivot on the diagonal, and that the
  constraint \(\deg(p_{01}) < \deg(p_{11}) = \deg(p_{11} - q p_{01})\) holds.
  Thus \(Q\) is the \(\ash\)-Popov form of \(P\). In terms of
  complexity, the division with remainder and the computation of  \(p_{11} - q
  p_{01}\) both cost \(\bigO{\timepm{d}}\) field operations.

  The other case \(\rho_1 < \rho_0\) is dealt with similarly, using
  \(\deg(p_{10}) + \ash \le \rho_1 < \rho_0 = \deg(p_{00}) + \ash\) to deduce
  \(\deg(p_{10}) < \deg(p_{00})\), and then performing the unimodular transformation
  \[
      \begin{bmatrix}
        1 & -q \\
        0  & 1
      \end{bmatrix}
      P
      =
      \begin{bmatrix}
        p_{00} - q p_{10} & r \\
        p_{10} & p_{11}
      \end{bmatrix}
      \in \mpR{2}{2},
    \]
    where \((q,r)\) are such that \(p_{01} = q p_{11} + r\) with \(\deg(r) <
    \deg(p_{11})\).
    \qed
\end{proof}

\subsection{Some basic results on relation bases}
\label{sec:prelim:relbas}

In the next lemma, we give a description of the basis in Hermite normal form of
\(\relmod{\modulus}{\apo,\bpo}\), from which we will deduce the determinant of
all bases of \(\relmod{\modulus}{\apo,\bpo}\) (\cref{cor:determinant_relmod}).
We also show that this Hermite basis equals the basis in \(\ash\)-Popov
form as soon as the amplitude of the shift is large enough. Thanks to this, we
highlight that one can always modify the input shift \(\ash\) into a
shift with amplitude at most \(d = \deg(\modulus)\) without loss of
generality (\cref{cor:shift_amplitude}).

\begin{lemma}
  \label{lem:relbas_hermite_bases}%
  Let \(\apo, \bpo \in \pR\) and let \(\modulus \in \pR\setminus\{0\}\) be
  monic. Let \(\alpha = \gcd(\apo,\modulus)\), \(\beta = \gcd(\bpo,\modulus)\),
  and \(\gamma = \gcd(\apo, \bpo,\modulus)\).
  \begin{itemize}[noitemsep,topsep=2pt]
    \item If \(\apo = 0 \bmod \modulus\), then the basis of
      \(\relmod{\modulus}{0,\bpo}\) in (both lower and upper) Hermite form is
      \(\diag{1,\modulus/\beta}\). It is also the basis in
      \(\ash\)-Popov form for any \(\ash \in \ZZ\).
    \item If \(\apo \neq 0 \bmod \modulus\), then the basis of
      \(\relmod{\modulus}{\apo,\bpo}\) in lower Hermite form is
      \[
        H =
        \begin{bmatrix}
          \modulus/\alpha & 0 \\
          \tilde{\bpo} & \alpha/\gamma
        \end{bmatrix}
        \in \mpR{2}{2},
      \]
      where \(\tilde{\bpo} = (-\tilde{\apo} \bpo / \gamma) \rem \modulus / \alpha\)
      and \(\tilde{\apo} = (\apo/\alpha)^{-1} \rem \modulus/\alpha\). It is
      also in \(\ash\)-Popov form for any shift \(\ash \le
      -\deg(\modulus/\alpha) + \deg(\alpha/\gamma) + 1\).
  \end{itemize}
\end{lemma}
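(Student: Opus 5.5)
The plan is to treat the two items separately. In each, I first check that the claimed matrix generates exactly \(\relmod{\modulus}{\apo,\bpo}\), and then check the degree conditions defining the Hermite and \(\ash\)-Popov forms.

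For the first item, \(\apo = 0 \bmod \modulus\) makes the condition \(\apo\pp+\bpo\pq = 0 \bmod \modulus\) reduce to \(\bpo\pq = 0 \bmod \modulus\). Writing \(\bpo = \beta \bpo'\) and \(\modulus = \beta\modulus'\) with \(\gcd(\bpo',\modulus')=1\), this is equivalent to \(\modulus/\beta \mid \pq\), while \(\pp\) is arbitrary. Hence the module is \(\pR \oplus (\modulus/\beta)\pR\), with basis \(\diag{1,\modulus/\beta}\). This basis is diagonal with monic diagonal entries, so it is simultaneously in lower and upper Hermite form. For every shift, the \(\ash\)-pivots are the diagonal entries and the off-diagonal degree conditions are vacuous because those entries are zero, so it is also in \(\ash\)-Popov form for any \(\ash\).

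For the second item, I would first show that both rows of \(H\) lie in the module. The first row \((\modulus/\alpha,0)\) does, since \(\apo\,\modulus/\alpha = (\apo/\alpha)\modulus\). For the second row I need \(\apo\tilde\bpo + \bpo\alpha/\gamma = 0 \bmod \modulus\). Dividing by \(\alpha\), which is legitimate because \(\gamma \mid \alpha\), \(\alpha\mid\apo\) and \(\alpha\mid\modulus\), this becomes \((\apo/\alpha)\tilde\bpo + \bpo/\gamma = 0 \bmod \modulus/\alpha\). That congruence holds by the definition of \(\tilde\bpo\), since \(\tilde\apo\) inverts \(\apo/\alpha\) modulo \(\modulus/\alpha\), which is valid because \(\gcd(\apo/\alpha,\modulus/\alpha)=1\). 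Here \(\bpo/\gamma\) is a polynomial because \(\gamma\mid\bpo\). Next I would show generation. Take \((\pp,\pq)\) in the module. Multiplying \(\apo\pp+\bpo\pq = 0 \bmod \modulus\) by the relevant factors gives \(\bpo\pq = 0 \bmod \alpha\), so \(\pq\) is divisible by \(\alpha/\gcd(\bpo,\alpha) = \alpha/\gamma\), using \(\gcd(\bpo,\alpha)=\gamma\). Write \(\pq = c\,\alpha/\gamma\). Subtracting \(c\) times the second row gives a relation of the form \((\pp',0)\). For such a relation, \(\apo\pp' = 0 \bmod \modulus\), which forces \(\modulus/\alpha \mid \pp'\). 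So \((\pp',0)\) is a multiple of the first row.

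To conclude that \(H\) is the lower Hermite form, I note that \(H\) is lower triangular with monic diagonal entries, which is immediate. I also need \(\deg\tilde\bpo < \deg(\modulus/\alpha)\), which holds because \(\tilde\bpo\) is a remainder modulo \(\modulus/\alpha\). For the \(\ash\)-Popov claim, the conditions to verify are:
\begin{itemize}
  \item the row-one pivot is diagonal, which is automatic since \(p_{01}=0\);
  \item the row-two pivot condition \(\deg(\tilde\bpo)+\ash \le \deg(\alpha/\gamma)\);
  \item \(\deg p_{10} < \deg p_{00}\), which is the remainder bound above;
  \item \(\deg p_{01} < \deg p_{11}\), which holds because \(p_{01} = 0\).
\end{itemize}
The row-two condition follows from \(\deg\tilde\bpo \le \deg(\modulus/\alpha)-1\) together with \(\ash \le -\deg(\modulus/\alpha)+\deg(\alpha/\gamma)+1\), which combine to give \(\deg(\tilde\bpo)+\ash \le \deg(\alpha/\gamma)\). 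If \(\tilde\bpo=0\), the condition is vacuous.

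I expect the main obstacle to be the gcd bookkeeping in the generation step, specifically the justification that \(\bpo\pq = 0 \bmod \alpha\) and \(\gcd(\bpo,\alpha)=\gamma\) together imply \(\alpha/\gamma \mid \pq\). Along the way I also have to keep track of the case where \(\apo\) is not reduced modulo \(\modulus\); this is harmless, because only gcds with \(\modulus\) enter the argument.
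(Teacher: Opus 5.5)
Your proposal is correct and follows essentially the same route as the paper: you check that the rows lie in the module, then prove generation by first showing that \(\alpha/\gamma\) divides the second coordinate (since \(\alpha/\gamma\) is coprime with \(\bpo/\gamma\)) and handling the first coordinate modulo \(\modulus/\alpha\), and finally verify the degree conditions for the Hermite and \(\ash\)-Popov forms. The only cosmetic difference is that you subtract a multiple of the second row to reduce to a relation \((\pp',0)\), whereas the paper directly derives \(p_0 = q_1\tilde{\bpo} \bmod \modulus/\alpha\).
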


\begin{proof}
  Let \(D = \diag{1,\modulus/\beta}\), which is obviously
  both triangular and in \(\ash\)-Popov form, for any shift \(\ash\). By definition of \(\beta\),
  the rows of \(D\) are in \(\relmod{\modulus}{0,\bpo}\) (note that
  \(\beta = \modulus\) if \(\bpo=0 \bmod \modulus\)). Furthermore, for any
  \((p,q) \in \relmod{\modulus}{0,\bpo}\), we have \(q \bpo = 0 \bmod
  \modulus\), and therefore \(\modulus/\beta\) divides \(q\) since
  \(\bpo/\beta\) and \(\modulus/\beta\) are coprime; thus \((p,q)\) is
  generated by \(D\).

  By construction, the rows of \(H\) are in \(\relmod{\modulus}{\apo, \bpo}\), and
  \(H\) is lower triangular with monic diagonal entries and
  \(\deg(\tilde{\bpo}) < \deg(\modulus/\alpha)\), so it is in Hermite form. The
  fact that it is in \(\ash\)-Popov form follows from \(\deg(\tilde{b}) + \ash
  \le \deg(\modulus/\alpha) - 1 + \ash \le \deg(\alpha/\gamma)\), thanks to our
  assumption on \(\ash\). It remains to show that any relation \((p_0, p_1) \in
  \relmod{\modulus}{\apo,\bpo}\) can be written \([p_0 \;\; p_1] =[q_0 \;\; q_1]
  H\) for some \(q_0,q_1 \in \pR\). By definition, \(\alpha/\gamma\) divides
  both \(\apo/\gamma\) and \(\modulus/\gamma\) but is coprime with
  \(\bpo/\gamma\). Thus, from the relation \(p_0\apo/\gamma + p_1\bpo/\gamma =
  0 \bmod \modulus/\gamma\) we can deduce that \(p_1 = q_1 \alpha/\gamma\) for
  some \(q_1 \in \pR\). Then, injecting this in \(p_0\apo + p_1\bpo = 0 \bmod
  \modulus\), we get \(p_0\apo + q_1\bpo \alpha/\gamma = 0 \bmod \modulus\) and
  therefore \(p_0\apo/\alpha + q_1\bpo/\gamma = 0 \bmod \modulus/\alpha\). This
  leads to \(p_0 = q_1\tilde{\bpo} \bmod \modulus/\alpha\), hence \(p_0 =
  q_0\frac{\modulus}{\alpha} + q_1\tilde{\bpo}\) for some \(q_0 \in \pR\).
  \qed
\end{proof}

The assumption that \(\modulus\) is monic is only here to avoid introducing
\(\lc{\modulus}\) in the above formulas; note that this is not restrictive
since \(\relmod{\modulus}{\apo,\bpo} =
\relmod{\modulus/\lc{\modulus}}{\apo,\bpo}\).

Symmetrically, the \(\ash\)-Popov basis of
\(\relmod{\modulus}{\apo,0}\) is \(\diag{\modulus/\alpha,1}\), and if \(\bpo
\neq 0 \bmod \modulus\) then the basis of \(\relmod{\modulus}{\apo,\bpo}\) in
upper Hermite form is
\[
  H =
  \begin{bmatrix}
    \beta/\gamma & \tilde{\apo} \\
    0            & \modulus/\beta
  \end{bmatrix}
  \in \mpR{2}{2},
\]
where \(\tilde{\apo} = (-\tilde{\bpo} \apo / \gamma) \rem \modulus / \beta\)
and \(\tilde{\bpo} = (\bpo/\beta)^{-1} \rem \modulus/\beta\). The latter matrix
\(H\) is also in \(\ash\)-Popov form for any \(\ash \ge
\deg(\modulus/\beta) - \deg(\beta/\gamma)\).

\begin{corollary}
  \label{cor:determinant_relmod}%
  Let \(\apo, \bpo, \modulus \in \pR\) with \(\modulus\) nonzero, and let
  \(\basis \in \mpR{2}{2}\) whose rows are in \(\relmod{\modulus}{\apo,\bpo}\).
  The matrix \(\basis\) is a basis of \(\relmod{\modulus}{\apo,\bpo}\) if and
  only if its determinant has the form \(\det(\basis) =
  \lambda\modulus/\gamma\) for some nonzero constant \(\lambda
  \in \field\), where \(\gamma = \gcd(\apo,\bpo,\modulus)\).
\end{corollary}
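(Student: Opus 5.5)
The plan is to derive the corollary from the explicit Hermite bases of \cref{lem:relbas_hermite_bases}, using the fact recalled in \cref{sec:prelim:polmat} that any two bases of the same rank~\(2\) module differ by left multiplication by a unimodular matrix. Since \(\relmod{\modulus}{\apo,\bpo} = \relmod{\modulus/\lc{\modulus}}{\apo,\bpo}\), we may assume \(\modulus\) monic. We first compute the determinant of the Hermite basis \(H\) and check that it equals \(\modulus/\gamma\). If \(\apo = 0 \bmod \modulus\), then \(H = \diag{1,\modulus/\beta}\) and \(\gamma = \gcd(\apo,\bpo,\modulus) = \gcd(\modulus,\bpo) = \beta\), because \(\modulus\) divides \(\apo\). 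Otherwise \(H\) is lower triangular with diagonal \((\modulus/\alpha, \alpha/\gamma)\), so \(\det(H) = \modulus/\gamma\) directly.

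For the forward implication, let \(\basis\) be a basis. Then \(\basis = UH\) with \(U\) unimodular, so \(\det(\basis) = \det(U)\det(H) = \lambda\modulus/\gamma\) with \(\lambda = \det(U) \in \field\setminus\{0\}\).

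For the converse, suppose the rows of \(\basis\) lie in \(\relmod{\modulus}{\apo,\bpo}\) and \(\det(\basis) = \lambda\modulus/\gamma\) with \(\lambda \neq 0\).
\begin{itemize}[noitemsep,topsep=2pt]
  \item Since \(H\) generates the module, each row of \(\basis\) is a \(\pR\)-combination of the rows of \(H\). Hence \(\basis = VH\) for some \(V \in \mpR{2}{2}\).
  \item Taking determinants gives \(\lambda\modulus/\gamma = \det(V)\,\modulus/\gamma\). Since \(\modulus/\gamma \neq 0\), we get \(\det(V) = \lambda\), a nonzero constant, so \(V\) is unimodular.
  \item Then \(V^{-1} \in \mpR{2}{2}\) and \(H = V^{-1}\basis\). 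So the rows of \(\basis\) generate the rows of \(H\), hence the whole module.
  \item Moreover \(\det(\basis) \neq 0\), so the rows of \(\basis\) are linearly independent, and \(\basis\) is a basis.
\end{itemize}

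There is no real obstacle here. The only point needing care is the degenerate case \(\apo = 0 \bmod \modulus\), where one must verify the identity \(\gamma = \beta\) before the determinant formula applies.
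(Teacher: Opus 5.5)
Your proposal is correct and takes the same approach as the paper. The paper's proof only notes that the Hermite basis \(H\) of \cref{lem:relbas_hermite_bases} has determinant \(\modulus/\gamma\) in both cases. It leaves implicit what you spell out: the unimodular-equivalence argument in both directions, the check \(\gamma=\beta\) when \(\apo = 0 \bmod \modulus\), and the reduction to monic \(\modulus\).
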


\begin{proof}
  In both cases \(\apo = 0 \bmod \modulus\) and \(\apo \neq 0 \bmod \modulus\),
  \cref{lem:relbas_hermite_bases} describes a basis \(H\) of
  \(\relmod{\modulus}{\apo,\bpo}\) whose determinant is \(\modulus/\gamma\).
  \qed
\end{proof}

\begin{corollary}
  \label{cor:shift_amplitude}%
  Let \(\apo, \bpo, \modulus \in \pR\) with \(\modulus\) nonzero of degree \(d\),
  and let \(\ash\) be a shift.
  If \(\ash \le -d\), any \(-d\)-weak Popov basis of
  \(\relmod{\modulus}{\apo,\bpo}\) is lower triangular and is in \(\ash\)-weak Popov form.
  %%
  % Assume \(\ash \le -\deg(\modulus/\alpha) +
  % 1\), and take any \(\tau \ge\deg(\modulus/\alpha) -
  % 1\) (in particular, this covers the case \(ash \le -d + 1\) and all
  % choices of \(\tau \ge d - 1\)). Any \(-\tau\)-weak Popov basis of
  % $\relmod{\modulus}{\apo,\bpo}$ is lower triangular and is also in \(\ash\)-weak Popov form.
\end{corollary}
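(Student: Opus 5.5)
Let \(\basis = [p_{ij}]\) be a \(-d\)-weak Popov basis of \(\relmod{\modulus}{\apo,\bpo}\). The plan is to first show that \(\basis\) is lower triangular, i.e.\ \(p_{01} = 0\). Then show that its pivots stay on the diagonal when the shift is decreased from \(-d\) to any \(\ash \le -d\). Throughout we only use degree facts: \cref{cor:determinant_relmod}, which gives \(\deg(\det(\basis)) = \deg(\modulus/\gamma) \le d\), and \cref{lem:degree_bounds_2x2}.

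\emph{Triangularity.} By \cref{lem:degree_bounds_2x2}, \(\deg(p_{00}) + \deg(p_{11}) = \deg(\det(\basis)) \le d\). Hence \(\deg(p_{00}) \le d - \deg(p_{11}) \le d\). The \(-d\)-weak Popov condition on the first row reads \(\deg(p_{01}) < \deg(p_{00}) - d\). This right-hand side is \(\le 0\) (the diagonal entries are nonzero, since \(\deg(\det(\basis))\) is finite). So \(\deg(p_{01}) < 0\), that is, \(p_{01} = 0\), and \(\basis\) is lower triangular. As a by-product, \(\deg(p_{00}) = \deg(\det(\basis)) - \deg(p_{11})\).

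\emph{Preservation of the form for \(\ash \le -d\).} For the first row, \(p_{01} = 0\) makes the condition \(\deg(p_{01}) < \deg(p_{00}) + \ash\) hold trivially, with \(\deg(0) = -\infty\). For the second row, we must check \(\deg(p_{10}) + \ash \le \deg(p_{11})\). We already know \(\deg(p_{10}) - d \le \deg(p_{11})\), and \(\ash \le -d\) gives \(\deg(p_{10}) + \ash \le \deg(p_{10}) - d \le \deg(p_{11})\). This is trivial if \(p_{10} = 0\). Hence the \(\ash\)-pivots are on the diagonal, and \(\basis\) is in \(\ash\)-weak Popov form.

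The only delicate point is the first step: the strictness in the weak Popov definition, combined with the determinant degree bound, forces \(p_{01}\) to vanish rather than merely have small degree. Everything else is monotonicity of the conditions in \(\ash\). The cases \(\apo = 0 \bmod \modulus\) or \(\bpo = 0 \bmod \modulus\) need no separate treatment, since the argument only uses \(\deg(\det(\basis)) \le d\).
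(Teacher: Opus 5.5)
Your proof is correct and follows essentially the same route as the paper's. The bound \(\deg(p_{00})+\deg(p_{11}) \le \deg(\det(\basis)) \le d\) forces \(\deg(p_{01}) < \deg(p_{00}) - d \le 0\), hence \(p_{01}=0\), and the second-row condition passes from shift \(-d\) to any \(\ash \le -d\) by monotonicity.
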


\begin{proof}
  Let \(P = [p_{ij}] \in \mpR{2}{2}\) be a \(-d\)-weak Popov basis of
  \(\relmod{\modulus}{\apo,\bpo}\). Recall from \cref{cor:determinant_relmod}
  that \(\deg(\det(P)) \le d\), and from \cref{sec:prelim:polmat} that
  \(\deg(p_{00}) + \deg(p_{11}) \le \deg(\det(P))\). Thus one has
  \(\deg(p_{01}) < \deg(p_{00}) - d \le 0\), which yields \(p_{01} = 0\).
  Furthermore, from \(\deg(p_{10}) - d \le \deg(p_{11})\), we deduce
  \(\deg(p_{10}) + \ash \le \deg(p_{11})\).
  Hence \(P\) is lower triangular and in \(\ash\)-weak Popov form.
  \qed
\end{proof}

Similarly, if \(\ash \ge d+1\), any \((d+1)\)-weak Popov basis of
\(\relmod{\modulus}{\apo,\bpo}\) is upper triangular and is also in \(\ash\)-weak
Popov form. This shows that when the shift \(\ash\) has amplitude greater than
\(d\), one can always replace it by \(\ash = d+1\) or \(\ash = -d\), thus with
amplitude at most \(d+1\).

The following lemma will help us give refined complexity analyses for the
computation of approximants of \(\apo,\bpo\), in particular when these
polynomials have some (a priori unknown) nontrivial common factor
(\cref{sec:approx:appbasis:gcdsensitive}). It also leads to an XGCD algorithm
(\cref{sec:rel_xgcd:gcd_syz}).

\begin{lemma}
  \label{lem:kernel_approx}
  Let \((\apo,\bpo) \in \pR^2\), not both zero. The module of syzygies
  \(\{(\pp,\pq) \in \pR^2 \mid \pp \apo + \pq \bpo = 0\}\) has rank \(1\) and
  its bases are all nonzero constant multiples of \([-\bpo/\pg \;\;
  \apo/\pg]\), where \(\pg = \gcd(\apo,\bpo)\). Let \(\ash \in \ZZ\)
  and let \(\modulus\in \pR\setminus\{0\}\) be coprime with \(\pg\) and with
  degree greater than or equal to
  \begin{equation}
    \label{eqn:order_yielding_syz}%
    \dsyz = \max(m+n, 2n - \ash, 2m + \ash) + 1 -2\ell,
  \end{equation}
  where \(n = \deg(\apo)\), \(m = \deg(\bpo)\), \(\ell = \deg(\pg)\).
  Any \(\ash\)-weak Popov basis of \(\relmod{\modulus}{\apo,\bpo}\)
  contains one of these bases of syzygies, that is, it has a row of the form
  \([-\lambda\bpo/\pg \;\; \lambda\apo/\pg]\) for some
  \(\lambda\in\field\setminus\{0\}\). The same conclusion holds without the
  coprimeness assumption if \(\deg(\modulus) \ge \dsyz + \ell\).
\end{lemma}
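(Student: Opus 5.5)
The first claim is elementary. If \(\pp\apo + \pq\bpo = 0\), divide by \(\pg\) to get \(\pp\,\apo/\pg = -\pq\,\bpo/\pg\). Since \(\apo/\pg\) and \(\bpo/\pg\) are coprime, \(\bpo/\pg\) divides \(\pp\). Writing \(\pp = -c\,\bpo/\pg\) then forces \(\pq = c\,\apo/\pg\). Hence the syzygy module is generated by \(z = [-\bpo/\pg \;\; \apo/\pg]\), it has rank \(1\), and its bases are the nonzero constant multiples of \(z\).

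For the main claim, let \(P\) be an \(\ash\)-weak Popov basis of \(\relmod{\modulus}{\apo,\bpo}\), with rows \(r_0,r_1\) and \(\ash\)-row degree \((\rho_0,\rho_1)\). I will use the predictable degree property of \(\ash\)-reduced matrices, which applies since weak Popov implies reduced. Since \(z \in \relmod{\modulus}{\apo,\bpo}\), we can write \(z = u_0 r_0 + u_1 r_1\), and then
\(\rdeg[\ash]{z} = \max_{u_i\neq 0}(\deg(u_i)+\rho_i)\).
Pick \(i\) with \(u_i \neq 0\). Then
\[
  \rho_i \le \rdeg[\ash]{z} = \max(m+\ash, n) - \ell .
\]
The goal is to show that this row \(r_i = (p,q)\) must itself be a syzygy.

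Suppose, to the contrary, that \(f = p\apo + q\bpo \neq 0\). The key step is a lower bound on \(\deg(f)\). On one hand, \(f\) is divisible by \(\modulus\). On the other hand, \(f\) is divisible by \(\pg\), since \(\pg\) divides both \(\apo\) and \(\bpo\). Under the coprimeness assumption, \(\pg\modulus\) divides \(f\), so \(\deg(f) \ge \deg(\modulus)+\ell\). Without coprimeness we directly assume \(\deg(\modulus) \ge \dsyz+\ell\), so \(\deg(f)\ge\deg(\modulus) \ge \dsyz + \ell\). Either way,
\[
  \deg(f) \ge \dsyz + \ell .
\]
Conversely, the definition of \(\rho_i\) gives \(\deg(p) \le \rho_i - \ash\) and \(\deg(q) \le \rho_i\), so
\[
  \deg(f) \le \rho_i + \max(n-\ash, m).
\]
Combining these with the upper bound on \(\rho_i\) yields
\[
  \dsyz + \ell \le \max(n-\ash,m) + \max(m+\ash,n) - \ell .
\]
Expanding the sum of maxima gives \(\max(m+n,\, 2n-\ash,\, 2m+\ash)\), so the right-hand side equals \(\dsyz - 1 + \ell\). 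This is a contradiction, so \(f = 0\) and \(r_i\) is a syzygy. I expect this degree comparison to be the main point; it is exactly where the quantity \(2\ell\) in \eqref{eqn:order_yielding_syz} comes from.

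Finally, since \(r_i\) is a nonzero syzygy, the first part gives \(r_i = c\,z\) for some nonzero \(c \in \pR\). Then \(\rho_i = \deg(c) + \rdeg[\ash]{z}\), and the bound \(\rho_i \le \rdeg[\ash]{z}\) forces \(\deg(c)=0\). So \(c\) is a nonzero constant \(\lambda\), and \(r_i = [-\lambda\bpo/\pg \;\; \lambda\apo/\pg]\), as claimed. The only facts used beyond elementary ones are the predictable degree property and the divisibility argument; the degenerate case where \(\apo\) or \(\bpo\) is zero needs only a routine check with the same inequalities.
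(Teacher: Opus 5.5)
Your proof is correct. It differs from the paper's in its key tool.

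\textbf{The paper's route.} The paper splits according to which entry of \(z = [-\bpo/\pg \;\; \apo/\pg]\) is its \(\ash\)-pivot; it treats the right-hand case and calls the other similar. It then uses the minimality property of the \(\ash\)-minimal degree recalled in \cref{sec:prelim:polmat}. This bounds the pivot degree of the basis row \([\pp \;\; \pq]\) whose pivot lies in the same column, giving \(\deg(\pq) \le n-\ell\) and \(\deg(\pp) \le n-\ell-\ash\). Finally, it shows that \(\pp\apo/\pg + \pq\bpo/\pg\) has degree below \(\deg(\modulus)\) in the coprime case, after dividing by \(\pg\). In the other case it shows the same for \(\pp\apo+\pq\bpo\).

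\textbf{Your route.} You use the predictable degree property to bound the full \(\ash\)-row degree of any row with a nonzero coefficient in the expansion of \(z\). The identity \(\max(n-\ash,m)+\max(m+\ash,n)=\max(m+n,2n-\ash,2m+\ash)\) then covers both pivot positions at once. Your lower bound \(\deg(f)\ge\dsyz+\ell\), obtained from \(\pg\modulus \mid f\) or from the stronger degree hypothesis, merges the two cases of the statement into one inequality.

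\textbf{What each buys.} Your argument has no case split and only uses that the basis is \(\ash\)-reduced, so it gives the conclusion for any \(\ash\)-reduced basis of \(\relmod{\modulus}{\apo,\bpo}\). The paper's argument stays within facts it has already recalled; the predictable degree property is standard but not stated there. It also identifies directly which row is the syzygy: the one whose pivot is on the same side as that of \(z\). This is used later in \cref{sec:rel_xgcd:gcd_syz}. In your version that identification follows a posteriori: \(r_i=\lambda z\) has its \(\ash\)-pivot in the same column as \(z\), and a weak Popov basis has its pivots on the diagonal.
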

\begin{proof}
  The first claim is a classical fact: \(S = [-\bpo/\pg \;\; \apo/\pg] \in
  \mpR{1}{2}\) is obviously in the module of syzygies, and proving that the
  latter is generated by \(S\) boils down to the fact that \(\apo/\pg\) and
  \(\bpo/\pg\) are coprime. Observe that this basis \(S\) is in
  \(\ash\)-weak Popov form, for any shift \(\ash\). For the rest
  of the proof, we consider the case where the \(\ash\)-pivot of \(S\)
  is its right-hand entry; the other case, when it is the left-hand entry, is
  proved by similar arguments.

  Let \(\basis\) be an \(\ash\)-weak Popov basis of
  \(\relmod{\modulus}{\apo,\bpo}\). Let \([\pp \;\; \pq]\) be the second row of
  \(\basis\), whose \(\ash\)-pivot is \(\pq\). As mentioned above, by
  minimality, the fact that \(S\) is in
  \(\relmod{\modulus}{\apo,\bpo}\) implies that \([\pp \;\; \pq]\) has
  \(\ash\)-pivot degree at most that of \(S\), that is, \(\deg(\pq) \le
  \deg(\apo/\pg) = n-\ell\). By definition of \(\ash\)-pivot, it follows
  that \(\deg(\pp) \le \deg(\pq) - \ash \le n-\ell - \ash\). As a
  result,
  \[
    \deg(\pp \apo/\pg + \pq \bpo/\pg)
    \le \max(2n-2\ell-\ash, m+n-2\ell)
    \le \dsyz
    < \deg(\modulus).
  \]
  On the other hand, we have \(\pp \apo/\pg + \pq \bpo/\pg = 0 \bmod \modulus\)
  since \([\pp \;\; \pq]\) is in \(\relmod{\modulus}{\apo,\bpo}\) and
  \(\modulus\) is coprime with \(\pg\). This implies that \(\pp \apo/\pg + \pq
  \bpo/\pg = 0\), hence \([\pp \;\; \pq] = [-\lambda\bpo/\pg \;\;
  \lambda\apo/\pg]\) for some \(\lambda\in\pR\setminus\{0\}\). It remains to
  observe that \(\deg(\lambda) = 0\), since we have seen above that \(\deg(\pq)
  \le \deg(\apo/\pg)\).

  In the case \(\deg(\modulus) \ge \dsyz+\ell\), we actually have \(\deg(\pp
  \apo + \pq \bpo) < \deg(\modulus)\), hence \(\pp \apo + \pq
  \bpo = 0\). This means that \(\pp \apo/\pg + \pq
  \bpo/\pg = 0\), and one can conclude as in the previous paragraph
  without requiring the assumption of coprimeness.
  \qed
\end{proof}

To conclude this section, we recall a recursion property for relation bases
\cite[Sec.\,5]{BeckermannLabahn97}
\cite[Sec.\,2.2]{JeannerodNeigerVillard2020}, in its general form which
requires the knowledge of a factorization of \(\modulus\). This is not a
restriction when \(\modulus\) is a power of \(x\) or when it is built from
known roots. As such, this property is at the core of most approximant and
interpolant basis algorithms in the literature, and also of those presented
in \cref{sec:approx,sec:interp}.

\begin{lemma}
  \label{lem:dnc_relbas}%
  Let \(\modulus,\modulus_1,\modulus_2 \in \pR \setminus \{0\}\) be nonzero polynomials such that
  \(\modulus = \modulus_1 \modulus_2\). For polynomials \((\apo,\bpo) \in
  \pR^2\) and a shift \(\ash \in \ZZ\), take the following steps:
  \begin{algorithmic}[1]
    \State
      let \(
      \basis =
      [\begin{smallmatrix}
        p_{00} & p_{01} \\
        p_{10} & p_{11}
      \end{smallmatrix}]
      \in \mpR{2}{2}
      \)
      be an \(\ash\)-weak Popov basis of
      \(\relmod{\modulus_1}{\apo,\bpo}\) \\
      \MyComment{first basis, modulo \(\modulus_1\)}
    \State
      let \(\apor = \frac{p_{00} \apo + p_{01} \bpo}{\modulus_1} \rem \modulus_2\)
      and \(\bpor = \frac{p_{10} \apo + p_{11} \bpo}{\modulus_1} \rem \modulus_2\)
      \MyComment{residual polynomials}
    \State
      let \(\ashr = \ash+\deg(p_{00}) - \deg(p_{11})\)
      \MyComment{updated shift}
    \State
      let \(
      \basisr
      \in \mpR{2}{2}
      \)
      be a \(\ashr\)-weak Popov basis of
      \(\relmod{\modulus_2}{\apor,\bpor}\)
      \MyComment{second basis, modulo \(\modulus_2\)}
  \end{algorithmic}
  Then the product \(\basisr \basis\) is an \(\ash\)-weak Popov basis of
  \(\relmod{\modulus}{\apo,\bpo}\). Furthermore, the \(\ash\)-minimal
  degree of \(\relmod{\modulus}{\apo,\bpo}\) is the sum
  \((\mathring{p}_{00}+\mathring{q}_{00},\mathring{p}_{11}+\mathring{q}_{11})\),
  where \((\mathring{p}_{00},\mathring{p}_{11})\) is the
  \(\ash\)-minimal degree of \(\relmod{\modulus_1}{\apo,\bpo}\) and
  \((\mathring{q}_{00},\mathring{q}_{11})\) is the \(\ashr\)-minimal
  degree of \(\relmod{\modulus_2}{\apor,\bpor}\).
\end{lemma}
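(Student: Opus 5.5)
The plan has three parts: show that the rows of \(\basisr\basis\) lie in \(\relmod{\modulus}{\apo,\bpo}\), that they generate it, and that \(\basisr\basis\) is in \(\ash\)-weak Popov form. The statement on minimal degrees will then follow from the diagonal degrees of the product together with the uniqueness of \(\ash\)-minimal degrees recalled in \cref{sec:prelim:polmat}.

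\emph{Membership.} Since the rows of \(\basis\) lie in \(\relmod{\modulus_1}{\apo,\bpo}\), we can write \(\basis\,[\apo\;\;\bpo]^T = \modulus_1 [\tilde\apor\;\;\tilde\bpor]^T\) exactly. Here \(\tilde\apor \equiv \apor\) and \(\tilde\bpor\equiv\bpor \bmod \modulus_2\) by definition of the residuals. For any row \(r\) of \(\basisr\) we get \(r\basis[\apo\;\;\bpo]^T = \modulus_1\, r[\tilde\apor\;\;\tilde\bpor]^T\). The latter factor vanishes modulo \(\modulus_2\) because \(r\in\relmod{\modulus_2}{\apor,\bpor}\), so the product vanishes modulo \(\modulus_1\modulus_2=\modulus\).

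\emph{Generation.} Take \((p,q)\in\relmod{\modulus}{\apo,\bpo}\). Since \(\modulus_1\mid\modulus\), this pair lies in \(\relmod{\modulus_1}{\apo,\bpo}\), hence \([p\;\;q]=u\basis\) for some \(u\in\mpR{1}{2}\). Then \(0 \equiv u\basis[\apo\;\;\bpo]^T = \modulus_1 u[\tilde\apor\;\;\tilde\bpor]^T \bmod \modulus_1\modulus_2\), which gives \(u[\apor\;\;\bpor]^T\equiv 0\bmod\modulus_2\). So \(u\in\relmod{\modulus_2}{\apor,\bpor}\), thus \(u=w\basisr\) and \([p\;\;q]=w\basisr\basis\). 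Finally, \(\det(\basisr\basis)\neq0\), so the two rows are independent and \(\basisr\basis\) is a basis. This route avoids any gcd bookkeeping with \cref{cor:determinant_relmod}.

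\emph{Weak Popov form and degrees.} Let \((\rho_0,\rho_1)=(\deg p_{00}+\ash,\deg p_{11})\) be the \(\ash\)-row degree of \(\basis\). Note that \(\ashr=\rho_0-\rho_1\), so the \(\ashr\)-shift measures \(u=(u_0,u_1)\) with the weights \((\rho_0,\rho_1)\) up to a common offset. The key claim is a pivot-preserving predictable-degree property. If \(u\) has \(\ashr\)-pivot in column \(j\), with degree \(\deg u_j\), then \(u\basis\) has \(\ash\)-pivot in column \(j\), with \(\ash\)-row degree \(\deg u_j+\rho_j\).

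To prove the claim, compare the two columns of \(u\basis\), namely \(u_0p_{00}+u_1p_{10}\) (shifted by \(\ash\)) and \(u_0p_{01}+u_1p_{11}\). The weak Popov inequalities \(\deg p_{01}<\rho_0\) and \(\deg p_{10}+\ash\le\rho_1\) bound the off-diagonal contributions. Consider first the case where \(j=1\), i.e.\ \(\deg u_0+\rho_0\le \deg u_1+\rho_1\). Then the column-1 entry has exact degree \(\deg u_1+\rho_1\), since \(u_0p_{01}\) has degree strictly below \(\deg u_0+\rho_0\). The shifted column-0 entry has degree at most \(\deg u_1+\rho_1\). So the pivot is the rightmost entry, in column 1. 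In the case \(j=0\), we have \(\deg u_0+\rho_0>\deg u_1+\rho_1\). The shifted column-0 entry then has exact degree \(\deg u_0+\rho_0\), while the column-1 entry is strictly smaller, so the pivot is in column 0.

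Applying the claim to the rows of \(\basisr\), which is \(\ashr\)-weak Popov, places the \(\ash\)-pivots of \(\basisr\basis\) on the diagonal. The diagonal degrees are \(\deg q_{00}+\deg p_{00}\) and \(\deg q_{11}+\deg p_{11}\). Since \(\ash\)-minimal degrees are the diagonal degrees of any \(\ash\)-weak Popov basis, this yields the stated sum. The main obstacle is precisely this tie-breaking check (rightmost pivot versus the strict/non-strict inequalities of weak Popov form), and the case analysis above is how I will handle it.
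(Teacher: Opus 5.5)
Your proof is correct and complete. The paper gives no proof of this lemma; it defers to Beckermann--Labahn (1997, Sec.~5) and Jeannerod--Neiger--Villard (2020, Sec.~2.2). Your argument is the standard one from those sources, specialized to $2\times 2$ matrices:
\begin{itemize}
\item Membership and generation both come from the exact identity $\basis[\apo\;\;\bpo]^T=\modulus_1[\tilde\apor\;\;\tilde\bpor]^T$, cancelling $\modulus_1$ in the integral domain $\pR$.
\item The shifted form comes from the predictable-degree property with weights $(\rho_0,\rho_1)$, noting that $\ashr=\rho_0-\rho_1$.
\end{itemize}
Your case split correctly matches the rightmost tie-breaking rule to the strict/non-strict pattern of the two weak Popov inequalities, including when entries are zero.

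Proving generation directly, rather than through determinants via \cref{cor:determinant_relmod}, has a further benefit. It spares you from showing $\gcd(\apo,\bpo,\modulus)=\gcd(\apo,\bpo,\modulus_1)\gcd(\apor,\bpor,\modulus_2)$ up to units. It also does not need the moduli to be monic.
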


In this framework, the main tasks beyond recursive calls is to efficiently
compute the residual polynomials and the product of bases. We will use
the next remark for interpolant bases.

\begin{remark}
  \label{rmk:modified_relbas}%
  For the residual polynomials, one can also pick any divisor \(\gamma\) of
  \(\modulus_1\) such that \(\modulus_1/\gamma\) is coprime with
  \(\modulus_2\), and take \(\apor = \frac{p_{00} \apo + p_{01} \bpo}{\gamma}
  \rem \modulus_2\) and \(\bpor = \frac{p_{10} \apo + p_{11} \bpo}{\gamma}
  \rem \modulus_2\). Indeed, since
  \(\modulus_1/\gamma\) is invertible modulo \(\modulus_2\), it is easily
  verified that \(\relmod{\modulus_2}{u, v} =
  \relmod{\modulus_2}{u \modulus_1/\gamma, v  \modulus_1/\gamma}\) for
  any \(u, v\in \pR\). In particular, if \(\modulus_1\) and
  \(\modulus_2\) are coprime, we can take \(\gamma=1\).
  \qed
\end{remark}

\section{Minimal relation bases: rational reconstruction and XGCD}
\label{sec:rel_xgcd}

In this section, we describe a reduction (\cref{sec:rel_xgcd:main_proof}) which
allows one to solve Problems~\Call{pbm:ratrecon}{} and~\Call{pbm:xgcd}{} via an
instance of \problemName{pbm:rel} with some freely chosen modulus. One can thus
reduce to instances which provide the best efficiency, such as approximant
bases or interpolant bases as discussed in \cref{sec:approx,sec:interp}.

We decompose this reduction into three building blocks, presented in
\cref{sec:rel_xgcd:rel_to_app,sec:rel_xgcd:rel_to_rel,sec:rel_xgcd:handle_dega}.
The first one reduces the computation of an \(\ash\)-weak
Popov basis of \(\relmod{\modulus}{\apo,-1}\) to that of a \(0\)-weak Popov
basis of approximants at order \(d-1-\ash\), where \(d = \deg(\modulus)\).
The second one shows how to reduce the computation of an \(\ash\)-weak
Popov basis of \(\relmod{\modulus}{\apo,-1}\) to that of an
\(\ash\)-weak Popov basis of relations modulo any \(\nodulus\) also of
degree \(d\), provided that \(\nodulus\) is coprime with \(\modulus\) and
\(\ash \ge 0\). The third one allows one to take into account the possible
discrepancy between \(\deg(\apo)\) and \(d\), leading to a complexity bound
whose main term depends on \(\deg(\apo)\) rather than \(d\). Finally,
we discuss the particular case of \problemName{pbm:xgcd} in
\cref{sec:rel_xgcd:xgcd_to_rel,sec:rel_xgcd:gcd_syz}.

\subsection{Reducing from relation bases to approximant bases}
%% , when \texorpdfstring{\(\bpo = -1\)}{b == -1}}
\label{sec:rel_xgcd:rel_to_app}

In this subsection, we show that an \(\ash\)-weak Popov basis of
\(\relmod{\modulus}{\apo,-1}\) can be deduced efficiently from some
\(0\)-weak Popov basis of approximants at order \(d-1-\ash\), where
\(d = \deg(\modulus)\). The complexity of computing such a relation basis thus
directly depends on the discrepancy between \(d\) and \(\ash\), which
agrees with intuition. Indeed, in this case with \(\bpo=-1\), shifts that
are positive and large should be easier to handle than the uniform
shift and than shifts with small amplitude. This is easily observed in the
extreme case \(\ash \ge d\): then the \(\ash\)-Popov basis of
\(\relmod{\modulus}{\apo,-1}\) is
\begin{equation}
  \label{eqn:relbas_bm1_hermite}
  H =
  \begin{bmatrix}
    1 & \apo \\
    0 & \modulus
  \end{bmatrix}
  \in \mpR{2}{2},
\end{equation}
which is known without any computation. More generally, when \(\ash\) is
close to \(d\), a small amount of work should suffice to transform the above
basis into an \(\ash\)-weak Popov one, since this can be done through
only a few steps of the extended Euclidean algorithm.

At the other side of the range of parameters, when \(\ash\) is large and negative,
the approximation order \(d-1-\ash\) can still be thought of as being less
than \(2d\). Indeed one may assume \(-\ash \le d\), since when \(-\ash > d\)
one may as well work with the shift \(-d\), as showed in
\cref{cor:shift_amplitude}.

\begin{proposition}
  \label{prop:ratrecon_reduction_general}%
  Let \(\modulus \in \pR\) be monic of degree \(d \in \NN\), let \(\apo \in
  \pR_{<d}\), and let \(\ash \in \ZZ\). The upper Hermite basis of
  \(\relmod{\modulus}{\apo,-1}\) is the matrix \(H\) in
  \cref{eqn:relbas_bm1_hermite}. If \(\ash \ge \deg(\apo) + 1\), this
  matrix \(H\) is in \(\ash\)-Popov form. If \(\ash =
  \deg(\apo)\), then \(H\) is in \(\ash\)-reduced form, and an
  \(\ash\)-weak Popov form of it is
  \([\begin{smallmatrix} q & -r \\ 1 & \apo \end{smallmatrix}] \in
  \mpR{2}{2}\) where \(q = x^{d - \deg(\apo)}/\lc{\apo}\) and \(r =
  \modulus - q\apo\).

  %% now the real stuff:
  If \(\ash < \deg(\apo)\) (which implies \(d - 1 - \ash > 0\)),
  consider the following construction:
  \begin{itemize}[noitemsep,topsep=2pt]
    \item define the reversals \(\bapo = x^{d-1} \apo(1/x)\) and \(\bmodulus = x^d \modulus(1/x)\);
    \item let \(\bar{P} = [\bar{p}_{ij}] \in \mpR{2}{2}\) be a \(0\)-weak Popov basis of \(\appmod{d-1-\ash}{\bapo, \bmodulus}\);
    \item define the reversals \(p_{00} = x^{\deg(\bar{p}_{00})} \bar{p}_{00}(1/x)\)
          and \(p_{10} = x^{\deg(\bar{p}_{11})+1} \bar{p}_{10}(1/x)\);
    \item let \(\apor = p_{00} \apo \rem \modulus\) and \(\bpor = p_{10} \apo \rem \modulus\)
      and finally construct
      \begin{equation}
        \label{eqn:ratrecon_reduction_mat}%
        Q =
        \begin{bmatrix}
          p_{00} & \apor \\
          p_{10} & \bpor
        \end{bmatrix}
        \in \mpR{2}{2};
      \end{equation}
  \end{itemize}
  then \(Q\) is an \(\ash\)-reduced basis of
  \(\relmod{\modulus}{\apo,-1}\).
\end{proposition}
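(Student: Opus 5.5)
The first statements follow from \cref{lem:relbas_hermite_bases}, in its symmetric upper-Hermite version, applied with \(\bpo=-1\). Then \(\beta=\gamma=1\) and \(\tilde{\apo} = \apo \rem \modulus = \apo\), since \(\deg(\apo)<d\); this gives \(H\).

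\emph{The cases \(\ash \ge \deg(\apo)\).} The \(\ash\)-pivot of the first row \([1\;\;\apo]\) is its left entry exactly when \(\deg(\apo) < \ash\). The second row \([0\;\;\modulus]\) always has its pivot on the diagonal. Hence \(H\) is \(\ash\)-Popov as soon as \(\ash\ge\deg(\apo)+1\). When \(\ash=\deg(\apo)\), the \(\ash\)-row degree of \(H\) is \((\ash,d)\), whose sum is \(\deg(\det H)+\ash\), so \(H\) is \(\ash\)-reduced. For the weak Popov form, I would check the identity
\[
  [\begin{smallmatrix} q & -r \\ 1 & \apo\end{smallmatrix}]
  = [\begin{smallmatrix} q & -1 \\ 1 & 0\end{smallmatrix}] H .
\]
The left factor has determinant \(1\). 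In the new matrix, \(\deg(r)<d=\deg(q)+\ash\) because the leading term of \(\modulus\) cancels in \(r=\modulus-q\apo\). Also \(\deg(1)+\ash=\deg(\apo)\), so both pivots lie on the diagonal.

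\emph{The case \(\ash<\deg(\apo)\): rows of \(Q\) are relations.} Each row of \(Q\) has the form \([p\;\;p\apo\rem\modulus]\), so it lies in \(\relmod{\modulus}{\apo,-1}\). Moreover,
\[
  \det(Q) = p_{00}\bpor - p_{10}\apor \equiv p_{00}p_{10}\apo - p_{10}p_{00}\apo = 0 \bmod \modulus .
\]
By \cref{cor:determinant_relmod} with \(\gamma=1\), it therefore suffices to prove two things: \(\det(Q)\neq 0\) and \(\deg(\det Q)\le d\). Then \(\det(Q)=\lambda\modulus\), so \(Q\) is a basis. Reducedness would then follow from a row degree bound \(\rho_0+\rho_1\le d+\ash\), because the reverse inequality always holds.

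\emph{Degree bounds via reversal.} The approximant basis \(\bar P\) has determinant a constant times \(x^{d-1-\ash}\). Indeed, \(\gcd(\bapo,\bmodulus,x^{d-1-\ash})=1\) because \(\bmodulus(0)=1\), as \(\modulus\) is monic. So \cref{lem:degree_bounds_2x2} gives \(\deg\bar p_{00}+\deg\bar p_{11}=d-1-\ash\), and hence \(\deg p_{00}+\deg p_{10}\le d-\ash\).

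The key computation translates the approximant relation \(\bar p_{i0}\bapo+\bar p_{i1}\bmodulus = 0 \bmod x^{d-1-\ash}\) back by reversal. Writing \(p\apo = k\modulus + (p\apo\rem\modulus)\), the relation should say that the reversal of \(k\) is essentially \(-\bar p_{i1}\), up to a power of \(x\). Equivalently, the high-order coefficients of \(p\apo-k\modulus\) vanish down to a certain degree, which yields \(\deg(\apor)\le \deg(p_{00})+\ash-1\) and \(\deg(\bpor)\le \deg(p_{10})+\ash\). I would set this up by comparing \(x^{N}p(1/x)\apo(1/x)\) with \(\bar p\,\bapo\) for the appropriate exponent \(N\). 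The extra \(+1\) in the reversal defining \(p_{10}\) accounts for \(\bar p_{10}\) possibly having degree up to \(\deg(\bar p_{11})\).

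These bounds give \(\rho_0\le\deg(p_{00})+\ash\) and \(\rho_1\le\deg(p_{10})+\ash\), so \(\rho_0+\rho_1\le d+\ash\) and \(\deg\det Q\le d\).

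\emph{Main obstacle.} The step I expect to be hardest is proving \(\det(Q)\neq0\). The plan is to show that the \(\ash\)-leading coefficient matrix of \(Q\) is nonsingular, because it is obtained by reversal from the constant leading matrix of \(\bar P\). That matrix is invertible since \(\bar P\) is \(0\)-weak Popov. Concretely, the pivot of row \(0\) of \(Q\) lies in its left entry, with leading coefficient \(\bar p_{00}(0)\) in reversed form. For row \(1\), the degree-\((\deg(p_{10})+\ash)\) coefficients of \(p_{10}\) and \(\bpor\) come from \(\bar p_{10}(0)\) and from the leading part of \(\bar p_{11}\).

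If this leading-matrix argument becomes messy, the fallback is a minimality argument. Any nonzero relation can be reversed into an approximant, so a \(\pR\)-linear dependence between the rows of \(Q\) would produce a dependence between the rows of \(\bar P\), contradicting that \(\bar P\) is nonsingular.
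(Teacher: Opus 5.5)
Your treatment of the Hermite basis and of the cases \(\ash\ge\deg(\apo)\) is fine. So is the reduction ``the rows lie in \(\relmod{\modulus}{\apo,-1}\) and \(\modulus\mid\det(Q)\), so it suffices to show \(\det(Q)\neq0\) and \(\rho_0+\rho_1\le d+\ash\)''. The gap is in how you propose to prove these two facts.

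\textbf{The row-degree bounds are false.} The bounds you announce, \(\deg(\apor)\le\deg(p_{00})+\ash-1\) and \(\deg(\bpor)\le\deg(p_{10})+\ash\), both fail.
\begin{itemize}
\item Take \(d=2\), \(\modulus=x^2\), \(\apo=x+c\), \(\ash=0\). Then \(\bapo=1+cx\), \(\bmodulus=1\), and \(\bar P=[\begin{smallmatrix}x&0\\-1&1\end{smallmatrix}]\) is a \(0\)-weak Popov basis of \(\appmod{1}{\bapo,\bmodulus}\). This gives \(p_{00}=1\), while \(\apor=x+c\) has degree \(1\).
\item Take \(d=3\), \(\modulus=x^3\), \(\apo=x^2+\alpha x+\beta\) with \(\alpha\neq0\), \(\ash=0\). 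The \(0\)-weak Popov basis \([\begin{smallmatrix}1-\alpha x&-1\\-x&x\end{smallmatrix}]\) gives \(p_{10}=-x\), while \(\bpor=-\alpha x^2-\beta x\) has degree \(2\).
\end{itemize}
Reversal only controls degrees relative to the reversal exponents \(\bar\rho_0=\deg(\bar p_{00})\) and \(\bar\rho_1+1\), where \(\bar\rho_1=\deg(\bar p_{11})\). The actual degrees of \(p_{00},p_{10}\) drop when \(\bar p_{00},\bar p_{10}\) have positive valuation. Also, there is no \(-1\). The correct bounds are \(\rho_0\le\bar\rho_0+\ash\) and \(\rho_1\le\bar\rho_1+1+\ash\), which still sum to \(d+\ash\).

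In particular, \(Q\) is in general only reduced, not weak Popov. In the first example, row \(0\) of \(Q=[\begin{smallmatrix}1&x+c\\-x&-cx\end{smallmatrix}]\) has its \(0\)-pivot on the right. So the premise of your leading-matrix argument (that row \(0\) of \(Q\) has its pivot on the left, with coefficient \(\bar p_{00}(0)\)) is wrong.

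\textbf{The hypothesis \(\ash<\deg(\apo)\) is never used.} This is the more serious problem. Neither your leading-matrix argument nor the fallback uses this hypothesis, and without it the conclusion fails. Take \(d=3\), \(\apo=x\), \(\ash=1=\deg(\apo)\). The order is \(\bar d=d-1-\ash=1\), \(\bapo=x\), and \(\bar P=\diag{1,x}\) is a \(0\)-weak Popov basis of \(\appmod{1}{\bapo,\bmodulus}\) with identity leading matrix. Yet \(\bar p_{10}=0\), so \(Q\) has a zero row.

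The missing idea is to show \(\bar\rho_1<\bar d\). Otherwise \((\bar\rho_0,\bar\rho_1)=(0,\bar d)\), so \(\bar p_{01}=0\) and \(\bar p_{00}\) is a nonzero constant. This forces \(x^{\bar d}\mid\bapo\), contradicting \(\valuation{\bapo}=d-1-\deg(\apo)<\bar d\).

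This is exactly what guarantees that the reversed residual \(p_{10}\apo+p_{11}\modulus\), with \(p_{11}=x^{\bar\rho_1}\bar p_{11}(1/x)\), has degree at most \(\bar\rho_1+1+\ash<d\). Hence it coincides with \(p_{10}\apo\rem\modulus\). For row \(0\), \(\bar\rho_0\le\bar d\) already suffices, with \(p_{01}=x^{\bar\rho_0-1}\bar p_{01}(1/x)\).

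Once \(\apor=p_{00}\apo+p_{01}\modulus\) and \(\bpor=p_{10}\apo+p_{11}\modulus\) are established (your ``key computation'' with \(k=-p_{i1}\)), no leading-matrix or minimality argument is needed. Using \(\det(\bar P)=\lambda x^{\bar d}\) and \(\bar\rho_0+\bar\rho_1=\bar d\), you get directly \(\det(Q)=(p_{00}p_{11}-p_{01}p_{10})\modulus=x^{\bar d}\det(\bar P(1/x))\,\modulus=\lambda\modulus\). This gives the basis property via \cref{cor:determinant_relmod}. Reducedness then follows from the corrected row-degree bounds. This is the route taken in the paper.
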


\begin{proof}
  The first statement about the Hermite basis is a particular case of
  \cref{lem:relbas_hermite_bases} (more precisely, of the comment that follows
  this lemma). The facts that \(H\) is in \(\ash\)-Popov form when
  \(\ash \ge \deg(\apo)+1\) and in \(\ash\)-reduced form when
  \(\ash = \deg(\apo)+1\) are easily verified; in the latter case, it is
  also easily checked that the matrix \([\begin{smallmatrix} q & -r \\ 1 &
  \apo \end{smallmatrix}] \in \mpR{2}{2}\) is in \(\ash\)-weak Popov
  form and left-unimodularly equivalent to \(H\). In the rest of the proof we
  assume \(\ash < \deg(\apo)\).

  In particular, \(\apo \neq 0\), and the valuation of \(\bapo = x^{d-1}
  \apo(1/x)\) satisfies \(\valuation{\bapo} = d - 1 - \deg(\apo) < \bar{d}\),
  where \(\bar{d} = d - 1 - \ash \in \ZZp\).
  Let \(\bar{P} \in \mpR{2}{2}\) be a \(0\)-weak Popov basis of
  \(\appmod{\bar{d}}{\bapo, \bmodulus}\) and let \(\bapor, \bbpor \in \pR\) be
  the associated residuals:
  \begin{equation}
    \label{eqn:right_col_via_midprod}
    \bar{P} =
    \begin{bmatrix}
      \bar{p}_{00} & \bar{p}_{01} \\
      \bar{p}_{10} & \bar{p}_{11}
    \end{bmatrix}
    ,\quad
    \bapor = x^{-\bar{d}} (\bar{p}_{00} \bapo + \bar{p}_{01} \bmodulus)
    ,\quad
    \bbpor = x^{-\bar{d}} (\bar{p}_{10} \bapo + \bar{p}_{11} \bmodulus)
    .
  \end{equation}
  Let \(\bar{\rho}_0 = \deg(\bar{p}_{00})\)
  and \(\bar{\rho}_1 = \deg(\bar{p}_{11})\). The definition and properties
  recalled in \cref{sec:prelim:polmat,cor:determinant_relmod} ensure that
  \(\bar{\rho}_0 + \bar{\rho}_1 \le \bar{d}\), \(\bar{\rho}_0 >
  \deg(\bar{p}_{01})\), and \(\bar{\rho}_1 \ge \deg(\bar{p}_{10})\).
  Furthermore, the case \((\bar{\rho}_0, \bar{\rho}_1) = (0,\bar{d})\) is not
  possible here, as it would imply \(\bar{p}_{01} = 0\) and
  \(\deg(\bar{p}_{00})=0\) and thus \(\bapo = 0 \bmod x^{\bar{d}}\), which
  would contradict the above remark on the valuation of \(\bapo\).
  Hence \(\bar{\rho}_1 < \bar{d}\).

  Since \(\deg(\bapo) < d\) and \(\deg(\bmodulus) \le
  d\), we obtain the degree bounds \(\deg(\bapor) \le \bar{\rho}_0 + d - 1 -
  \bar{d} = \bar{\rho}_0 + \ash < d\) and
  \(\deg(\bbpor) \le \bar{\rho}_1 + d - \bar{d} = \bar{\rho}_1 + 1 + \ash
  < d\). We then define the reversals
  \[
    \apor = x^{\bar{\rho}_0 + \ash} \bapor(1/x)
    \quad\text{and}\quad
    \bpor = x^{\bar{\rho}_1 + 1 + \ash} \bbpor(1/x).
  \]
  We also define \(p_{ij}\) as the reversal of \(\bar{p}_{ij}\) according to
  the above stated degree bounds:
  \begin{align*}
    & p_{00} = x^{\bar{\rho}_0} \bar{p}_{00}(1/x)
    , \;\;
    p_{01} = x^{\bar{\rho}_0-1} \bar{p}_{01}(1/x)
    , \\
    & p_{10} = x^{\bar{\rho}_1+1} \bar{p}_{10}(1/x)
    , \;\;
    p_{11} = x^{\bar{\rho}_1} \bar{p}_{11}(1/x)
    .
  \end{align*}
  (Note that, for reversing \(\bar{p}_{10}\), we use \(\bar{\rho}_1+1\) rather
  than \(\bar{\rho}_1\).) It is then easily verified that \(\apor = p_{00} \apo
  + p_{01} \modulus\) and \(\bpor = p_{10} \apo + p_{11} \modulus\). Therefore
  both rows of the matrix \(Q\) defined above are in
  \(\relmod{\modulus}{\apo,-1}\). Note that these \(\apor\) and \(\bpor\) are
  indeed those from the statement, since  \(\deg(\apor) < d\)
  and  \(\deg(\bpor) < d\) holds by construction. To complete the proof, we are
  going to show that \(Q\) generates this \(\pR\)-module
  \(\relmod{\modulus}{\apo,-1}\), and that \(Q\) is in \(\ash\)-reduced
  form.

  Interpreting the above equations for \(\apor\) and \(\bpor\) as the \(2 \times
  2\) linear system \([\begin{smallmatrix} \apor \\ \bpor \end{smallmatrix}] =
  [\begin{smallmatrix} p_{00} & p_{01} \\ p_{10} & p_{11} \end{smallmatrix}]
[\begin{smallmatrix} \apo \\ \modulus \end{smallmatrix}]\), Cramer's rule shows that \(\det(Q) = (p_{00} p_{11} -
  p_{01} p_{10}) \modulus\). Since \(\bmodulus\) has nonzero constant
  coefficient, according to \cref{cor:determinant_relmod} we have
  \(\bar{p}_{00} \bar{p}_{11} - \bar{p}_{01}
  \bar{p}_{10} = \det(\bar{P}) = \lambda
  x^{\bar{d}}\) for some \(\lambda \in \field\setminus\{0\}\), and since
  \(\bar{P}\) is in weak Popov form we also have \(\bar{d} =
  \deg(\det(\bar{P})) = \bar{\rho}_0 + \bar{\rho}_1\). We deduce that
  \begin{align*}
    p_{00} p_{11} - p_{01} p_{10} & =
    x^{\bar{\rho}_0} \bar{p}_{00}(1/x) x^{\bar{\rho}_1} \bar{p}_{11}(1/x) -
    x^{\bar{\rho}_0-1} \bar{p}_{01}(1/x) x^{\bar{\rho}_1+1} \bar{p}_{10}(1/x) \\
                                  & = x^{\bar{d}} \det(\bar{P}(1/x))
          % x^{\bar{d}} (\bar{p}_{00}(1/x) \bar{p}_{11}(1/x) - \bar{p}_{01}(1/x) \bar{p}_{10}(1/x))
          = \lambda.
  \end{align*}
  Hence \(\det(Q) = \lambda \modulus\), which implies that \(Q\) is a basis of
  \(\relmod{\modulus}{\apo,-1}\) according to \cref{cor:determinant_relmod}.

  Finally, to prove that \(Q\) is in \(\ash\)-reduced form, we write
  \((\rho_0,\rho_1)\) for its \(\ash\)-row degree and we rely on
  the definition (see \cref{sec:prelim:polmat}): since \(\deg(\det(Q)) = d\),
  it is sufficient to prove that \(\rho_0 + \rho_1 \le d + \ash\).
  From the degree bounds on \(p_{00}\) and \(\apor\), we deduce that
  \(\deg(p_{00}) + \ash\) and \(\deg(\apor)\) are both less than or
  equal to \(\bar{\rho}_0 + \ash\), whence \(\rho_0 \le \bar{\rho}_0 + \ash\).
  Similarly, we deduce that \(\deg(p_{10}) + \ash\) and \(\deg(\bpor)\)
  are both less than or equal to \(\bar{\rho}_1 + 1 + \ash\), whence \(\rho_1
  \le \bar{\rho}_1 + 1 + \ash\). This yields \(\rho_0 + \rho_1 \le
  \bar{\rho}_0 + \bar{\rho}_1 + 1 + 2\ash = \bar{d} + 1 + 2\ash = d + \ash\).
  \qed
\end{proof}

Apart from computing \(\bar{\basis}\), the only task which involves algebraic
operations is to find \(\apor\) and \(\bpor\). The statement and the proof
offer two different ways to compute them, which both cost
\(\bigO{\timepm{d}}\). For an arbitrary \(\modulus\), we expect that going
through the computation of \(\bapor\) and \(\bbpor\) as in the proof would lead
to a slightly better leading constant via the use of middle products.

%% NOTE (kept in case, but this is a bit redundant)
% Observe that the approximant instance satisfies the requirements of the
% subsequent reduction, to be presented in the next subsection. Indeed, it is
% with an input shift \(0\) which is nonnegative and with an
% input second polynomial \(\bpo=-1\); while the latter is not directly visible,
% the fact that \(\bmodulus\) has valuation \(0\) ensures that
% \(\appmod{\bar{d}}{\bapo, \bmodulus} = \appmod{\bar{d}}{\hat{\apo}, -1}\),
% where \(\hat{\apo} = -\bapo\bmodulus^{-1} \rem x^{\bar{d}}\).

The obtained approximant basis instance benefits from \(\apo\) having small
degree: \(\deg(\apo) < d-1\) means that \(\bapo\) has positive valuation, and
then one can reduce the order from \(d-1-\ash\) to about \(2\deg(\apo) - d
- \ash\) using the ideas in \cref{sec:approx:appbasis:optimizations}. We
will not detail this further, as in
\cref{sec:rel_xgcd:handle_dega,sec:rel_xgcd:main_proof} we describe how to
handle such a degree discrepancy in a similar way but before applying the above
reduction.
%% NOTE slightly more detailed:
%% Indeed, when \(\bapo\) has valuation \(v_{\bapo} = d-1-\deg(\apo) > 0\),
%% and computing the a \(0\)-weak Popov basis of \(\appmod{\bar{d}}{\bapo,
%% \bmodulus}\) is the same as computing a \(-v_{\bapo}\)-weak Popov basis of
%% \(\appmod{\deg(\apo)-\ash}{x^{-v_{\bapo}}\bapo, \bmodulus}\) (see
%% \cref{sec:approx:appbasis:optimizations}).
%% Then one would apply some Newton inversion to compute an approx at order v_\bapo+1
%% and bring the shift back to (1,0).
%% The remaining order is
%%   ~ d - 1 -\ash - 2v_\bapo ~ 2\deg(\apo) - d - \ash
%% similar to the other method in sec:rel_xgcd:handle_dega

\subsection{Reducing from modulus \texorpdfstring{\(\modulus\)}{M}
            to modulus \texorpdfstring{\(\nodulus\)}{N},
            when the shift is nonnegative}
\label{sec:rel_xgcd:rel_to_rel}

We now present the second main piece of our reduction algorithm
(\cref{algo:RelationBasis-via-modN}): we have a target modulus \(\modulus\) and
an auxiliary modulus \(\nodulus\), and we show how to find an
\(\ash\)-weak Popov basis of \(\relmod{\modulus}{\apo,-1}\) from one of
\(\relmod{\nodulus}{\bapo, \bbpo}\), for some polynomials \(\bapo\) and
\(\bbpo\) described below.
Our method has several requirements, which prevent from switching between
moduli in full generality; specifically, the shift must be nonnegative,
and the moduli \(\modulus\) and \(\nodulus\) should have the same
degree and be coprime. As we will observe in
\cref{algo:RelationBasis-via-modN}, these requirements are not an issue when we
combine this with the first piece presented in \cref{sec:rel_xgcd:rel_to_app}.

\begin{proposition}
  \label{prop:ratrecon_reduction_niceshift}%
  Let \(\modulus \in \pR\) be monic of degree \(d \in \ZZp\), let \(\apo \in
  \pR_{<d}\), and let \(\ash \in \ZZ\) with \(\ash \ge 0\).
  Let \(\nodulus\) be any monic polynomial in \(\pR\) of degree \(d\) which is
  coprime with \(\modulus\). Define the polynomials \(\bapo = (\apo \nodulus)
  \rem \modulus\) and \(\bbpo = \modulus - \nodulus\), both of degree \(< d\),
  and let \(\basis \in \mpR{2}{2}\) be an \(\ash\)-weak Popov basis of
  \(\relmod{\nodulus}{\bapo, \bbpo}\). Define also
  \[
    \basisr =
    \begin{bmatrix}
      p_{00} & p_{01} + \apor \\
      p_{10} & p_{11} + \bpor
    \end{bmatrix}
    \text{ where }
    \basis =
    \begin{bmatrix}
      p_{00} & p_{01} \\
      p_{10} & p_{11}
    \end{bmatrix},\;
    \apor = \frac{p_{00} \bapo + p_{01} \bbpo}{\nodulus},\;
    \bpor = \frac{p_{10} \bapo + p_{11} \bbpo}{\nodulus}.
  \]
  Then \(\basisr\) is an \(\ash\)-weak Popov basis of
  \(\relmod{\modulus}{\apo,-1}\).
\end{proposition}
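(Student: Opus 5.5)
Three things need to be checked for \(\basisr\): that its rows lie in \(\relmod{\modulus}{\apo,-1}\), that it is a basis (via \cref{cor:determinant_relmod}), and that it is in \(\ash\)-weak Popov form (via the degree definitions of \cref{sec:prelim:polmat}).

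\emph{Membership.} Since the rows of \(\basis\) lie in \(\relmod{\nodulus}{\bapo,\bbpo}\), the quantities \(\apor\) and \(\bpor\) are genuine polynomials. From \(p_{00}\bapo + p_{01}\bbpo = \apor\nodulus\), I will reduce modulo \(\modulus\). Using \(\bapo \equiv \apo\nodulus\) and \(\bbpo \equiv -\nodulus \pmod{\modulus}\), this gives \(\nodulus(p_{00}\apo - p_{01} - \apor) \equiv 0 \pmod{\modulus}\). Coprimality of \(\nodulus\) and \(\modulus\) then yields \(p_{00}\apo - (p_{01}+\apor) \equiv 0 \pmod{\modulus}\), so the first row of \(\basisr\) is in \(\relmod{\modulus}{\apo,-1}\). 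The second row is handled identically.

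\emph{Determinant.} I will write \(\basisr\) as the product \(\basis \cdot [\begin{smallmatrix} 1 & \apo' \\ 0 & 1\end{smallmatrix}]\) plus a correction, or more directly use Cramer as in the proof of \cref{prop:ratrecon_reduction_general}. Setting \(R=[\begin{smallmatrix}\apor\\ \bpor\end{smallmatrix}]\), we have \(\nodulus R = \basis [\begin{smallmatrix}\bapo\\ \bbpo\end{smallmatrix}]\). Then \(\det(\basisr) = \det(\basis) + (p_{00}\bpor - p_{10}\apor)\), and by Cramer \(p_{00}\bpor - p_{10}\apor = \det(\basis)\,\bbpo/\nodulus\). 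Hence \(\det(\basisr) = \det(\basis)(1+\bbpo/\nodulus) = \det(\basis)\modulus/\nodulus\). Now \(\gcd(\bapo,\bbpo,\nodulus)=\gcd(\apo\nodulus \rem \modulus,\modulus,\nodulus)\) divides \(\gcd(\modulus,\nodulus)=1\), so \(\det(\basis)=\lambda\nodulus\) and \(\det(\basisr)=\lambda\modulus\). This is a basis by \cref{cor:determinant_relmod}.

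\emph{Weak Popov form.} This is the step I expect to be the main obstacle. By \cref{lem:degree_bounds_2x2}, \(\deg p_{00}+\deg p_{11}=d\). I will bound \(\deg\apor \le \max(\deg p_{00},\deg p_{01}) + d-1-d\). Since \(\ash\ge0\) and \(\deg p_{01}<\deg p_{00}+\ash\), this gives \(\deg\apor < \deg p_{00}+\ash\), so the off-diagonal entry of the first row remains dominated and the pivot stays at \(p_{00}\).

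For the second row, I need \(\deg(p_{11}+\bpor) = \deg p_{11}\) and \(\deg p_{10}+\ash \le \deg p_{11}\), which is unchanged. Since \(\deg\bpor \le \max(\deg p_{10},\deg p_{11})-1\), it suffices to have \(\deg p_{10} \le \deg p_{11}\), which follows from \(\deg p_{10}\le \deg p_{11}-\ash\) and \(\ash\ge0\). Then \(\deg\bpor<\deg p_{11}\), so the diagonal is preserved. The nonnegativity of the shift is exactly what makes this work.

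As a check, if \(\deg p_{00}=0\) the bound on \(\apor\) is negative, which forces \(\apor=0\); this is consistent. This completes the plan.
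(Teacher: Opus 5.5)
Your proposal is correct and follows the paper's proof essentially step for step. Membership comes from reducing \(\nodulus\apor = p_{00}\bapo + p_{01}\bbpo\) modulo \(\modulus\) and cancelling the invertible \(\nodulus\); the identity \(\det(\basisr) = \det(\basis) + (p_{00}\bpor - p_{10}\apor) = \lambda\modulus\) comes from Cramer's rule; and the weak Popov check uses \(\deg(\apor) < \max(\deg(p_{00}),\deg(p_{01}))\), \(\deg(\bpor) < \max(\deg(p_{10}),\deg(p_{11}))\) and \(\ash \ge 0\).

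Your closing ``check'' is wrong, though nothing depends on it. When \(\ash > 0\), the weak Popov condition only gives \(\deg(p_{01}) < \ash\), so \(\deg(p_{00}) = 0\) does not force \(\apor = 0\). For instance, \(\modulus = x^2\), \(\nodulus = x^2+x+1\), \(\apo = x\), \(\ash = 2\) and the \(2\)-weak Popov basis \(\basis = [\begin{smallmatrix} 1 & x+1 \\ 0 & \nodulus \end{smallmatrix}]\) give \(\apor = -1\).
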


\begin{proof}
  By construction of \(\apor\), we have \(\nodulus \apor = p_{00}
  \bapo + p_{01} \bbpo\). Thus, by definition of \(\bapo\) and \(\bbpo\), we
  have \(\apor \nodulus = (p_{00} \apo \nodulus - p_{01} \nodulus) \bmod
  \modulus\), and since \(\modulus\) and \(\nodulus\) are coprime we can
  multiply by \(\nodulus^{-1} \bmod \modulus\) to obtain \(p_{00} \apo -
  (p_{01} + \apor) = 0 \bmod \modulus\). This means that the first row of
  \(\basisr\) is in \(\relmod{\modulus}{\apo,-1}\). Considering the definition
  of \(\bpor\) shows similarly that the second row of \(\basisr\) is in
  \(\relmod{\modulus}{\apo,-1}\).

  We now prove that \(\basisr\) is a basis of \(\relmod{\modulus}{\apo,-1}\).
  From \cref{cor:determinant_relmod} and the fact that \(\bbpo = \modulus -
  \nodulus\) is coprime with \(\nodulus\), we obtain that \(\det(\basis) =
  \lambda \nodulus\) for some nonzero constant \(\lambda \in
  \field\setminus\{0\}\). Using Cramer's rule on \(\basis [\begin{smallmatrix} \bapo \\ \bbpo
  \end{smallmatrix}] = [\begin{smallmatrix} \apor \nodulus \\ \bpor \nodulus \end{smallmatrix}]\),
  we get
  \[
    \modulus - \nodulus = \bbpo =
    \det\left(
      \begin{bmatrix}
        p_{00} & \apor \nodulus \\
        p_{10} & \bpor \nodulus
      \end{bmatrix}
    \right)
    / \det(\basis)
    =
    \det\left(
      \begin{bmatrix}
        p_{00} & \apor \\
        p_{10} & \bpor
      \end{bmatrix}
    \right)
    / \lambda.
  \]
  We deduce
  \[
    \det(\basisr) = \det(\basis) + \det\left(
      \begin{bmatrix}
        p_{00} & \apor \\
        p_{10} & \bpor
      \end{bmatrix}
    \right)
    =
    \lambda \nodulus + \lambda (\modulus - \nodulus)
    = \lambda \modulus.
  \]
  By \cref{cor:determinant_relmod}, this ensures that \(\basisr\) is
  a basis of \(\relmod{\modulus}{\apo,-1}\).

  Recall that \(\deg(\bapo) < d\), \(\deg(\bbpo) < d\), and \(d =
  \deg(\nodulus)\). Thus by definition of \(\apor\) and \(\bpor\), we get
  \[
    \deg(\apor) < \max(\deg(p_{00}), \deg(p_{01}))
    \quad\text{and}\quad
    \deg(\bpor) < \max(\deg(p_{10}), \deg(p_{11})).
  \]
  Since \(\basis\) is in \(\ash\)-weak Popov form with \(\ash \ge
  0\), we have \(\deg(p_{11}) \ge \deg(p_{10}) + \ash \ge
  \deg(p_{10})\). As a result, \(\deg(p_{11} + \bpor) = \deg(p_{11})\), and the
  \(\ash\)-pivot of the second row of \(\basisr\) is its right-hand
  entry. For the first row, since \(\basis\) is in \(\ash\)-weak Popov
  form we have \(\deg(p_{01}) < \deg(p_{00}) + \ash\), and the above bound
  on \(\deg(\apor)\) yields \(\deg(\apor) < \deg(p_{00}) + \ash\). Together, these
  inequalities ensure that
  \[
    \deg(p_{01} + \apor) \le \max(\deg(p_{01}), \deg(\apor))
    < \deg(p_{00}) + \ash,
  \]
  which concludes the proof.
  \qed
\end{proof}

Apart from the computation of \(\basis\), the algebraic operations in this
reduction consist in computing \((\bapo,\bbpo)\) as well as the second column
of \(\basisr\); both can be done in \(\bigO{\timepm{d}}\). Similarly to the
previous subsection, one can find the second column of \(\basisr\) either by
computing \(\apor,\bpor\) with the formulas in the statement, or as the product
of its first column by \(\apo\) modulo \(\modulus\). The involved leading
constants and how they compare is dependent on the properties of the moduli
\(\modulus\) and \(\nodulus\).

\subsection{Handling degree discrepancies}
\label{sec:rel_xgcd:handle_dega}

When considering a relation reconstruction instance, there may be a gap between
\(\deg(\apo)\) and \(d = \deg(\modulus)\), even when the instance
\((\apo,\modulus)\) is generic in the sense that all degrees are the expected
ones during the computation. In this case, it is interesting to have a
complexity bound where the main term \(\cstgeneral\, \timepm{d}\log(d)\) is
replaced by \(\cstgeneral\, \timepm{\deg(\apo)}\log(\deg(\apo)) +
\bigO{\timepm{d}}\). For this reason, similarly to performing a first step of
the extended Euclidean algorithm, we provide a way to compute a basis of
\(\relmod{\modulus}{\apo,-1}\) from one of \(\relmod{\apo}{r,-1}\), where \(r =
\modulus \rem \apo\). This is only intended as some input preprocessing that
is not repeated further: one typically has \(\deg(r) = \deg(\apo) - 1\), so
that computing recursively \(\relmod{\apo}{r,-1}\) from \(\relmod{r}{\apo \rem
r, -1}\) would lead to a quadratic-time process akin to the classical extended
Euclidean algorithm.

\begin{proposition}
  \label{prop:handle_dega}%
  Let \(\modulus \in \pR\) be monic of degree \(d \in \ZZp\), let \(\apo \in
  \pR_{<d}\), and let \(\ash \in \ZZ\). Assume \(\ash \le
  \deg(\apo)\); in particular, \(\apo\neq 0\). Let \((q,r)\) be the quotient
  and remainder in the division \(\modulus = q \apo + r\), with \(\deg(r) <
  \deg(\apo)\). Let \(\ashr = \ash + d - \deg(\apo)\).

  The matrix \(B = [\begin{smallmatrix} q & -r \\ 1 & \apo
  \end{smallmatrix}] \in \mpR{2}{2}\) is a basis of
  \(\relmod{\modulus}{\apo,-1}\). If \(\ashr > \deg(r)\), then \(B\) is
  in \(\ash\)-weak Popov form. If \(\ashr = \deg(r)\), then \(B\) is
  in \(\ash\)-reduced form, and the matrix \([\begin{smallmatrix} 1 + u
    q & \apo - u r \\ q & -r
  \end{smallmatrix}]\) is an \(\ash\)-weak Popov form of it, where \(u
  = x^{\deg(\apo) - \deg(r)} \lc{\apo} / \lc{r}\).

  Let \(P = [p_{ij}] \in \mpR{2}{2}\) be a \(\ashr\)-weak
  Popov basis of \(\relmod{\apo}{r,-1}\). Then, the matrix
  \[
    Q =
    \begin{bmatrix}
      p_{00} q + \apor & -p_{01} \\
      p_{10} q + \bpor & -p_{11}
    \end{bmatrix}
    \in \mpR{2}{2}
    \quad \text{where }
    \apor = \frac{p_{00} r - p_{01}}{\apo}
    \text{ and }
    \bpor = \frac{p_{10} r - p_{11}}{\apo}.
  \]
  is an \(\ash\)-weak Popov basis of \(\relmod{\modulus}{\apo,-1}\).
\end{proposition}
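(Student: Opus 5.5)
The plan is to treat the three claims separately. Each time, membership of the rows in \(\relmod{\modulus}{\apo,-1}\) is checked directly, the basis property follows from \cref{cor:determinant_relmod}, and the shape is checked by degree comparisons. A pair \((x,y)\) lies in \(\relmod{\modulus}{\apo,-1}\) exactly when \(x\apo - y = 0 \bmod \modulus\).

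\emph{The matrix \(B\).} For the first row, \(q\apo + r = \modulus\); for the second, \(\apo - \apo = 0\). Moreover \(\det(B) = q\apo + r = \modulus\). Since \(\gcd(\apo,-1,\modulus)=1\), \cref{cor:determinant_relmod} shows that \(B\) is a basis.

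For its shape, note \(\deg(q) = d - \deg(\apo)\). The \(\ash\)-row degree of the first row is therefore \(\max(\ashr, \deg(r))\), with the pivot on the left exactly when \(\ashr > \deg(r)\). The \(\ash\)-row degree of the second row is \(\max(\ash, \deg(\apo)) = \deg(\apo)\), with the pivot on the right because \(\ash \le \deg(\apo)\). This gives the weak Popov claim.

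When \(\ashr = \deg(r)\), the sum of the \(\ash\)-row degrees is \(\ashr + \deg(\apo) = d + \ash = \deg(\det B) + \ash\), so \(B\) is \(\ash\)-reduced. The proposed weak Popov form is \([\begin{smallmatrix} u & 1 \\ 1 & 0\end{smallmatrix}] B\), and this transformation is unimodular. The choice of \(u\) cancels the leading term of \(\apo\), so \(\deg(\apo - ur) < \deg(\apo)\). Also \(\deg(1+uq) + \ash = \deg(\apo) - \deg(r) + \ashr = \deg(\apo)\), so the pivot of the first row is on the diagonal. In the row \((q,-r)\), the shifted left degree \(\ashr\) equals \(\deg(r)\), and the tie places the pivot on the right, as required.

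\emph{The matrix \(Q\).} First, \(\apor\) and \(\bpor\) are polynomials, since the rows of \(P\) lie in \(\relmod{\apo}{r,-1}\). The key identity is
\[
  (p_{00}q + \apor)\apo + p_{01} = p_{00}(q\apo + r) = p_{00}\modulus,
\]
and similarly for the second row, so the rows of \(Q\) are in \(\relmod{\modulus}{\apo,-1}\). Expanding the determinant gives
\[
  \det(Q) = -q\det(P) - (\apor p_{11} - \bpor p_{01}) .
\]
Substituting the definitions yields \(\apor p_{11} - \bpor p_{01} = r\det(P)/\apo\). By \cref{cor:determinant_relmod}, \(\det(P) = \lambda\apo\) with \(\lambda \neq 0\), hence \(\det(Q) = -\lambda(q\apo + r) = -\lambda\modulus\). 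Therefore \(Q\) is a basis.

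\emph{The \(\ash\)-weak Popov shape of \(Q\).} I expect this to be the only delicate step. The idea is that adding \(\apor\) or \(\bpor\) never disturbs the leading terms coming from \(P\).

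For the first row, I will show \(\deg(\apor) < \deg(p_{00}) + d - \deg(\apo) = \deg(p_{00}q)\). This uses \(\deg(\apor) \le \max(\deg(p_{00}) + \deg(r), \deg(p_{01})) - \deg(\apo)\), together with \(\deg(r) < \deg(\apo) \le d\) and \(\deg(p_{01}) < \deg(p_{00}) + \ashr\) (the \(\ashr\)-weak Popov property of \(P\)). It also uses \(\ash \le \deg(\apo)\), which gives \(\ashr - \deg(\apo) \le d - \deg(\apo)\). Hence \(\deg(Q_{00}) = \deg(p_{00}) + d - \deg(\apo)\), and the first-row pivot condition \(\deg(p_{01}) < \deg(Q_{00}) + \ash = \deg(p_{00}) + \ashr\) is exactly that property of \(P\).

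For the second row, I need \(\deg(p_{10}q + \bpor) + \ash \le \deg(p_{11})\). The term \(p_{10}q\) contributes \(\deg(p_{10}) + \ashr \le \deg(p_{11})\). The term \(\bpor\) is bounded by \(\max(\deg(p_{10}) + \deg(r), \deg(p_{11})) - \deg(\apo) + \ash\), and both parts are at most \(\deg(p_{11})\) by the same inequalities. The case \(p_{10} = 0\) is trivial.
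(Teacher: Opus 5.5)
Your proof is correct and follows essentially the same route as the paper's: direct membership checks, the determinant identity \(\det(Q) = -\lambda\modulus\) combined with \cref{cor:determinant_relmod}, and degree comparisons driven by the \(\ashr\)-weak Popov property of \(P\) together with the hypothesis \(\ash \le \deg(\apo)\). Your transformation \([\begin{smallmatrix} u & 1 \\ 1 & 0 \end{smallmatrix}]B\) is the correct one (the paper's \([\begin{smallmatrix} 1 & u \\ 0 & 1 \end{smallmatrix}]B\) is a misprint), and your upper-bound treatment of the second row of \(Q\) covers \(p_{10}=0\) uniformly, whereas the paper computes an exact degree and handles that case separately.
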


\begin{proof}
  Both rows of the matrix \(B\) are in \(\relmod{\modulus}{\apo,-1}\). Since
  \(\det(B) = q\apo + r = \modulus\), \cref{cor:determinant_relmod} ensures
  that \(B\) is a basis of \(\relmod{\modulus}{\apo,-1}\). Recall \(\ash
  \le \deg(\apo)\) and assume that \(\ashr \ge \deg(r)\). These
  inequalities imply \(\deg(1) + \ash \le \deg(\apo)\) and \(\deg(q) +
  \ash = \ashr \ge \deg(r)\). In particular, when \(\ashr >
  \deg(r)\), the latter inequality is strict and \(B\) is in
  \(\ash\)-weak Popov form. More generally, the \(\ash\)-row
  degree of \(B\) is therefore \((\deg(q) + \ash, \deg(\apo))\), which
  sums to \(\deg(q\apo) + \ash = \deg(\det(B)) + \ash\), ensuring
  that \(B\) is \(\ash\)-reduced. Suppose now \(\ashr = \deg(r)\).
  The matrix \(\bar{B} = [\begin{smallmatrix} 1 + u q & \apo - u r \\ q & -r
    \end{smallmatrix}] = [\begin{smallmatrix} 1 & u \\ 0 & 1
  \end{smallmatrix}]B\) is unimodularly equivalent to \(B\). Its second row has
  its \(\ash\)-pivot on the diagonal, as showed above. By construction,
  \(\deg(uq) = d - \deg(r) > 0\) and \(\deg(\apo - u r) < \deg(\apo)\). Hence
  \(\deg(1+uq) + \ash = d - \deg(r) + \ash = \deg(\apo) > \deg(\apo - u r)\),
  so the first row of \(\bar{B}\) has its \(\ash\)-pivot on
  the diagonal as well.

  We now forget the assumption \(\ashr \ge \deg(r)\) and turn to the last
  claim of the statement. The first row of \(Q\) satisfies
  \(
  (p_{00} q + \apor) \apo - (-p_{01})
  = p_{00} q \apo + p_{00} r - p_{01} + p_{01}
  % = p_{00} (q \apo + r)
  = p_{00} \modulus
  \),
  and thus it is in \(\relmod{\modulus}{\apo,-1}\). Similarly, the second row
  of \(Q\) is also in \(\relmod{\modulus}{\apo,-1}\).
  To prove that \(Q\) is a basis of \(\relmod{\modulus}{\apo,-1}\), we rely on
  \cref{cor:determinant_relmod} and show that \(\det(Q)\) is \(\modulus\) up to
  a constant multiple. The same corollary ensures \(\det(P) =
  \lambda\apo\) for some \(\lambda\in\field\setminus\{0\}\). Besides,
  Cramer's rule on the \(2\times 2\) system \(P [\begin{smallmatrix} r \\ -1
    \end{smallmatrix}] = [\begin{smallmatrix} \apor \apo \\ \bpor \apo
  \end{smallmatrix}]\) states that \(\det(P) r = \lambda \apo r =
  \det([\begin{smallmatrix} \apor\apo & p_{01} \\ \bpor\apo & p_{11}
  \end{smallmatrix}])\), hence
  \[
    \det(Q) =
    \det\left(
      \begin{bmatrix}
        p_{00} q + \apor & -p_{01} \\
        p_{10} q + \bpor & -p_{11}
      \end{bmatrix}
    \right)
    =
    - q \det(P) -
    \det\left(
      \begin{bmatrix}
        \apor & p_{01} \\
        \bpor & p_{11}
      \end{bmatrix}
    \right)
    =
    -\lambda(q\apo + r)
    = -\lambda\modulus.
  \]

  It remains to prove that \(Q\) is in \(\ash\)-weak Popov form.
  Concerning the first row, since \(P\) is in \(\ashr\)-weak Popov form
  we have \(\deg(p_{00}) + \ashr > \deg(p_{01})\), hence \(\deg(p_{01})
  < \deg(p_{00}) + \ashr \le \deg(p_{00}) + d\) thanks to our assumption
  \(\ash \le \deg(\apo)\). Thus \(\deg(p_{01}) < \deg(p_{00}\modulus)\),
  which ensures \(\deg(p_{00} \modulus - p_{01}) = \deg(p_{00}) + d\). Since
  \(p_{00} \modulus - p_{01} = p_{00} q \apo + \apor \apo\), we deduce
  \[
    \deg(p_{00} q + \apor)
    = \deg(p_{00}) + d - \deg(\apo)
    = \deg(p_{00}) + \ashr - \ash
    > \deg(p_{01}) - \ash,
  \]
  which shows that the \(\ash\)-pivot of the first row of \(Q\) is its
  leftmost entry.

  For the second row, assume first \(p_{10} \neq 0\). Then similarly,
  from \(p_{10} q \apo + \bpor \apo = p_{10} \modulus - p_{11}\)
  and \(\deg(p_{11}) \le \deg(\apo) < d\) we get
  \(\deg(p_{10} q + \bpor) = \deg(p_{10}) + d - \deg(\apo)\);
  then the fact that \(P\) is in \(\ashr\)-weak Popov form leads to
  \(\deg(p_{10} q + \bpor) + \ash = \deg(p_{10}) + \ashr \le \deg(p_{11})\),
  meaning that the \(\ash\)-pivot of the second row of \(Q\) is
  its rightmost entry. We now suppose \(p_{10} = 0\) and seek the same
  conclusion. From \(p_{10} r - p_{11} = 0
  \bmod \apo\) and \(\deg(p_{11}) \le \deg(\apo)\), it follows that \(p_{11} =
  \lambda \apo\) for some \(\lambda\in\field\setminus\{0\}\), and \(\bpor =
  -\lambda\). Thus \(\deg(p_{10}q + \bpor) = \deg(\bpor) = 0\)
  and our assumption \(\ash \le \deg(\apo)\) implies
  \(\deg(p_{10}q + \bpor) + \ash \le \deg(\apo) = \deg(p_{11})\).
  \qed
\end{proof}

\subsection{Reduction algorithm and proof of Theorem~\ref{thm:rel}}
\label{sec:rel_xgcd:main_proof}

We gather the results of
\crefrange{sec:rel_xgcd:rel_to_app}{sec:rel_xgcd:handle_dega} into
\cref{algo:RelationBasis-via-modN} which proves \cref{thm:rel}.

\begin{algorithm}[t]
  \algoCaptionLabel{RelationBasis-via-modN}{\modulus, \apo, \ash, \nodulus}
  \begin{algorithmic}[1]

    \Require \(\modulus \in \pR\) monic of degree \(d \in \NN\),
             polynomial \(\apo \in \pR_{<d}\),
             shift \(\ash \in \mathbb{Z}\),
             \(\nodulus \in \pR\) monic of degree \(\delta \ge \dred
             = 2\deg(\apo) - d - \ash - 1 + \min(0,\ash+d)\)
             with \(\valuation{\nodulus} = 0\) or \(\nodulus = x^{\delta}\)

    \Ensure a basis \(P \in \mpR{2}{2}_{\le d}\) of \(\relmod{\modulus}{\apo,-1}\) in \(\ash\)-weak Popov form

    \Assume some algorithm \(\textproc{RelationBasis-modN}(\nodulus, \apo, \bpo, \ash)\)
    which solves \cref{pbm:rel}

    \LComment{\cref{prop:handle_dega}: handle degree discrepancy}

    \State \(\ashr \gets \ash + d - \deg(\apo)\); \(\bshr \gets \min(0, \ash+d)\)
    \MyComment{we seek an \((\ash-\bshr)\)-weak Popov basis (\cref{cor:shift_amplitude})}
    \label{step:rel:shift}

    \State\InlineIf{\(\ash - \bshr > \deg(\apo)\)}{\Return \([\begin{smallmatrix} 1 & \apo \\ 0 & \modulus \end{smallmatrix}]\)}
    \label{step:rel:large_amp_1}

    \State \((q,r) \gets\) division with remainder \(\modulus = q \apo + r\) and \(\deg(r) < \deg(\apo)\)
    \label{step:rel:divrem}

    \State\InlineIf{\(\ashr - \bshr > \deg(r)\)}{\Return \([\begin{smallmatrix} q & -r \\ 1 & a \end{smallmatrix}]\)}
    \MyComment{after this, \(r \neq 0\)}
    \label{step:rel:large_amp_2}

    \State\InlineIf{\(\ashr - \bshr = \deg(r)\)}
    {
      \(u \gets x^{\deg(\apo) - \deg(r)} \lc{\apo} / \lc{r}\);
      \Return \([\begin{smallmatrix} 1 + u q & \apo - u r \\ q & -r \end{smallmatrix}]\)
    }
    \label{step:rel:large_amp_3}

    \State \(\modulus_1, d_1, \apo_1, \ash_1 \gets \apo, \deg(\apo), r, \ashr - \bshr\)
    \label{step:rel:reds_start}
    \label{step:rel:to_red1}
    \MyComment{\(-d_1 \le \ash_1 < \deg(\apo_1)\)}

    \LComment{\cref{prop:ratrecon_reduction_general}: reduce to approximants via reversals}

    \State \(\bapo \gets x^{d_1-1} \apo_1(1/x)\);
    \(\bmodulus \gets x^{d_1} \modulus_1(1/x)\)

    \State \(d_2 \gets d_1 - 1 - \ash_1\);
    \MyComment{\(0 < d_2 = \dred \le \min(2d_1-1, \delta)\)}

    \State \(\modulus_2, \apo_2, \ash_2, \gets x^{d_2}, -\bapo \bmodulus^{-1} \rem x^{d_2}, 0\)
    \label{step:rel:to_red2}

    \LComment{\cref{prop:ratrecon_reduction_niceshift}: reduce to modulus \(\nodulus\)}

    \State \(\modulus_3, \apo_3, \ash_3 \gets x^{\delta}, x^{\delta-d_2} \apo_2, \delta-d_2\)
    \MyComment{increase modulus degree from \(d_2\) to \(\delta\) (\cref{lem:approx_increase_order})}
    \label{step:rel:to_red3}

    \State \(\apo_4, \bpo_4, \ash_4 \gets\)
    \InlineIfElse{\(\nodulus = x^{\delta}\)}
    {\(\apo_3, -1, \ash_3\)}
    {\(\apo_3 \nodulus \rem x^{\delta}, x^{\delta} - \nodulus, \ash_3\)}
    \label{step:rel:to_red4}

    \LComment{call relations modulo \(\nodulus\) and deduce the output basis}

    \State \(\basis_4 = [g_{ij}] \gets \textproc{RelationBasis-modN}(\nodulus, \apo_4, \bpo_4, \ash_4)\)
    \MyComment{\(\ash_4\)-weak Popov basis of \(\relmod{\nodulus}{\apo_4,\bpo_4}\)}
    \label{step:rel:from_red4}

    \State \(\basis_2 = [f_{ij}] \gets
           \begin{bmatrix}
             g_{00} & g_{00}\apo_2 \rem x^{d_2} \\
             g_{10} & g_{10}\apo_2 \rem x^{d_2}
           \end{bmatrix}
    \)
    \label{step:rel:basis2}
    \MyComment{\(0\)-weak Popov basis of \(\appmod{d_2}{\apo_2,-1}\)}
    \label{step:rel:from_red2}

    \State \(q_{00} \gets x^{\deg(f_{00})} f_{00}(1/x)\);
    \(q_{10} \gets x^{\deg(f_{11})+1} f_{10}(1/x)\);
    \Statex \(\basisr \gets
           \begin{bmatrix}
             q_{00} & q_{00} \apo_1 \rem \modulus_1 \\
             q_{10} & q_{10} \apo_1 \rem \modulus_1
           \end{bmatrix}
    \)
    \MyComment{\(\ash_1\)-reduced basis of \(\relmod{\modulus_1}{\apo_1,-1}\)}
    \Statex \(\basis_1 = [p_{ij}] \gets\) an \(\ash_1\)-weak Popov form of \(\basisr\)
    \MyComment{costs \(\bigO{d}\), see \cref{lem:weak_to_popov_2x2}}
    \label{step:rel:from_red1}

    \State
    \(\apor \gets (p_{00} r - p_{01})/\apo\);
    \(\bpor \gets (p_{10} r - p_{11})/\apo\)
    \Statex
    \(\basis \gets
    \begin{bmatrix}
      p_{00} q + \apor & -p_{01} \\
      p_{10} q + \bpor & -p_{11}
    \end{bmatrix}\)
    \MyComment{\((\ash-\bshr)\)-weak Popov basis of \(\relmod{\modulus}{\apo,-1}\)}
    \label{step:rel:from_red}

    \State \Return \(\basis\)
    \label{step:rel:reds_end}
  \end{algorithmic}
\end{algorithm}

Define the integers  \(\ashr = \ash + d - \deg(\apo)\) and \(\bshr =
\min(0,\ash+d)\) as in \cref{step:rel:shift}. The algorithm focuses on
computing an \((\ash-\bshr)\)-weak Popov basis \(\basis\) of
\(\relmod{\modulus}{\apo,-1}\), since according to \cref{cor:shift_amplitude}
this \(\basis\) is then also in \(\ash\)-weak Popov form.

Recall the quantity \(\dred = 2\deg(\apo) - d - \ash - 1 + \min(0,\ash+d) =
\deg(\apo) - 1 - \ashr + \bshr\) from \cref{thm:rel}. The first item in that theorem
considers \(\dred \le 0\), meaning \(\ashr - \bshr \ge \deg(\apo) -
1\). The case \(\ashr-\bshr > d\) (or, equivalently, \(\ash - \bshr > \deg(\apo)\))
is handled in complexity \(\bigO{1}\) at
\cref{step:rel:large_amp_1}, via the Hermite basis in
\cref{prop:ratrecon_reduction_general}. The case \(\deg(r) \le \ashr - \bshr \le
\deg(\apo)\) is then handled at
\cref{step:rel:large_amp_2,step:rel:large_amp_3} based on
\cref{prop:handle_dega}, at a cost of \(\bigO{\timepm{d}}\) operations for the
division with remainder at \cref{step:rel:divrem}. Since \(\deg(r) \le
\deg(\apo) - 1\) this covers all cases with \(\ashr - \bshr \ge \deg(\apo) -
1\), proving the first item of \cref{thm:rel}.

Now consider the second item, with \(\dred > 0\). Apart from the call
to \textproc{Relation-modN}, other operations at
\crefrange{step:rel:reds_start}{step:rel:reds_end} cost \(\bigO{\timepm{d}}\)
overall: there are essentially a power series quotient at \cref{step:rel:to_red2}
and polynomial multiplications at
\cref{step:rel:to_red4,step:rel:from_red2,step:rel:from_red1,step:rel:from_red},
with \cref{step:rel:from_red1} possibly also involving a division by
\(\modulus_1\). (Concerning the latter, as noted in
\cref{sec:rel_xgcd:rel_to_app}, one may alternatively compute the right column
of \(Q\) via middle products as in \cref{eqn:right_col_via_midprod}.) In these
computations, the involved degrees are always at most \(\max(d_2,d) \in
\bigO{d}\), including at \cref{step:rel:to_red4} since by definition of
\(\apo_3\) we have
\(\apo_3 \nodulus \rem x^{\delta} = x^{\delta-d_2} (\apo_2 \nodulus \rem
x^{d_2})\).

Correctness is a direct consequence of
\cref{prop:handle_dega,prop:ratrecon_reduction_general,prop:ratrecon_reduction_niceshift},
up to the minor following remark. \Cref{prop:ratrecon_reduction_niceshift}
requires two moduli of the same degree: applying it as such would
require \(\delta = \dred\). To allow any \(\delta \ge \dred\) and thus much
more flexibility in the choice of \(\nodulus\), we first transform the
approximant modulus \(\modulus_2 = x^{d_2}\) of \cref{step:rel:to_red2} into
the modulus \(\modulus_3 = x^{\delta}\) as in \cref{step:rel:to_red3}. Having
now a modulus of degree \(\delta\), we can apply
\Cref{prop:ratrecon_reduction_niceshift} to switch to modulus \(\nodulus\).
This approximation order increase from \(d_2\) to \(\delta\) is based on the
following straightforward result.

\begin{lemma}
  \label{lem:approx_increase_order}%
  Let \(d \in \NN\), \(\apo \in \pR_{<d}\), and \(\ash \in
  \ZZ\). Take some integer \(\delta \ge d\). Any \((\ash+\delta-d)\)-weak Popov
  basis \(P\) of \(\appmod{\delta}{x^{\delta-d} \apo, -1}\) has the form \(P =
  Q \diag{1,x^{\delta-d}}\), where  \(Q \in \mpR{2}{2}\) is
  an \(\ash\)-weak Popov basis of \(\appmod{d}{\apo,-1}\).
\end{lemma}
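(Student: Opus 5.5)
We want: any \((\ash+\delta-d)\)-weak Popov basis \(P\) of \(\appmod{\delta}{x^{\delta-d}\apo,-1}\) factors as \(Q\diag{1,x^{\delta-d}}\) with \(Q\) an \(\ash\)-weak Popov basis of \(\appmod{d}{\apo,-1}\). The plan is to identify the two modules exactly, then transport the degree conditions through the column scaling.

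\emph{Step 1: module identity.} A pair \((p,q)\) lies in \(\appmod{\delta}{x^{\delta-d}\apo,-1}\) iff \(x^{\delta-d}\apo p - q = 0 \bmod x^{\delta}\). This forces \(x^{\delta-d}\) to divide \(q\), say \(q = x^{\delta-d}\tilde q\). Dividing the congruence by \(x^{\delta-d}\) then gives \(\apo p - \tilde q = 0 \bmod x^{d}\). Conversely, any \((p,\tilde q)\in\appmod{d}{\apo,-1}\) yields \((p,x^{\delta-d}\tilde q)\) in the larger module. Hence
\[
\appmod{\delta}{x^{\delta-d}\apo,-1} = \{(p,\, x^{\delta-d}\tilde q) \mid (p,\tilde q)\in\appmod{d}{\apo,-1}\},
\]
that is, the image of \(\appmod{d}{\apo,-1}\) under right multiplication by \(D=\diag{1,x^{\delta-d}}\).

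\emph{Step 2: basis transfer.} Every entry in the second column of \(P\) is divisible by \(x^{\delta-d}\), so \(P = QD\) for a unique polynomial matrix \(Q\) whose rows lie in \(\appmod{d}{\apo,-1}\). Right multiplication by \(D\) is an injective \(\pR\)-module isomorphism from \(\appmod{d}{\apo,-1}\) onto \(\appmod{\delta}{x^{\delta-d}\apo,-1}\), so the rows of \(P\) form a basis of the image exactly when the rows of \(Q\) form a basis of \(\appmod{d}{\apo,-1}\). A determinant check via \cref{cor:determinant_relmod} (\(\det P = x^{\delta-d}\det Q\), with \(\gamma=1\) because \(b=-1\)) gives an alternative route.

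\emph{Step 3: weak Popov transfer.} Write \(Q=[q_{ij}]\), so \(P=[q_{00},\, x^{\delta-d}q_{01};\ q_{10},\, x^{\delta-d}q_{11}]\). Set \(t=\ash+\delta-d\). The \(t\)-weak Popov conditions on \(P\) read
\[
\deg(q_{01})+\delta-d < \deg(q_{00})+\ash+\delta-d
\quad\text{and}\quad
\deg(q_{10})+\ash+\delta-d \le \deg(q_{11})+\delta-d.
\]
Cancelling \(\delta-d\) gives exactly the \(\ash\)-weak Popov conditions for \(Q\). Zero entries need a word: \(\deg 0=-\infty\) is preserved under multiplication by \(x^{\delta-d}\), and the pivots are nonzero.

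No step is a real obstacle. The only point needing care is the divisibility in Step 1, which relies on \(\delta\ge d\) so that \(x^{\delta-d}\) divides \(x^\delta\).
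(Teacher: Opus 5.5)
Your proof is correct and follows essentially the same route as the paper. The paper gets the factorization \(P = Q\diag{1,x^{\delta-d}}\) by noting that the rows of \(P\) lie in \(\appmod{\delta-d}{x^{\delta-d}\apo,-1} = \appmod{\delta-d}{0,-1}\), which has basis \(\diag{1,x^{\delta-d}}\), and then says the weak Popov conditions transfer after cancelling \(\delta-d\). Your Steps 1 and 2 go further and make explicit the module isomorphism showing that \(Q\) is a basis of \(\appmod{d}{\apo,-1}\), a point the paper leaves implicit.
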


\begin{proof}
  The rows of \(P\) are in \(\appmod{\delta-d}{x^{\delta-d} \apo, -1} =
  \appmod{\delta-d}{0, -1}\), of which
  \(\diag{1,x^{\delta-d}}\) is a basis. Thus \(P = Q \diag{1,x^{\delta-d}}\) for some \(Q
  \in \mpR{2}{2}\). It is then easily checked that since  \(P\) is in
  \((\ash+\delta-d)\)-weak Popov form, the matrix \(Q\) must be in
  \(\ash\)-weak Popov form.
  \qed
\end{proof}

Regarding this transformation, we note that after \cref{step:rel:from_red4}, we
could have inserted an intermediate step which would compute the
\(\ash_3\)-weak Popov basis
\[
  \basis_3 \gets
  \begin{bmatrix}
    g_{00} & g_{00}\apo_3 \rem x^{\delta} \\
    g_{10} & g_{10}\apo_3 \rem x^{\delta}
  \end{bmatrix},
\]
of \(\appmod{\delta}{\apo_3,-1}\), and then \cref{step:rel:from_red2} would
become \(\basis_2 \gets \basis_3 \diag{1, x^{d_2-\delta}}\), yielding the same
basis. We preferred to write  \cref{step:rel:from_red2} in a way that clearly
indicates that we are dealing with a multiplication truncated at degree \(d_2\)
instead of the larger \(\delta\).

Observe also that in the case \(\nodulus = x^{\delta}\), at
\cref{step:rel:basis2} one may simply take
\(\basis_2 \gets \basis_4 \diag{1, x^{d_2 - \delta}}\).
%%[\begin{smallmatrix}
%%  g_{00} & x^{d_2-\delta}g_{01} \\
%%  g_{10} & x^{d_2-\delta}g_{11}
%%\end{smallmatrix}]

\subsection{Solving \problemName{pbm:xgcd} via a relation basis in Hermite form}
\label{sec:rel_xgcd:xgcd_to_rel}

We make explicit our above remark that solving a specific instance of
\problemName{pbm:rel} directly leads to an algorithm for
\problemName{pbm:xgcd}. The next lemma is a particular case of
\cref{lem:relbas_hermite_bases}; we give a self-contained proof, for
readability.

\begin{lemma}
  \label{lem:xgcd_to_rel}%
  Let \(d\in \ZZp\) and \(\apo, \bpo \in \pR\) be polynomials with \(0
  \le \deg(\apo) < \deg(\bpo) = d\). Let \(u,v,\pg\) be the unique polynomials
  in \(\pR\) with \(\pg\) monic such that \(u\apo + v\bpo = \pg =
  \gcd(\apo,\bpo)\), \(\deg(u) < \deg(\bpo/\pg)\), and \(\deg(v) <
  \deg(\apo/\pg)\). Then, the basis of \(\relmod{\bpo}{a,-1}\) in Hermite form
  is
  \[
    H =
    \begin{bmatrix}
      \frac{\bpo}{\lc{\bpo}\pg} & 0 \\
      u & \pg
    \end{bmatrix}
    \in \mpR{2}{2}.
  \]
  Note that \(H\) is also in \((2\ell-d+1)\)-Popov form for any \(0 \le
  \ell \le \deg(\pg)\).
\end{lemma}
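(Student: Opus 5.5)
We give a self-contained argument that follows the proof of \cref{lem:relbas_hermite_bases}, specialized to \(\modulus = \bpo/\lc{\bpo}\) and to the pair \((\apo,-1)\). This case is simpler because \(\gcd(-1,\modulus)=1\). First, we check that both rows of \(H\) lie in \(\relmod{\bpo}{\apo,-1}\). The first row gives \(\frac{\bpo}{\lc{\bpo}\pg}\apo = \frac{\apo}{\pg}\cdot\frac{\bpo}{\lc{\bpo}}\), which is \(0 \bmod \bpo\). The second row gives \(u\apo - \pg = -v\bpo\), which is also \(0 \bmod \bpo\). Next, \(\det(H) = \bpo/\lc{\bpo}\), and since \(\gcd(\apo,-1,\bpo)=1\), \cref{cor:determinant_relmod} shows that \(H\) is a basis of \(\relmod{\bpo}{\apo,-1}\).

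For the Hermite form, \(H\) is lower triangular and its diagonal entries \(\bpo/(\lc{\bpo}\pg)\) and \(\pg\) are monic. The only remaining condition is \(\deg(u) < \deg(\bpo/\pg)\), and this is part of the defining constraints on \(u\). Uniqueness of the Hermite form therefore identifies \(H\) as \emph{the} Hermite basis. As a consistency check, this agrees with \cref{lem:relbas_hermite_bases}: there \(\alpha=\gcd(\apo,\bpo)\) up to normalization, \(\gamma=1\), and \(\tilde{\bpo} = (\apo/\pg)^{-1} \rem \bpo/\pg\). Indeed \(u\,\apo/\pg = 1 - v\bpo/\pg\), so \(u\) is that inverse reduced modulo \(\bpo/\pg\).

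For the Popov claim with \(\ash = 2\ell - d + 1\), the conditions to verify are the following. The first row needs \(\deg(0) < \deg(p_{00}) + \ash\), which is trivial. The second row needs \(\deg(u) + \ash \le \deg(\pg)\). The diagonal entries must be monic, which we already have. We also need \(\deg(p_{10}) < \deg(p_{00})\), which is \(\deg(u) < d - \deg(\pg)\) and is already known; and \(\deg(p_{01}) < \deg(p_{11})\), which is trivial since \(p_{01}=0\).

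So the only real inequality, and the main point of the proof, is \(\deg(u) + 2\ell - d + 1 \le \deg(\pg)\). Setting \(k = \deg(\pg)\), we use \(\deg(u) \le d - k - 1\) to get \(\deg(u) + 2\ell - d + 1 \le 2\ell - k \le k\), where the last step uses \(\ell \le k\). If \(u = 0\), the condition holds trivially. This case occurs only when \(\pg\) is a constant multiple of \(\bpo\), which is impossible here because \(\deg(\apo) < \deg(\bpo)\) forces \(\deg(\pg)<d\). Even so, the convention \(\deg(0)=-\infty\) covers it.
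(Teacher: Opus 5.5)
Your proof is correct and matches the paper's own argument: both rows lie in the module, \(\det(H)=\bpo/\lc{\bpo}\) together with \cref{cor:determinant_relmod} gives the basis property, and the bound \(\deg(u)\le d-\deg(\pg)-1\) yields \(\deg(u)+2\ell-d+1\le 2\ell-\deg(\pg)\le\deg(\pg)\) for the Popov property. The paper leaves implicit the explicit Hermite-form verification, the consistency check against \cref{lem:relbas_hermite_bases}, and the remark that \(u\neq 0\); you add these, and they are correct.
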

\begin{proof}
  Each row of \(H\) is in  \(\relmod{\bpo}{a,-1}\) by construction: both
  \(\frac{\bpo}{\lc{\bpo}\pg} \apo\) and \(u \apo - \pg\) are multiples of
  \(\bpo\). From \(\det(H) = \bpo / \lc{\bpo}\) and
  \cref{cor:determinant_relmod}, we deduce that \(H\) is a basis of
  \(\relmod{\bpo}{a,-1}\). It is easily verified that \(H\) is in
  \((2\ell-d+1)\)-Popov form: the diagonal entries of \(H\) are monic, and
  its second row satisfies  \(\deg(u) + 2\ell - d + 1 \le \deg(\bpo) -
  \deg(\pg) + 2\ell - d \le \deg(\pg)\).
  \qed
\end{proof}

To solve \problemName{pbm:xgcd}, one firsts compute some
\((2\ell-d+1)\)-weak Popov basis \(P\) of \(\relmod{\bpo}{a,-1}\). Then,
according to \cref{lem:weak_to_popov_2x2}, the above \((2\ell-d+1)\)-Popov
basis \(H\) can be deduced from \(P\) using \(\bigO{\timepm{d}}\) field
operations. This matrix \(H\) yields \(\pg\) and \(u\), and the cofactor \(v\)
can be found as the quotient \(v = (\pg - u\apo) / \bpo\) again in time
\(\bigO{\timepm{d}}\).
This proves \cref{cor:xgcd}, as a direct consequence of \cref{thm:rel}. Indeed,
here the parameters are \(d = \deg(\bpo)\) and \(\ash = 2\ell-d+1\).
Therefore \(\min(0, \ash+d) = 0\), and the main parameter that rules the
complexity bound is \(\dred = 2\deg(\apo) - d - 1 -
\ash = 2(\deg(\apo) - \ell - 1)\).

\subsection{Reduction from approximant bases to relation bases with any modulus}
\label{sec:rel_xgcd:app_to_modN}

The reduction from \(\modulus\) to \(\nodulus\) achieved by
\cref{algo:RelationBasis-via-modN} requires \(\bpo=-1\). While this covers
\cref{pbm:ratrecon} by definition and also \cref{pbm:xgcd} as showed in
\cref{sec:rel_xgcd:xgcd_to_rel}, we cannot apply this reduction to  bases of
approximant modules as in \cref{pbm:app}, which typically involves \(\bpo \neq
-1\). One main motivation to fill this gap is that, as announced in
\cref{sec:intro}, as soon as the base field contains suitable points, the
interpolant basis algorithms in \cref{sec:interp} are faster than the
approximant basis ones in \cref{sec:approx}. Thus, we could lower the
complexity bounds for approximant bases by reducing them to interpolant
bases. In fact, they can be reduced to relations modulo any \(\nodulus\), by first
transforming the instance with \((\apo,\bpo)\) into another approximant
instance with \((\apor,-1)\):

\begin{lemma}
  \label{lem:app_reduce_b_to_mone}%
  Let \(d \in \mathbb{N}\), \(\ash \in \mathbb{Z}\), and \((\apo,\bpo) \in
  \pR_{<d}^2\). Assume \(\bpo \neq 0\), let \(v = \valuation{\bpo} \in
  \{0,\ldots,d-1\}\), and define \(\apor = (x^{-v} \bpo)^{-1} \apo \rem
  x^{d-v} \in \pR_{<d-v}\). For any
  \((\ash+v)\)-weak Popov basis \(\basis \in \mpR{2}{2}\) of
  \(\appmod{d-v}{\apor, -1}\), the matrix \(\basis [\begin{smallmatrix} x^v & 0
  \\ 0 & 1\end{smallmatrix}]\) is an \(\ash\)-weak Popov basis of
  \(\appmod{d}{\apo,\bpo}\).
\end{lemma}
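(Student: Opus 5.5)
The plan is to check three things for \(P' = \basis\,\diag{x^v,1}\): its rows lie in \(\appmod{d}{\apo,\bpo}\); they generate that module, via the determinant criterion of \cref{cor:determinant_relmod}; and \(P'\) is in \(\ash\)-weak Popov form. Write \(\bpo = x^v \bpo'\), where \(\bpo'\) has nonzero constant coefficient, so it is invertible modulo \(x^{d-v}\) and \(\apor\) is well defined. The degree bound \(\apor \in \pR_{<d-v}\) holds by the \(\rem\).

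\emph{Membership.} A row of \(\basis\) is a pair \((p,q)\) with \(p\apor - q = 0 \bmod x^{d-v}\). The corresponding row of \(P'\) is \((x^v p, q)\), and
\[
  x^v p\,\apo + q\,\bpo = x^v\bigl(p\apo + q\bpo'\bigr).
\]
So it suffices to show \(p\apo + q\bpo' = 0 \bmod x^{d-v}\). Multiplying through by the unit \((\bpo')^{-1}\) modulo \(x^{d-v}\) and using \(\apor = (\bpo')^{-1}\apo \rem x^{d-v}\), this condition becomes \(p\apor + q = 0 \bmod x^{d-v}\). This has to be matched against the defining relation \(p\apor - q = 0\) of \(\appmod{d-v}{\apor,-1}\). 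This step is where the sign convention for \(\apor\) and \(-1\) has to be tracked exactly. I expect this sign bookkeeping to be the main obstacle: I would recompute it against the definition \cref{eqn:relmod} before anything else.

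\emph{Generation.} By \cref{cor:determinant_relmod}, \(\det(\basis) = \lambda x^{d-v}\), since \(\gcd(\apor,-1,x^{d-v}) = 1\). Hence \(\det(P') = \lambda x^{d}\). By the same corollary, \(P'\) is a basis of \(\appmod{d}{\apo,\bpo}\) exactly when \(\gcd(\apo,\bpo,x^d) = 1\). Since \(v = \valuation{\bpo}\), this gcd equals \(x^{\min(v,\valuation{\apo})}\). So I must check that the situation of the statement forces this gcd to be trivial, for instance when \(\valuation{\apo} = 0\) or \(v = 0\). Otherwise the determinant computation shows which case requires the valuations to have been normalised beforehand, as in \cref{fig:constants_via_reductions}, where \(v\) is taken among \(\valuation{\apo}\) and \(\valuation{\bpo}\). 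The argument here is a direct comparison of \(\deg\det\) with \(d - \deg\gcd\).

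\emph{Weak Popov form.} Write the rows of \(\basis\) as \((p_{i0},p_{i1})\). Then
\[
  \deg(x^v p_{i0}) + \ash = \deg(p_{i0}) + (\ash+v),
\]
while the right-hand entries are unchanged. Hence the \(\ash\)-row degrees of \(P'\) equal the \((\ash+v)\)-row degrees of \(\basis\), and the pivot positions are preserved. Concretely, \(\deg(p_{01}) < \deg(p_{00}) + \ash + v\) and \(\deg(p_{10}) + \ash + v \le \deg(p_{11})\) translate verbatim into the two inequalities defining \(\ash\)-weak Popov form for \(P'\). This last step is routine.
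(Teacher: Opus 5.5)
You have located exactly the two delicate points, but you stop at flagging them. As written, the proposal therefore shows neither that the rows of \(\basis\diag{x^v,1}\) lie in \(\appmod{d}{\apo,\bpo}\) nor that they generate it. Only the weak Popov step is complete, and it is the same degree-shift observation as in the paper. Neither point is bookkeeping that a recomputation would settle: the statement as written fails on both.

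\emph{Sign.} Multiplying by the unit \((x^{-v}\bpo)^{-1}\) gives \(\appmod{d-v}{\apo,x^{-v}\bpo} = \appmod{d-v}{\apor,1} = \appmod{d-v}{-\apor,-1}\). This differs from \(\appmod{d-v}{\apor,-1}\) in characteristic \(\neq 2\). For instance, in characteristic \(\neq 2\) take \(d=1\), \(\apo=\bpo=1\), \(\ash=0\). Then \(v=0\), \(\apor=1\), and \([\begin{smallmatrix} x & 0 \\ 1 & 1 \end{smallmatrix}]\) is a \(0\)-weak Popov basis of \(\appmod{1}{1,-1}\). Yet its row \((1,1)\) is not in \(\appmod{1}{1,1}\).

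\emph{Valuation.} Your determinant computation is decisive. One has \(\det(\basis\diag{x^v,1}) = \lambda x^d\), whereas by \cref{cor:determinant_relmod} every basis of \(\appmod{d}{\apo,\bpo}\) has determinant of degree \(d-\min(v,\valuation{\apo})\). For instance, take \(d=2\) and \(\apo=\bpo=x\). Then \(v=1\), \(\apor=0\), and \(\diag{1,x}\) is an \((\ash+1)\)-weak Popov basis of \(\appmod{1}{0,-1}\) for every \(\ash\). Yet \(\diag{x,x}\) does not generate \((1,-1)\in\appmod{2}{x,x}\).

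So the proof can only be completed for a corrected statement. Take \(\apor = -(x^{-v}\bpo)^{-1}\apo \rem x^{d-v}\), which is the sign used for \(\apo_2\) in \cref{algo:RelationBasis-via-modN}. Also assume \(\valuation{\apo}=0\) whenever \(v>0\), i.e.\ \(\gcd(\apo,\bpo,x^d)=1\). This is harmless, since \(\appmod{d}{\apo,\bpo} = \appmod{d-c}{x^{-c}\apo,x^{-c}\bpo}\) for \(c=\min(\valuation{\apo},\valuation{\bpo})\).

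With these hypotheses your three steps go through as planned. Membership is the equivalence \(p\apor - q \equiv 0 \iff p\apo + q\,x^{-v}\bpo \equiv 0 \bmod x^{d-v}\). Generation is \(\det = \lambda x^d = \lambda x^d/\gcd(\apo,\bpo,x^d)\) together with \cref{cor:determinant_relmod}.

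The paper argues instead through two module identities: \(\appmod{d}{\apo,\bpo} = \{(x^v p,q) \mid (p,q)\in\appmod{d-v}{\apo,x^{-v}\bpo}\}\) and \(\appmod{d-v}{\apo,x^{-v}\bpo} = \appmod{d-v}{\apor,-1}\). Bases then correspond directly, without any determinant argument. But these identities silently need the same two corrections. When \(v>0\), the first requires \(\valuation{\apo}=0\), since \(p_0\apo+q\bpo\equiv 0 \bmod x^d\) only forces \(x^v \mid p_0\apo\). The second is off by the sign. Your determinant route has the merit of making the gcd condition explicit.
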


\begin{proof}
  From the definitions in \cref{sec:prelim:polmat}, one easily verifies that
  \(\basis\) is in \((\ash+v)\)-weak Popov form if and only if \(\basis
  [\begin{smallmatrix} x^v & 0 \\ 0 & 1\end{smallmatrix}]\) is in \(\ash\)-weak
  Popov form. Since \(x^{-v}\bpo\) is invertible modulo \(x^{d-v}\),  one has
  \(\appmod{d-v}{\apo, x^{-v}\bpo} = \appmod{d-v}{\apor, -1}\). From the
  definition of \(\appmod{d}{\apo,\bpo}\) and the fact that \(\bpo\) has
  valuation \(v\), one deduces \(\appmod{d}{\apo,\bpo} = \{(x^v p, q) \mid (p,
  q) \in \appmod{d-v}{\apo, x^{-v}\bpo}\}\), which concludes the proof.
  \qed
\end{proof}

Since the transformed instance involves \(-1\), we can then apply the reduction
of \cref{algo:RelationBasis-via-modN} to rely on a relation basis modulo any
\(\nodulus\). There is still one caveat: the latter reduction asks that
\(\deg(\nodulus) \ge \dred = 2\deg(\apor) - (d-v) -
\ash - 1 + \min(0,\ash+d-v)\). Thus, if \(\ash\) is negative, and more
precisely if \(v - d \le \ash \le 0\), this may require \(\nodulus\) to have
degree significantly greater than the original modulus \(\modulus = x^{d-v}\).
For example, this could almost double the degree when  \(\ash \approx
v-d\), since we get \(\dred \approx 2(d-v)\) in the typical case \(\deg(\apor) \approx
d - v - 1\). This growth can be avoided: as the next lemma shows, if \(\ash\)
is negative, one can basically swap \(\apo\) and \(\bpo\) in order to negate
it.

\begin{lemma}
  \label{lem:invinput}%
  Define \(J = [\begin{smallmatrix} 0 & 1 \\ 1 &0 \end{smallmatrix}] \in
  \mKK{2}{2}\), and let \(\ash \in \ZZ\). A matrix \(\basis \in \mpR{2}{2}\) is
  in \(\ash\)-weak Popov form if and only if \(J\basis J\) is in
  \((1-\ash)\)-weak Popov form. In particular, \(\basis\) is an
  \(\ash\)-weak Popov basis of \(\relmod{\modulus}{\apo,\bpo}\) if and only if
  \(J\basis J\) is a \((1-\ash)\)-weak Popov basis of
  \(\relmod{\modulus}{\bpo,\apo}\), for any \(\apo,\bpo \in \pR\).
\end{lemma}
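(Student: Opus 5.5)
The plan is to compute \(J\basis J\) explicitly and compare the defining inequalities of the two forms. If \(\basis = [p_{ij}]\), then left multiplication by \(J\) swaps the rows and right multiplication swaps the columns, so
\[
  J\basis J = \begin{bmatrix} p_{11} & p_{10} \\ p_{01} & p_{00} \end{bmatrix}.
\]

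First step: write out the conditions. By the definition in \cref{sec:prelim:polmat}, \(\basis\) is in \(\ash\)-weak Popov form if and only if
\[
  \deg(p_{01}) < \deg(p_{00}) + \ash
  \quad\text{and}\quad
  \deg(p_{10}) + \ash \le \deg(p_{11}).
\]
Likewise, \(J\basis J\) is in \((1-\ash)\)-weak Popov form if and only if
\[
  \deg(p_{10}) < \deg(p_{11}) + 1 - \ash
  \quad\text{and}\quad
  \deg(p_{01}) + 1 - \ash \le \deg(p_{00}).
\]
Since all degrees are integers, the strict inequality \(\deg(p_{10}) < \deg(p_{11}) + 1 - \ash\) is equivalent to \(\deg(p_{10}) + \ash \le \deg(p_{11})\). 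Similarly, \(\deg(p_{01}) + 1 - \ash \le \deg(p_{00})\) is equivalent to \(\deg(p_{01}) < \deg(p_{00}) + \ash\). The two pairs of conditions therefore match crosswise, which gives the first claim.

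The only delicate point is the convention for zero entries, with \(\deg(0) = -\infty\). The strict versus non-strict conversion must still hold when an off-diagonal entry is zero. It does, because both sides are then trivially true as long as the diagonal entries are nonzero. Nonzero diagonal entries are forced: the weak Popov conditions require the pivots to lie on the diagonal of rows that are not zero. I would note this briefly rather than belabor it.

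Second step: the module statement. The map \((p,q) \mapsto (q,p)\) is a \(\pR\)-module isomorphism from \(\relmod{\modulus}{\apo,\bpo}\) onto \(\relmod{\modulus}{\bpo,\apo}\), since \(\apo p + \bpo q = \bpo q + \apo p\). On row vectors this map is right multiplication by \(J\). Hence the rows of \(\basis\) generate \(\relmod{\modulus}{\apo,\bpo}\) if and only if the rows of \(\basis J\) generate \(\relmod{\modulus}{\bpo,\apo}\).

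Left multiplication by the unimodular matrix \(J\) only reorders the rows, so it preserves being a basis. Combining this with the first claim yields the equivalence. I expect no real obstacle; the main care is the integer off-by-one conversion between \(<\) and \(\le\) in the first step.
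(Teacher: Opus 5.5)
Your proposal is correct and follows essentially the same route as the paper: write \(J P J\) explicitly, use integrality to convert between the strict and non-strict inequalities so that the \(s\)-weak Popov conditions on \(P\) become the \((1-s)\)-weak Popov conditions on \(JPJ\), then note that right multiplication by \(J\) maps \(\relmod{\modulus}{\apo,\bpo}\) onto \(\relmod{\modulus}{\bpo,\apo}\) and that left multiplication by the unimodular \(J\) does not change the generated module. Your remark on zero off-diagonal entries, which works because the diagonal pivots are nonzero, is extra care that the paper leaves implicit.
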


\begin{proof}
  Write \(\basis = [\begin{smallmatrix} p_{00} & p_{01} \\ p_{10} & p_{11}
    \end{smallmatrix}]\), so that \(J\basis J = [\begin{smallmatrix} p_{11} &
  p_{10} \\ p_{01} & p_{00} \end{smallmatrix}]\). Subtracting \(\ash\) from all
  sides of the \(\ash\)-weak Popov conditions on \(\basis\), which are
  \(\deg(p_{01}) < \deg(p_{00}) + \ash\) and \(\deg(p_{10}) + \ash \le
  \deg(p_{11})\), yields the conditions \(\deg(p_{10}) < \deg(p_{11}) +
  1-\ash\) and \(\deg(p_{01}) + 1-\ash \le \deg(p_{00})\), which are exactly
  the \((1-\ash)\)-weak Popov conditions on \(J \basis J\). This proves the
  first equivalence. The second one follows by observing that \(\basis J\) is a
  basis of \(\relmod{\modulus}{\bpo,\apo}\) if and only if \(\basis\) is a
  basis of \(\relmod{\modulus}{\apo,\bpo}\), and \(J \basis J\) generates the same
  \(\pR\)-module as \(\basis J\) since \(J\) is unimodular.
  \qed
\end{proof}

Thus, the steps to compute a basis of \(\appmod{d}{\apo,\bpo}\) via a relation
basis modulo some \(\nodulus\) would be:
\begin{itemize}[noitemsep,topsep=2pt]
  \item compute parameters \(\delta_\bpo = 2(d-v_\bpo -1) - (d-v_\bpo) -
    \ash - 1 + \min(0,\ash+d-v_\bpo)\) and
  \(\delta_\apo = 2(d-v_\apo -1) - (d-v_\apo) +
    \ash - 2 + \min(0,1-\ash+d-v_\apo)\);
  \item if \(\delta_\bpo \le \delta_\apo\), reduce to an instance of the form
    \(\appmod{d-v_\bpo}{\apor, -1}\) (\cref{lem:app_reduce_b_to_mone}), then
    solve it via a relation basis modulo any \(\nodulus\) of degree \(\ge
    \delta_\bpo\) (\cref{algo:RelationBasis-via-modN}); 
  \item if \(\delta_\bpo > \delta_\apo\), swap \(\apo\) and \(\bpo\)
    and reduce to an instance of the form \(\appmod{d-v_\apo}{\apor, -1}\) and
    shift \(1-\ash\) (\cref{lem:invinput,lem:app_reduce_b_to_mone}),
    then solve it
    via a relation basis modulo any \(\nodulus\) of degree \(\ge \delta_\apo\)
    (\cref{algo:RelationBasis-via-modN}),
    and swap columns and rows of the obtained basis following
    \cref{lem:invinput}.
\end{itemize}
Thanks to the possible swap, we only require \(\nodulus\) to have degree at 
least \(\min(\delta_\apo,\delta_\bpo)\). By definition, this quantity
is
\(\min(\delta_\apo,\delta_\bpo)
% =
% \min(
% d-v_\apo + \ash - 4 + \min(0,1-\ash+d-v_\apo),
% d-v_\bpo - \ash - 3 + \min(0,\ash+d-v_\bpo)
% )
\le
\min(
d-v_\apo + \ash - 4,
d-v_\bpo - \ash - 3
)
\le
% \min(d-\ash, d+\ash)
% =
d - \abs{\ash}
\). This explains the two
rightmost reductions in the graph of \cref{fig:constants_via_reductions}.

\begin{remark}
  \label{rmk:int_to_modN}%
  Ideas similar to those in this subsection can be used for computing
  interpolant bases of \(\intmod{\points}{\apo,\bpo}\) via a relation basis
  with another modulus \(\nodulus\). This would typically be used to reduce
  from the case of general points \(\points\) to one of the situations with
  faster algorithms (approximants, or interpolants at special points if
  available). We will not detail this reduction here, and will simply point out
  that it may increase the leading constant in the dominant term
  \(\bigO{\timepm{d}\log(d)}\) in the complexity bound, due to calls to fast
  multipoint evaluation and interpolation at points of \(\points\).
  \qed
\end{remark}
%% NOTE some more technical details
% The key observation is that, when \(\bpo\) is invertible modulo the modulus
% \(\modulus = \prod_{i=0}^{d-1}(x - \point_i)\), one has \(\relmod{\modulus}{\apo,
%   \bpo} = \relmod{\modulus}{-\apo\bpo^{-1}, -1}\), so the problem reduces to the
% case \(\bpo = -1\) by inverting the evaluations of \(\bpo\).

\subsection{An alternative approach for \problemName{pbm:xgcd}}
\label{sec:rel_xgcd:gcd_syz}

We mention another approach, based on \cref{lem:kernel_approx}, for solving
\problemName{pbm:xgcd}; let \(\apo,\bpo,\ell\) be the input. This lemma shows
that, after choosing some shift \(\ash\) and a modulus \(\modulus\) of
large enough degree, one can obtain \(\pg = \gcd(\apo,\bpo)\) from the
computation of an \(\ash\)-weak Popov basis \(\basis\) of
\(\relmod{\modulus}{\apo,\bpo}\). Observe that here we have a general instance
of \(\relmod{\modulus}{\apo,\bpo}\) with \(\bpo\) not necessarily equal to
\(-1\), so the above reductions do not apply, but on the other hand we can
freely choose the modulus \(\modulus\).

Let us now comment on the choice of shift and modulus. The main parameter in
the cost of computing a basis \(P\) of \(\relmod{\modulus}{\apo,\bpo}\) is the
degree \(d = \deg(\modulus)\). Thus, a good candidate shift is \(\ash =
\deg(\apo)-\deg(\bpo)\), as it minimizes the quantity in
\cref{eqn:order_yielding_syz}, which is \(\dsyz = \deg(\apo) + \deg(\bpo) + 1 -
2\deg(\pg)\) in that case. This shift furthermore guarantees that it is the
second row of \(\basis\) which has the form \([-\lambda\bpo/\pg \;\;
\lambda\apo/\pg]\), since the \(\ash\)-pivot of that vector is the
right-hand entry. One would then take any modulus \(\modulus\) of degree \(d
\ge \deg(\apo) + \deg(\bpo) + 1 - 2\ell\), corresponding e.g.\ to approximants
or interpolants for efficiency reasons. If there is a discrepancy between
\(\deg(\apo)\) and \(\deg(\bpo)\), one could perform an preliminary division
with remainder \(\bpo = q \apo + r\), bringing the constraint down to \(d \ge
\deg(r) + \deg(\apo) + 1 - 2\ell\). In the typical case \(\deg(r) = \deg(\apo)
- 1\), this means \(d \ge 2(\deg(\apo) - \ell)\). Observe that this main
parameter is essentially the same as the one \(\dred = 2(\deg(\apo) - \ell -
1)\) of the approach in \cref{sec:rel_xgcd:xgcd_to_rel}.

Starting from this method which yields the GCD, one can also compute the
cofactors \((u,v)\) at an additional cost of \(\bigO{\timepm{d}}\) field
operations. We only sketch this below,
as this leads to a complexity bound identical to that of the approach detailed
in the previous subsections.
%% Indeed, the main complexity
%% parameter is the same: we mentioned \(m+n+1-2\ell\) above, and in case of
%% discrepancy between \(m\) and \(n\) one may perform an initial division with
%% remainder to bring this down to \(2(n-\ell)\).
We still find this alternative method interesting because it highlights that
one can solve \problemName{pbm:xgcd} via an approximant basis of \((\apo,
\bpo)\) at order about \(m+n+1\), which is typically solved by first
considering the low degree coefficients of \(\apo\) and \(\bpo\). In contrast,
the reduction of \cref{sec:rel_xgcd:rel_to_app} used in
\cref{sec:rel_xgcd:main_proof,sec:rel_xgcd:xgcd_to_rel} incidentally reverses
the coefficients of \(\apo\) and \(\bpo\), and one is left with an approximant
basis of the same order but operating first on the low degree coefficients of
the reversals, that is, the high degree coefficients of \(\apo\) and \(\bpo\),
similarly to the half-gcd algorithm.

Choose the shift \(\ash = \deg(\apo)-\deg(\bpo)\)
and any modulus \(\modulus\) of degree \(d\), with \(d\) sufficiently large as
indicated above. Considering \(\gamma = \gcd(\modulus, \apo, \bpo)\), define
\[
  Q =
  \begin{bmatrix}
    u \modulus/\gamma & v \modulus/\gamma \\
    -\bpo/\pg & \apo/\pg
  \end{bmatrix}  \in \mpR{2}{2}
  \quad\text{and}\quad
  \basis =
  \begin{bmatrix}
    \pp & \pq \\
    -\bpo/\pg & \apo/\pg
  \end{bmatrix}  \in \mpR{2}{2},
\]
where \(\basis\) is an \(\ash\)-weak Popov basis of
\(\relmod{\modulus}{\apo,\bpo}\), hence \(\det(\basis) = \lc{p} \lc{\apo}
\modulus/\gamma\). The fact that \(\basis\) has a row equal to \([-\bpo/\pg
\;\; \apo/\pg]\) is ensured by \cref{lem:kernel_approx} (up to dividing by
a constant \(\lambda\)), and here the chosen shift ensures that it is the
second row. Since \(Q\) has determinant \(\modulus/\gamma\) by
construction, it is also a basis of \(\relmod{\modulus}{\apo,\bpo}\)
(\cref{cor:determinant_relmod}). Thus \(\basis\) and \(Q\) are
left-unimodularly equivalent, and the facts that \(Q\) and \(\basis\) have
the same second row and satisfy \(\det(\basis) =
\lc{p} \lc{\apo} \det(Q)\) ensure that  \([u \modulus/\gamma \;\; v
\modulus/\gamma] = [r_{0} \;\; r_{1}]\basis\) for the nonzero constant \(r_0
= (\lc{p} \lc{\apo})^{-1}\) and for some \(r_{1} \in \pR\) whose
degree is less than \(\deg(\modulus/\gamma)\).

It remains to compute \(r_1\), which will give access to the cofactors
\((u,v)\). From the above vector identity, one deduces explicit formulas:
\(r_1 = r_0 p (\bpo / \pg)^{-1} \bmod \modulus/\beta\) and \(r_1 = - r_0 q
(\apo / \pg)^{-1} \bmod \modulus/\alpha\), where \(\alpha = \gcd(\modulus, \apo)\)
and \(\beta = \gcd(\modulus, \bpo)\). Thus, by a CRT computation,  one can
deduce the actual polynomial \(r_1\), since \(\lcm(\modulus/\alpha,
\modulus/\beta) = \modulus/\gamma\).

In terms of computations, this approach seems to involve several instances of
GCD and modular inversion, for obtaining \(\alpha,\beta,\gamma\) and the
inverses modulo \(\modulus/\alpha\) or modulo \(\modulus/\beta\). However, note
that all these are actually done in complexity \(\bigO{\timepm{d}}\) if one
works for example with \(\modulus = x^{d}\) or with interpolants at \(d\)
points in geometric progression.

\section{Minimal approximant bases: Pad\'e approximation}
\label{sec:approx}

In this section, we first present the classical divide and conquer approximant
basis algorithm based on Beckermann and Labahn's recursive approach (see
\cref{lem:dnc_relbas}), and then we augment it with optimizations outlined at
the beginning of \cref{sec:approx:appbasis}. We provide thorough complexity
analyses of both variants, leading to \cref{thm:appbasis:cx} stated and
commented on in \cref{sec:intro}.

For the base case of recursions, we give ourselves an algorithm with the
following specification and running time \(\bigO{1}\). In practice, one would
typically use an iterative algorithm such as those from
\cite{m_pade,BeckermannLabahn1994}, and would use this one for all base cases
with \(d\) not exceeding some predetermined (constant) threshold.

\begin{algorithm}[h]
  \algoCaptionLabel{AppBasis2-base}{d, \apo, \bpo, \ash}
  \begin{algorithmic}[1]
    \Require \(d \in \{0,1,2\}\),
    polynomials \((\apo,\bpo) \in \pR_{<d}^2\),
    shift \(\ash \in \mathbb{Z}\)
    \Ensure the basis \(P \in \mpR{2}{2}\) of \(\appmod{d}{\apo,\bpo}\) in
    \(\ash\)-Popov form
  \end{algorithmic}
\end{algorithm}

\subsection{Polynomial multiplication and middle product}
\label{sec:approx:polmul_midprod}

In approximant basis computations, apart from sums and products of polynomials,
we will also use the \emph{middle product} operation:

\begin{definition}
  \label{dfn:middle_product}%
  For \(m,n \in \NN\), \(p \in \pR_{\le m}\) and \(q \in
  \pR_{< m+n}\), the middle product is
  \[
    \midprod{p, q, m, m+n}
    =
    \sum_{0 \le k < n} f_{m+k} x^k \in \pR_{<n},
    \quad \text{where } pq = \sum_{k \ge 0} f_k x^k.
  \]
  We denote by \(\timemp{m, n}\) the corresponding time function.
\end{definition}

While a middle product could be simply be found by computing a polynomial
product and selecting suitable coefficients from it, it turns out that
dedicated algorithms are faster, improving at least the leading constant factor
in complexity bounds \cite{HanrotQuerciaZimmermann2004}. We will encounter the
following situation with an additional degree constraint on \(p\).

\begin{lemma}
  \label{lem:midprod_with_gap}%
  For \(d,m,n \in \NN\) with \(d \le m\), let \(p \in \pR_{\le d}\) and \(q \in
  \pR_{< m+n}\). Then the middle product \(\midprod{p, q, m, m+n}\) can be
  computed in \(\timemp{d, n}\) operations in \(\field\).
\end{lemma}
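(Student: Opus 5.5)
The plan is to reduce the middle product with the degree gap to a standard middle product of the right size, by discarding coefficients of \(q\) that cannot contribute. Write \(q = \sum_{0 \le j < m+n} q_j x^j\) and \(p = \sum_{0 \le i \le d} p_i x^i\). The coefficient \(f_{m+k}\) of \(pq\), for \(0 \le k < n\), is
\[
  f_{m+k} = \sum_{i=0}^{d} p_i \, q_{m+k-i}.
\]
The indices of \(q\) that occur satisfy \(m+k-i \ge m-d\) and \(m+k-i \le m+n-1\). Hence only the coefficients \(q_j\) with \(m-d \le j < m+n\) matter, which is a window of \(d+n\) coefficients.

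Next, define the shifted polynomial \(\hat q = \sum_{0 \le j < d+n} q_{m-d+j} x^j \in \pR_{<d+n}\). Obtaining it is a free extraction of coefficients, which is well defined since \(m-d \ge 0\) by the assumption \(d \le m\). Writing \(p\hat q = \sum_k g_k x^k\), we have
\[
  g_{d+k} = \sum_{i=0}^{d} p_i \hat q_{d+k-i} = \sum_{i=0}^{d} p_i q_{m+k-i} = f_{m+k}.
\]
It follows that \(\midprod{p, q, m, m+n} = \midprod{p, \hat q, d, d+n}\). Since \(p \in \pR_{\le d}\) and \(\hat q \in \pR_{<d+n}\), this is exactly an instance covered by \cref{dfn:middle_product} with parameters \((d,n)\). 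By definition of the time function it therefore costs \(\timemp{d, n}\) operations.

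The argument has no real obstacle. The only care needed is in the index bookkeeping: checking that no index below \(m-d\) ever occurs, and that the truncation of the upper indices of \(\hat q\) at \(d+n\) matches the original bound \(m+n\). It is also worth remarking that the extraction of \(\hat q\) involves no arithmetic operations in \(\field\), so it does not add to the count.
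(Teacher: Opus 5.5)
Your proof is correct and is essentially the paper's argument. The paper writes \(q = \tilde{q}\,x^{m-d} + r\) with \(\deg(r) < m-d\) and notes that \(\deg(pr) < m\). Its \(\tilde{q}\) is exactly your \(\hat q\), so the identity \(\midprod{p,q,m,m+n} = \midprod{p,\tilde{q},d,d+n}\) and the free extraction of \(\tilde{q}\) give the bound in the same way.
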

\begin{proof}
  Write the polynomial division \(q = \tilde{q} x^{m-d} + r\), with
  \(\deg(r) < m-d\). Then \(pq = p\tilde{q} x^{m-d} + pr\) with \(\deg(pr) <
  m\), hence the coefficients \(m,m+1,\ldots,m+n-1\) of \(pq\) are exactly the
  coefficients \(d,d+1,\ldots,d+n-1\) of \(p\tilde{q}\). Thus \(\midprod{p, q, m,
  m+n} = \midprod{p, \tilde{q}, d, d+n}\). Besides, computing \(\tilde{q}\)
  from \(q\) does not use arithmetic operations.
  \qed
\end{proof}

We require some mild assumptions on the time functions for multiplications and
middle products.

\begin{assumption}
  \label{hyp:timepm}%
  We consider the following assumptions on the time functions \(\timepm{\cdot}\) and
  \(\timemp{\cdot}\):
  \begin{enumerate}[(i)]
    \item \label{hyp:timepm:nondecr}

    The function \(n \mapsto \frac{\timepm{n}}{n\log_2(n)}\) is nondecreasing.

    \item \label{hyp:timepm:sumdeg}

      Two polynomials \(p \in \pR_{\le m}\) and \(q \in \pR_{\le n}\) can be
      multiplied in \(\timepm{\lfloor \frac{m+n}{2} \rfloor} + \bigO{m+n}\)
      operations in \(\field\).

    \item \label{hyp:timepm:midprod}

      The middle product algorithm satisfies \(\timemp{m, n} \in \timepm{\lfloor
      \frac{m+n}{2} \rfloor} + \bigO{m+n}\).
\end{enumerate}
\end{assumption}

\noindent
We will assume that \(\timepm{n}\) is in \(\bigO{n^2}\), that is, the
multiplication algorithm is not slower than the classical one. We will use the
fact that \(\timepm{n + k}\) is in \(\timepm{n} + \bigO{n}\) for any fixed
constant \(k \in \NN\), and that \(\timepm{n + k}\) is in \(\timepm{n} +
\bigO{n\log(n)}\) in the case \(k \in \bigO{\log(n)}\).

\Cref{hyp:timepm:nondecr} implies in particular the \emph{superlinearity}
property \(\timepm{m} + \timepm{n} \le \timepm{m + n}\) for any \(m,n \in \NN\)
\cite[Sec.\,8.3 Eq.\,(9)]{GathenGerhard2013}. We also note that the weaker
assumption ``\(n \mapsto \timepm{n}/n\) is nondecreasing'' (as in \cite[Sec.\,8.3
Eq.\,(8)]{GathenGerhard2013}) may be more frequent in the literature. Yet, on
the one hand the stronger \cref{hyp:timepm:nondecr} is satisfied when
instantiating \(\timepm{\cdot}\) corresponding to any of the usual
multiplication algorithms (such as those listed in
\cite[Tab.\,8.6]{GathenGerhard2013}), and on the other hand it allows us to
refine complexity bounds for certain recursive computations. Specifically, the
next statement exhibits a leading constant \(\frac{1}{2}\) (see
\cref{app:dnc_factor_half} for more details):

\begin{quotation}
  assuming \cref{hyp:timepm:nondecr}, \(\sum_{i=0}^{k-1} 2^i \timepm{n/2^i}
  \leq \frac{1}{2}\timepm{n}(k+1)\) holds

  for any \(n = 2^k \in \ZZp\).
\end{quotation}
However, the assumption ``\(n \mapsto \timepm{n}/n\) is nondecreasing''
would yield the larger constant \(1\) in this upper bound; one can see this by
choosing \(\timepm{n} = n\), which is not forbidden by this weaker assumption.
In \cite{vdHoeven2025}, the same \hyptime{hyp:timepm:nondecr} was made, and the
above stated consequence was exploited in complexity analyses, e.g.\ to get the
factor \(\frac{1}{2}\) in \cite[Prop.\,6]{vdHoeven2025}. Here we will use the
following bound, valid for an arbitrary \(n\), which we prove in
\cref{app:dnc_factor_half}.

\begin{lemma}
  \label{lem:nondecr_cst_factor_general}%
  Let \(n \in \ZZp\) and \(k = \lfloor \log_2(n) \rfloor\). If
  \hyptime{hyp:timepm:nondecr} holds, then
  \[
    \sum_{i=0}^{k-1} 2^i \timepm{\left\lceil \frac{n}{2^i}\right\rceil}
    \;\;\le\;\; \frac{1}{2}\timepm{n}(k+5).
  \]
\end{lemma}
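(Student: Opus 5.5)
Write \(f(m) = \timepm{m}/(m\log_2(m))\) for \(m \ge 2\). By \hyptime{hyp:timepm:nondecr}, \(f\) is nondecreasing, so \(\timepm{m} \le \frac{m\log_2(m)}{n\log_2(n)}\timepm{n}\) whenever \(2 \le m \le n\). The plan is to bound each summand \(2^i\timepm{\lceil n/2^i\rceil}\) through this comparison, with \(m_i = \lceil n/2^i\rceil\), and then sum the logarithms. This reproduces the factor \(\frac12\) of the power-of-two case, up to an additive slack.

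For \(0 \le i \le k-1\) we have \(m_i \ge 2\), because \(n \ge 2^k\) gives \(n/2^i \ge 2\); and clearly \(m_i \le n\). Hence
\[
  2^i\timepm{m_i} \le \frac{2^i m_i \log_2(m_i)}{n\log_2(n)}\,\timepm{n}.
\]
Now \(2^i m_i < n + 2^i\), so \(2^i m_i/n < 1 + 2^i/n \le 1 + 2^{i-k}\). Also \(\log_2(m_i) \le \log_2(n/2^i + 1) = \log_2(n) - i + \log_2(1 + 2^i/n)\), and since \(2^i/n \le 1/2\) the last term is at most \(2^{i-k}\cdot c\) with \(c = 1/\ln 2\) (using \(\log_2(1+t)\le t/\ln 2\)). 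Each summand is therefore at most \(\timepm{n}\) times
\[
  (1+2^{i-k})\,\frac{\log_2(n) - i + c\,2^{i-k}}{\log_2(n)}.
\]

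Summing over \(i\) from \(0\) to \(k-1\), the main part is \(\sum_i (\log_2(n) - i)/\log_2(n) = k - k(k-1)/(2\log_2(n))\). Since \(k \le \log_2(n) < k+1\), this is at most \(k - k(k-1)/(2(k+1)) = (k^2+3k)/(2(k+1)) \le (k+3)/2\). The correction terms are \(2^{i-k}\) times quantities bounded by \((\log_2(n)-i+c)/\log_2(n) \le 1 + c\), and \(\sum_i 2^{i-k} < 1\), so they contribute at most \(1 + c\) plus a negligible cross term. The total is then roughly \((k+3)/2 + 1 + c\). We need it to be at most \((k+5)/2 = (k+3)/2 + 1\), which leaves no room for \(c \approx 1.44\).

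So the main obstacle is tightening the slack. I will redo the correction bound more carefully. The relevant factor is \(2^{i-k}(\log_2(n) - i)/\log_2(n)\), and it is tiny exactly when \(2^{i-k}\) is large, that is, for \(i\) near \(k\), where \(\log_2(n) - i \le 2\). Concretely, I will bound \(\sum_i 2^{i-k}(\log_2(n)-i) \le \sum_{j\ge1} 2^{-j}(j+1) = 3\) and divide by \(\log_2(n) \ge k\); the \(c\,2^{i-k}\) terms sum to at most \(2c/k\). Both are \(O(1/k)\), so they fit within the slack of \(1\) for \(k\) beyond a small constant. The remaining small cases, \(k \le 4\) say, I will handle directly: crude monotonicity gives \(2^i\timepm{m_i}\le \timepm{2^i m_i}\le \timepm{n + 2^i}\). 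If needed I will also bound \(\timepm{n+2^i}\) via \(f\) again, or simply use the much larger slack \((k+5)/2\) compared with \(k\) terms each close to \(\timepm{n}\). I will also verify the trivial case \(k = 0\), where the sum is empty.
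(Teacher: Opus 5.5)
Your comparison step is the paper's: you use \hyptime{hyp:timepm:nondecr} to bound \(2^i\timepm{m_i}\) by \(\frac{2^i m_i\log_2(m_i)}{n\log_2(n)}\timepm{n}\). The endgame as written, however, leaves a hole. Your large-\(k\) estimate (main part \(\le (k+3)/2\), corrections \(\le (3+2c)/k\)) fits into the slack \(1\) only when \(k\ge 6\), since \(3+2/\ln 2\approx 5.9\). Your small-\(k\) fallback does not reach that far.

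The route through \(\timepm{n+2^i}\) is a dead end. \hyptime{hyp:timepm:nondecr} gives only a \emph{lower} bound on \(\timepm{n+2^i}\) in terms of \(\timepm{n}\), and nothing prevents \(\timepm{\cdot}\) from jumping just above \(n\). So bounding it ``via \(f\) again'' goes in the wrong direction, and any argument must stay at arguments \(\le n\). The other route, made precise with the comparison you already have, gives \(2^i\timepm{m_i}\le (2^i m_i/n)\,\timepm{n} < (1+2^{i-k})\timepm{n}\). The total is then below \((k+1)\timepm{n}\), which is at most \(\frac12(k+5)\timepm{n}\) only for \(k\le 3\). So \(k\in\{4,5\}\) is covered by neither part of your argument.

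The repair is cheap, because you discarded slack in the main part: \(\frac{k^2+3k}{2(k+1)} = \frac{k+2}{2}-\frac{1}{k+1}\) exactly. The available room is then \(\frac32+\frac1{k+1}\), and your corrections \((3+2c)/k\) fit in it as soon as \(k\ge 4\). Combined with the crude bound for \(1\le k\le 3\) and the empty sum for \(k=0\), this finishes the proof.

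For comparison, the paper avoids the constant \(1/\ln 2\) altogether. Convexity of \(\lambda\mapsto\lambda\log_2(\lambda)\) gives \((\frac{n}{2^i}+1)\log_2(\frac{n}{2^i}+1)\le 1+\frac{n}{2^i}(\log_2(n)+1-i)\). After summing, the crude bounds \(k-1\ge\log_2(n)-2\), \(2^k-1<n\), \(k\le\log_2(n)\) and \(n\ge 4\) yield \(\frac12(k+5)\) directly, with no case split beyond \(n\le 3\).
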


The assumption \cref{hyp:timepm:sumdeg} will help us analyze the complexity in
case of unbalanced degrees. One could always ensure it holds by increasing
\(\timepm{n}\) by a constant factor at most \(2\), since by definition \(pq\)
can be multiplied using at most \(\timepm{m+n}\) operations, which is within
\(2\timepm{\lfloor \frac{n+m}{2} \rfloor} + \bigO{n+m}\). Yet this increase should
be avoided if possible and is indeed not necessary in the case of the
multiplication algorithms listed in \cite[Tab.\,8.6]{GathenGerhard2013}. For
example, this can be seen for the classical algorithm with \(\timepm{n} =
2(n+1)^2\) by recalling that its cost in case of possibly distinct degrees is
\(2(m+1)(n+1)\) operations in \(\field\) \cite[Sec.\,2.3]{GathenGerhard2013};
there \cref{hyp:timepm:sumdeg} boils down to \(mn \le (\frac{m+n}{2})^2\).
FFT-based algorithms rely on a convolution operation such as \((pq) \rem (x^d -
1)\) or \((pq) \rem (x^d + 1)\), where the main parameter \(d\) in the
complexity is chosen so as to slightly exceed the degree \(n+m\) of the sought
product \(pq\) (see e.g.\ \cite[Thm.\,8.18 and Alg.\,8.20]{GathenGerhard2013}).

The assumption on \(\timemp{\cdot}\) in \cref{hyp:timepm:midprod} is
in the same spirit as the one on \(\timepm{\cdot}\) in
\cref{hyp:timepm:sumdeg}, and even brings no additional restriction if using
the usual polynomial multiplication algorithms. Indeed,
\cite[Appendix]{HanrotQuerciaZimmermann2004} shows that, as soon as
\(\timepm{\cdot}\) is realized through a bilinear multiplication algorithm,
\cref{hyp:timepm:midprod} is implied by \cref{hyp:timepm:sumdeg}.

\begin{remark}
  \label{rmk:generic_weaker_assumptions}%
  In most complexity analyses below, weaker assumptions are used when analyzing
  the generic case, because we have better control of the degrees of the
  manipulated polynomials and \(2 \times 2\) matrices in that case.
  For example, in the analyses presented in
  \cref{sec:approx:pmbasis,sec:approx:appbasis}, 
  \cref{hyp:timepm:sumdeg} is not needed, and
  \cref{hyp:timepm:midprod} is only used for the specific case where \(m \sim
  n/2\) (or precisely, \(m = \lceil d/4 \rceil\) and \(n = d - 2 \lceil d/4
  \rceil\) where \(d\) is the approximation order).
  \qed
\end{remark}

Finally, we note that the constants introduced in \cref{sec:intro:complexity},
\(\cstmatmul\) for \(2\times 2\) polynomial matrix multiplication and
\(\cstpolmulrep{k}\) for repeated multiplications by the same polynomial,
benefit from the assumptions above. Indeed, as explained in
\cite[Sec.\,2]{vdHoeven2025}, two matrices in \(\mpR{2}{2}_{\le m}\) and
\(\mpR{2}{2}_{\le n}\) can be multiplied using \(\cstmatmul \timepm{\lfloor
\frac{m+n}{2} \rfloor} + \bigO{m+n}\) operations in \(\field\),
and the same bound holds for a \(2 \times 2\) matrix middle product with
parameters \(m\) and \(m+n\). Similarly, for a fixed \(k \ge 1\), and given
polynomials \(p_1,\ldots,p_k \in \pR_{\le m}\), one can perform the \(k\)
multiplications \(p_1 q, \ldots, p_k q\) with \(q \in \pR_{\le n}\) in
\(\cstpolmulrep{k}\timepm{\lfloor \frac{m+n}{2} \rfloor} + \bigO{m+n}\)
operations in \(\field\), and the same bound holds for doing the \(k\) middle
products \(\midprod{p_i, q, m, m+n}\) with \(q \in \pR_{< m+n}\).
%% the latter (repeated MidProd) is a transpose of the former (repeated Mul)

\subsection{Complexity bounds for the classical algorithm}
\label{sec:approx:pmbasis}

\subsubsection{The classical divide and conquer algorithm}
\label{sec:approx:pmbasis:algorithm}

We now recall the algorithm due to Beckermann and Labahn for efficiently
computing approximant bases \cite[Sec.\,6]{BeckermannLabahn1994}, on which we
will base our complexity analyses. We specialize it to inputs formed by only
two polynomials, and modify it straightforwardly so as to avoid the explicit
use of FFT for polynomial multiplication (since it may not be feasible over
\(\field\)). Our description is thus close to
\cite[Alg.\,PM-Basis]{GiorgiJeannerodVillard2003}; in fact, whereas the latter
algorithm is generally faster than the former, both become essentially
equivalent when the input consists of only two polynomials. The recursion
splits the target approximation order as \(d = \lceil d/2 \rceil + \lfloor d/2
\rfloor\), like in \cite{GiorgiJeannerodVillard2003}, but unlike in
\cite{BeckermannLabahn1994} which uses \(d = d_r + (d-d_r)\) with \(d_r\) the
largest power of \(2\) smaller than or equal to \(d\) (this is indeed convenient
when working with FFT, and we will come back to it in \cref{sec:interp}).
How one splits \(d\) does not make a difference for correctness, which is is
based on the correctness of the base case and on the general recursion
property in \cref{lem:dnc_relbas}.

\begin{algorithm}[ht]
  \algoCaptionLabel{PM-Basis2}{d, \apo, \bpo, \ash}
  \begin{algorithmic}[1]

    \Require \(d \in \mathbb{N}\), polynomials \((\apo,\bpo) \in \pR_{<d}^2\), shift \(\ash \in \mathbb{Z}\)

    \Ensure a basis \(P \in \mpR{2}{2}_{\le d}\) of \(\appmod{d}{\apo,\bpo}\) in \(\ash\)-weak Popov form

    \State\InlineIf{\(d \le 2\)}{\Return \(\Call{algo:AppBasis2-base}{d, \apo,\bpo, \ash}\)}

    \State \(\dreca \gets \lfloor d / 2 \rfloor\); \(\drecb \gets d - \dreca\)

    \State \(\basis =
        [\begin{smallmatrix}
          p_{00} & p_{01} \\
          p_{10} & p_{11}
        \end{smallmatrix}]
        \gets \Call{algo:PM-Basis2}{\dreca, \apo, \bpo, \ash}\)

    \State \(\apor \gets \midprod{p_{00}, \apo, \dreca, d} + \midprod{p_{01}, \bpo, \dreca, d}\);
        \MyComment{\(\apor = (x^{-\dreca} (p_{00} \apo + p_{01} \bpo)) \rem x^{\drecb}\)}
    \Statex \(\bpor \gets \midprod{p_{10}, \apo, \dreca, d} + \midprod{p_{11}, \bpo, \dreca, d}\)
        \MyComment{\(\bpor = (x^{-\dreca} (p_{10} \apo + p_{11} \bpo)) \rem x^{\drecb}\)}%
        \label{step:pmbasis:residual}

    \State \(\basisr \gets \Call{algo:PM-Basis2}{\drecb, \apor,\bpor, \deg(p_{00})+\ash - \deg(p_{11})}\)

    \State \Return \(\basisr \basis\)
        \label{step:pmbasis:multiplication}
  \end{algorithmic}
\end{algorithm}

\subsubsection{Worst-case complexity bound}
\label{sec:approx:pmbasis:cx_worst_case}

Now we give a complexity analysis using general degree bounds on the
polynomials computed during the algorithm. Assume that we are not in the base
case, i.e., \(d > 2\). We write
\[
  \basis =
  \begin{bmatrix}
    p_{00} & p_{01} \\
    p_{10} & p_{11}
  \end{bmatrix}
  \text{ with degrees }
  \begin{bmatrix}
    \mathring{p}_{00} & \mathring{p}_{01} \\
    \mathring{p}_{10} & \mathring{p}_{11}
  \end{bmatrix},
  \text{ and }
  \basisr =
  \begin{bmatrix}
    q_{00} & q_{01} \\
    q_{10} & q_{11}
  \end{bmatrix}
  \text{ with degrees }
  \begin{bmatrix}
    \mathring{q}_{00} & \mathring{q}_{01} \\
    \mathring{q}_{10} & \mathring{q}_{11}
  \end{bmatrix}
  .
\]
Note that concerning possible zero entries \(p_{ij} = 0\) (which may only
happen for \(i\neq j\)), for use in upper bounds it will prove convenient to
set \(\mathring{p}_{ij} = 0\), and similarly for entries of \(\basisr\). From
\cref{sec:prelim:polmat,sec:prelim:relbas}, we have the following properties,
used throughout this section:
\begin{itemize}[noitemsep,topsep=2pt]
  \item since \(\basis\) is a basis of \(\appmod{\dreca}{\apo,\bpo}\),
    \(\det(\basis)\) divides \(x^{\dreca}\) (\cref{cor:determinant_relmod});
  \item since \(\basis\) is \(\ash\)-weak Popov, \(\mathring{p}_{01} +
    \mathring{p}_{10} < \mathring{p}_{00} + \mathring{p}_{11} =
    \deg(\det(\basis)) \le \dreca\) (\cref{lem:degree_bounds_2x2});
  \item similarly, \(\det(\basisr)\) divides \(x^{\drecb}\)
    and \(\mathring{q}_{01} + \mathring{q}_{10} < \mathring{q}_{00}
    + \mathring{q}_{11} \le \drecb\).
\end{itemize}
There are two main computations to analyze: that of the so-called
\emph{residual} polynomials at \cref{step:pmbasis:residual}, which can be seen
as a matrix-vector middle product, and the basis multiplication at
\cref{step:pmbasis:multiplication}, which is a product of two matrices in
\(\mpR{2}{2}\).

\begin{lemma}
  \label{lem:appbas_cx:residual}%
  The residual polynomials \(\apor\) and \(\bpor\) at
  \cref{step:pmbasis:residual} can be computed in \(2\timepm{\lceil 3d/4
  \rceil} +\bigO{d}\) operations in \(\field\), which is
  in \(\frac{3}{2}\timepm{d} +\bigO{d}\).
\end{lemma}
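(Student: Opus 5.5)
The residual consists of four middle products: \(\midprod{p_{00},\apo,\dreca,d}\), \(\midprod{p_{01},\bpo,\dreca,d}\), \(\midprod{p_{10},\apo,\dreca,d}\) and \(\midprod{p_{11},\bpo,\dreca,d}\). The plan is to cost each of them with \cref{lem:midprod_with_gap} and \hyptime{hyp:timepm:midprod}, using the individual degree bounds on the entries of \(\basis\). We then pair the four terms so that the degree constraint \(\mathring{p}_{01} + \mathring{p}_{10} < \mathring{p}_{00} + \mathring{p}_{11} \le \dreca\) from \cref{lem:degree_bounds_2x2} can be used.

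First, each entry \(p_{ij}\) has degree \(\mathring{p}_{ij} \le \dreca\), and \(\apo,\bpo \in \pR_{<d}\) with \(d = \dreca + \drecb\). By \cref{lem:midprod_with_gap} with \(m = \dreca\) and \(n = \drecb\), the middle product involving \(p_{ij}\) costs \(\timemp{\mathring{p}_{ij}, \drecb}\). By \hyptime{hyp:timepm:midprod} this is at most \(\timepm{\lfloor (\mathring{p}_{ij}+\drecb)/2 \rfloor} + \bigO{d}\). The two additions of results cost \(\bigO{d}\).

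Next, group the terms as \(\{p_{00},p_{11}\}\) and \(\{p_{01},p_{10}\}\). Each group has degree sum at most \(\dreca\). For a pair with \(e_1 + e_2 \le \dreca\), I will show
\[
  \timepm{\lfloor (e_1+\drecb)/2 \rfloor} + \timepm{\lfloor (e_2+\drecb)/2 \rfloor}
  \le \timepm{\lceil 3d/4 \rceil} + \bigO{d}.
\]
The arguments sum to at most \((\dreca + 2\drecb)/2 \le \lceil 3d/4\rceil + \bigO{1}\), because \(\dreca = \lfloor d/2\rfloor\) and \(\drecb = \lceil d/2 \rceil\). Superlinearity, which follows from \hyptime{hyp:timepm:nondecr}, then bounds the sum by \(\timepm{\lceil 3d/4\rceil + \bigO{1}}\). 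This equals \(\timepm{\lceil 3d/4 \rceil} + \bigO{d}\) by the stated fact \(\timepm{n+k} \in \timepm{n} + \bigO{n}\) for constant \(k\). Summing the two groups gives \(2\timepm{\lceil 3d/4 \rceil} + \bigO{d}\).

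The main obstacle is this rounding bookkeeping. I must check that \(\lfloor (e_1+\drecb)/2\rfloor + \lfloor (e_2+\drecb)/2\rfloor \le \lceil 3d/4\rceil + c\) for an absolute constant \(c\), including the convention \(\mathring{p}_{ij}=0\) for zero entries. If superlinearity is awkward with these floors, I would instead use the cruder bound \(\timepm{\cdot}\) monotone plus convexity-type superlinearity directly on the integers.

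For the final claim, \hyptime{hyp:timepm:nondecr} gives \(\timepm{n}/n\) nondecreasing, hence \(\timepm{\lceil 3d/4\rceil} \le \frac{\lceil 3d/4\rceil}{d}\timepm{d} \le \frac34\timepm{d} + \bigO{d}\). The \(\bigO{d}\) term absorbs the ceiling, using \(\timepm{d}/d \in \bigO{d}\) from \(\timepm{n}\in\bigO{n^2}\). Doubling yields \(\frac32\timepm{d}+\bigO{d}\).
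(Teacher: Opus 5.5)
Your proposal is correct and follows the paper's proof essentially step by step. Both arguments cost the four middle products via \cref{lem:midprod_with_gap} and \hyptime{hyp:timepm:midprod}, pair the diagonal and off-diagonal entries to exploit $\mathring{p}_{00}+\mathring{p}_{11} \le \dreca$ and $\mathring{p}_{01}+\mathring{p}_{10} < \dreca$ through superlinearity, and then use \hyptime{hyp:timepm:nondecr} to reach $\frac{3}{2}\timepm{d}+\bigO{d}$. The only difference is cosmetic: the paper needs no $+\bigO{1}$ slack, because the two floors sum to at most $\lfloor \dreca/2\rfloor + \drecb \le \lceil 3d/4\rceil$ directly, so the rounding bookkeeping you flagged as the main obstacle is immediate.
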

\begin{proof}
  Since \(\deg(p_{00}) = \mathring{p}_{00} \le \drecb\) and \(d =
  \dreca+\drecb\), we may apply \cref{lem:midprod_with_gap}: it states that
  computing \(\midprod{p_{00}, \apo, \dreca, d}\) costs
  \(\timemp{\mathring{p}_{00}, \drecb}\) operations in \(\field\). Similar
  bounds hold for the three other middle products, so that overall, using
  \hyptime{hyp:timepm:midprod}, the complexity of \cref{step:pmbasis:residual}
  is within
  \[
    \sum_{0 \le i,j \le 1} \timemp{\mathring{p}_{ij}, \drecb}
    \in \sum_{0 \le i,j \le 1} \timepm{\left\lfloor \frac{\mathring{p}_{ij} + \drecb}{2} \right\rfloor} + \bigO{d}.
  \]
  Now, gather terms in the sum so as to exploit the inequalities
  \(\mathring{p}_{00}+\mathring{p}_{11} \le \dreca\) and
  \(\mathring{p}_{01}+\mathring{p}_{10} \le \dreca\). The first one
  implies \(\lfloor (\mathring{p}_{00} + \drecb)/2 \rfloor + \lfloor (\mathring{p}_{11} + \drecb)/2
  \rfloor \le \lfloor \dreca/2 \rfloor + \drecb \le \lceil 3d / 4 \rceil\).
  %% in case this is useful: we can prove
  %% \lfloor \dreca/2 \rfloor + \drecb == floor(d/4) + ceil(d/2) <= 3d/4 + 1/4
  Thus, by superlinearity of \(\timepm{\cdot}\),
  \[
    \timepm{%
      \left\lfloor \frac{\mathring{p}_{00} + \drecb}{2} \right\rfloor
    }
    +
    \timepm{%
      \left\lfloor \frac{\mathring{p}_{11} + \drecb}{2} \right\rfloor
    }
    \le
    \timepm{\left\lceil \frac{3d}{4} \right\rceil}.
  \]
  Deriving a similar bound from \(\mathring{p}_{01}+\mathring{p}_{10} \le d/2\)
  leads to the first claimed cost bound. The second one then follows from
  \hyptime{hyp:timepm:nondecr}, which implies
  \(
    \timepm{\lceil 3d/4 \rceil} / \lceil 3d/4 \rceil
    \le \timepm{d} / d
  \),
  and therefore
  \(
    \timepm{\lceil 3d/4 \rceil}
    \le (3d/4 + 1) \timepm{d} / d
    = \frac{3}{4} \timepm{d} + \timepm{d} / d
  \),
  where the term \(\timepm{d} / d\) is in \(\bigO{d}\).
  \qed
\end{proof}

\begin{lemma}
  \label{lem:appbas_cx:multiplication}%
  The multiplication \(\basisr\basis\) at \cref{step:pmbasis:multiplication}
  can be done in \(4\timepm{\lfloor d/2 \rfloor} + \bigO{d}\) operations in
  \(\field\).
\end{lemma}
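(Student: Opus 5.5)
The plan is to expand \(\basisr\basis\) entrywise into the eight scalar products \(q_{ik}p_{kj}\), for \(i,j,k \in \{0,1\}\), and to bound each with \hyptime{hyp:timepm:sumdeg}. The product \(q_{ik}p_{kj}\) then costs \(\timepm{\lfloor (\mathring{q}_{ik}+\mathring{p}_{kj})/2 \rfloor} + \bigO{d}\). The four additions producing the entries of the result cost \(\bigO{d}\), since all degrees are at most \(\dreca+\drecb = d\). The remaining task is to regroup the eight terms into four pairs whose degree sums are each controlled by \(d\), and to merge each pair by superlinearity.

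For the pairing, each product uses one entry of \(\basisr\) and one of \(\basis\). I would pair \(q_{00}p_{00}\) with \(q_{11}p_{11}\). Their total degree is \((\mathring{q}_{00}+\mathring{q}_{11}) + (\mathring{p}_{00}+\mathring{p}_{11}) \le \drecb + \dreca = d\). Similarly:
\begin{itemize}[noitemsep,topsep=2pt]
  \item \(q_{00}p_{01}\) with \(q_{11}p_{10}\) has total \((\mathring{q}_{00}+\mathring{q}_{11}) + (\mathring{p}_{01}+\mathring{p}_{10}) \le d\);
  \item \(q_{01}p_{10}\) with \(q_{10}p_{01}\) has total \((\mathring{q}_{01}+\mathring{q}_{10}) + (\mathring{p}_{10}+\mathring{p}_{01}) \le d\);
  \item \(q_{01}p_{11}\) with \(q_{10}p_{00}\) has total \((\mathring{q}_{01}+\mathring{q}_{10}) + (\mathring{p}_{11}+\mathring{p}_{00}) \le d\).
\end{itemize}
All eight products are used exactly once, since each of the four \(q_{ik}\) appears with both \(p_{k0}\) and \(p_{k1}\). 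The bounds come from the degree inequalities listed before \cref{lem:appbas_cx:residual}, which rely on \cref{lem:degree_bounds_2x2}, on \cref{cor:determinant_relmod}, and on the convention that a zero entry has degree \(0\).

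For each pair with degree sums \(A\) and \(B\) satisfying \(A+B \le d\), we have \(\lfloor A/2\rfloor + \lfloor B/2 \rfloor \le \lfloor d/2 \rfloor\). Superlinearity, which follows from \hyptime{hyp:timepm:nondecr}, then gives \(\timepm{\lfloor A/2\rfloor} + \timepm{\lfloor B/2\rfloor} \le \timepm{\lfloor d/2\rfloor}\). Summing over the four pairs yields \(4\timepm{\lfloor d/2\rfloor} + \bigO{d}\).

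The main obstacle is finding a pairing in which every pair mixes one "diagonal-sum" constraint with another. It must not, for instance, pair \(q_{00}p_{00}\) with \(q_{00}p_{01}\), which would give a bound of order \(2\mathring{q}_{00}+\dots\) that is not controlled. The pairing above works because it respects the \(k\)-index structure. A minor point is that the strict inequalities on off-diagonal sums need no special care, since the weak bounds suffice.
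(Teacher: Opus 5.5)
Your proof is correct and follows the paper's own argument. You expand into the eight products, bound each via \hyptime{hyp:timepm:sumdeg}, group them into four pairs with degree sums at most \(\dreca+\drecb=d\), and merge each pair by superlinearity. Your indexing \(q_{ik}p_{kj}\) actually matches the computed product \(\basisr\basis\) more faithfully than the paper's pairs, which are written in the pattern \(p_{ij}q_{jk}\) of \(\basis\basisr\); by the symmetry of the degree constraints this makes no difference.
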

\begin{proof}
  Performing classical matrix multiplication, the dominant part of the
  complexity comes from the 8 polynomial multiplications \(p_{ij}
  q_{jk}\), for \(0 \le i,j,k \le 1\). By \hyptime{hyp:timepm:sumdeg} this can
  be done in \(
  \sum_{0\le i,j,k \le 1} \timepm{\lfloor (\mathring{p}_{ij} +
  \mathring{q}_{jk})/2 \rfloor} + \bigO{d} \).
  These 8 terms can be grouped in 4 pairs each involving 4 polynomials
  whose degrees sum to at most \(d\). For example,
  \(p_{00}q_{00}\) is grouped with \(p_{11}q_{11}\), and we know that
  \(
  (\mathring{p}_{00} + \mathring{q}_{00}) + (\mathring{p}_{11} + \mathring{q}_{11})
  =
  (\mathring{p}_{00} + \mathring{p}_{11}) + (\mathring{q}_{00} + \mathring{q}_{11})
  \le d
  \).
  Thus, using the superlinearity of \(\timepm{\cdot}\),
  \[
    \timepm{\left\lfloor \frac{\mathring{p}_{00} + \mathring{q}_{00}}{2} \right\rfloor}
    +
    \timepm{\left\lfloor \frac{\mathring{p}_{11} + \mathring{q}_{11}}{2} \right\rfloor}
    \le
    \timepm{\left\lfloor \frac{d}{2} \right\rfloor}.
  \]
  Similar bounds are derived for the other terms by grouping
  \(p_{01}q_{10}\) with \(p_{10}q_{01}\),
  \(p_{00}q_{01}\) with \(p_{11}q_{10}\),
  and
  \(p_{01}q_{11}\) with \(p_{10}q_{00}\).
  The conclusion then follows by summing the 4 obtained bounds.
  \qed
\end{proof}

\begin{proposition}
  \label{prop:pmbasis:cx_worst_case}%
  For any input polynomials \((\apo,\bpo)\) and shift \(\ash\), the call
  \(\Call{algo:PM-Basis2}{d, \apo, \bpo, \ash}\) costs
  \[
    {\textstyle\frac{7}{2}}\timepm{\lceil d/2 \rceil} \log_2(\lceil d/2 \rceil)
    + {\textstyle\frac{41}{4}}\timepm{d}
    + \bigO{d \log(d)}
  \]
  operations in \(\field\).
\end{proposition}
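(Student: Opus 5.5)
We set up the cost recurrence induced by \algoName{algo:PM-Basis2} and then unroll it. Let \(\Cxity{d}\) be the worst-case cost of a call at order \(d\). For \(d \le 2\) it is \(\bigO{1}\). For \(d > 2\), \cref{lem:appbas_cx:residual,lem:appbas_cx:multiplication} give
\[
  \Cxity{d} \le \Cxity{\lfloor d/2 \rfloor} + \Cxity{\lceil d/2 \rceil}
  + 2\timepm{\lceil 3d/4 \rceil} + 4\timepm{\lfloor d/2 \rfloor} + \bigO{d}.
\]
The leading constant comes from the ratios \(2 \cdot \frac{3}{2} + 4 \cdot \frac{1}{2} = 5\) relative to \(\timepm{d/2}\), per level. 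Since \(\timepm{d}\) grows like \(\timepm{d/2}\) times about \(2\), this gives \(\frac{3}{2}\timepm{d} + 2\timepm{d} = \frac{7}{2}\timepm{d}\) per node, roughly. After the factor \(\frac{1}{2}\) from \cref{lem:nondecr_cst_factor_general}, this yields the claimed \(\frac{7}{2}\) in front of \(\timepm{\lceil d/2 \rceil}\log_2\). The plan is therefore to bound each node's overhead by \(7\,\timepm{\lceil d'/2 \rceil} + \bigO{d'}\), where \(d'\) is the order at that node.

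The steps are as follows. First, rewrite the per-node cost in terms of \(n' = \lceil d'/2 \rceil\). We have \(4\timepm{\lfloor d'/2 \rfloor} \le 4\timepm{n'}\). For the residual we use \(\lceil 3d'/4 \rceil \le \frac{3}{2}n' + 1\) and \hyptime{hyp:timepm:nondecr}, as in the proof of \cref{lem:appbas_cx:residual}. This gives \(2\timepm{\lceil 3d'/4 \rceil} \le 3\timepm{n'} + \bigO{d'}\), modulo the \(\log\) ratio \(\log(3n'/2)/\log(n')\). That ratio costs an extra term \(\bigO{\timepm{n'}/\log n'}\), which we can charge to the lower-order part. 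The per-node overhead is thus \(7\timepm{n'} + \bigO{d'}\) plus small corrections. Second, unroll the recursion. At depth \(i\) there are \(2^i\) nodes with orders between \(\lfloor d/2^i \rfloor\) and \(\lceil d/2^i \rceil\), so the half-orders are at most \(\lceil \lceil d/2 \rceil / 2^i \rceil\). The depth is \(k = \lfloor \log_2 \lceil d/2 \rceil \rfloor\) up to an additive constant, and the leaves contribute \(\bigO{d}\). Third, apply \cref{lem:nondecr_cst_factor_general} with \(n = \lceil d/2 \rceil\):
\[
  \sum_{i=0}^{k-1} 2^i \cdot 7\,\timepm{\lceil n/2^i \rceil} \le \tfrac{7}{2}\timepm{n}(k+5).
\]
This gives \(\frac{7}{2}\timepm{\lceil d/2 \rceil}\log_2(\lceil d/2 \rceil)\), plus \(\frac{35}{2}\timepm{\lceil d/2 \rceil}\), which is about \(\frac{35}{4}\timepm{d}\). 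The \(\bigO{d'}\) terms summed over the tree give \(\bigO{d\log d}\).

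The main obstacle is the bookkeeping of lower-order terms needed to land exactly on \(\frac{41}{4}\timepm{d}\). The extra \(\frac{41}{4} - \frac{35}{4} = \frac{3}{2}\) presumably comes from handling the top-level residual directly as \(\frac{3}{2}\timepm{d}\), via the second form in \cref{lem:appbas_cx:residual}, rather than through \(n'\). The sum then runs over the remaining levels with a \(k\) shifted by one. I would therefore treat the residual and the multiplication separately in the sum. The multiplications give \(4\sum 2^i\timepm{\lceil n/2^i\rceil} \le 2\timepm{n}(k+5)\). For the residuals I would apply \(\timepm{\lceil 3d'/4\rceil} \le \frac{3}{4}\timepm{d'} + \bigO{d'}\) with \(d' \le 2\lceil n/2^i \rceil\), and so on, adjusting constants until they total \(\frac{41}{4}\). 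The \(\log\)-ratio corrections are absorbed as \(\bigO{d\log d}\), using \(\timepm{n} \in \bigO{n^2}\) only where harmless. Along the way I would also need to check that \(\lceil\cdot\rceil\) nestings satisfy \(\lceil\lceil d/2^i\rceil/2\rceil = \lceil d/2^{i+1}\rceil\), which holds.
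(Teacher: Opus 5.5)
Your primary plan breaks at its first step. The per-node bound \(2\timepm{\lceil 3d'/4\rceil} \le 3\timepm{\lceil d'/2\rceil} + \bigO{d'}\) bounds \(\mathsf{M}\) at a larger argument by \(\mathsf{M}\) at a smaller one. \hyptime{hyp:timepm:nondecr} only works the other way: for \(m \le n\) it gives \(\timepm{m} \le \frac{m\log_2 m}{n\log_2 n}\timepm{n}\). So it lets you bound \(\timepm{\lceil 3d'/4\rceil}\) by \(\frac34\timepm{d'}+\bigO{d'}\), but it says nothing about the ratio \(\timepm{\lceil 3d'/4\rceil}/\timepm{\lceil d'/2\rceil}\) from above. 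The ``log ratio'' is therefore not something you control, and it cannot be charged to lower-order terms. For instance, with Karatsuba-type growth \(\timepm{n}\approx c\,n^{\log_2 3}\) one has \(2\timepm{3n'/2}\approx 3.8\,\timepm{n'}\), so the per-node constant \(7\) fails by a constant factor. A telltale sign is that your primary plan would end with \(\frac{35}{4}\timepm{d}\), strictly better than the stated \(\frac{41}{4}\timepm{d}\). The missing \(\frac32\timepm{d}\) is precisely the top-level residual, which cannot be charged to \(\timepm{\lceil d/2\rceil}\) under the assumptions.

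Your last paragraph points toward the argument that does work, which is the paper's. However, the step ``with \(d'\le 2\lceil n/2^i\rceil\)'' would again require an upward comparison such as \(\timepm{2m}\lesssim 2\timepm{m}\). What is needed instead is an exact re-indexing:
\begin{itemize}
\item Bound the residual at a node of order \(d'\) by \(\frac32\timepm{d'}+\bigO{d'}\), using the second form of \cref{lem:appbas_cx:residual}.
\item Keep the depth-\(0\) term \(\frac32\timepm{d}\) separate.
\item At depth \(i\ge 1\), use the identity \(d'=\lceil d/2^i\rceil=\lceil\lceil d/2\rceil/2^{i-1}\rceil\).
\end{itemize}
Hence \(\sum_{i=1}^{k}2^i\cdot\frac32\timepm{\lceil d/2^i\rceil}=3\sum_{j=0}^{k-1}2^j\timepm{\lceil\lceil d/2\rceil/2^j\rceil}\le\frac32\timepm{\lceil d/2\rceil}(\log_2\lceil d/2\rceil+5)\) by \cref{lem:nondecr_cst_factor_general}. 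In effect, each child's residual is charged to its parent's level. This is where your aggregate constant \(7\) per level actually comes from, rather than from a per-node inequality. Adding the multiplications' \(2\timepm{\lceil d/2\rceil}(\log_2\lceil d/2\rceil+5)\) and using \(\timepm{\lceil d/2\rceil}\le\frac12\timepm{d}+\bigO{d}\) (comparison to the larger argument, which is allowed) gives \(\frac72\timepm{\lceil d/2\rceil}\log_2(\lceil d/2\rceil)+(\frac{35}{4}+\frac32)\timepm{d}+\bigO{d\log d}\), which is the claim.
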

\begin{proof}
  Recall that if \(d \le 2\), the base case of the recursion costs
  \(\bigO{1}\). If the input order is \(d > 2\), this base case is called
  at most \(d\) times overall, and thus its total contribution to the complexity
  bound is only \(\bigO{d}\). Let \(\Cxity{d}\) be the number of field operations
  used by \(\Call{algo:PM-Basis2}{d, \apo, \bpo, \ash}\). With
  two recursive calls at orders at most \(\lceil d/2 \rceil\),
  and other operations analyzed in the two previous lemmas,
  we get the divide and conquer complexity relation
  %\[
  %  \Cxity{d} = \Cxity{\lfloor d/2 \rfloor} + \Cxity{\lceil d/2 \rceil} + 2\timepm{\lceil 3d/4 \rceil}+ 4\timepm{\lfloor d/2 \rfloor} + \bigO{d}.
  %\]
  \[
    \Cxity{d} = 2\Cxity{\lceil d/2 \rceil} + {\textstyle\frac{3}{2}} \timepm{d} + 4\timepm{\lceil d/2 \rceil} + \bigO{d}.
  \]
  We use only ceilings to simplify the analysis, notably via the property
  \(\lceil \lceil d/2^i \rceil / 2 \rceil = \lceil d/2^{i+1} \rceil\). Let \(k
  = \lceil \log_2(d) \rceil - 2\) be the largest integer such that \(\lceil d / 2^k
  \rceil > 2\). By unrolling the above relation until reaching calls at order
  \(\lceil d/2^{k+1} \rceil \le 2\) (for which we use \(\Cxity{2} = \bigO{1}\)), we
  obtain
  \begin{equation}
    \label{eqn:recurrence_approx}
    \Cxity{d} = \sum_{i=0}^{k} 2^i \left({\textstyle\frac{3}{2}} \timepm{\left\lceil \frac{d}{2^i} \right\rceil} + 4\timepm{\left\lceil \frac{d}{2^{i+1}} \right\rceil} \right)  + \bigO{d \log(d)}.
  \end{equation}
  %\begin{align*}
  %  \Cxity{d} & = \sum_{i=0}^{k} 2^i \left(2\timepm{\left\lceil \frac{3}{4} \left\lceil \frac{d}{2^i} \right\rceil \right\rceil} + 4\timepm{\left\lceil \frac{d}{2^{i+1}} \right\rceil} \right)  + \bigO{d \log(d)} \\
  %\end{align*}
  First, since \(k \le \lfloor \log_2(d) \rfloor - 1 = \lfloor \log_2(\lceil
  d/2 \rceil) \rfloor\), from \cref{lem:nondecr_cst_factor_general} we get
  \begin{equation}
    \label{eqn:use_lem_cst_factor}%
    \sum_{i=0}^{k} 2^i \timepm{\left\lceil \frac{d}{2^{i+1}} \right\rceil}
    =
    \sum_{i=0}^{k} 2^i \timepm{\left\lceil \frac{\lceil d/2 \rceil}{2^i} \right\rceil}
    \le
    {\textstyle\frac{1}{2}} \timepm{\lceil d/2 \rceil} (\log_2(\lceil d/2 \rceil) + 5),
    %% note: we could have a floor around the log_2, but this has no impact
  \end{equation}
  where we have used \(\lceil d/2^{i+1} \rceil = \lceil \lceil d/2 \rceil/2^i \rceil\).
  This directly gives a bound for the second part of the sum in \cref{eqn:recurrence_approx}.
  For its first part, we may use the same inequality after isolating the term \(i=0\):
  \begin{align*}
    \sum_{i=0}^{k} 2^i {\textstyle\frac{3}{2}} \timepm{\left\lceil \frac{d}{2^i} \right\rceil}
    & = {\textstyle\frac{3}{2}} \timepm{d} + 3 \sum_{i=0}^{k-1} 2^i \timepm{\left\lceil \frac{\lceil d/2 \rceil}{2^i} \right\rceil} \\
    & \le {\textstyle\frac{3}{2}} \timepm{d} + {\textstyle\frac{3}{2}} \timepm{\lceil d/2 \rceil} (\log_2(\lceil d/2 \rceil)  + 5).
  \end{align*}
  Combining the above bounds yields
  \[
    \Cxity{d} = {\textstyle\frac{7}{2}}\timepm{\lceil d/2 \rceil} \log_2(\lceil d/2 \rceil)
              + {\textstyle\frac{35}{2}}\timepm{\lceil d/2 \rceil}
              + {\textstyle\frac{3}{2}}\timepm{d}
              + \bigO{d \log(d)},
  \]
  and the claimed complexity bound follows.
  \qed
\end{proof}

\subsubsection{Complexity bound for uniform shifts and generic degrees}
\label{sec:approx:pmbasis:cx_generic}

Recall that the input shift \(\ash\) is said to be uniform if
\(\ash=0\). This case brings additional degree properties for the computed bases, on top of those
listed at the beginning of the previous subsection. For example, by definition
of a weak Popov form, one has \(\mathring{p}_{01} < \mathring{p}_{00}\) as well
as \(\mathring{p}_{10} \le \mathring{p}_{11}\). One technical point is that the
shift might not remain uniform all along the computation. Indeed, the algorithm
uses the diagonal degrees of the first recursive call to form the new shift for
the second recursive call yielding the basis \(\basisr\): using notation from
\cref{lem:dnc_relbas}, this new shift is \(\ashr =
\mathring{p}_{00} - \mathring{p}_{11}\). Therefore, if some degree
unbalancedness appeared in \(\basis\), it may be that the second recursive call
is computed with a non-uniform \(\ashr\), making the assumption of a
uniform input shift difficult to exploit when analyzing the overall complexity.

Yet, if the input polynomials \((\apo,\bpo)\) satisfy some genericity
condition, such a shift unbalancedness will not arise: roughly, the key idea is
that generically the shift will not drift towards more unbalancedness and, on
the contrary, it rather tends to become closer to uniform during the computation.
More precisely, starting from the uniform shift, generically the degrees
\(\mathring{p}_{00}\) and \(\mathring{p}_{11}\) will differ by at most \(1\),
and a similar property holds in all recursive calls. This follows from the
results in \cref{app:approx:genericity}, which state in particular that,
generically, \((\mathring{p}_{00},\mathring{p}_{11}) = \delta_{\dreca,0} =
(\lceil \dreca/2 \rceil, \lfloor \dreca/2 \rfloor)\). Thus, depending on the
parity of \(\dreca\), the second call involves either the uniform shift
\(\ashr=0\) again, or a shift with the smallest possible unbalancedness,
specifically \(\ashr = 1\). In the latter case \(\ashr = 1\),
subsequent computations will not increase this unbalancedness further; in other
words, as we will see in the next proof, the shift will always have amplitude
at most \(1\).

\begin{proposition}
  \label{prop:pmbasis:cx_generic}%
  Let \(d\in\NN\) and let \(\ash \in \{-1,0,1\}\) be a shift with amplitude
  \(0\) or \(1\). Let \((\apo,\bpo) \in \pR_{<d}^2\) be polynomials that
  satisfy the genericity condition of \cref{cor:approx_generic_mindeg}, that
  is, \(\bar\Delta_{d,\ash}\) does not vanish at the coefficients of
  \((\apo,\bpo)\). Then, the call \(\Call{algo:PM-Basis2}{d, \apo, \bpo,
  \ash}\) costs
  \[
    \textstyle
    \min(\frac{\cstmatmul+6}{4}, \frac{\cstmatmul}{2})\, \timepm{\lceil d/2 \rceil}\log_2(\lceil d/2 \rceil)
    + \frac{5\cstmatmul+42}{8} \timepm{d} + \bigO{d\log(d)}
  \]
  operations in \(\field\), where \(\cstmatmul \le 7\) is a constant for \(2
  \times 2\) polynomial matrix multiplication.
\end{proposition}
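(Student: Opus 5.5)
The plan is to repeat the unrolling argument of \cref{prop:pmbasis:cx_worst_case}, but with the sharper degree information that genericity provides. The first step is to show that every recursive call inherits both the genericity hypothesis and a shift of amplitude at most \(1\). By \cref{cor:approx_generic_mindeg} (from \cref{app:approx:genericity}), the first call at order \(\dreca\) returns a basis \(\basis\) with \(\ash\)-minimal degree \(\delta_{\dreca,\ash}\). This is \((\lceil \dreca/2\rceil,\lfloor \dreca/2\rfloor)\) for \(\ash\in\{0,1\}\), and its symmetric counterpart for \(\ash=-1\). Hence the updated shift \(\ashr=\ash+\mathring{p}_{00}-\mathring{p}_{11}\) again lies in \(\{-1,0,1\}\). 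I will check this small case table explicitly (\(\ash\in\{-1,0,1\}\) against the parity of \(\dreca\)).

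I also need the residual \((\apor,\bpor)\) to satisfy the genericity condition at order \(\drecb\) with shift \(\ashr\). The plan is to get this from \cref{lem:dnc_relbas}: the \(\ash\)-minimal degree of the full module is the sum of the two minimal degrees. Since the full module has the generic minimal degree \(\delta_{d,\ash}\), the second call must return degrees \(\delta_{d,\ash}-\delta_{\dreca,\ash}=\delta_{\drecb,\ashr}\). If \cref{cor:approx_generic_mindeg} is stated as an equivalence with the minimal degree, this is exactly genericity for the residual. Otherwise I will argue through the determinantal condition \(\bar\Delta\) directly. This induction is the step I expect to be the main obstacle.

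Next comes the cost of one level, using the degrees \(\mathring{p}_{ij}\le\lceil\dreca/2\rceil\) and \(\mathring{q}_{ij}\le\lceil\drecb/2\rceil\), which hold up to \(O(1)\). For the residual, the four middle products have parameters \(m\approx\lceil d/4\rceil\) and \(n=\drecb\). By \hyptime{hyp:timepm:midprod} as specified in \cref{rmk:generic_weaker_assumptions}, each costs \(\timepm{\approx 3d/8}\), and \(\hyptime{hyp:timepm:nondecr}\) bounds this by \(\frac38\timepm{d}+O(d)\). The residual step therefore totals \(\frac32\timepm{d}\).

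For the product \(\basisr\basis\) I have two options. The first is a generic \(\cstmatmul\)-algorithm on two matrices of degree about \(d/4\), giving \(\cstmatmul\timepm{\lceil d/4\rceil}\). The second applies only to the half-size residual problem and balances the cost differently. The two per-level costs I aim for are \(\frac32\timepm{d}+\cstmatmul\timepm{\lceil d/4\rceil}\), and an alternative that folds the residual into a full product. To reach the \(\frac{\cstmatmul}{2}\) constant, I would compute the residual as the full product \(\basis\,[\apo\;\bpo]^T\) via one matrix product with cost \(\le \cstmatmul\timepm{\lceil d/4\rceil}\)-type terms. Alternatively I would bound \(\frac32\timepm{d}\le 3\timepm{\lceil d/2\rceil}\) and \(\cstmatmul\timepm{\lceil d/4\rceil}\le\frac{\cstmatmul}{2}\timepm{\lceil d/2\rceil}\), and take the better combination at each level.

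Finally, I will unroll the recurrence \(\Cxity{d}=2\Cxity{\lceil d/2\rceil}+(\text{level cost})\) exactly as in \cref{eqn:recurrence_approx}. Terms \(2^i\timepm{\lceil d/2^{i+1}\rceil}\) and \(2^i\timepm{\lceil d/2^{i+2}\rceil}\) are summed using \cref{lem:nondecr_cst_factor_general}, which yields the factor \(\frac12\) and produces \(\frac{6+\cstmatmul}{4}\timepm{\lceil d/2\rceil}\log_2\lceil d/2\rceil\). The isolated \(i=0\) and correction terms (the ``\(+5\)'') give the \(\frac{5\cstmatmul+42}{8}\timepm{d}\) term. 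The \(\min\) with \(\cstmatmul/2\) then follows by running the same bookkeeping with the alternative per-level cost.
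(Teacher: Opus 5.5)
Your treatment of the \(\frac{\cstmatmul+6}{4}\) branch is sound and follows the paper's argument. The generic degrees of \(\basis,\basisr\) are about \(d/4\), the product costs \(\cstmatmul\timepm{\lceil d/4\rceil}+\bigO{d}\), the residual stays at \(\frac32\timepm{d}\), and the unrolling is that of \cref{prop:pmbasis:cx_worst_case}. Two small fixes. First, \(\delta_{\dreca,1}=(\lfloor\dreca/2\rfloor,\lceil\dreca/2\rceil)\), and in fact \(\ashr=(\dreca+\ash)\bmod 2\in\{0,1\}\). Second, what you call the main obstacle is exactly the second part of \cref{cor:approx_generic_mindeg}: it gives the minimal degree of \(\appmod{k}{\apor,\bpor}\) for every \(k\le\drecb\). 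You need all these orders, i.e.\ your argument applied at order \(\dreca+k\) rather than only at \(d\), for the invariant to propagate down the tree.

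The genuine gap is the \(\frac{\cstmatmul}{2}\) branch of the minimum; neither of your alternatives delivers it. Computing \(\basis[\apo\;\bpo]^T\) is a \(2\times 2\) by \(2\times 1\) product against a vector of degree about \(d\). It is not a \(\cstmatmul\)-type product of two \(2\times 2\) matrices of degree about \(d/4\). Restricted to the needed coefficients, it is just the four middle products with parameters about \((d/4,d/2)\), i.e.\ the roughly \(4\timepm{3d/8}\le\frac32\timepm{d}\) you already have. Your other route uses \(\frac32\timepm{d}\le 3\timepm{\lceil d/2\rceil}\), which goes the wrong way: superlinearity gives \(2\timepm{\lceil d/2\rceil}\le\timepm{d}+\bigO{d}\). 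Even if granted, it yields a per-level cost \((3+\frac{\cstmatmul}{2})\timepm{\lceil d/2\rceil}\), hence again \(\frac{\cstmatmul+6}{4}\).

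The missing idea, which the paper takes from \cite[Sec.\,4.2]{vdHoeven2025}, is to reshape the residual computation. After \cref{lem:midprod_with_gap}, one needs \(\drecb\approx d/2\) coefficients of \(\basis\) (degree \(\approx d/4\)) times a \(2\times 1\) vector of degree \(\le 3d/4\). Split these output coefficients into two blocks of \(\approx d/4\). This turns the vector into a \(2\times 2\) matrix of degree \(\le d/2\), made of two overlapping coefficient windows of each entry. The residual then becomes a genuine \(2\times 2\) matrix middle product with parameters \(\approx(d/4,d/4)\), costing \(\cstmatmul\timepm{\lceil d/4\rceil}+\bigO{d}\).

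Together with the basis product, the per-level cost is \(2\cstmatmul\timepm{\lceil d/4\rceil}\le\cstmatmul\timepm{\lceil d/2\rceil}+\bigO{d}\). This gives \(\Cxity{d}=2\Cxity{\lceil d/2\rceil}+\cstmatmul\timepm{\lceil d/2\rceil}+\bigO{d}\), which unrolls to the constants \(\frac{\cstmatmul}{2}\) and \(\frac{5\cstmatmul}{4}\le\frac{5\cstmatmul+42}{8}\). Without this step your proof only gives the bound with \(\frac{\cstmatmul+6}{4}\) in place of the minimum. That is strictly weaker whenever \(\cstmatmul<6\), e.g.\ \(\cstmatmul=4\).
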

Note that this bound coincides with that in \cref{prop:pmbasis:cx_worst_case}
when one uses naive \(2 \times 2\) matrix multiplication with \(\cstmatmul=8\),
explaining why we report a single bound in
\cref{thm:appbasis:cx,prop:appbasis:gcdsensitive,prop:appbasis:shiftreduction}.
Still, one may observe from the proof below that the arguments provide a
slightly better constant in front of \(\timepm{d}\) than the one in the
statement above. If \(\cstmatmul \le 6\), the proof explicitly shows
\(5\cstmatmul/4\), and otherwise the generic degree bound on \(\basis\) allows
one to slightly simplify the analysis of \cref{prop:pmbasis:cx_worst_case} and
get a constant lower than \((5\cstmatmul+42)/8\). For the sake of readability,
we prefer to keep a single unified bound \((5\cstmatmul+42)/8\), equal to
\(41/4\) when \(\cstmatmul=8\).

\begin{proof}
  First consider the top-level node in the recursion tree, and observe that
  both recursive calls preserve the fact that the shift has amplitude \(0\) or
  \(1\). The first call is with the shift \(\ash\), and yields a
  \(\ash\)-weak Popov basis \(\basis\) of \(\appmod{\dreca}{\apo,\bpo}\).
  Since \(\dreca \ge 1\) and \(\ash\)
  has amplitude \(0\) or \(1\), we have \(-\dreca \le \ash \le \dreca\)
  and thus \(\basis\) has \(\ash\)-pivot degree \((\mathring{p}_{00},
  \mathring{p}_{11}) = (\lceil (\dreca-\ash)/2 \rceil, \lfloor
  (\dreca+\ash)/2 \rfloor)\) according to
  \cref{cor:approx_generic_mindeg}. Then, the second call is with the shift
  \(\ashr = \ash + \mathring{p}_{00} - \mathring{p}_{11} =
  \lceil (\dreca+\ash)/2 \rceil - \lfloor (\dreca+\ash)/2 \rfloor\),
  which has amplitude \(0\) or \(1\). The second part of
  \cref{cor:approx_generic_mindeg} ensures that the basis \(\basisr\) returned
  by the second call has \(\ashr\)-pivot degree \((\lceil
  (\drecb-\ashr)/2 \rceil, \lfloor (\drecb+\ashr)/2 \rfloor)\). One
  can easily check that this implies \(\deg(\basis) \le \lceil (\dreca+1)/2
  \rceil\) and \(\deg(\basisr) \le \lceil (\drecb+1)/2 \rceil\).

  Furthermore, \cref{cor:approx_generic_mindeg} also ensures that all recursive
  calls down the recursion tree enjoy the same properties: their respective
  input shifts have amplitude \(0\) or \(1\), and all intermediate approximant
  bases have the generically expected degrees. This generalizes the above
  degree bounds for \(\basis\) and \(\basisr\): any intermediate basis at some
  order \(k\) has degree at most \(\lceil (k+1)/2 \rceil\).

  This degree bound leads to the improved complexity bound, compared to the
  general case in \cref{sec:approx:pmbasis:cx_worst_case}. As seen above, both
  bases \(\basis\) and \(\basisr\) have degree at most \(\lceil (d+2)/4
  \rceil\). Thus, the product \(\basisr \basis\) at
  \cref{step:pmbasis:multiplication} of \cref{algo:PM-Basis2} can be computed
  using \(\cstmatmul \, \timepm{\lceil d/4 \rceil} + \bigO{d}\) operations in
  \(\field\). This improves upon the bound \(4 \timepm{\lceil d/2 \rceil} +
  \bigO{d}\) from \cref{lem:appbas_cx:multiplication} as soon as one uses a
  nontrivial \(2 \times 2\) polynomial matrix multiplication, such as
  Strassen's multiplication with \(\cstmatmul = 7\). Apart from this point, if
  we keep the same analysis as in the proof of \cref{prop:pmbasis:cx_worst_case}
  we arrive at the complexity equation
  \[
    \Cxity{d} = 2\Cxity{\lceil d/2 \rceil} + {\textstyle\frac{3}{2}} \timepm{d} + {\textstyle\frac{\cstmatmul}{2}} \timepm{\lceil d/2 \rceil} + \bigO{d},
  \]
  and the same arguments as in the proof of
  \cref{prop:pmbasis:cx_worst_case} yield the claimed cost bound
  with constants \((\cstmatmul+6)/4\)
  and \((5\cstmatmul+42)/8\).

  Now, if the context allows for a low constant \(\cstmatmul\), this may be
  exploited for the residual computation, following the approach in
  \cite[Sec.\,4.2]{vdHoeven2025}. Indeed, \cref{step:pmbasis:residual} asks to
  compute the high \(d/2\) coefficients of a product of \(\basis\) of degree
  about \(d/4\) with a \(2\times 1\) vector of degree \(\le 3d/4\) (where we
  have used \cref{lem:midprod_with_gap}). In the latter reference, it is
  observed that one may expand this vector into a \(2 \times 2\) matrix of
  degree \(\le d/2\) and rather compute the high \(d/4\) coefficients of a
  product of \(\basis\) of degree about \(d/4\) with this obtained \(2\times
  2\) matrix of degree \(\le d/2\). This is then a standard matrix middle
  product, computed in \(\cstmatmul \timepm{\lceil d/4 \rceil} + \bigO{d}\)
  operations in \(\field\). This leads to the equation
  \(
    \Cxity{d} = 2\Cxity{\lceil d/2 \rceil} + \cstmatmul \timepm{\lceil d/2 \rceil} + \bigO{d},
  \)
  and a simplified version of the proof of
  \cref{prop:pmbasis:cx_worst_case} yields the claimed bound
  with constants \(\cstmatmul/2\) and \(5\cstmatmul/4\), which
  is less than \((5\cstmatmul+42)/8\).
  \qed
\end{proof}

\subsection{Algorithmic optimizations and complexity bound refinements}
\label{sec:approx:appbasis}

We now augment the above algorithm with two kinds of optimizations. First, if
one or both input polynomials \(\apo\) and \(\bpo\) have positive valuation,
then one can reduce to a lower approximation order \(d\), at no arithmetic
cost. This will allow us to derive an improved complexity bound, presented in
\cref{sec:approx:appbasis:gcdsensitive}, which is sensitive to the order
\(\dsyz\) at which an approximant basis necessarily contains a syzygy of
\(\apo\) and \(\bpo\) (see \cref{lem:kernel_approx}). Second, if the shift
\(\ash\) has large amplitude, namely \(\abs{\ash} \ge
d\), then one can directly obtain the sought basis by computing a power series
expansion of either \(\bpo \apo^{-1}\) or \(\apo \bpo^{-1}\) at order \(d\). In
this case, the basis is in Hermite normal form (HNF), either lower or upper
depending on which entry of the shift is larger. A variant of this optimization
will allow us, in \cref{sec:approx:appbasis:unishift}, to handle more
efficiently the cases where the input shift is far from uniform, and also to
lift the requirement that the amplitude of the shift should be at most \(1\) in
the generic case.

Using these optimizations to design \cref{algo:AppBasis2-rec,algo:AppBasis2},
and combining the corresponding results detailed in
\cref{prop:appbasis:gcdsensitive,prop:appbasis:shiftreduction}, we obtain
\cref{thm:appbasis:cx} in \cref{sec:intro}. As outlined above, this gives a
complexity bound which is sensitive to \(\dsyz\) and to the amplitude of the
shift, and which is valid for the generic case even for non-uniform shifts. 

\subsubsection{Optimizations for positive valuations and for unbalanced shifts}
\label{sec:approx:appbasis:optimizations}

The first above-mentioned optimization is based on
\cref{lem:appbas:opti:valuation}, and implemented in
\crefrange{step:appbasis:opti:valuation:begin}{step:appbasis:opti:valuation:end}
of \cref{algo:AppBasis2-rec}, while the second one is based on
\cref{lem:appbas:opti:shift} and implemented in
\crefrange{step:appbasis:opti:shift:begin}{step:appbasis:opti:shift:end}.

Concerning the first optimization, one can consider the simpler case where one
of the input polynomials is zero, or both are zero: one can give the
approximant basis without any computation. Indeed, if \(\apo = \bpo = 0\), then
\(\appmod{d}{\apo,\bpo} = \pR^2\), whose \(\ash\)-Popov basis is the
identity matrix, for any shift. If \(\apo=0\) and \(\valuation{\bpo} = 0\),
then \(\appmod{d}{\apo,\bpo} = \pR \times x^d \pR\), whose
\(\ash\)-Popov basis is \([\begin{smallmatrix} 1 & 0 \\ 0 & x^{d}
\end{smallmatrix}]\) for any shift. The next lemma generalizes this to the case
where \(\apo\) or \(\bpo\) has positive valuation.

\begin{lemma}
  \label{lem:appbas:opti:valuation}%
  Let \(d \ge 0\), \((\apo,\bpo) \in \pR_{<d}^2\), and \(\ash \in
  \mathbb{Z}\). Define \(v_\apo = \min(d,\valuation{\apo})\) and \(v_\bpo =
  \min(d,\valuation{\bpo})\). If \(v_\apo \ge v_\bpo\), then the
  \(\ash\)-Popov basis of \(\appmod{v_\apo}{\apo,\bpo}\) is
  \([\begin{smallmatrix} 1 & 0 \\ 0 & x^{v_{\apo}-v_{\bpo}}
  \end{smallmatrix}]\). Similarly, if \(v_\bpo \ge v_\apo\), then the
  \(\ash\)-Popov basis of \(\appmod{v_\bpo}{\apo,\bpo}\) is
    \([\begin{smallmatrix} x^{v_{\bpo}-v_{\apo}} & 0 \\ 0 & 1
    \end{smallmatrix}]\).
\end{lemma}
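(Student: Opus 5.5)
By symmetry (swapping the roles of \(\apo,\bpo\) via \cref{lem:invinput}, which exchanges diagonal entries and preserves the Popov property up to the shift change \(\ash \mapsto 1-\ash\)), it suffices to treat the case \(v_\apo \ge v_\bpo\). Let \(D = \diag{1, x^{v_\apo - v_\bpo}}\).

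The plan is to show three things in turn: the rows of \(D\) lie in \(\appmod{v_\apo}{\apo,\bpo}\), \(D\) generates that module, and \(D\) is in \(\ash\)-Popov form for every \(\ash\).

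For the first point, note that \(x^{v_\apo}\) divides \(\apo\): if \(v_\apo = d\) then \(\apo\) has degree \(< d\), so its valuation is at least \(d\) and \(\apo = 0\). Hence \(\apo \cdot 1 = 0 \bmod x^{v_\apo}\). Likewise \(x^{v_\bpo}\) divides \(\bpo\), so \(x^{v_\apo - v_\bpo}\bpo = 0 \bmod x^{v_\apo}\).

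For generation, the shortest route is \cref{cor:determinant_relmod}, since \(\det(D) = x^{v_\apo - v_\bpo}\). It therefore suffices to show that \(\gamma = \gcd(\apo,\bpo,x^{v_\apo}) = x^{v_\bpo}\).
\begin{itemize}
\item If \(v_\bpo < d\), then \(v_\bpo = \valuation{\bpo}\) is the exact valuation of \(\bpo\), and \(v_\bpo \le v_\apo\). So \(\gcd(\bpo, x^{v_\apo}) = x^{v_\bpo}\), and \(x^{v_\bpo}\) divides \(\apo\).
\item If \(v_\bpo = d\), then \(v_\apo = d\) as well, both polynomials vanish, and \(\gamma = x^{d} = x^{v_\bpo}\).
\end{itemize}
The only point needing care is the truncation at \(d\) in the definition of \(v_\apo, v_\bpo\). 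It is handled by the observation that valuation \(\ge d\) forces the zero polynomial. (Alternatively, one could derive generation from \cref{lem:relbas_hermite_bases} with modulus \(x^{v_\apo}\) and \(\apo = 0 \bmod x^{v_\apo}\), which gives \(\diag{1,x^{v_\apo}/\beta}\) with \(\beta = \gcd(\bpo,x^{v_\apo}) = x^{v_\bpo}\), by the same valuation computation.)

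For the Popov property, \(D\) is diagonal with monic diagonal entries, and its off-diagonal entries are zero. Both the weak Popov pivot conditions and the Popov degree conditions therefore hold trivially for any shift. Uniqueness of the \(\ash\)-Popov basis then identifies \(D\) as the \(\ash\)-Popov basis.

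The second statement is the mirror argument with columns exchanged. No step seems a real obstacle; the only subtlety is the edge case \(v = d\).
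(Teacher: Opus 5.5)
Your proof is correct and follows essentially the paper's route: the paper just observes that \(\appmod{v_\apo}{\apo,\bpo} = \pR \times x^{v_\apo-v_\bpo}\pR\) and reads off the diagonal basis, which is \(\ash\)-Popov for every \(\ash\), whereas you certify generation through \cref{cor:determinant_relmod} by computing \(\gcd(\apo,\bpo,x^{v_\apo}) = x^{v_\bpo}\). Your alternative via the first item of \cref{lem:relbas_hermite_bases} is even closer to the paper's one-liner. The only small liberty is invoking \cref{lem:invinput}, which is stated for weak Popov forms, for the Popov form itself; the direct mirror argument you also give makes this immaterial.
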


\begin{proof}
  Assuming \(v_\apo \ge v_\bpo\), one easily verifies that
  \(\appmod{v_\apo}{\apo,\bpo} = \pR \times x^{v_{\apo}-v_{\bpo}} \pR\), whose
  shifted Popov basis is \([\begin{smallmatrix} 1 & 0 \\ 0 &
  x^{v_{\apo}-v_{\bpo}} \end{smallmatrix}]\) for any shift. The second case
  is similar.
  \qed
\end{proof}

This property can be exploited within the general recursive approach based on
\cref{lem:dnc_relbas}. Assume for example that \(v_\apo > v_\bpo\), and split
the approximation order as \(d = v_\apo + (d-v_\apo)\), so as to rely on two
subproblems at orders \(v_\apo\) and \(d-v_\apo\). The above lemma yields the
first approximant basis at order \(v_\apo\), for free: \(\basis =
[\begin{smallmatrix} 1 & 0 \\ 0 & x^{v_{\apo}-v_{\bpo}} \end{smallmatrix}]\).
Then one can deduce the residual polynomials \((\apor,\bpor)\) as in
\cref{step:pmbasis:residual}, which turns out not to require any arithmetic
operation either: \(\apor = x^{-v_{\apo}}\apo\), and \(\bpor =
(x^{-v_\bpo}\bpo) \rem x^{d - v_\apo}\). Then, the actual computation consists
in computing the second approximant basis \(\basisr\) at order \(d-v_\apo\),
for the updated shift \(\ash+\deg(1) - \deg(x^{v_{\apo}-v_{\bpo}}) =
\ash + v_{\bpo} - v_{\apo}\).  Finally, one returns the product \(\basisr
\basis\), which is found without any arithmetic operation. Proceeding similarly
for the case \(v_\apo > v_\bpo\), this explains the correctness of
\crefrange{step:appbasis:opti:valuation:begin}{step:appbasis:opti:valuation:end}
of \cref{algo:AppBasis2-rec}.

Now we turn to the second kind optimization, targeting unbalanced shifts.

\begin{lemma}
  \label{lem:appbas:opti:shift}
  Let \(d \ge 0\), \((\apo,\bpo) \in \pR_{<d}^2\), and \(\ash \in \mathbb{Z}\). If \(\ash
  \le -d\) and \(\valuation{\apo} = 0\), then the \(\ash\)-Popov basis of \(\appmod{d}{\apo,\bpo}\) is
  \([\begin{smallmatrix} x^{d} & 0 \\ -c & 1 \end{smallmatrix}]\), where \(c =
  (\bpo \apo^{-1}) \rem x^{d}\) is the truncation at order \(d\) of the power
  series expansion of \(b a^{-1}\). Similarly, if \(\ash \ge d\) and \(\valuation{\bpo} = 0\) then
  the \(\ash\)-Popov basis of \(\appmod{d}{\apo,\bpo}\) is
  \([\begin{smallmatrix} 1 & -c \\ 0 & x^{d} \end{smallmatrix}]\), where \(c =
  (\apo \bpo^{-1}) \rem x^{d}\).
\end{lemma}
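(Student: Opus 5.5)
We treat the case \(\ash \le -d\) and \(\valuation{\apo} = 0\); the other case follows by the symmetry of \cref{lem:invinput}. Indeed, swapping \(\apo\) and \(\bpo\) turns the shift \(\ash \ge d\) into \(1-\ash \le 1-d\). Because this is not quite \(\le -d\), we will not derive the second case formally from the first. Instead we repeat the same argument with the roles of the two columns exchanged.

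The plan is to check three things about \(H = [\begin{smallmatrix} x^{d} & 0 \\ -c & 1 \end{smallmatrix}]\): its rows lie in \(\appmod{d}{\apo,\bpo}\), it is a basis, and it is in \(\ash\)-Popov form. Uniqueness of the \(\ash\)-Popov basis then gives the claim.

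For membership, the first row gives \(x^d \apo + 0 \cdot \bpo = 0 \bmod x^d\). For the second row, since \(\valuation{\apo}=0\) the power series \(\apo^{-1}\) exists. Then \(-c\apo + \bpo = -(\bpo\apo^{-1} \rem x^d)\apo + \bpo = -\bpo\apo^{-1}\apo + \bpo = 0 \bmod x^d\).

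For the basis property, we use \cref{cor:determinant_relmod}: \(\det(H) = x^d\), so it suffices that \(\gamma = \gcd(\apo,\bpo,x^d) = 1\). This holds because \(\apo(0) \neq 0\). (When \(d=0\), \(H\) is the identity matrix and \(\appmod{0}{\apo,\bpo}=\pR^2\), so this case is trivial.)

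It remains to check the \(\ash\)-Popov conditions from \cref{sec:prelim:polmat} with \(p_{00}=x^d\), \(p_{01}=0\), \(p_{10}=-c\), \(p_{11}=1\). The first row requires \(\deg(0) < d + \ash\), which holds with the convention \(\deg(0)=-\infty\). The second row requires \(\deg(c) + \ash \le 0\); since \(\deg(c) \le d-1\) and \(\ash \le -d\), we get \(\deg(c)+\ash \le -1\). The pivots \(x^d\) and \(1\) are monic and on the diagonal. The remaining degree conditions are \(\deg(p_{10}) = \deg(c) < d = \deg(p_{00})\) and \(\deg(p_{01}) < \deg(p_{11}) = 0\), and both hold. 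The only delicate point is keeping the conventions for zero entries consistent, since \(p_{01}=0\) and possibly \(c=0\).

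For the second case, with \(\ash \ge d\) and \(\valuation{\bpo}=0\), we take \(H = [\begin{smallmatrix} 1 & -c \\ 0 & x^d \end{smallmatrix}]\) with \(c = (\apo\bpo^{-1}) \rem x^d\). Membership and \(\det(H)=x^d\) follow as before. For the Popov conditions, the first row requires \(\deg(c) < 0 + \ash\), which holds since \(\deg(c) \le d-1 < \ash\). The second row requires \(\deg(0)+\ash \le d\), which is trivially true. Finally \(\deg(p_{10}) < \deg(p_{00})\) holds trivially and \(\deg(p_{01}) = \deg(c) < d = \deg(p_{11})\), which completes the verification.
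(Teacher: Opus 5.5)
Your proof is correct and follows the same plan as the paper's: check that the rows lie in the module, check the \(\ash\)-Popov degree conditions, establish the basis property, and conclude by uniqueness of the Popov basis. The one difference is the basis step, where you use the determinant criterion of \cref{cor:determinant_relmod} with \(\gamma = 1\), while the paper argues generation directly: any \((q_0,q_1)\in\appmod{d}{\apo,\bpo}\) satisfies \(q_0 \equiv -q_1 c \bmod x^d\), so it is a \(\pR\)-combination of \([x^d \;\; 0]\) and \([-c \;\; 1]\).
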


\begin{proof}
  Assume \(\ash \le -d\), and let \(H = [\begin{smallmatrix} x^d & 0 \\ -c
  & 1 \end{smallmatrix}]\). Since \(d + \ash \le 0\) and \(\deg(c) < d\),
  \(H\) is in \(\ash\)-Popov form (and, as a matter of fact, also in
  lower HNF). By construction, each row of \(H\) is in
  \(\appmod{d}{\apo,\bpo}\). Furthermore, any \((q_0,q_1) \in
  \appmod{d}{\apo,\bpo}\) satisfies \(q_0 \apo + q_1 \bpo = 0 \bmod x^d\),
  which is equivalent to \(q_0 = -q_1 c \bmod x^d\), and the latter means that
  \((q_0,q_1)\) is a \(\pR\)-linear combination of the rows of \(H\).  Hence
  \(H\) is the \(\ash\)-Popov basis of \(\appmod{d}{\apo,\bpo}\). The
  case \(\ash \ge d\) is similar.
  \qed
\end{proof}

Thus, for such unbalanced shifts, we compute the approximant basis \(H\) at the
cost of one power series division at precision \(d\). The latter can be done
via power series inversion in \(3\timepm{d} + \bigO{d}\) \cite[Thm.\,9.4 and
Exe.\,9.6]{GathenGerhard2013} followed by a power series multiplication in
\(\timepm{d}\); this has been accelerated to \(\frac{5}{2} \timepm{d} +
\bigO{d}\) in \cite[Sec.\,4]{HanrotQuerciaZimmermann2004} by exploiting the
middle product and an idea from \cite{KarmMarkstein1997}.

\begin{remark}
  One may wonder what differs if, instead of computing the above explicit basis
  \(H\), one uses the recursive approach of \algoName{algo:PM-Basis2} to obtain
  an approximant basis \(\basis\). First, \(\basis\) has almost the same form
  as \(H\): its diagonal entries are also \(x^d\) and \(1\) up to
  multiplication by a constant, and its bottom-left entry may have degree up to
  \(d\). In other words, \(\basis\) is equal to \(H\) up to left multiplication
  by an invertible lower triangular matrix in \(\mKK{2}{2}\). Thus, using
  \(\bigO{d}\) additional field operations, \(\basis\) reveals the power series
  quotient \(c\). For this reason, one cannot expect the recursion of
  \algoName{algo:PM-Basis2} to be faster than the direct computation of \(c\) with
  the fastest available method.
  Further, since the unbalancedness of the shift propagates to recursive calls,
  all intermediate approximant bases have a similar form. This form makes the
  residual computation and basis multiplication at
  \cref{step:pmbasis:residual,step:pmbasis:multiplication} faster (at the level
  of constant factors), and one can see that these steps involve operations
  very close to those performed in the Newton-iteration-based power series
  division to find \(c\). Yet, a major difference remains in how the recursion
  is organized: the latter power series division uses a single recursive call
  at order \(d/2\) while \algoName{algo:PM-Basis2} uses two such calls. In the
  complexity bound, this translates as a gain of a factor \(\log(d)\) for the
  former.
  \qed
\end{remark}

\begin{algorithm}[ht]
  \algoCaptionLabel{AppBasis2-rec}{d, \apo, \bpo, \ash}
  \begin{algorithmic}[1]

    \Require \(d \in \mathbb{N}\), polynomials \((\apo,\bpo) \in \pR_{<d}^2\), shift \(\ash \in \mathbb{Z}\)

    \Ensure a basis \(P \in \mpR{2}{2}_{\le d}\) of \(\appmod{d}{\apo,\bpo}\) in \(\ash\)-weak Popov form

    \State\InlineIf{\(d \le 2\)}{\Return \(\Call{algo:AppBasis2-base}{d, \apo,\bpo, \ash}\)}

    %% optimisation: \apo or \bpo has positive valuation
    \LComment{\(\apo\) or \(\bpo\) has positive valuation: reduce to lower order}
        \label{step:appbasis:opti:valuation:begin}
    \State \(v_\apo \gets \min(d,\valuation{\apo})\); \(v_\bpo \gets \min(d,\valuation{\bpo})\)
    \State \InlineIf{\(v_\apo = v_\bpo = d\)}{\Return the \(2 \times 2\) identity matrix}
            \MyComment{\(\apo = \bpo = 0\)}
    \If{\(v_\apo > 0\) \OR{} \(v_\bpo > 0\)}
      \If{\(v_\apo > v_\bpo\) \OR{} \(v_\apo = v_\bpo > 0\)}
            \State \(\basis \gets [\begin{smallmatrix} 1 & 0 \\ 0 & x^{v_{\apo}-v_{\bpo}} \end{smallmatrix}]\);
                   \(\drecb \gets d-v_{\apo}\);
                   \(\apor \gets x^{-v_{\apo}}\apo\);
                   \(\bpor \gets (x^{-v_\bpo}\bpo) \rem x^{\drecb}\);
                   \(\ashr \gets \ash + v_{\bpo}-v_{\apo}\)
      \ElsIf{\(v_\bpo > v_\apo\)}
            \State \(\basis \gets [\begin{smallmatrix} x^{v_{\bpo}-v_{\apo}} & 0 \\ 0 & 1 \end{smallmatrix}]\);
                   \(\drecb \gets d-v_{\bpo}\);
                   \(\apor \gets (x^{-v_{\apo}}\apo) \rem x^{\drecb}\);
                   \(\bpor \gets x^{-v_\bpo}\bpo\);
                   \(\ashr \gets \ash+v_{\bpo}-v_{\apo}\);
      \EndIf
      \State \(\basisr \gets \Call{algo:AppBasis2-rec}{\drecb, \apor, \bpor, \ashr}\)
      \State \Return \(\basisr \basis\)
            \label{step:appbasis:opti:valuation:end}
            \MyComment{if going beyond this point, \(\valuation{\apo} = \valuation{\bpo} = 0\)}
    \EndIf

    %% optimisation: Hermite shift
    \LComment{the shift has amplitude \(\ge d\): return the basis in lower or upper HNF}
        \label{step:appbasis:opti:shift:begin}
    \State\InlineIf{\(\ash \le -d\)}{\Return \([\begin{smallmatrix} x^{d} & 0 \\ -c & 1 \end{smallmatrix}]\), where \(c \gets (\bpo \apo^{-1}) \rem x^{d}\)}
    \State\InlineIf{\(\ash \ge d\)}{\Return \([\begin{smallmatrix} 1 & -c \\ 0 & x^{d} \end{smallmatrix}]\), where \(c \gets (\apo \bpo^{-1}) \rem x^{d}\)}
        \label{step:appbasis:opti:shift:end}

    \LComment{divide and conquer}

    \State \(\dreca \gets \lfloor d / 2 \rfloor\); \(\drecb \gets d - \dreca\)

    \State \(\basis =
        [\begin{smallmatrix}
          p_{00} & p_{01} \\
          p_{10} & p_{11}
        \end{smallmatrix}]
        \gets \Call{algo:AppBasis2-rec}{\dreca, \apo, \bpo, \ash}\)

    \State \(\apor \gets \midprod{p_{00}, \apo, \dreca, d} + \midprod{p_{01}, \bpo, \dreca, d}\)
        \MyComment{\(\apor = (x^{-\dreca} (p_{00} \apo + p_{01} \bpo)) \rem x^{\drecb}\)}
    \State \(\bpor \gets \midprod{p_{10}, \apo, \dreca, d} + \midprod{p_{11}, \bpo, \dreca, d}\)
        \MyComment{\(\bpor = (x^{-\dreca} (p_{10} \apo + p_{11} \bpo)) \rem x^{\drecb}\)}%
        %\label{step:appbasis:residual}

    \State \(\basisr \gets \Call{algo:AppBasis2-rec}{\drecb, \apor,\bpor, \ash+\deg(p_{00})-\deg(p_{11})}\)

    \State \Return \(\basisr \basis\)
        %\label{step:appbasis:multiplication}
  \end{algorithmic}
\end{algorithm}

The above two additions to \algoName{algo:PM-Basis2} yield
\cref{algo:AppBasis2-rec}. As highlighted above, the complexity of this
modified algorithm is the same as or better than the former, depending on input
properties.
%\crefrange{step:appbasis:opti:valuation:begin}{step:appbasis:opti:valuation:end}
%reduce the approximation order at no cost, while
%\crefrange{step:appbasis:opti:shift:begin}{step:appbasis:opti:shift:end}
%compute the output basis via a single power series division.
Carrying out a more precise analysis of the general impact of the added steps
on efficiency is left as a perspective. This would mean to seek a complexity
bound which depends on the amount of valuations and shift unbalancedness
encountered in recursive calls. This is closely related, in the XGCD context,
to an analysis of the half-gcd algorithm which would take into account the
successive degrees of the polynomials in the remainder sequence (see e.g.\
\cite[Rem.\,16]{vdHoeven2025} on that topic). Although interesting, this
appears to be quite technical, and would only bring improvements for situations
that do not happen generically.

As announced above, we will however exploit these additions for complexity
bound refinements on two specific aspects, presented in the next subsections.

\subsubsection{Refinement: sensitivity to the GCD degree}
\label{sec:approx:appbasis:gcdsensitive}

We start with an example, illustrating that thanks to the improved handling of
input polynomials with positive valuation, the algorithm now performs better
when the quantity \(\dsyz\) from \cref{lem:kernel_approx} is small compared to
\(d\), which roughly happens when the two input polynomials have a large GCD.

\begin{example}
  \label{ex:approx_gcd_sensitive}%
  Consider the base field \(\field = \ZZ/7\ZZ\) and take the polynomials
  \begin{align*}
    \apo & = 2x^{14} + 3x^{13} + 5x^{12} + 4x^{11} + 4x^{10} + x^8 + 3x^6 + 4x^5 + 5x^4 + 2x^3 + 3x^2 + 2x + 4 \\
    \text{and } \bpo & = x^{14} + 5x^{13} + 5x^{11} + 6x^9 + 3x^8 + 5x^7 + 6x^6 + x^5 + 4x^4 + x^3 + 3x^2 + 3x + 6, \\
    \text{with } \pg & = \gcd(\apo,\bpo) = x^{10} + 3x^9 + 2x^8 + 4x^7 + x^4 + 6x^3 + 2x^2 + 3x + 6.
  \end{align*}
  Here, \(n = 14\), \(m = 14\), \(\ell = 10\). We use the uniform shift
  \(\ash = 0\) and compute a weak Popov approximant basis at order
  \(d = m+n+1 = 29\); we have \(\dsyz = 29 - 20 = 9\).

  Consider the top-level recursive calls, at respective orders \(\lfloor 29/2
  \rfloor = 14\) and \(\lceil 29/2 \rceil = 15\). Due to the large gcd between
  \(\apo\) and \(\bpo\), we have \(\dsyz \le \dreca\), so the first recursive
  call yields a basis \(\basis\) which has one row of the form
  \([-\lambda\bpo/\pg \;\; \lambda\apo/\pg]\) for some
  \(\lambda\in\field\setminus\{0\}\) (see \cref{lem:kernel_approx}); in the
  present case, it is the second row. Thus the residual polynomial \(\bpor\)
  corresponding to that row is zero, since it is by definition
  \[
    \bpor =
    \left(x^{-\dreca} \left(-\frac{\lambda\bpo}{\pg} \apo + \frac{\lambda\apo}{\pg} \bpo\right)\right) \rem x^{d-\dreca}
    = 0.
  \]
  Specifically, here we have
  \begin{align*}
    \basis & =
    \begin{bmatrix}
      x^{10} + 5x^6    &  3x^9 + 5x^8 + 6x^6 \\
      3x^4 + 6x^3 + 4x^2 + 3   & x^4 + 2x^3 + 5x^2 + 5
    \end{bmatrix} , \text{ and}
    \\
    \apor & = 2x^{10} + 6x^9 + 4x^8 + x^7 + 2x^4 + 5x^3 + 4x^2 + 6x + 5.
  \end{align*}
  Once \(\apor\) and \(\bpor\) are computed, the optimizations introduced at
  \crefrange{step:appbasis:opti:valuation:begin}{step:appbasis:opti:valuation:end}
  of \cref{algo:AppBasis2-rec} ensure that the second recursive call does not use
  any field operation and returns the basis \(\basisr = [\begin{smallmatrix}
  x^{\drecb} & 0 \\ 0 & 1 \end{smallmatrix}]\). Then, due to the form of
  \(\basisr\), the final product \(\basisr \basis\) is obtained for free.  In
  short, the second half of the recursion tree was totally avoided, and this
  happens without a priori knowledge of the GCD \(\pg\) or of its degree
  \(\ell\).
  %%: the algorithm computes the first basis \(\basis\), and
  %%from there the residual \((\apor,\bpor)\), and it is the fact that
  %%\(\bpor=0\) which significantly simplifies subsequent computations by making
  %%the algorithm choose the branch of
  %%\crefrange{step:appbasis:opti:valuation:begin}{step:appbasis:opti:valuation:end}.

  To get a finer understanding of complexity aspects, let us go further in the
  description of the recursion tree, represented in
  \cref{fig:ex:approx_gcd_sensitive}.

  \begin{figure}[htb]
    \centering
    \begin{tikzpicture}[
      every node/.style = {minimum width = 2em, draw, rectangle},
      level 1/.style = {sibling distance = 50mm},
      level 2/.style = {sibling distance = 50mm},
      level 3/.style = {sibling distance = 20mm},
      level 4/.style = {sibling distance = 10mm},
      ]
      \node {\(\substack{\texttt{call C}_0 \\ d = 29}\)}
        child {node {\(\substack{\texttt{call C}_{11} \\ \lfloor d/2 \rfloor = 14}\)}
          child {node {\(\substack{\texttt{call C}_{21} \\ d_1 = 7}\)}
            child {node {\(\substack{\texttt{call C}_{31} \\ \lfloor d_1/2 \rfloor = 3}\)}
              child {node {1}}
              child {node {2}}
            } % c31
            child {node {\(\substack{\texttt{call C}_{32} \\ \lceil d_1/2 \rceil = 4}\)}
              child {node {2}}
              child {node {2}}
            } % c32
          } % c21
          child {node {\(\substack{\texttt{call C}_{22} \\ \lceil \hat{d}_1 \rceil = 7}\)}
            child {node {\(\substack{\texttt{call C}_{33} \\ d_2 = 3}\)}
              child {node {1}}
              child {node {2}}
              } % c33
            child {node [dashed] {\(\substack{\texttt{call C}_{34} \\ \lceil \hat{d}_2 \rceil = 4}\)}}
          } % c34
        } % c11
        child {node [dashed] {\(\substack{\texttt{call C}_{12} \\ \lceil d/2 \rceil = 15}\)}
        }; % c0
    \end{tikzpicture}
    \caption{\em Recursion tree for \cref{ex:approx_gcd_sensitive}.  The
      notation \(d_1,d_2,\hat{d}_1,\hat{d}_2\) is used to help reading the
      proof of \cref{prop:appbasis:gcdsensitive}.
      As a consequence of \cref{lem:kernel_approx,lem:appbas:opti:valuation}, and
      because of the (a priori unknown) large GCD between \(\apo\) and
      \(\bpo\) implying \(\dsyz = 9\), the recursive calls \(\texttt{C}_{34}\)
      and \(\texttt{C}_{12}\) do not use any field operation and output an
      approximant basis of the form
      \([\begin{smallmatrix}
        x^d & 0 \\
        0 & 1
      \end{smallmatrix}]\).}
    \label{fig:ex:approx_gcd_sensitive}
  \end{figure}

  The above-discussed first recursive call, labelled \(\texttt{C}_{11}\), is
  itself divided into two calls both at order \(7\). Since \(7 < \dsyz = 9\),
  we do not expect the valuation optimization to bring any significant gain in
  the call \(\texttt{C}_{21}\). It yields the basis
  \[
    \basis_{21} =
    \begin{bmatrix}
      x^4 + x^3 + 6x^2 + 4x + 5 &  4x^3 + x^2 + 2x + 6 \\
      x^3 + 6x^2 + 4x           &      x^3 + 3x^2 + 2x
    \end{bmatrix},
  \]
  and the input for the call \(\texttt{C}_{22}\) follows: the residual polynomials are
  \[
    \apor_{22} = 2x^5 + x^4 + x^3 + 5x^2 + 4
    \quad\text{and}\quad
    \bpor_{22} = 3x^4 + 4x^3 + 6x^2 + 2x + 4,
  \]
  and the updated shift is \(\ashr_{22} = 0+4-3 = 1\).
  Now, recall that a syzygy should appear as soon as we reach overall order
  \(\dsyz = 29 - 20 = 9\), and we already have processed order
  \(7\). So within the call \(\texttt{C}_{22}\), a syzygy should appear as soon
  as we reach order \(\dsyz - 7 = 2\). Therefore, for the calls
  \(\texttt{C}_{33}\) and \(\texttt{C}_{34}\), the situation is similar to the
  top-level calls \(\texttt{C}_{11}\) and \(\texttt{C}_{12}\) discussed before:
  the call \(\texttt{C}_{33}\) yields a basis which contains a syzygy, hence
  the corresponding residual polynomial is zero, and the call
  \(\texttt{C}_{34}\) will not use any arithmetic operation.

  The overall order \(9\) is reached during the processing of the second leaf
  at order \(2\) of the call \(\texttt{C}_{33}\). The latter call yields
  the following basis, along with the input polynomials for
the call \(\texttt{C}_{34}\):
  \[
    \basis_{33} =
    \begin{bmatrix}
      x^2 &  6x^2 \\
      2 & x + 5
    \end{bmatrix}, \;\;
    \apor_{34} = 5x^3 + 4x^2 + 6x + 5, \;\;
    \bpor_{34} = 0, \;\;
    \ashr_{34} = 1+2-1 = 2.
  \]
  Then, the call \(\texttt{C}_{34}\) yields the trivial basis
  \(\basis_{34} =
  [\begin{smallmatrix}
    x^4 & 0 \\
    0 & 1
  \end{smallmatrix}]
  \),
  so that the basis multiplication finalizing the call \(\texttt{C}_{22}\) is
  free and gives \(\basis_{22} = \basis_{34} \basis_{33} =
  [\begin{smallmatrix}
      x^6 &  6x^6 \\
      2 & x + 5
  \end{smallmatrix}]\).
  Finally, the call \(\texttt{C}_{11}\) outputs the product \(\basis_{11} =
  \basis_{22} \basis_{21}\), which is the matrix \(\basis\) described above.
  Observe that one could exploit the valuation in the first row of
  \(\basis_{22}\) to compute this product faster.
  % This valuation is here \(6\), and for other polynomials with the same
  % \(m,n,\ell\) it would be at least \(5 = 14 - \dsyz\) according
  % to \cref{lem:kernel_approx}.

  Altogether, on this example, the algorithm performs recursive calls as
  expected (not using the valuation optimization) in the two subtrees of the
  calls \(\texttt{C}_{21}\) and \(\texttt{C}_{33}\). At this stage, after
  completing \(\texttt{C}_{33}\), it has reached total approximation order
  \(7+3=10\), which is just above \(\dsyz = 9\), and therefore all subsequent,
  not yet encountered nodes of the recursion tree (i.e.\ here
  \(\texttt{C}_{12}\) and \(\texttt{C}_{34}\)) are simplified: they do not
  involve further recursive calls and directly return trivial bases for free.
  Apart from the work in calls \(\texttt{C}_{21}\) and \(\texttt{C}_{33}\), the
  costly tasks are the computation of residuals (to find the input for calls
  \(\texttt{C}_{22}\), \(\texttt{C}_{34}\), and \(\texttt{C}_{12}\)) and the
  multiplication of bases in order to obtain \(\basis_{11} = \basis_{22} \basis_{21}\).
  \qed
\end{example}

\begin{proposition}
  \label{prop:appbasis:gcdsensitive}%
  Let \(d \in \mathbb{N}\) and \(\ash \in \mathbb{Z}\). Let \((\apo,\bpo)
  \in \pR_{<d}^2\), both nonzero, with respective degrees \((n,m)\). Consider
  the quantity from \cref{lem:kernel_approx},
  \[
    \dsyz = \max(m+n, 2n - \ash, 2m + \ash) + 1 - 2\deg(\gcd(\apo,\bpo)),
  \]
  and suppose \(\dsyz \le d\). Then \cref{algo:AppBasis2-rec} computes an
  \(\ash\)-weak Popov basis of \(\appmod{d}{\apo,\bpo}\) using
  \[
    \textstyle
    \min(\frac{\cstmatmul+6}{4}, \frac{\cstmatmul}{2}) \timepm{\lceil \dsyz/2 \rceil}\log_2(\lceil \dsyz/2 \rceil)
    + \frac{5\cstmatmul+74}{8} \timepm{\dsyz} + {\textstyle\frac{5}{2}} \timepm{d} + \bigO{d\log(d)}
  \]
  operations in \(\field\), where \(\cstmatmul = 8\) in the general case, and
  \(\cstmatmul \le 7\) is a constant for \(2 \times 2\) polynomial matrix
  multiplication in the case where \(\ash\) has amplitude at most \(1\)
  and \((\apo \rem x^{\dsyz},\bpo \rem x^{\dsyz})\) satisfies the genericity
  condition of \cref{cor:approx_generic_mindeg}, that is,
  \(\bar\Delta_{\dsyz,\ash}\) does not vanish at the first \(\dsyz\)
  coefficients of \(\apo\) and \(\bpo\).
\end{proposition}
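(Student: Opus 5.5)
The plan is to follow the structure of \cref{ex:approx_gcd_sensitive}: the recursion tree of \cref{algo:AppBasis2-rec} at order \(d\) splits into a part where real work happens and a part that becomes free once a syzygy is present. The key invariant I will use is this. Consider any node of the recursion tree handling orders in the global range \([o, o+k)\), i.e.\ computing a basis of the residual module at order \(k\) after \(o\) orders have been processed. If \(o \ge \dsyz\), then the accumulated basis at order \(o\) already contains a row \([-\lambda\bpo/\pg \;\; \lambda\apo/\pg]\) by \cref{lem:kernel_approx}. This uses \(\deg(x^o)\ge \dsyz\) together with the second variant of that lemma (or coprimality of \(x^o\) with \(\pg\) after the valuation reductions); since \(\pg\) may have positive valuation, I would rather invoke the weaker hypothesis \(\deg \ge \dsyz+\ell\) only if needed. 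Hence the corresponding residual polynomial is identically zero. Consequently \(v = d\) for that polynomial in the valuation branch, the node returns a diagonal basis \(\diag{x^k,1}\) or \(\diag{1,x^k}\) without arithmetic, and the multiplication by it in the parent is free up to \(\bigO{d}\) for shifting.

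With this invariant, I will bound the cost by descending the leftmost-path decomposition of \([0,d)\). Write the top-level split \(d=\dreca+\drecb\). If \(\dsyz \le \dreca\), the first call costs \(\Cxity{\dreca}\) by induction and the second call is free. The residual computation then costs at most \(\frac{3}{2}\timepm{d}\), and the product is free. Iterating, the recursion descends along a path of halvings \(d, d/2, d/4, \dots\) until reaching the first order \(d' < 2\dsyz\) (roughly \(d'\le 2\dsyz\)). The residuals along this path contribute \(\sum \frac32\timepm{d/2^i}\le 3\timepm{d}\) by superlinearity. I would try to sharpen this to \(\frac52\timepm{d}\), perhaps by noting that the top residual with one zero row only needs two middle products, hence costs \(\timepm{\lceil 3d/4\rceil}\) rather than \(2\timepm{\lceil 3d/4\rceil}\). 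Summing \(\sum_i \timepm{\lceil 3d/2^{i+2}\rceil}\cdot 2\) should then give the \(\frac52\) constant.

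From the node at order \(d' \in [\dsyz, 2\dsyz)\), the cost is that of \algoName{algo:PM-Basis2} at order \(d'\), restricted to the part of its tree below global order \(\dsyz\). I will show by induction that a call at order \(k\) whose "active" prefix has length \(a\le k\) costs at most the bound of \cref{prop:pmbasis:cx_worst_case} (resp.\ \cref{prop:pmbasis:cx_generic}) evaluated at \(a\) plus lower-order terms. Nodes entirely beyond \(\dsyz\) are free. Nodes straddling \(\dsyz\) form a single root-to-leaf path, so at each level at most one node is partially active. The fully active nodes at each level have total order \(\le \dsyz\), and superlinearity of \(\timepm{\cdot}\) together with \cref{lem:nondecr_cst_factor_general} yields \(\min(\frac{\cstmatmul+6}{4},\frac{\cstmatmul}{2})\timepm{\lceil\dsyz/2\rceil}\log_2\lceil\dsyz/2\rceil\). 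The straddling nodes contribute residual and product costs bounded by multiples of \(\timepm{\dsyz}\) summing geometrically, which accounts for the increase of the constant from \(\frac{5\cstmatmul+42}{8}\) to \(\frac{5\cstmatmul+74}{8}\), i.e.\ an extra \(4\timepm{\dsyz}\).

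The main obstacle is making the straddling-path accounting rigorous. I must check that the degree bounds of \cref{lem:appbas_cx:residual,lem:appbas_cx:multiplication} still apply when a partially active node returns a basis whose determinant degree is at most its active order plus a trivial \(x\)-power factor. In particular, I need the multiplication by a basis containing a pure monomial row to cost only in terms of the active degree. In the generic case, I also need to verify that applying \cref{cor:approx_generic_mindeg} to \((\apo\rem x^{\dsyz},\bpo\rem x^{\dsyz})\) controls all active nodes, since they only read the first \(\dsyz\) coefficients. I would first try to prove this straddling bound as a separate lemma by induction on the depth.
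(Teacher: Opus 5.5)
Your decomposition matches the paper's. Once the valuation steps at the root have made $\pg$ coprime with $x$, \cref{lem:kernel_approx} puts a syzygy row in the accumulated basis as soon as global order $\dsyz$ is reached, and every later call is free. What remains is a set of fully active (``normal'') subtrees of total order $d_1+\cdots+d_r\le\dsyz+1$, plus a single non-normal path. That path has one residual per level and a costly product only at the $r$ nodes where a normal call is split off. Drop your fallback to the variant $\deg\ge\dsyz+\ell$ of that lemma: it is a stronger requirement, and it would only produce the syzygy at order $\dsyz+\ell$.

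Your sharpening to $\frac52\timepm{d}$ fails. \cref{algo:AppBasis2-rec} computes all four middle products and has no way of knowing in advance which residual vanishes, so nothing is saved. The geometric sum over the path gives $\frac32\sum_{i=0}^{k}\timepm{\lceil d/2^i\rceil}\le\frac32\timepm{d}+\frac32\timepm{d+k}=3\timepm{d}+\bigO{d\log(d)}$, with $k\approx\log_2(d)$. The paper derives this term from exactly this sum, split into exactly these two pieces, and they add up to $3$, not the $\frac52$ written there. So your trick is not what is missing; what this argument actually supports is the constant $3$.

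The second problem is that you charge the residuals of the straddling nodes below $d'$, together with the products, to the extra $4\timepm{\dsyz}$. That does not add up, because the products alone already account for it. A split node has order $d_j+\lceil\hat d_j\rceil\le 2d_j+1$, so \cref{lem:appbas_cx:multiplication} bounds its product by $4\timepm{d_j}+\bigO{d}$. Superlinearity with $\sum_j d_j\le\dsyz+1$ then gives $4\timepm{\dsyz}+\bigO{d}$. The straddling residuals would add up to about another $3\timepm{d'}$, with $d'$ possibly close to $2\dsyz$.

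The fix is the paper's: the spine and the straddling part form one path whose level-$i$ node has order at most $\lceil d/2^i\rceil$. All residuals then go into the single sum above, and only the split-node products go into the $\timepm{\dsyz}$ term. This also removes what you call the main obstacle: no ``active degree'' refinement is needed. Each split-node product is bounded through the order of that node, and every other product on the path has left factor $\diag{x^j,1}$ or $\diag{1,x^j}$ and is free.

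For the normal subtrees, the paper applies \cref{prop:pmbasis:cx_worst_case,prop:pmbasis:cx_generic} to each one and uses superlinearity of that bound in the order. Your level-by-level count can also work, but only if you use \hyptime{hyp:timepm:nondecr} directly, as $\timepm{k}\le\frac{k\log k}{K\log K}\timepm{K}$ for $k\le K$. Superlinearity per level, or \cref{lem:nondecr_cst_factor_general} applied to the tree rooted at $d'$, loses about a factor $2$ in the leading constant.
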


\begin{proof}
  Concerning the coprimeness assumption in \cref{lem:kernel_approx}, note that
  the first steps of \cref{algo:AppBasis2-rec} make sure that the valuation
  of \(\apo\) and \(\bpo\) is zero; this process thus ensures the sought
  coprimeness while reducing the order \(d\) and leaving \(\dsyz\) unchanged.
  Observe also that \(\dsyz > 0\) by construction, since \(2\deg(\gcd(\apo,\bpo))
  \le m+n\). If \(\dsyz = d\), the result comes from
  \cref{prop:pmbasis:cx_worst_case,prop:pmbasis:cx_generic}. Now suppose
  \(\dsyz < d\). \cref{fig:proof:approx_gcd_sensitive} illustrates the
  recursion tree and should help reading the proof.  The orders
  \(d_1,\ldots,d_r\) can be defined recursively as follows. Let \(d_{0} = d\)
  and \(\hat{d}_0 = d_{0} / 2^{i_0}\). Suppose that for some \(j \ge 1\) we
  have built \(d_0,\ldots,d_{j-1} \in \ZZp\) and
  \(\hat{d}_0,\ldots,\hat{d}_{j-1} > 0\), with \(\dsyz - (d_1 + \cdots +
  d_{j-1}) < \lceil \hat{d}_{j-1} \rceil\) (for \(j=0\), this is our assumption
  \(\dsyz < d\)).  If \(\dsyz \le d_1 + \cdots + d_{j-1}\), the process ends:
  \(r = j-1\). Else, either \(\lceil \hat{d}_{j-1} \rceil = 2\) and we define
  \(d_j = 2 = \dsyz - (d_1 + \cdots + d_{j-1}) + 1\) and \(\hat{d}_j = 0\)
  (then the process ends with \(r=j\)), or \(\lceil \hat{d}_{j-1} \rceil \ge
  3\) and we let \(\hat{d}_j = \lceil \hat{d}_{j-1} \rceil / 2^{i_j}\) and
  \(d_j = \lfloor \hat{d}_j \rfloor\), where \(i_j \in \ZZp\) is the smallest
  integer such that \(d_j \le \max(2, \dsyz - (d_1 + \cdots + d_{j-1}))\). Note
  that \(i_j > 0\) follows from \(\lceil \hat{d}_{j-1} \rceil \ge 3\) and
  \(\lceil \hat{d}_{j-1} \rceil > \dsyz - (d_1 + \cdots + d_{j-1})\). By
  construction, \(d_1+\cdots+d_r \in \{\dsyz,\dsyz+1\}\).

  \begin{figure}[ht]
    \centering
    \begin{tikzpicture}[
      every node/.style = {minimum width = 2em, draw, rectangle},
      level 1/.style = {sibling distance = 25mm},
      level 2/.style = {sibling distance = 25mm},
      %level 3/.style = {sibling distance = 20mm},
      level 4/.style = {sibling distance = 42mm},
      multi/.style={edge from parent/.style={black,draw,dashed}},
      norm/.style={edge from parent/.style={black,draw,solid}}
      ]
      \node (root) {\footnotesize \begin{minipage}{0.2\textwidth} \centering call at order \(d = \lceil \hat{d}_0 \rceil\) \\ \(\dsyz < d\) \end{minipage}}
        child
        {
          child {
            child[multi]
            {
              child[norm] {
                  node[ellipse] (leafa) {\footnotesize \begin{minipage}{0.2\textwidth} \centering normal call at order \(d_1\) \\ \(d_1 = \lfloor d / 2^{i_1} \rfloor \le \dsyz\) \end{minipage}}
                  edge from parent node[left=0.1cm,draw=none] {\footnotesize \(\lfloor d / 2^{i_1} \rfloor\)}
                }
              child[norm] {
                  node[solid] (call) {\footnotesize \begin{minipage}{0.2\textwidth} \centering call at order \(\lceil \hat{d}_1 \rceil\) \\ \(\dsyz - d_1 < \lceil \hat{d}_1 \rceil\) \end{minipage}}
                  child { edge from parent[dashed,double]
                    child[norm] {
                      node[ellipse] (leafb) {\footnotesize \begin{minipage}{0.21\textwidth} \centering normal call at order \(d_2\) \\ \(= \lfloor \lceil \hat{d}_1 \rceil / 2^{i_2} \rfloor \le \dsyz - d_1\) \end{minipage}}
                      edge from parent node[left=0.1cm,draw=none] {\footnotesize \(\lfloor \lceil \hat{d}_1 \rceil / 2^{i_2} \rfloor\)}
                    }
                    child {
                      node[solid] {\footnotesize \begin{minipage}{0.2\textwidth} \centering call at order \(\lceil \hat{d}_2 \rceil\) \\ \(\dsyz - d_1 - d_2 < \lceil \hat{d}_2 \rceil\) \end{minipage}}
                      child { edge from parent[dashed,double] }
                      edge from parent node[right=0.1cm,draw=none] {\footnotesize \(\lceil \hat{d}_2 \rceil = \lfloor \lceil \hat{d}_1 \rceil / 2^{i_2-1} \rfloor - d_2\)}
                    }
                  }
                  edge from parent node[right=0.1cm,draw=none] {\footnotesize \(\lceil \hat{d}_1 \rceil = \lceil \lfloor d / 2^{i_1-1} \rfloor / 2 \rceil = \lfloor d / 2^{i_1-1} \rfloor - d_1\)}
                }
              %edge from parent node[left=0.1cm,draw=none] {\footnotesize depth \(i_1 - 3\)}
            }
            edge from parent node[left=0.1cm,draw=none] {\footnotesize \(\lfloor d/4 \rfloor\)}
          }
          child {node[dashed] {\footnotesize free} edge from parent node[right=0.1cm,draw=none] {\footnotesize \(\lceil \lfloor d/2 \rfloor / 2 \rceil\)}}
          edge from parent node[left=0.1cm,draw=none] {\footnotesize \(\lfloor d/2 \rfloor\)}
        }
        child {node [dashed] {\footnotesize free} edge from parent node[right=0.1cm,draw=none] {\footnotesize \(\lceil d/2 \rceil\)}
        };

      \draw[<->,thin] ([xshift=-1cm]leafa.north west) -- node[fill=white,draw=none] {\footnotesize depth \(i_1\)} +(up:5.2cm);
      \draw[<->,thin] ([xshift=-1cm]leafa.south west) -- node[fill=white,draw=none] {\footnotesize depth \(i_2\)} +(down:2.1cm);
    \end{tikzpicture}
    \caption{\em The recursion tree when \(\dsyz < d\),
      simplified thanks to
      \crefrange{step:appbasis:opti:valuation:begin}{step:appbasis:opti:valuation:end}
      of \cref{algo:AppBasis2-rec} and the properties in
      \cref{lem:kernel_approx,lem:appbas:opti:valuation}. The ``free''
      calls return an approximant basis without using field operations. On the
      contrary, the ``normal'' calls involve an approximation order less
      than the local \(\dsyz\), so their cost is not expected to
      significantly benefit from the optimizations at
      \crefrange{step:appbasis:opti:valuation:begin}{step:appbasis:opti:valuation:end}.
      Below the call at order \(\lceil \hat{d}_1 \rceil\), the double dashed
      edge hides a subtree involving free calls similar to that below the
      initial call at order \(d\).
      Below the call at order \(\lceil \hat{d}_2 \rceil\), the double dashed
      edge indicate similar subtrees at orders \(\lceil \hat{d}_3 \rceil, \ldots,
      \lceil \hat{d}_r \rceil\) until reaching a leaf (\(\hat{d}_r \le 2\))
      or until overall order \(\dsyz\) has been reached (\(d_1+\ldots+d_r \ge \dsyz\)).}
    \label{fig:proof:approx_gcd_sensitive}
  \end{figure}

  The computation involves ``normal'' calls to \cref{algo:AppBasis2-rec} at
  orders \(d_1,\ldots,d_r\). Since the cost bounds for this algorithm in
  \cref{prop:pmbasis:cx_worst_case,prop:pmbasis:cx_generic} are superlinear,
  this costs less than a single call to \cref{algo:AppBasis2-rec} at order
  \(d_1+\cdots+d_r \le \dsyz+1\). This yields the bound
  \[
    \textstyle
    \min(\frac{\cstmatmul+6}{4}, \frac{\cstmatmul}{2}) \timepm{\lceil \dsyz/2 \rceil}\log_2(\lceil \dsyz/2 \rceil)
    + \frac{5\cstmatmul+42}{8} \timepm{\dsyz} + \bigO{d\log(d)},
  \]
  with \(\cstmatmul=8\) if we are in the general context of
  \cref{prop:pmbasis:cx_worst_case}, and \(\cstmatmul \le 7\) in the generic
  case of \cref{prop:pmbasis:cx_generic}. It remains to take into account the
  extra operations, excluding those in these ``normal'' calls already accounted
  for. At all levels of the recursion tree, one has to compute only one pair
  of residual polynomials and at most one basis multiplication. Indeed, the
  only costly bases multiplications are at each of the \(r\) levels where one
  splits the order into \(d_i = \lfloor \hat{d}_i \rfloor\) and \(\lceil
  \hat{d}_i \rceil\), since the multiplications at other levels are free due to
  a left-hand operand of the form \([\begin{smallmatrix} x^j & 0 \\ 0 & 1
    \end{smallmatrix}]\) or \([\begin{smallmatrix} 1 & 0 \\ 0 & x^j
  \end{smallmatrix}]\).

  According to \cref{lem:appbas_cx:multiplication}, the multiplication which
  gathers the outputs of the recursive calls at orders \(d_i = \lfloor
  \hat{d}_i \rfloor\) and \(\lceil \hat{d}_i \rceil\) costs \(4 \timepm{d_i} +
  \bigO{d_i}\) operations. Since \(d_1+\cdots+d_r \le \dsyz+1\), by
  superlinearity of \(\timepm{\cdot}\), altogether the bases multiplications
  cost \(4 \timepm{\dsyz} + \bigO{\dsyz}\) operations.

  For the residuals, we consider the worst case where they all arise at the
  largest possible orders \(\lceil d / 2^i \rceil\) for \(0 \le i \le k\),
  where \(k = \lceil \log_2(d) - 2 \rceil\). According to
  \cref{lem:appbas_cx:residual}, computing them costs \((\sum_{i=0}^{k}
  {\textstyle\frac{3}{2}}\timepm{\lceil d/2^i \rceil}) + \bigO{d}\) operations
  in \(\field\). Isolating the term \(i=0\) of the sum, which is
  \({\textstyle\frac{3}{2}}\timepm{d}\), the rest of the sum is bounded by
  \(\sum_{i=1}^{k} {\textstyle\frac{3}{2}}\timepm{\lceil d/2^i \rceil} \le
  {\textstyle\frac{3}{2}} \timepm{d + k}\), thanks to the superlinearity of
  \(\timepm{\cdot}\) and the inequality \(\sum_{i=1}^{k} \lceil d/2^i
  \rceil \le d + k\). Since \(\timepm{d + \lceil \log_2(d) - 2 \rceil}\)
  is in \(\timepm{d} + \bigO{d \log(d)}\), we conclude that the computation of
  all residuals is in \({\textstyle\frac{5}{2}} \timepm{d} + \bigO{d
  \log(d)}\).
  \qed
\end{proof}

\begin{remark}
  Refinements of the above analysis should be feasible when it comes to
  improving the constant factors in front of the terms in \(\timepm{\dsyz}\)
  and \(\timepm{d}\). The above proof, concerning the extra computations for
  residuals and bases multiplications (outside of the ``normal'' calls), does
  not exploit the fact that the manipulated bases have a special form. Indeed,
  after the overall order \(\dsyz\) has been exceeded, all bases
  multiplications involve a left-hand operand of the form either
  \([\begin{smallmatrix} x^j & 0 \\ 0 & 1
    \end{smallmatrix}]\) or \([\begin{smallmatrix} 1 & 0 \\ 0 & x^j
  \end{smallmatrix}]\),
  and the computed bases have gradually higher valuation in either the first or
  the second row, accordingly. In the generic case, this allows one to compute
  the bases multiplications in \(\cstmatmul \timepm{\lceil\dsyz/2 \rceil}\) instead of \(4
  \timepm{\dsyz}\). It should also be possible to exploit this valuation to
  give a slightly better complexity bound for the computation of residuals,
  thus improving the constant for the term in \(\timepm{d}\).
  \qed
\end{remark}

\subsubsection{Refinement: non-uniform shifts}
\label{sec:approx:appbasis:unishift}

If the input shift has small amplitude, one expects the same for recursive calls.
For example, as we have seen in \cref{sec:approx:pmbasis:cx_generic}, if one
starts with a shift of amplitude at most \(1\), then generically all recursive
calls also involve a shift of amplitude at most \(1\); this is a favorable case
for efficiency, since one has better control of the degrees of the entries of
the computed bases. Here, if the input shift is not uniform, we use
\cref{lem:appbas:opti:shift} to design a preliminary step which computes a
first approximant basis at order \(\abs{\ash}\). Then, we are
essentially left with an instance with a uniform shift. In particular, this
allows us to give a complexity bound for the generic case when the input shift
is arbitrary, whereas the above results assumed shift amplitude at most \(1\).
Adding this preliminary step yields \algoName{algo:AppBasis2}.

\begin{algorithm}[ht]
  \algoCaptionLabel{AppBasis2}{d, \apo, \bpo, \ash}
  \begin{algorithmic}[1]

    \Require \(d \in \mathbb{N}\), polynomials \((\apo,\bpo) \in \pR_{<d}^2\), shift \(\ash \in \mathbb{Z}\)

    \Ensure a basis \(P \in \mpR{2}{2}_{\le d}\) of \(\appmod{d}{\apo,\bpo}\) in \(\ash\)-weak Popov form

    \State\InlineIf{\(d \le 2\)}{\Return \(\Call{algo:AppBasis2-base}{d, \apo,\bpo, \ash}\)}

    %% optimisation: \apo or \bpo has positive valuation
    \LComment{\(\apo\) or \(\bpo\) has positive valuation: reduce to lower order}
    \State \(v_\apo \gets \min(d,\valuation{\apo})\); \(v_\bpo \gets \min(d,\valuation{\bpo})\)
    \State \InlineIf{\(v_\apo = v_\bpo = d\)}{\Return the \(2 \times 2\) identity matrix}
            \MyComment{\(\apo = \bpo = 0\)}
    \If{\(v_\apo > 0\) \OR{} \(v_\bpo > 0\)}
      \If{\(v_\apo > v_\bpo\) \OR{} \(v_\apo = v_\bpo > 0\)}
            \State \(\basis \gets [\begin{smallmatrix} 1 & 0 \\ 0 & x^{v_{\apo}-v_{\bpo}} \end{smallmatrix}]\);
                   \(\drecb \gets d-v_{\apo}\);
                   \(\apor \gets x^{-v_{\apo}}\apo\);
                   \(\bpor \gets (x^{-v_\bpo}\bpo) \rem x^{\drecb}\);
                   \(\ashr \gets \ash+v_{\bpo}-v_{\apo}\)
      \ElsIf{\(v_\bpo > v_\apo\)}
            \State \(\basis \gets [\begin{smallmatrix} x^{v_{\bpo}-v_{\apo}} & 0 \\ 0 & 1 \end{smallmatrix}]\);
                   \(\drecb \gets d-v_{\bpo}\);
                   \(\apor \gets (x^{-v_{\apo}}\apo) \rem x^{\drecb}\);
                   \(\bpor \gets x^{-v_\bpo}\bpo\);
                   \(\ashr \gets \ash+v_{\bpo}-v_{\apo}\)
      \EndIf
      \State \(\basisr \gets \Call{algo:AppBasis2}{\drecb, \apor, \bpor, \ashr}\)
      \State \Return \(\basisr \basis\)
            \MyComment{if going beyond this point, \(\valuation{\apo} = \valuation{\bpo} = 0\)}
    \EndIf

    %% optimisation: Hermite shift
    \LComment{the shift has amplitude \(\ge d\): return the basis in lower or upper HNF}
    \State\InlineIf{\(\ash \le -d\)}{\Return \([\begin{smallmatrix} x^{d} & 0 \\ -c & 1 \end{smallmatrix}]\), where \(c \gets (\bpo \apo^{-1}) \rem x^{d}\)}
    \State\InlineIf{\(\ash \ge d\)}{\Return \([\begin{smallmatrix} 1 & -c \\ 0 & x^{d} \end{smallmatrix}]\), where \(c \gets (\apo \bpo^{-1}) \rem x^{d}\)}

    %% shift already uniform: call recursive approach directly
    \LComment{the shift is uniform: direct call to the divide and conquer approach}
    \State\InlineIf{\(\ash = 0\)}{\Return \(\Call{algo:AppBasis2-rec}{d, \apo, \bpo, 0}\)}
        \MyComment{shift is already uniform}

    %% optimisation: unbalanced shift
    \LComment{the shift has amplitude \(> 0\): preliminary computation to make it uniform}
        \label{step:appbas:unbalancedshift:start}
    \If{\(\ash < 0\)} \MyComment{\(0 < - \ash < d\)}
      \State \(\dreca \gets -\ash\) ; \(\drecb \gets d - \dreca\);
             \(\basis \gets [\begin{smallmatrix} x^{\dreca} & 0 \\ -c & 1 \end{smallmatrix}]\),
             where \(c \gets (\bpo \apo^{-1}) \rem x^{\dreca}\);
             \(\apor \gets \apo \rem x^{\drecb}\)
      \State \(\bpor \gets \sum_{0 \le k < \drecb} \bpo_{\dreca+k}\, x^{k} - \midprod{c, \apo, \dreca, d}\)
            \MyComment{residual \(\bpor = (x^{-\dreca} (\bpo - c \apo)) \rem x^{\drecb}\)}
            \label{step:appbas:unbalancedshift:resa}
    \Else \MyComment{\(0 < \ash < d\)}
      \State \(\dreca \gets \ash\) ; \(\drecb \gets d - \dreca\);
             \(\basis \gets [\begin{smallmatrix} 1 & -c \\ 0 & x^{\dreca} \end{smallmatrix}]\),
            where \(c = (a b^{-1}) \rem x^{\dreca}\);
            \(\bpor \gets \bpo \rem x^{\drecb}\)
      \State \(\apor \gets \sum_{0 \le k < \drecb} \apo_{\dreca+k}\, x^{k} - \midprod{c, \bpo, \dreca, d}\)
            \MyComment{residual \(\apor = (x^{-\dreca} (\apo - c\bpo)) \rem x^{\drecb}\)}%
            \label{step:appbas:unbalancedshift:resb}
    \EndIf
    \State \(\basisr \gets \Call{algo:AppBasis2-rec}{\drecb, \apor, \bpor, 0}\)
           \MyComment{divide and conquer with uniform input shift}
        \label{step:appbas:unbalancedshift:call}
    \State \Return \(\basisr \basis\)
        \label{step:appbas:unbalancedshift:end}
  \end{algorithmic}
\end{algorithm}

\begin{proposition}
  \label{prop:appbasis:shiftreduction}%
  Let \(d \in \mathbb{N}\), \((\apo,\bpo) \in \pR_{<d}^2\), and \(\ash \in
  \mathbb{Z}^2\). Let \(\dreca = \min(d,\abs{\ash})\) be the
  minimum of the approximation order \(d\) and of the amplitude of
  \(\ash\).  Then \cref{algo:AppBasis2} computes an \(\ash\)-weak
  Popov basis of \(\appmod{d}{\apo,\bpo}\) using
  \begin{itemize}[noitemsep,topsep=2pt]
    \item one call to \algoName{algo:AppBasis2-rec} at order \(d-\dreca\) and
      input shift \(0\);
    \item one division of power series, one residual computation, and one basis
      multiplication at a total cost of
      \(\frac{5}{2} \timepm{\dreca} + \timepm{\lfloor d/2 \rfloor} + \timepm{\dreca + \lfloor \frac{d-\dreca}{2} \rfloor} + \bigO{d} \in \bigO{\timepm{d}}\)
      operations in \(\field\).
  \end{itemize}
\end{proposition}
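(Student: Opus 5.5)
Correctness and cost are handled separately, following the branches of \cref{algo:AppBasis2}. The early branches need little work. Cases \(d \le 2\) and \(\apo=\bpo=0\) are immediate. The positive-valuation branch is correct by \cref{lem:appbas:opti:valuation} combined with \cref{lem:dnc_relbas}, exactly as explained after that lemma, and it uses no field operations before recursing at a smaller order. When \(\abs{\ash}\ge d\), \cref{lem:appbas:opti:shift} applies because both valuations are now zero; this step costs one power series division, hence \(\frac52\timepm{d}+\bigO{d}\) by \cite{HanrotQuerciaZimmermann2004}, with \(\dreca = d\). The case \(\ash=0\) is a direct call with \(\dreca=0\). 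For the cost statement, I will read the valuation branch as reducing to the same analysis with a smaller order \(d\) and the adjusted shift. Strictly speaking, this recursion should be unrolled until both valuations are zero. Since this involves no arithmetic, the claimed bound holds with \(d\) replaced by the reduced order, which is at most \(d\).

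The main case is \(0<\abs{\ash}<d\); I treat \(\ash<0\), and \(\ash>0\) is symmetric. Correctness: by \cref{lem:appbas:opti:shift} applied at order \(\dreca=-\ash\), the matrix \(\basis=[\begin{smallmatrix} x^{\dreca} & 0 \\ -c & 1\end{smallmatrix}]\) is the \(\ash\)-Popov basis of \(\appmod{\dreca}{\apo,\bpo}\). This is valid since \(\ash\le -\dreca\) and \(\valuation{\apo}=0\). Its residuals are \(x^{-\dreca}(x^{\dreca}\apo)\rem x^{\drecb}=\apo\rem x^{\drecb}\) and \((x^{-\dreca}(\bpo-c\apo))\rem x^{\drecb}\). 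The latter matches \cref{step:appbas:unbalancedshift:resa}, since the coefficients of \(\bpo\) shift for free and the middle product gives those of \(c\apo\). The updated shift from \cref{lem:dnc_relbas} is \(\ash+\deg(x^{\dreca})-\deg(1)=\ash+\dreca=0\). So \cref{lem:dnc_relbas} shows that \(\basisr\basis\) is an \(\ash\)-weak Popov basis when \(\basisr\) is a \(0\)-weak Popov basis of \(\appmod{\drecb}{\apor,\bpor}\), which the call at \cref{step:appbas:unbalancedshift:call} returns.

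Cost: the division \(c=\bpo\apo^{-1}\rem x^{\dreca}\) costs \(\frac52\timepm{\dreca}+\bigO{\dreca}\). For the residual, \(\deg(c)<\dreca\), and \cref{lem:midprod_with_gap} with \hyptime{hyp:timepm:midprod} gives \(\timemp{\dreca,\drecb}\in\timepm{\lfloor d/2\rfloor}+\bigO{d}\), because \(\dreca+\drecb=d\). The product \(\basisr\basis\) reduces to computing \(q_{01}c\) and \(q_{11}c\), plus free shifts by \(x^{\dreca}\) and additions. Here \(\basisr\) comes from an order-\(\drecb\) approximant basis, so its entries satisfy \(\deg(q_{01})+\deg(q_{11})\le \deg q_{01}+\deg q_{10}+\ldots\). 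More precisely, by \cref{lem:degree_bounds_2x2} and \cref{cor:determinant_relmod}, \(\deg(q_{01})<\deg(q_{00})\) and \(\deg(q_{01})+\deg(q_{11})<\drecb+\ldots\). What I will aim for is \(\deg(q_{01})+\deg(q_{11})\le \drecb\), using \(\deg(q_{01})+\deg(q_{11}) <\deg q_{00}+\deg q_{11}\le\drecb\) from the \(0\)-weak Popov property. Given that, \hyptime{hyp:timepm:sumdeg} gives \(\timepm{\lfloor(\deg q_{01}+\dreca)/2\rfloor}+\timepm{\lfloor(\deg q_{11}+\dreca)/2\rfloor}\). Superlinearity then bounds this by \(\timepm{\dreca+\lfloor\drecb/2\rfloor}\), up to \(\bigO{d}\), which is the claimed third term. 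Everything lies in \(\bigO{\timepm{d}}\) since \(\dreca\le d\).

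The main obstacle I expect is this last degree bookkeeping: the two products with \(c\) must fit under one \(\timepm{\cdot}\) of argument \(\dreca+\lfloor(d-\dreca)/2\rfloor\), and the floors must be handled correctly. If the bound \(\deg(q_{01})+\deg(q_{11})\le\drecb\) turns out to need care when \(q_{01}=0\), I will handle that subcase directly, since then only one product remains.
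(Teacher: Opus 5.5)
Your proposal is correct and follows essentially the same route as the paper: correctness comes from \cref{lem:appbas:opti:shift} and \cref{lem:dnc_relbas}, with updated shift \(\ash+\dreca=0\). The cost splits into the power series division, one middle product bounded via \hyptime{hyp:timepm:midprod}, and the two products \(c\,q_{01}\), \(c\,q_{11}\), bounded through \(\deg(q_{01})<\deg(q_{00})\), \(\deg(q_{00})+\deg(q_{11})\le\drecb\), \hyptime{hyp:timepm:sumdeg}, and superlinearity. Your remark that the valuation branch must first be unrolled, with \(\dreca\) then recomputed from the reduced order and adjusted shift, is a point the paper's proof passes over silently.
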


\begin{proof}
  The correctness of the algorithm follows from the results in
  \cref{sec:approx:appbasis:optimizations}. Note that, if the quantity
  \(\dreca\) from the statement is in \(\{1,\ldots,d-1\}\), then it coincides
  with the quantity denoted by \(\dreca\) at
  \crefrange{step:appbas:unbalancedshift:start}{step:appbas:unbalancedshift:end}
  of the algorithm.  In this case, the fact that the call to
  \algoName{algo:AppBasis2-rec} at \cref{step:appbas:unbalancedshift:call}
  is with the uniform shift \(0\) is
  justified by the shape of the basis \(\basis\). For example, in the case \(0
  < - \ash < d\), then \(\dreca = -\ash\) and \(\basis =
  [\begin{smallmatrix} x^{\dreca} & 0 \\ -c & 1
  \end{smallmatrix}]\), so that the updated shift in the recursive approach
  (see \cref{lem:dnc_relbas}) is \(\ashr = \ash+\deg(x^{\dreca})-
  \deg(1) = 0\). The case \(0 < \ash < d\) is similar.
  Apart from this call, one has to compute \(c\) which is a division of power
  series at precision \(\dreca\) and costs \(\frac{5}{2} \timepm{\dreca} +
  \bigO{\dreca}\) operations; to compute the residual at
  \cref{step:appbas:unbalancedshift:resa} or
  \cref{step:appbas:unbalancedshift:resb}, at a cost of \(\timemp{\dreca,\drecb} +
  \bigO{\drecb} \subseteq \timepm{\lfloor d/2 \rfloor} + \bigO{d}\) operations
  thanks to \hyptime{hyp:timepm:midprod};
  and to compute the basis
  multiplication \(\basisr \basis\) at \cref{step:appbas:unbalancedshift:end}.
  The latter costs \(\bigO{\timepm{d}}\) operations, and one may exploit the
  specific shape of \(\basis\) to refine the constant in this
  \(\bigO{\cdot}\): for example in the case \(0 < - \ash < d\) one has
  \[
    \basisr \basis =
    \begin{bmatrix}
      q_{00} & q_{01} \\
      q_{10} & q_{11}
    \end{bmatrix}
    \begin{bmatrix}
      x^{\dreca} & 0 \\
      -c & 1
    \end{bmatrix}
    =
    \begin{bmatrix}
      q_{00} x^{\dreca} - c q_{01} & q_{01} \\
      q_{10} x^{\dreca} - c q_{11} & q_{11}
    \end{bmatrix}.
  \]
  Thus, \(\basisr \basis\) can be computed in \(\timepm{\lfloor \frac{\dreca +
  \deg(q_{01})}{2} \rfloor} + \timepm{\lfloor \frac{\dreca + \deg(q_{11})}{2}
\rfloor} + \bigO{d}\) operations, using \hyptime{hyp:timepm:sumdeg}. Since
\(\basisr\) is a weak Popov approximant basis at order \(\drecb\), one has
\(\deg(q_{00}) + \deg(q_{11}) \le \drecb\) (see
\cref{sec:approx:pmbasis:cx_worst_case}) as well as \(\deg(q_{01}) <
\deg(q_{00})\) (since the shift is uniform), yielding the upper bound
\(\timepm{\dreca + \lfloor \frac{\drecb}{2} \rfloor} + \bigO{d}\) by
superlinearity.
\qed
\end{proof}

\section{Minimal interpolant bases: Cauchy interpolation}
\label{sec:interp}

In this section, we focus on the fast computation of interpolant bases
(\cref{pbm:int}), that is, bases of \(\intmod{\points}{\apo,\bpo}\) for special
sets of points \(\points \in \field^d\).

In this section we use bold letters to denote tuples of field elements,
typically for the points at which we evaluate or interpolate, or the values of
polynomials at these points. For such a tuple \(\points \in
\field^d\), we use the slice notation for subtuples: \(\points_{i:j} =
(\point_i,\point_{i+1},\ldots,\point_{j-1})\).
Given tuples \(\aevs = (\aev_i)_{0 \le i < d}\) and \(\bevs = (\bev_i)_{0 \le i
< d}\) in \(\field^d\), we overload our notation and define
\[
  \intmod{\points}{\aevs,\bevs}
  = \{(p,q) \in \pR^2 \mid \aev_i \pp(\point_i) + \bev_i \pq(\point_i) = 0 \text{ for } 0 \le i < d\}.
\]
This is convenient because we will often consider polynomials represented by
their evaluations, and one easily observes that \(\intmod{\points}{\aevs,\bevs}
= \intmod{\points}{\apo,\bpo}\) holds when the tuples above are the evaluations
\(\aevs = (\apo(\point_0),\ldots,\apo(\point_{d-1}))\) and \(\bevs =
(\bpo(\point_0),\ldots,\bpo(\point_{d-1}))\). Note that switching between
polynomial input \((\apo,\bpo)\) and evaluated input \((\aevs,\bevs)\) only
costs \(\bigO{\timepm{d}}\) operations in \(\field\) in the case of points
\(\points\) in geometric progression, including radix-2 FFT points; see
\cref{sec:interp:eval_interp_extrap}.

For the base case of recursions, we give ourselves an algorithm with the
following specification and running time \(\bigO{1}\). In practice, one would
typically use an iterative algorithm such as those from
\cite{m_pade,BeckermannLabahn1994}, and would use this one for all base cases
with \(d\) not exceeding some predetermined (constant) threshold.

\begin{algorithm}[h]
  \algoCaptionLabel{IntBasis2-base}{\points, \aevs, \bevs, \ash}
  \begin{algorithmic}[1]
    \Require distinct points \(\points = (\point_i)_{0 \le i < d}\) for some \(d \in \{0,1,2\}\),
    values \((\aevs,\bevs) \in (\field^d)^2\),
    shift \(\ash \in \mathbb{Z}\)
    \Ensure the basis \(\basis \in \mpR{2}{2}\) of
    \(\intmod{\points}{\aevs,\bevs}\) in \(\ash\)-Popov form
  \end{algorithmic}
\end{algorithm}

We start by stating some results about evaluation, interpolation, and
extrapolation at special sets of points (\cref{sec:interp:eval_interp_extrap}).
Then, as outlined in \cref{sec:intro:approx_interp}, we present and analyze the
classical divide and conquer interpolant basis algorithm
(\cref{sec:interp:pmintbasis}), and describe an optimized variant of it based
on an evaluated representation (\cref{sec:interp:eval_intbasis}). Finally, we
give the sketch of an approach to further optimize the latter algorithm
(\cref{sec:interp:eval_intbasis_opti}), but leaving as a perspective the study
of the conditions under which it is applicable.

Our analyses are restricted to some sets of points, for which it is known how
to perform \(d\)-point evaluation or \(d\)-point extrapolation in
\(\bigO{\timepm{d}}\). In particular, this excludes general points, for which
the best known cost bounds are in \(\bigO{\timepm{d}\log(d)}\) and the
algorithms in this section would lead to a complexity of \(\bigO{\timepm{d}
\log(d)^2}\); for this reason, one should rather rely on the reductions from
\cref{sec:rel_xgcd} for such points. Still, these reductions require \(\bpo =
-1\), whereas \cref{pbm:int} involves an arbitrary polynomial \(\bpo\). To
circumvent this, similarly to the approximant case in
\cref{sec:rel_xgcd:app_to_modN}, one needs first to process the input
to ensure either \(\apo=-1\) or \(\bpo=-1\); for example, in the
favorable case where all entries of \(\bevs\) are nonzero, one has
\(\intmod{\points}{\apo, \bpo} = \intmod{\points}{-\apo\bpo^{-1}, -1}\).

\begin{remark}
  \label{rmk:interp_optimizations}%
  One could derive optimizations analogous to those in
  \cref{sec:approx:appbasis}, so as to improve the sensitivity of the
  interpolant basis algorithms presented below to the shift \(\ash\), to the
  number of zero evaluations in  \(\aevs\) and \(\bevs\), and also to the
  degree of the GCD of \(\apo\) and \(\bpo\). One would then need to carefully
  analyze these optimizations, notably to take into account the fact that the
  powers of \(x\) that appear in
  \cref{lem:appbas:opti:valuation,lem:appbas:opti:shift} are replaced here by
  polynomials of the form \(\prod_{i \le k < j} (x - \point_k)\),
  and multiplying by those is not free anymore. We leave this as a
  perspective.
  \qed
\end{remark}

\subsection{Evaluation \& Interpolation \& Extrapolation at special sets of points}
\label{sec:interp:eval_interp_extrap}

In this section, we first recall three basic operations about polynomials and
their values at points: evaluation, interpolation, and extrapolation. Then, we
state results that are either known or easily deduced from the literature
concerning the complexity of performing these operations for special sets of
points, focusing on those that we will need in our interpolant basis
algorithms.

\begin{definition}[evaluation]
  \label{dfn:evaluate}%
  Let \(\points = (\point_0, \ldots, \point_{\dout-1})\in \field^\dout\).
  The evaluation of a polynomial \(p \in \pR\) at \(\points\) is the tuple of
  values \(\evaluate(\points, p) = (p(\point_0), \ldots,
  p(\point_{\dout-1}))\in \field^\dout\), denoted by \(p(\points)\).
  This extends to a polynomial matrix \(\basis \in \mpR{2}{2}\) by
  evaluating each entry:
  \[
    \evaluate(\points, \basis) = (\basis(\point_0), \ldots, \basis(\point_{\dout-1}))
    = \basis(\points) \in (\mKK{2}{2})^\dout.
  \]
\end{definition}

\begin{definition}[interpolation]
  \label{dfn:interpolate}%
  Let \(\points = (\point_0, \ldots, \point_{\dinp-1})\) be
  pairwise distinct elements of \(\field\). For given values \(\bm{p} = (p_0,
  \ldots, p_{\dinp-1}) \in \field^\dinp\), \(\interpolate(\points,
  \bm{p})\) is the unique polynomial \(p \in \pR_{<\dinp}\) such that
  \(p(\points) = \bm{p}\). This extends to matrix values \(\evbasis \in
  (\mKK{2}{2})^\dinp\), with \(\interpolate(\points, \evbasis)\) being the unique
  polynomial matrix \(\basis \in \mpR{2}{2}_{<\dinp}\) such that
  \(\basis(\points) = \evbasis\).
\end{definition}

\begin{definition}[extrapolation]
  \label{dfn:extrapolate}%
  For \(\dinp, \dout \in \NN\), pairwise distinct points \(\ptsinp \in
  \field^\dinp\), values \(\bm{p} \in \field^\dinp\), and points \(\ptsout \in
  \field^\dout\), the extrapolation operation is
  \begin{align*}
    \extrapolate(\ptsinp, \ptsout, \bm{p}) & = \evaluate(\ptsout, \interpolate(\ptsinp, \bm{p})) \\
                                           & = p(\ptsout) \in \field^\dout
  \end{align*}
  where \(p \in \pR_{<\dinp}\) is the unique polynomial such that \(p(\ptsinp)
  = \bm{p}\).
  This extends to matrices: for the same points \(\ptsinp\) and \(\ptsout\),
  and for values \(\evbasis = (\basis_i)_{0 \le i < \dinp} \in (\mKK{2}{2})^\dinp\),
  \begin{align*}
    \extrapolate(\ptsinp, \ptsout, \evbasis) & = \evaluate(\ptsout, \interpolate(\ptsinp, \evbasis)) \\
                                             & = \basis(\ptsout) \in (\mKK{2}{2})^\dout
  \end{align*}
  where \(\basis \in \mpR{2}{2}_{<\dinp}\) is the unique polynomial matrix
  such that \(\basis(\ptsinp) = \evbasis\).
\end{definition}

Thanks to the reductions presented in \cref{sec:intro:rel_xgcd,sec:rel_xgcd},
one can solve \cref{pbm:ratrecon,pbm:xgcd,pbm:app} by computing an interpolant
basis with input points \(\points\) that we can choose.
For this reason, it is sensible to consider special sets of points, for which
some of the above operations can be performed efficiently, say in
\(\bigO{\timepm{\dinp+\dout}}\). We focus on three families: points in
arithmetic progression, points in geometric
progression, and radix-2 FFT points.

\begin{lemma}[Arithmetic progression]
  \label{lem:arithmetic_progression}%
  Assume the field \(\field\) has characteristic \(0\) or at least \(d\),
  and consider points in arithmetic progression \(\points = (\point_0 + i
  \ratio)_{0 \le i < d}\) for a given \(r \in \field \setminus \{0\}\).
  Given nonnegative integers \(k, \ell, \dinp, \dout\) such that either
  \(k+\dinp \le \ell\) and \(\ell+\dout \le d\) hold, or \(\ell+\dout \le k\)
  and \(k+\dinp \le d\) hold, and given a tuple of values \(\bm{p} \in
  \field^\dinp\), one can compute \(\extrapolate(\points_{k:k+\dinp},
  \points_{\ell:\ell+\dout}, \bm{p})\) using
  \(\timemp{\dinp-1,\dout} + \bigO{\dinp+\dout}\) operations in \(\field\).
  %% NOTE in the end, we do not use this: this could be mentioned
  %% to better understand the leading constant in the complexity for
  %% the arithmetic progression case (either calling directly algo:Eval-IntBasis2
  %% with such points, or through a reduction to geometric/FFT points),
  %% but this is too much trouble. In the reference, complexity is analyzed for
  %% a number of points that is a power of 2. Claiming that it always holds seems
  %% risky, without proof, and a proof would bring us too far.
  % \begin{itemize}[noitemsep,topsep=2pt]
  % \item Given a nonnegative integer \(\dout \le d\) and a polynomial \(p \in
  %   \pR_{< \dout}\), \(\evaluate(\points_{0:\dout}, p)\) is computed in
  %   \(\frac{1}{4} \timepm{\dout} \log_2(\dout) + \bigO{\timepm{\dout}}\) operations in
  %   \(\field\), assuming the subproduct tree for \(\points_{0:\dout}\) has
  %   already been computed.
  % \item Given nonnegative integers \(\ell,\dout\) with \(\ell+\dout \le d\)
  %   and given a tuple of values \(\bm{p} \in \field^\dout\),
  %   \(\interpolate(\points_{\ell:\ell+\dout}, \bm{p})\) is computed in
  %   \(\frac{1}{4} \timepm{\dout} \log_2(\dout) + \bigO{\timepm{\dout}}\)
  %   operations in \(\field\), assuming the subproduct tree for
  %   \(\points_{0:\dout}\) has already been computed
  % \end{itemize}
  %% NOTE [Newton basis not used in the end]
  % \item Evaluate (Newton basis) arithmetic: mullow of a polynomial of degree
  %   \(\le \dinp-1\) and a polynomial of degree \(< dout\)
  %   \cite[Sec.\,3]{Gerhard2000}
  %   \cite[Sec.\,4.1]{BostanSchost2005}
  % \item Interpolate (Newton basis) arithmetic: \cite[Sec.\,4.1]{BostanSchost2005}
\end{lemma}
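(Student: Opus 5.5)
We derive the extrapolation from the Lagrange formula and turn it into a single middle product; this is essentially the argument of Bostan--Schost for shifting evaluations in arithmetic progression. Since \(\points_{i} = \point_0 + i\ratio\), after the affine change of variable \(x = \point_0 + \ratio(y + k)\) the input points become \(0,1,\ldots,\dinp-1\) and the output points become the integers \(j = \ell-k+t\) for \(0 \le t < \dout\). Under either hypothesis (\(k+\dinp \le \ell\) or \(\ell+\dout \le k\)), these integers all lie outside \(\{0,\ldots,\dinp-1\}\), and their absolute values are less than \(d\). Interpolation commutes with such an affine change, so it suffices to handle points \(0,\ldots,\dinp-1\) and targets \(j \notin \{0,\ldots,\dinp-1\}\) with \(\abs{j} < d\).

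For such \(j\), write the Lagrange interpolation formula
\[
  p(j) = \sum_{i=0}^{\dinp-1} p_i \prod_{i'\neq i} \frac{j-i'}{i-i'}
       = \Bigl(\prod_{i'=0}^{\dinp-1} (j-i')\Bigr) \sum_{i=0}^{\dinp-1} \frac{p_i\, w_i}{j-i},
\]
where \(w_i = 1/\prod_{i'\neq i}(i-i') = (-1)^{\dinp-1-i}/(i!\,(\dinp-1-i)!)\). Every denominator involved is a nonzero integer of absolute value less than \(d\), so it is invertible thanks to the characteristic assumption.

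The weights \(w_i\) and the scalings \(\tilde p_i = p_i w_i\) cost \(\bigO{\dinp}\) operations: the factorials are obtained by a running product, and a single inversion followed by backward multiplications gives all their inverses. Likewise, the prefactors \(\prod_{i'}(j-i')\) for consecutive \(j\) are ratios of factorial-like products, hence also computable in \(\bigO{\dinp+\dout}\) by sliding-window updates and batch inversion.

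The core step is the sum \(\sum_i \tilde p_i/(j-i)\) for \(j = j_0, \ldots, j_0+\dout-1\), which is a discrete convolution of \(\tilde p\) with the sequence \(h_u = 1/u\), for \(u\) ranging over \(j_0-\dinp+1,\ldots,j_0+\dout-1\). This range has \(\dinp+\dout-1\) terms and avoids \(0\). Its inverses are computed in \(\bigO{\dinp+\dout}\) by batch inversion, or by prefix products of integers. Set \(H = \sum_{t} h_{j_0-\dinp+1+t}x^t \in \pR_{<\dinp+\dout-1}\) and \(P = \sum_i \tilde p_{\dinp-1-i}x^i\) (the reversal, of degree \(\le \dinp-1\)). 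Then the coefficient of \(x^{\dinp-1+t}\) in \(PH\) is exactly the sought sum for \(j = j_0+t\). Thus the \(\dout\) sums are \(\midprod{P,H,\dinp-1,\dinp-1+\dout}\), costing \(\timemp{\dinp-1,\dout}\). A final coordinatewise multiplication by the prefactors gives the claimed total.

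The main obstacle is bookkeeping rather than algebra: the indices must be aligned so that the relevant coefficients are exactly those selected by Definition~\ref{dfn:middle_product} with parameters \((\dinp-1,\dout)\), and we must check in both cases (output to the right, or to the left) that no \(j-i\) vanishes and that all integers to invert are nonzero in \(\field\). In the case \(\ell+\dout \le k\), the offsets \(j\) are negative, and we will treat it symmetrically, for instance by reversing the point order, which replaces \(\ratio\) by \(-\ratio\).
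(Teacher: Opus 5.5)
Your route is the paper's own: the Bostan--Gaudry--Schost argument (Lagrange formula, rescaling by the barycentric weights and by the prefactors \(\prod_{i'}(j-i')\), then one middle product against the sequence \(h_u = 1/u\)). Your check that every integer to be inverted is nonzero with absolute value less than \(d\), in both cases, is correct. The affine normalization to the points \(0,\ldots,\dinp-1\) is only cosmetic; the paper works with the integers \(\ell+u-k-i\) directly.

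However, the index alignment you state in the core step is wrong. With the reversed \(P = \sum_i \tilde p_{\dinp-1-i}x^i\), the coefficient of \(x^{\dinp-1+t}\) in \(PH\) is
\[
  \sum_{a=0}^{\dinp-1} \tilde p_{\dinp-1-a}\, h_{j_0+t-a}
  = \sum_{i=0}^{\dinp-1} \frac{\tilde p_i}{j_0+t-(\dinp-1)+i}.
\]
This is a correlation, not the sum \(\sum_i \tilde p_i/(j_0+t-i)\) you need. For instance, with \(\dinp=2\) and \((\tilde p_0,\tilde p_1)=(1,0)\) it returns \(1/(j-1)\) instead of \(1/j\).

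The fix is to drop the reversal, since \(\sum_i \tilde p_i h_{j-i}\) is already a convolution. Take \(P=\sum_i \tilde p_i x^i\). Then the coefficient of \(x^{\dinp-1+t}\) in \(PH\) is \(\sum_i \tilde p_i\, h_{j_0-\dinp+1+(\dinp-1+t-i)} = \sum_i \tilde p_i/(j_0+t-i)\). So the \(\dout\) sums are \(\midprod{P,H,\dinp-1,\dinp-1+\dout}\), which is exactly the middle product in the paper's proof, and with this correction your argument is complete.

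Separately, the case \(\ell+\dout\le k\) needs no special treatment by reversing the point order. Your own check that \(j-i\neq 0\) and \(|j-i|<d\) already covers negative offsets, so the same formula and middle product apply verbatim.
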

\begin{proof}
  %% NOTE draft notes about part that we do not use
  % -> Evaluate from Newton basis is \(\timepm{\dout} + \bigO{\dout}\)
  % \cite[Sec.\,4.1]{BostanSchost2005}.
  % -> Subproduct tree is constant \(1/4\).
  % -> \cite[Thm.\,1]{BostanSchost2005} gives \(1/2 \timepm{\dout}\log(\dout) +
  % \timepm{\dout}\) for conversion between bases (in degree \(\dout\)), with our
  % analysis in \cref{app:dnc_factor_half} this is  \(1/4
  % \timepm{\dout}\log(\dout) + \timepm{\dout}\).
  % ->  Evaluate and interpolate arithmetic only used at the very beginning and
  %   very end. A rough estimate is the one in the table from the paper, but that
  %   might be improved in our case. Interpolate arithmetic: interpolation to
  %   Newton basis \cite[Sec.\,4.1]{BostanSchost2005} in \(\timepm{\dout} +
  %   \bigO{\dout}\), and then change of basis has a log but that second step depends
  %   on that output degree! note also that the subproduct tree may be
  %   computed once for all both for the initial evaluation of \(\apo,\bpo\) and
  %   for the final interpolation of a \(2 \times 2\) matrix.
  We use the algorithm in \cite[Sec.\,3]{BostanGaudrySchost2007}, which we
  sketch here for completeness, adapted to our notation and straightforwardly
  extended to the case where the numbers \(\dinp\) and \(\dout\) of input and
  output values may differ.
  % NOTE (not really needed in the end, since we see below where something needs to be invertible)
  % Our assumptions ensure that, for all \(i\neq j\)
  % both in \(\{k, k+1, \ldots, k+\dinp-1\} \cup \{\ell, \ell+1, \ldots, \ell +
  % \dout - 1\}\), the pairwise difference \(\point_i - \point_j =
  % (i-j) r\) is nonzero as an element of \(\field\). This meets the
  % invertibility requirements in \cite[Thm.\,5]{BostanGaudrySchost2007}, and
  % gives meaning to the fractions written below.
  Let \(p = \interpolate(\points_{k:k+\dinp}, \bm{p}) \in \pR_{<\dinp}\), so that we aim
  to compute \(p(\points_{\ell:\ell+\dout})\). By the Lagrange formula,
  \begin{align*}
    p = \sum_{0 \le i < \dinp} \bm{p}_i \prod_{0 \le j < \dinp, j \neq i}
    \frac{x - \point_{k+j}}{\point_{k+i} - \point_{k+j}}
    & = \sum_{0 \le i < \dinp} \bm{p}_i \prod_{0 \le j < \dinp, j \neq i} \frac{x - \point_{k+j}}{(i-j) \ratio} \\
    & = \sum_{0 \le i < \dinp} \frac{\bm{q}_i}{\ratio^{\dinp-1}} \prod_{0 \le j < \dinp, j \neq i} (x - \point_{k+j}),
  \end{align*}
  where \(\bm{q}\) is the appropriate scaling of \(\bm{p}\), obtained in
  \(\bigO{\dinp}\). It follows that, for all \(0 \le u < \dout\),
  \[
    p(\point_{\ell+u}) = \sum_{0 \le i < \dinp} \frac{\bm{q}_i}{\ratio^{\dinp-1}} \prod_{0 \le j < \dinp, j \neq i} (\point_{\ell+u} - \point_{k+j})
    % = \sum_{0 \le i < \dinp} \bm{q}_i \prod_{0 \le j < \dinp, j \neq i} (\ell + u - k - j)
    = \sigma_u \sum_{0 \le i < \dinp} \frac{\bm{q}_i}{\ell + u - k - i},
  \]
  where \(\sigma_{u} = \prod_{0 \le j < \dinp} (\ell + u - k - j)\).
  Note that our assumptions ensure that, as integers, \(\ell+u \neq k+i\) and
  \(-d < \ell + u - k - i < d\); hence \(\ell + u - k - i\) is invertible as an
  element of \(\field\). All \(\sigma_u\)'s, for \(0 \le u < \dout\), are
  computed using a total of \(\bigO{\dinp+\dout}\) operations in \(\field\). Up
  to these scaling factors, it remains to compute a middle product:
  \begin{align*}
    & \sum_{0 \le u < \dout} \frac{p(\point_{\ell+u})}{\sigma_u} x^u
    = \\
    & \quad \midprodBig{\sum_{0 \le i < \dinp} \bm{q}_{i} x^i, \sum_{0 \le i < \dinp-1+\dout} \frac{x^i}{\ell-k+i-\dinp+1}, \dinp-1, \dinp - 1 + \dout},
  \end{align*}
  which costs \(\timemp{\dinp-1, \dout}\) operations in \(\field\) by
  definition (see \cref{sec:approx:polmul_midprod}).
  \qed
\end{proof}

\begin{lemma}[Geometric progression]
  \label{lem:geometric_progression}%
  Assume that an element \(\ratio \in \field\) of order at least \(d\) exists
  and is known, and consider points in geometric progression \(\points =
  (\point_0 \ratio^i)_{0 \le i < d}\), for some given \(\point_0 \in
  \field\setminus\{0\}\). The three items below assume that \(\ratio^k\),
  \(\ratio^\ell\), and \(\ratio^{\ell-k}\) are known, respectively; otherwise,
  they can be computed efficiently by binary exponentiation.
  \begin{itemize}[noitemsep,topsep=2pt]
    \item Given nonnegative integers \(k,\dinp\) with \(k+\dinp \le d\)
      and given a tuple of values \(\bm{p} \in \field^\dinp\),
      one can compute
      \(\interpolate(\points_{k:k+\dinp}, \bm{p})\) using
      \(2\timepm{\dinp} + \bigO{\dinp}\) operations in \(\field\).

    \item Given nonnegative integers \(\ell,\dout\) with \(\ell+\dout \le d\) and
      given a polynomial \(p \in \pR_{< \dinp}\),
      \(\evaluate(\points_{\ell:\ell+\dout}, p)\) is computed in \(\timemp{\dinp-1,
      \dout} + \bigO{\dinp+\dout}\) operations in \(\field\).
      % with \(\dinp \in \bigO{\dout}\),

    \item Given nonnegative integers \(k, \ell, \dinp, \dout\) such that either
      \(k+\dinp \le \ell\) and \(\ell+\dout \le d\) hold, or \(\ell+\dout \le
      k\) and \(k+\dinp \le d\) hold, and given values \(\bm{p} \in
      \field^\dinp\), \(\extrapolate(\points_{k:k+\dinp},
      \points_{\ell:\ell+\dout}, \bm{p})\) can be computed using
      \(\timemp{\dinp-1,\dout} + \bigO{\dinp+\dout}\) operations in \(\field\).
  \end{itemize}
  %% NOTE [Newton basis not used in the end]
  % \item Evaluate (Newton basis) geometric: mullow of a polynomial of degree
  %   \(\le \dinp-1\) and a polynomial of degree \(< dout\) \cite[Sec.\,5.1]{BostanSchost2005}
  % \item Interpolate (Newton basis) geometric: \cite[Sec.\,5.1]{BostanSchost2005}
\end{lemma}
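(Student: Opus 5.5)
The plan is to reduce each of the three operations, after diagonal rescalings costing \(\bigO{\dinp+\dout}\), to either a single middle product or a known interpolation routine. This mirrors the proof of \cref{lem:arithmetic_progression}. The basic fact used throughout is that \(\ratio^t \neq 1\) for \(0 < \abs{t} < d\), because \(\ratio\) has order at least \(d\). Hence the points of \(\points\) are pairwise distinct, and every quantity \(\ratio^t - 1\) with \(0<\abs{t}<d\) that appears below can be inverted.

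\emph{Evaluation.} I will use Bluestein's chirp transform. Write \(p = \sum_{j<\dinp} p_j x^j\) and \(i = \ell+u\). The identity \(ij = \binom{i+j}{2} - \binom{i}{2} - \binom{j}{2}\) gives
\[
  p(\point_0\ratio^{i}) = \ratio^{-\binom{i}{2}} \sum_{0 \le j<\dinp} \bigl(p_j \point_0^j \ratio^{-\binom{j}{2}}\bigr)\, \ratio^{\binom{i+j}{2}} .
\]
The powers \(\ratio^{\binom{t}{2}}\) for the needed range of \(t\) are obtained in linear time from the recurrence \(\ratio^{\binom{t+1}{2}} = \ratio^{\binom{t}{2}}\ratio^t\), started from the known \(\ratio^\ell\). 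The sums \(\sum_j c_j s_{\ell+u+j}\) for \(0 \le u<\dout\) form exactly \(\midprod{\bar c, S, \dinp-1, \dinp-1+\dout}\). Here \(\bar c\) is the reversal of \((c_j)\), so \(\deg \bar c \le \dinp-1\), and \(S\) has coefficients \(s_{\ell},\ldots,s_{\ell+\dinp+\dout-2}\). This gives the cost \(\timemp{\dinp-1,\dout}+\bigO{\dinp+\dout}\).

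\emph{Interpolation.} For interpolation at \(\points_{k:k+\dinp}\) I will invoke the known geometric-progression algorithm of Bostan and Schost. It interpolates in the Newton basis in \(\timepm{\dinp}+\bigO{\dinp}\) operations, then converts from the Newton basis to the monomial basis in another \(\timepm{\dinp}+\bigO{\dinp}\). The points \(\points_{k:k+\dinp}\) are themselves in geometric progression with first element \(\point_0\ratio^k\), which is why \(\ratio^k\) is assumed to be known. I would check that reference's constants carefully, since this step is only cited and not reproved.

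\emph{Extrapolation.} This is the part requiring most care, and it follows the Lagrange route used for arithmetic progressions. Let \(x_i = \point_0\ratio^{k+i}\), \(y_u = \point_0\ratio^{\ell+u}\), \(\Pi(y) = \prod_{j<\dinp}(y-x_j)\), and \(w_i = 1/\prod_{j\ne i}(x_i-x_j)\). Then
\[
  p(y_u) = \Pi(y_u)\sum_{i} \frac{w_i\bm{p}_i}{y_u-x_i},
  \qquad
  y_u - x_i = \point_0\ratio^{k+i}\bigl(\ratio^{\ell-k+u-i}-1\bigr).
\]
So, after scaling \(\bm{p}_i\) by \(w_i\ratio^{-(k+i)}/\point_0\), the remaining sum is a Toeplitz product with kernel \(1/(\ratio^{t}-1)\), where \(t = \ell-k+u-i\). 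This is a middle product with parameters \((\dinp-1, \dinp-1+\dout)\), so it costs \(\timemp{\dinp-1,\dout}\).

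The hypotheses \(k+\dinp\le \ell\) and \(\ell+\dout\le d\), or the symmetric pair, guarantee \(0<\abs{t}<d\), so every denominator is nonzero. All the inverses \(1/(\ratio^t-1)\) will be computed together with Montgomery's batch-inversion trick, which needs \(\bigO{\dinp+\dout}\) operations and one inversion.

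The weights are handled with \(q\)-factorial-type products. Up to the factor \((\point_0\ratio^{k+i})^{\dinp-1}\), \(w_i^{-1}\) is \(\prod_{1\le t\le i}(1-\ratio^{-t})\prod_{1\le t<\dinp-i}(1-\ratio^{t})\). Both factors come from prefix products of \((1-\ratio^{\pm t})\), giving all \(w_i\) in \(\bigO{\dinp}\).

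The values \(\Pi(y_u)\) up to powers of \(\ratio\) are sliding-window products \(\prod_{j}(\ratio^{\ell+u-k-j}-1)\). Each value is obtained from the previous one by multiplying by one new factor and dividing by one old factor. All these factors are nonzero by the range condition, and their inverses are already available from the batch inversion. This also costs \(\bigO{\dinp+\dout}\).

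I expect the main obstacle to be this index and range bookkeeping in the extrapolation. One must check that every exponent stays in \((-d,d)\setminus\{0\}\) so that nothing vanishes, and that the resulting convolution truly has the middle-product shape with parameters \((\dinp-1,\dout)\) rather than a larger full product.
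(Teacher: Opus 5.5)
Your proposal is correct and follows the paper's proof essentially step by step. The paper also cites Bostan--Schost for interpolation on the shifted progression \(\points_{k:k+\dinp}\). For evaluation it uses the chirp transform, reduced to the single middle product \(\midprod{\bar p, q, \dinp-1, \dinp-1+\dout}\). For extrapolation it uses the Lagrange formula with linear-time scalings \(\bm{q}_i\) and \(\sigma_u\), followed by one middle product against \(\sum_{i} x^i/(\ratio^{\ell-k+i-\dinp+1}-1)\). The only differences are cosmetic. The paper writes each Lagrange factor directly as \((\ratio^{\ell+u-k-j}-1)/(\ratio^{i-j}-1)\), so the \(\point_0\ratio^{k+j}\) factors cancel at once and only \(\ratio^{\ell-k}\) is needed. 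Also, inversions are counted at unit cost here, so your batch-inversion step is unnecessary, though harmless.
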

\begin{proof}
  For interpolation, using \(\point_k = \point_0 \ratio^k\) brings us to the
  classical case of the initial points
  \(\points_{k:k+\dinp} = (\point_k, \point_k \ratio, \point_k \ratio^2,
  \ldots, \point_k \ratio^{\dinp-1})\) of a geometric progression. The cost of
  interpolating \(\bm{p}\) is then stated in
  \cite[Sec.\,5.3]{BostanSchost2005}. The same reference presents the
  evaluation of \(p\) at \(\points_{\ell:\ell+\dout} = (\point_\ell,
  \point_\ell \ratio, \point_\ell \ratio^2, \ldots, \point_\ell
  \ratio^{\dout-1})\) using the chirp transform
  \cite{Bluestein1970,AhoSteiglitzUllman1975}.
  Its main step is a middle
  product of the form \(\midprod{\bar{p}, q, \dinp-1, \dinp-1+\dout}\) which
  costs \(\timemp{\dinp-1, \dout}\) operations in \(\field\), where
  \(\bar{p}\in\pR_{<\dinp}\) and \(q \in \pR_{<\dinp-1+\dout}\) are built in \(\bigO{\dinp+\dout}\)
  from the input \(p\) and \(\point_\ell\) and \(\ratio\).

  For the extrapolation, we propose an approach very similar to that for
  arithmetic progressions in \cite[Sec.\,3]{BostanGaudrySchost2007} and
  sketched in \cref{lem:arithmetic_progression}. Let \(p =
  \interpolate(\points_{k:k+\dinp}, \bm{p}) \in \pR_{<\dinp}\), so that we aim
  to compute \(p(\points_{\ell:\ell+\dout})\). By the Lagrange formula,
  for \(0 \le u < \dout\),
  \begin{align*}
    p(\point_{\ell+u}) = \sum_{0 \le i < \dinp} \bm{p}_i \prod_{0 \le j < \dinp, j \neq i} \frac{\point_{\ell+u} - \point_{k+j}}{\point_{k+i} - \point_{k+j}}
    & = \sum_{0 \le i < \dinp} \bm{p}_i \prod_{0 \le j < \dinp, j \neq i} \frac{\ratio^{\ell+u-k-j} - 1}{\ratio^{i-j} - 1} \\
    & = \sigma_u \sum_{0 \le i < \dinp} \frac{\bm{q}_i}{\ratio^{\ell+u-k-i} - 1},
  \end{align*}
  where we have used the scaling \(\bm{q} \in \field^\dinp\) of \(\bm{p}\) defined as
  \(\bm{q}_i = \bm{p}_i / \prod_{0 \le j < \dinp, j \neq i} (\ratio^{i-j} -
  1)\), as well as the scaling factors \(\sigma_u = \prod_{0 \le j < \dinp}
  (\ratio^{\ell+u-k-j} - 1)\).

  Observe that \(\bm{q}\) is obtained from \(\bm{p}\) in linear time
  \(\bigO{\dinp}\), and all \(\sigma_u\)'s, for \(0 \le u < \dout\), are
  computed using a total of \(\bigO{\dinp+\dout}\) operations in \(\field\)
  (plus the computation of \(\ratio^{\ell-k}\), in case it is not already
  known). Note also that the above inverse of \(\ratio^{\ell+u-k-i} -
  1\) exists, since by assumption \(\ell+u\) and \(k+i\) are distinct and
  both in \(\{0,\ldots,d-1\}\), and \(\ratio\) has order at least \(d\).
  Apart from linear-time scalings, the main task
  to obtain the sought extrapolated values
  is to compute a middle product:
  \begin{align*}
    & \sum_{0 \le u < \dout} \frac{p(\point_{\ell+u})}{\sigma_u} x^u
    = \\
    & \quad \midprodBig{\sum_{0 \le i < \dinp} \bm{q}_{i} x^i, \sum_{0 \le i < \dinp-1+\dout} \frac{x^i}{\ratio^{\ell-k+i-\dinp+1} - 1}, \dinp-1, \dinp - 1 + \dout},
  \end{align*}
  which costs \(\timemp{\dinp-1, \dout}\) operations in \(\field\) by
  definition (see \cref{sec:approx:polmul_midprod}).
  \qed
\end{proof}

Finally, we turn to points for the radix-2 FFT: for a given \(N\)-root of unity
\(\ratio\) with \(N\) power of \(2\), we consider the first \(N\) powers of
\(\ratio\). As is classical in this context, we do not consider them in
geometric progression order \((\ratio^i)_{0 \le i < N}\), but rather with
exponents ordered by increasing bit-reversed index \((\ratio^{\bitrev{i}})_{0
\le i < N}\), where \(\bitrev{\cdot} : \{0,\ldots,N-1\} \to \{0,\ldots,N-1\}\)
is the bit reversal permutation. For example, if \(N \ge 8\), we have
\((\bitrev{i})_{0 \le i < N} = (0, N/2, N/4, 3N/4, N/8, 5N/8, 3N/8, 7N/8,
\ldots)\); the rest of the sequence can be determined by the fact that
\((\bitrev{i})_{k \le i < 2k} = (\bitrev{i} + N/(2k))_{0 \le i < k}\) for any
\(k\) which is a power of \(2\) with \(2k \le N\).

\begin{lemma}[radix-2 FFT points]
  \label{lem:radix2_fft}%
  Assume that \(N\) is a power of \(2\) and that an \(N\)-th principal
  root of unity \(\ratio \in \field\) exists and is known,
  and consider the FFT points \(\points = (\ratio^{\bitrev{i}})_{0 \le i < N}
  \in \field^N\).
  The items below assume that \(\point_{jk} = \ratio^{\bitrev{jk}}\)
  is known, as well as some points of the form \((\ratio^{N / 2^i})_{1 \le i
  \le \log_2(N)}\); otherwise, these points can all be computed using
  \(\bigO{\log_2(N)}\) operations in \(\field\).
  \begin{itemize}[noitemsep,topsep=2pt]
    %% this is correct but only valid for starting point at power of 2,
    %% which is not flexible enough for our needs
    % \item Given nonnegative integers \(k,\dinp\) such that either \(k = 0\) and
    %   \(\dinp \le N\), or \(k\) is a power of \(2\) with \(\dinp \le k < N\),
    %   and given a tuple of values \(\bm{p} \in \field^\dinp\), one can compute
    %   \(\interpolate(\points_{k:k+\dinp}, \bm{p})\) using
    %   \(\frac{3}{2} \dinp \log_2(\dinp) + \bigO{\dinp}\) operations in
    %   \(\field\).
    \item Given nonnegative integers \(j,k,\dinp\) such that
      \(0 \le jk < N\) and \(k > 0\) is a power of \(2\) with \(\dinp \le k\),
      and given a tuple of values \(\bm{p} \in \field^\dinp\), one can compute
      \(\interpolate(\points_{jk:jk+\dinp}, \bm{p})\) using
      \(\frac{3}{2} \dinp \log_2(\dinp) + \bigO{\dinp}\) operations in
      \(\field\).

    \item Given nonnegative integers \(\ell,\dout\) with \(\ell+\dout \le N\) and
      given a polynomial \(p \in \pR_{< \dinp}\), one can compute
      \(\evaluate(\points_{\ell:\ell+\dout}, p)\) using \(\frac{3}{2}
      \dout \log_2(\dout) + \bigO{\dinp+\dout}\) operations in \(\field\).

    %% this is correct but only valid for starting point at power of 2,
    %% which is not flexible enough for our needs
    % \item Given nonnegative integers \(k, \ell, \dinp, \dout\) such that
    %   \(\ell+\dout \le N\) and either \(k = 0\) and \(\dinp \le N\), or \(k\)
    %   is a power of \(2\) with \(\dinp \le k < N\), and given values \(\bm{p}
    %   \in \field^\dinp\), \(\extrapolate(\points_{k:k+\dinp},
    %   \points_{\ell:\ell+\dout}, \bm{p})\) can be computed using \(\frac{3}{2}
    %   (\dinp+\dout) \log_2(\dinp + \dout) + \bigO{\dinp+\dout}\) operations in
    %   \(\field\).

    \item Given nonnegative integers \(j, k, \ell, \dinp, \dout\) such that
      \(\ell+\dout \le N\) and \(0 \le jk < N\) and \(k > 0\) is a power of
      \(2\) with \(\dinp \le k\), and given values \(\bm{p}
      \in \field^\dinp\), \(\extrapolate(\points_{jk:jk+\dinp},
      \points_{\ell:\ell+\dout}, \bm{p})\) can be computed using \(\frac{3}{2}
      (\dinp+\dout) \log_2(\dinp + \dout) + \bigO{\dinp+\dout}\) operations in
      \(\field\).
  \end{itemize}
\end{lemma}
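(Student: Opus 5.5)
The plan rests on one structural fact about bit-reversed order, which I will establish first. For \(k\) a power of \(2\) and \(0 \le jk < N\), the identity \((\bitrev{i})_{k \le i < 2k} = (\bitrev{i} + N/(2k))_{0 \le i < k}\), iterated, shows that the block \(\points_{jk:jk+k}\) equals \(\point_{jk}\cdot(\ratio^{(N/k)\bitrevlen{i}{k}})_{0 \le i < k}\). Here \(\bitrevlen{\cdot}{k}\) is bit reversal on \(\log_2(k)\) bits and \(\ratio^{N/k}\) is a principal \(k\)-th root of unity. So this block is exactly the set of roots of \(x^k - \point_{jk}^k\), listed in the bit-reversed order in which a decimation-in-frequency radix-2 FFT of size \(k\) outputs its values. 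Moreover, every dyadic sub-block of it (aligned, of size \(2^t\)) is again such a coset, namely the roots of \(x^{2^t} - c^{2^t}\) for \(c\) its first point. All the scalars needed (\(\point_{jk}\) and the \(\ratio^{N/2^i}\)) are assumed known, or cost \(\bigO{\log_2(N)}\).

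\emph{Evaluation.} I would let \(K\) be the smallest power of \(2\) with \(K \ge \dout\). The interval \(\{\ell,\ldots,\ell+\dout-1\}\) meets at most two aligned blocks of size \(K\).
\begin{itemize}[noitemsep,topsep=2pt]
  \item For each such block, with first point \(c\), reduce \(p\) modulo \(x^K - c^K\). This is a linear-time fold of the coefficients, costing \(\bigO{\dinp+K} = \bigO{\dinp+\dout}\).
  \item Run the size-\(K\) twisted radix-2 FFT tree, descending only into the subtrees whose output ranges meet the interval. This is the pruned, truncated FFT of van der Hoeven.
\end{itemize}
At each level, the visited nodes are those whose dyadic ranges intersect an interval of length \(\dout'\) (the part of the interval lying in that block). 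Their total size is at most \(\dout' + 2\cdot(\text{node size})\). Summing over the \(\log_2 K \le \log_2\dout + 1\) levels, each butterfly costing \(3/2\) operations per output pair as usual, gives \(\frac{3}{2}\dout'\log_2(\dout') + \bigO{K}\). Summing over the two blocks yields \(\frac{3}{2}\dout\log_2(\dout) + \bigO{\dinp+\dout}\).

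\emph{Interpolation.} Here \(\points_{jk:jk+\dinp}\) is a prefix of length \(\dinp \le k\) of a coset block of size \(k\), in bit-reversed order. After rescaling \(x \mapsto \point_{jk}x\) (linear cost at the end), this is exactly the setting of the inverse truncated Fourier transform. Given the first \(\dinp\) bit-reversed values, it recovers the unique polynomial of degree \(< \dinp\) by running the butterfly network bottom-up: at each node it uses the known values of the left child together with the known coefficient-side zeros to solve each butterfly for the missing entries. I would rely on van der Hoeven's analysis that this costs the same as the forward TFT, namely \(\frac{3}{2}\dinp\log_2(\dinp) + \bigO{\dinp}\) operations with the standard butterfly count. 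Alternatively, I would decompose \(\dinp = \sum_t 2^{e_t}\) into the dyadic sub-blocks of the prefix and argue node by node.

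\emph{Extrapolation.} This is interpolation followed by evaluation. The cost is \(\frac{3}{2}\dinp\log_2\dinp + \frac{3}{2}\dout\log_2\dout + \bigO{\dinp+\dout}\), which is at most \(\frac{3}{2}(\dinp+\dout)\log_2(\dinp+\dout) + \bigO{\dinp+\dout}\).

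The main obstacle is the interpolation step: pinning the inverse TFT constant at exactly \(3/2\) for an arbitrary prefix length \(\dinp\) (not a power of \(2\)), with the twist by \(\point_{jk}\) and the \(\bigO{\dinp}\) remainder. I would first try to cite van der Hoeven's inverse TFT directly; if that does not match our counting model, I would redo the per-level count of solved butterflies, which is at most \(\dinp/2 + \bigO{1}\) per level.
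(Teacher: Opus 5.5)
Your proposal is correct. Interpolation and extrapolation follow the paper, and for evaluation you take a genuinely different route.

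Your interpolation step and your extrapolation step are the paper's. For interpolation, you write \(\points_{jk:jk+\dinp} = \point_{jk}\,\points_{0:\dinp}\), rescale, and apply van der Hoeven's inverse TFT. For extrapolation, you interpolate and then evaluate.

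For evaluation, the paper treats the TFT as a black box that produces a prefix of the bit-reversed outputs. It reduces an arbitrary window \(\points_{\ell:\ell+\dout}\) to that case by a chain of cases:
\begin{itemize}
\item \(\ell=0\): fold modulo \(x^k-1\), then apply the TFT.
\item \(\ell\) a power of \(2\) with \(\dout\le\ell\): scale by \(\ratio^{N/(2\ell)}\).
\item \(\ell+\dout=K\) a power of \(2\): the identity \(1+\bitrevlen{K-1-i}{K}=K-\bitrevlen{i}{K}\) turns a suffix of the block into a reversed prefix for the root \(\ratio^{-N/K}\).
\item General window: split at a power of \(2\) into one suffix-type piece and one prefix-type piece, with one extra scaling when \(\ell\) exceeds that power.
\end{itemize}
You instead open up the FFT. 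You cover the window by at most two aligned blocks of size \(K\), fold \(p\) modulo \(x^K-c^K\), and run a transform pruned to an arbitrary output interval.

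What each route buys: yours is more uniform and needs no reversal trick, but you must prove the pruned cost for arbitrary intervals yourself. The TFT as usually stated only delivers a prefix. The paper's route is fiddlier but needs only the published prefix bound plus linear-time scalings.

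Your count goes through, with two small adjustments:
\begin{itemize}
\item A node for which only one child is needed costs \(2\) operations per coefficient rather than \(3/2\). There are at most two such nodes per level, so this adds only \(\bigO{K}\) in total.
\item To get \(\frac32 n'\log_2 n'\) per block rather than \(\frac32 n'\log_2 K\), note that at levels whose node size \(s\) exceeds \(n'\), at most two nodes are visited, costing \(\bigO{s}\). Alternatively, sum over both blocks directly to get \(\frac32\dout\lceil\log_2\dout\rceil+\bigO{\dinp+\dout}\), which already suffices.
\end{itemize}
Also, a butterfly costs \(3\) operations per output pair, that is, \(3/2\) per output.
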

\begin{proof}
  We first consider interpolation. If \(j=0\) we may directly rely on the
  inverse truncated Fourier transform, which finds
  \(\interpolate(\points_{0:\dinp}, \bm{p})\) using \(\frac{3}{2} \dinp
  \log_2(\dinp) + \bigO{\dinp}\) operations in \(\field\) \cite[Thm.\,2 and
  Rmk.\,4]{vdH:issac04}.
  %% NOTE remark 4 is here to say that only a linear (without log) number
  %% of shifted additions are actually shifted (i.e., involve a multiplication by 2 or 1/2)
  %% -> see also Coxon 2022, An in-place truncated Fourier transform, introduction and Section 4
  The same algorithm solves the case \(j>0\), thanks to some scaling based on
  the observation that, since \(k\) is a power of \(2\), one has \(\bitrev{jk +
  i} = \bitrev{jk} + \bitrev{i}\) for any \(0 \le i < k\). Since the
  considered points \(\points_{jk:jk+\dinp}\) have an index of the form \(jk +
  i\) for some \(i < \dinp \le k\), they can be written as \(\point_{jk+i} =
  \ratio^{\bitrev{jk+i}} = \ratio^{\bitrev{jk}} \ratio^{\bitrev{i}}\);
  concisely, \(\points_{jk:jk+\dinp} = \ratio^{\bitrev{jk}}
  \points_{0:\dinp}\). Thus, one obtains the sought interpolant in the same
  cost \(\frac{3}{2} \dinp \log_2(\dinp) + \bigO{\dinp}\), by first
  computing \(p(x) = \interpolate(\points_{0:\dinp}, \bm{p})\) as above and
  then returning its scaling \(p(x / \ratio^{\bitrev{jk}})\).

  The cost bound for extrapolation follows by combining those for
  interpolation and evaluation, in order to obtain
  \(
    % \extrapolate(\points_{jk:jk+\dinp}, \points_{\ell:\ell+\dout}, \bm{p}) =
    \evaluate(\points_{\ell:\ell+\dout}, \interpolate(\points_{jk:jk+\dinp}, \bm{p}))
  \).

  It remains to prove the cost bound for evaluation. We do this in
  several stages, explaining why this bound essentially follows from the
  algorithm described in \cite[Sec.\,3]{vdH:issac04}. First, if \(\ell=0\),
  we start by reducing the degree of \(p\) in \(\bigO{\dinp + \dout}\)
  operations, by setting \(k\) as the smallest power of \(2\) such that \(k \ge
  n\), and computing \(\tilde{p} = p \bmod (x^k - 1)\); this polynomial satisfies
  \(\deg(\tilde{p}) < k = 2^{\lceil \log_2(\dout) \rceil}\) and \(\tilde{p}(\points_{0:\dout}) =
  p(\points_{0:\dout})\), so that these evaluations can be obtained in
  \(\frac{3}{2} \dout \log_2(\dout) + \bigO{\dout}\) by the truncated Fourier
  transform algorithm \cite[Thm.\,1]{vdH:issac04}.
  Second, if \(\ell\) is a power of \(2\) and \(\dout \le \ell\), we reduce to
  the case \(\ell=0\) through a scaling similar to the one done for
  interpolation above:
  \(p(\points_{\ell:\ell+\dout}) = p(\ratio^{N/(2\ell)} \points_{0:\dout})\),
  and the cost of computing the scaled polynomial \(p(\ratio^{N/(2\ell)} x)\)
  is in \(\bigO{\dinp}\). Third, if \(\ell+\dout\) is a power of \(2\), we also
  reduce to the first case \(\ell=0\) but applied to the principal
  \((\ell+\dout)\)-th root of unity \(\bar{\ratio} = \ratio^{-N/(\ell+\dout)}\) instead of
  \(\ratio\): indeed, for \(0 \le i < \ell+\dout\) we have
  \(\bitrev{i} = \bitrevlen{i}{\ell+\dout} N / (\ell+\dout)\) and \(1 +
  \bitrevlen{\ell+\dout-1-i}{\ell+\dout} = \ell+\dout -
  \bitrevlen{i}{\ell+\dout}\), and therefore
  \(\point_{i} = \ratio^{\bitrev{i}} = \bar{\ratio}^{-\bitrevlen{i}{\ell+\dout}} =
  \bar{\ratio}^{1+\bitrevlen{\ell+\dout-1-i}{\ell+\dout}}\); it follows that the
  evaluations of \(p(x)\) at \(\points_{\ell:\ell+\dout}\) are the reversal of
  the evaluations of \(p(\bar{\ratio} x)\) at \(\bar{\points}_{0:\dout}\) for
  \(\bar{\points} = (\bar{\ratio}^i)_{0 \le i < \ell+\dout}\).

  From now on, suppose we are not in one of the three cases above, and let
  \(k\) be the smallest power of \(2\) such that \(\ell+\dout \le 2k\); in
  particular, \(k \ge 2\) since \(\ell > 0\) and \(\ell+\dout\) is not a power
  of \(2\). The fourth case is when \(\ell \le k\): then we split the list of
  evaluation points as \(\points_{\ell:\ell+\dout} = \points_{\ell:k} \cup
  \points_{k:\ell+\dout}\); since \(k\) is a power of \(2\), and since
  \(\ell+\dout - k \le k\), evaluating \(p(x)\) at both sublists can be done by
  the third and second cases above, respectively, for a total of
  \[
    \textstyle
    \frac{3}{2} (k - \ell) \log_2(k - \ell) + \frac{3}{2}(\ell+\dout-k) \log_2(\ell+\dout-k) + \bigO{\dinp + (k - \ell) + (\ell+\dout-k)}
  \]
  operations in \(\field\), which is in \(\frac{3}{2} \dout \log_2(\dout) +
  \bigO{\dinp + \dout}\). Finally, the last case is when \(\ell > k\): since
  \(\points_{k:2k} = \ratio^{N/(2k)} \points_{0:k}\), we use a scaling to
  evaluate \(p(\ratio^{N/(2k)} x)\) at \(\points_{\ell-k:\ell-k+\dout}\), which
  brings us back to the previous case since now \(\ell - k \le k\).
  \qed
\end{proof}

\subsection{Classical divide and conquer algorithm}
\label{sec:interp:pmintbasis}

\subsubsection{Algorithm and context}
\label{sec:interp:pmintbasis:algo}

We start by a careful analysis of the divide and conquer approach due to
Beckermann and Labahn
\cite{BeckermannLabahn1994,BeckermannLabahn97}; we follow the presentation of
the algorithm in \cite[Sec.\,3.2]{HyunNeigerSchost2019}, specialized to
\(2\times 1\) input. It is analogous to \algoName{algo:PM-Basis2} presented in
\cref{sec:approx:pmbasis:algorithm}, and is also based on the general
recursiveness property recalled in \cref{lem:dnc_relbas}.  Note that the
assumption of pairwise distinct input points allows us to make use of
\cref{rmk:modified_relbas} concerning the residual computation: specifically,
we take \(\gamma=1\), using notation from that remark.

\begin{algorithm}[ht]
  \algoCaptionLabel{PM-IntBasis2}{\points,\aevs,\bevs,\ash}
  \begin{algorithmic}[1]
    \Require pairwise distinct points \(\points \in \field^d\) for \(d \in \NN\),
             values \((\aevs,\bevs) \in (\field^d)^2\),
             shift \(\ash \in \mathbb{Z}\)

   \Ensure a basis \(\resbasis \in \mpR{2}{2}_{\le d}\) of
    \(\intmod{\points}{\aevs,\bevs}\) in \(\ash\)-weak Popov form

    \State\InlineIf{\(d \le 2\)}{\Return \(\Call{algo:IntBasis2-base}{\points, \aevs,\bevs, \ash}\)}
    \State \(\dreca \gets  \lfloor d/2 \rfloor\); \(d_2 \gets d - \dreca\)
    \State \(\basis \gets \Call{algo:PM-IntBasis2}{\points_{0:\dreca}, \aevs_{0:\dreca}, \bevs_{0:\dreca}, \ash}\)
        \label{step:intbasis:firstbasis}
    \State \((\basis_{i})_{\dreca \le i < d} \gets \evaluate(\basis,\points_{\dreca:d})\)
        \label{step:intbasis:residual:mpe}
    \State \((\arevs, \brevs) \in (\field^{\drecb})^2 \gets\) values such that
    \(([\begin{smallmatrix}\arev_i \\ \brev_i\end{smallmatrix}])_{0 \le i < \drecb} =
    (\basis_{i} \, [\begin{smallmatrix} \aev_i \\ \bev_i \end{smallmatrix}])_{\dreca \le i < d}\)
        \label{step:intbasis:residual:mul}
    \State \(\basisr \gets \Call{algo:PM-IntBasis2}{\points_{\dreca:d}, \arevs, \brevs, \ash + \deg(p_{00}) - \deg(p_{11})}\)
        \label{step:intbasis:secondbasis}
    \State \Return \(\basisr \basis\)
        \label{step:intbasis:basis}
  \end{algorithmic}
\end{algorithm}

The degree properties of the bases computed during the run of this algorithm
are exactly the same as those of \algoName{algo:PM-Basis2} that were given in
\cref{sec:approx:pmbasis:cx_worst_case} (worst-case bound) and
\cref{sec:approx:pmbasis:cx_generic} (generic case bound). For this reason, the
complexity of \cref{algo:PM-IntBasis2} can be analyzed through steps that are
very similar to the ones in those sections. Specifically, the basis
multiplication step at \cref{step:intbasis:basis} of \cref{algo:PM-IntBasis2}
has the same cost  as the one at \cref{step:pmbasis:multiplication} of
\algoName{algo:PM-Basis2} since the arguments in
\cref{sec:approx:pmbasis:cx_worst_case,sec:approx:pmbasis:cx_generic} also
apply here verbatim. On the other hand, \cref{step:intbasis:residual:mul} only
uses \(\bigO{d}\) operations. Thus, the only task remaining here is to estimate
the cost of the basis evaluation at \cref{step:intbasis:residual:mpe}; compared
to the analysis of \algoName{algo:PM-Basis2} in
\cref{sec:approx:pmbasis:cx_worst_case,sec:approx:pmbasis:cx_generic}, this
cost will appear in place of the cost of the residual computation, studied in
\cref{lem:appbas_cx:residual}.

\subsubsection{Analysis for points in geometric progression}
\label{sec:interp:pmintbasis:geometric}

Interestingly, for input points in geometric progression, in the worst case we
obtain the exact same complexity bound as that for \algoName{algo:PM-Basis2},
stated in \cref{prop:pmbasis:cx_worst_case}. In the generic case with a
small-amplitude shift, the bound is also close to that in
\cref{prop:pmbasis:cx_generic}, the difference coming from how of the residual
is computed.

\begin{proposition}
  \label{prop:pmintbasis:geometric_complexity}%
  Let \(\points \in \field^d\) be points in geometric progression as in
  \cref{lem:geometric_progression}. For input values \((\aevs,\bevs) \in
  (\field^d)^2\) and a shift \(\ash \in \ZZ\), the call
  \(\Call{algo:PM-IntBasis2}{\points, \aevs, \bevs, \ash}\) costs
  \[
    \textstyle
    \frac{2\cstmatmul+3\cstpolmulrep{4}}{8} \timepm{\lceil d/2 \rceil} \log_2(\lceil d/2 \rceil)
    + \frac{5\cstmatmul+42}{8} \timepm{d}
    + \bigO{d \log(d)}
  \]
  operations in \(\field\), where \(\cstmatmul=8\) and \(\cstpolmulrep{4} = 4\)
  in the general case, and
  \(\cstmatmul \le 7\) and \(\cstpolmulrep{4} \le 4\)
  are constants for \(2 \times 2\) polynomial matrix multiplication
  and fixed operand \(4\)-polynomial multiplication
  in the case where \(\ash \in \{-1,0,1\}\) and the genericity condition
  \(\bar\Gamma_{\points,\ash}(\aevs,\bevs) \neq 0\) of
  \cref{cor:interp_generic_mindeg} holds.
\end{proposition}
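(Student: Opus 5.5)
We follow the same outline as for \cref{prop:pmbasis:cx_worst_case,prop:pmbasis:cx_generic}. The only new ingredient is the cost of \cref{step:intbasis:residual:mpe}, which takes the place of the residual middle products. First we bound that step at one node of the recursion. Then we write the divide and conquer relation and unroll it with \cref{lem:nondecr_cst_factor_general}.

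For the evaluation step, the points \(\points_{\dreca:d}\) are \(\drecb\) consecutive terms of a geometric progression. By the second item of \cref{lem:geometric_progression}, evaluating an entry \(p_{ij}\) of degree \(\mathring{p}_{ij}\) at these points amounts to one middle product costing \(\timemp{\mathring{p}_{ij},\drecb}+\bigO{d}\). Under \hyptime{hyp:timepm:midprod} this is \(\timepm{\lfloor(\mathring{p}_{ij}+\drecb)/2\rfloor}+\bigO{d}\).

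In the worst case we pair the entries as in \cref{lem:appbas_cx:residual}: \(p_{00}\) with \(p_{11}\), and \(p_{01}\) with \(p_{10}\). Each pair has total degree at most \(\dreca\), so superlinearity gives \(2\timepm{\lceil 3d/4\rceil}+\bigO{d}\subseteq\frac32\timepm{d}+\bigO{d}\). This is exactly the residual cost of \algoName{algo:PM-Basis2}. Together with the verbatim bound \(4\timepm{\lfloor d/2\rfloor}\) from \cref{lem:appbas_cx:multiplication}, we recover the recurrence \(\Cxity{d}=2\Cxity{\lceil d/2\rceil}+\frac32\timepm{d}+4\timepm{\lceil d/2\rceil}+\bigO{d}\). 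Its solution is the bound \(\frac72\timepm{\lceil d/2\rceil}\log_2\lceil d/2\rceil+\frac{41}{4}\timepm{d}\) of \cref{prop:pmbasis:cx_worst_case}. This matches the statement with \(\cstmatmul=8\) and \(\cstpolmulrep4=4\), since \((16+12)/8=7/2\) and \((40+42)/8=41/4\).

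In the generic case we use \cref{cor:interp_generic_mindeg}, the interpolation analogue of \cref{cor:approx_generic_mindeg}. It shows that the shift keeps amplitude at most \(1\) throughout the recursion, and that every intermediate basis at order \(k\) has degree at most \(\lceil(k+1)/2\rceil\). Hence all four entries of \(\basis\) have degree about \(d/4\). We must evaluate them at \(\drecb\approx d/2\) points, which means four chirp middle products that share the same second operand.

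Here I would expand the chirp transform of \cref{lem:geometric_progression}. The vector \(q\) built from \(\point_{\dreca}\) and \(\ratio\) is common to all four entries, and only the scaled \(\bar{p}_{ij}\) change. The four middle products of degree-\(d/4\) polynomials against one common operand of length about \(3d/4\) then cost \(\cstpolmulrep4\,\timepm{\lfloor(d/4+d/2)/2\rfloor}\approx\cstpolmulrep4\,\timepm{3d/8}\). This uses the transposed form of the \(\cstpolmulrep k\) statement at the end of \cref{sec:approx:polmul_midprod}. By \hyptime{hyp:timepm:nondecr}, the cost is at most \(\frac{3\cstpolmulrep4}{4}\timepm{\lceil d/2\rceil}+\bigO{d}\). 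The multiplication step costs \(\cstmatmul\timepm{\lceil d/4\rceil}\le\frac{\cstmatmul}{2}\timepm{\lceil d/2\rceil}+\bigO{d}\), as in \cref{prop:pmbasis:cx_generic}.

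This gives \(\Cxity{d}=2\Cxity{\lceil d/2\rceil}+\bigl(\frac{3\cstpolmulrep4}{4}+\frac{\cstmatmul}{2}\bigr)\timepm{\lceil d/2\rceil}+\bigO{d}\). Unrolling with \cref{lem:nondecr_cst_factor_general} multiplies this constant by \(\frac12\), which yields the leading constant \(\frac{2\cstmatmul+3\cstpolmulrep4}{8}\). The lower-order terms stay within \(\frac{5\cstmatmul+42}{8}\timepm{d}+\bigO{d\log d}\) by the same bookkeeping. The \(\bigO{d\log d}\) absorbs the \(\bigO{d}\) per level, including the \(\ratio^{\dreca}\) powers computed by binary exponentiation.

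The main obstacle is making the generic evaluation step rigorous. Two things must be checked: that the chirp middle products can really share the common operand so that the \(\cstpolmulrep4\) bound applies, and that the degree bounds combine with \(\drecb\) to give exactly \(3/4\) after the halving. If the sharing turns out to be awkward, I would fall back to four independent middle products, which gives \(\cstpolmulrep4=4\) and still matches the statement's general-case constant.
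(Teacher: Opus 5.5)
Your proposal is correct and follows the paper's proof essentially step by step. In the worst case, the chirp-transform middle products of \cref{lem:geometric_progression} reproduce exactly the residual cost of \algoName{algo:PM-Basis2}. In the generic case, the uniform degree bound \(\lfloor d/4 \rfloor + 1\) on all four entries of \(\basis\) lets the four middle products share one operand, which gives the recurrence \(\Cxity{d} = 2\Cxity{\lceil d/2 \rceil} + \frac{2\cstmatmul+3\cstpolmulrep{4}}{4}\timepm{\lceil d/2 \rceil} + \bigO{d}\). The sharing you flag as the main obstacle is immediate, because the chirp operand depends only on the progression and on the common length bound; this is precisely how the paper argues.
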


\begin{proof}
  As mentioned above, the cost of the basis multiplication at
  \cref{step:intbasis:basis} of \cref{algo:PM-IntBasis2} is exactly the same as
  the cost of the corresponding step in \cref{algo:PM-Basis2}: the analyses in
  \cref{sec:approx:pmbasis} show that it costs \(\frac{\cstmatmul}{2}
  \timepm{\lfloor d/2 \rfloor} + \bigO{d}\) operations in \(\field\), with
  \(\cstmatmul\) as in the statement above.

  Now, for the residual evaluations, the main task is the basis evaluation at
  \cref{step:intbasis:residual:mpe}. In the worst case, we take individual
  degrees into account: write \(\basis = [\begin{smallmatrix}
    p_{00} & p_{01} \\ p_{10}& p_{11}
  \end{smallmatrix}]\) for the basis at \cref{step:intbasis:firstbasis}, with
  \(p_{ij}\) of degree \(\mathring{p}_{ij}\) for each \((i,j) \in \{0,1\}^2\). 
  The second item in \cref{lem:geometric_progression} states that evaluating
  \(p_{ij}\) at \(\drecb\) points costs \(\timemp{\mathring{p}_{ij}-1,\drecb} +
  \bigO{d}\) operations in \(\field\). This brings us to the same cost bound 
  \(\sum_{0\le i,j \le 1} \timemp{\mathring{p}_{ij}, \drecb} + \bigO{d}\) as in
  the proof of \cref{lem:appbas_cx:residual} concerning the residual
  computation of \cref{algo:PM-Basis2}. As a consequence, overall, we get the
  same worst-case bound as in \cref{prop:pmbasis:cx_worst_case}, that is, with
  constant factors \(7/2\) and \(41/4\), which are those in the statement here
  when \(\cstmatmul=8\) and \(\cstpolmulrep{4} = 4\).

  Under the genericity condition, as in the proof of
  \cref{prop:pmbasis:cx_generic}, one has \(\deg(\basis) \le \dinp\) with
  \(\dinp = \lceil (\dreca+1)/2 \rceil = 
  % \lfloor \dreca/2 \rfloor + 1 =
  \lfloor d/4 \rfloor + 1\). Furthermore, the \(4\) entries of \(\basis\) are
  evaluated at the same \(\drecb\) points from the input geometric progression.
  Hence, according to \cref{lem:geometric_progression}, these evaluations
  consist in \(4\) middle products \(\midprod{\bar{p}_{ij}, q,
  \dinp-1, \dinp-1+\drecb}\) for a fixed polynomial \(q \in
  \pR_{<\dinp-1+\drecb}\) built in \(\bigO{d}\) from the points, and where
  \(\bar{p}_{ij}\) is built in \(\bigO{d}\) from \(p_{ij}\). In total,
  this costs \(\cstpolmulrep{4}\timepm{\lfloor (\dinp-1 + \drecb) / 2 \rfloor}
  + \bigO{d}\) following the remarks in \cref{sec:approx:polmul_midprod}, and
  since \(\lfloor (\dinp-1 + \drecb) / 2 \rfloor = \lfloor (\lfloor d/4 \rfloor
  + \lceil d/2 \rceil) / 2 \rfloor \le 3 \lceil d/2 \rceil/4\),
  by superlinearity of \(\timepm{\cdot}\) this is in
  \(3\cstpolmulrep{4}/4 \cdot \timepm{\lceil d / 2 \rceil}
  + \bigO{d}\). We thus get the complexity equation
  \(
    \Cxity{d} = 2\Cxity{\lceil d/2 \rceil} + (2\cstmatmul + 3\cstpolmulrep{4})/4 \cdot \timepm{\lceil d/2 \rceil} + \bigO{d},
  \)
  and the same arguments as in the proof of \cref{prop:pmbasis:cx_worst_case}
  yield the claimed cost bound with constants
  \((2\cstmatmul+3\cstpolmulrep{4})/8\) and 
  \((10\cstmatmul+15\cstpolmulrep{4})/16\).
  The latter is less than \((5\cstmatmul+42)/8\).
  \qed
\end{proof}

\subsubsection{Analysis for radix-2 FFT points}
\label{sec:interp:pmintbasis:fft}

Here we consider FFT points \(\points \in \field^d\) as in
\cref{lem:radix2_fft}, thus with \(N \ge d\). As above, we only need to focus
on the cost of the basis evaluation at \cref{step:intbasis:residual:mpe}. This
amounts to evaluating four polynomials of degree at most \(\dreca\) at the
\(\drecb\) points of the sublist \(\points_{\dreca:d}\), which costs \(6 \drecb
\log_2(\drecb) + \bigO{d}\) operations in \(\field\) according to
the second item of \cref{lem:radix2_fft}. This can be rewritten as
\(\frac{2}{3} \timepm{\lceil d/2 \rceil} + \bigO{d}\), since \(\drecb = \lceil
d/2 \rceil\) and we are in a context where polynomial multiplication can be
done via evaluation \& interpolation at the FFT points in \(\points\); see for
example \cite[Thm.\,8.18]{GathenGerhard2013}.
%% NOTE M(drecb) = 9/2 2*drecb log_2(2*drecb) + O(d) = 9 drecb log_2(drecb) + O(d)
This remark also applies to the basis multiplication \(\basisr \basis\) at
\cref{step:intbasis:basis}, which costs \(4 \timepm{\lceil \lenP / 2 \rceil}  +
\bigO{d}\) via FFT evaluation \& interpolation, for any known bound \(\lenP \ge
\deg(\basisr\basis) + 1\). This leads to the next result.

\begin{proposition}
  \label{prop:pmintbasis:fft_complexity}%
  Let \(\points \in \field^d\) be the first \(d\) points in a length-\(N\) list
  of radix-2 FFT points in bit-reversed order as in \cref{lem:radix2_fft}, for
  \(N\) some power of \(2\) such that \(N \ge d\).
  For any input values \((\aevs,\bevs) \in
  (\field^d)^2\) and any shift \(\ash\), the call
  \(\Call{algo:PM-IntBasis2}{\points, \aevs, \bevs, \ash}\) uses
  \(
    {\textstyle\frac{7}{3}}\timepm{\lceil d/2 \rceil} \log_2(\lceil d/2 \rceil)
    + \bigO{d \log(d)}
  \)
  operations in \(\field\) in the worst case. If \(\ash \in \{-1,0,1\}\) and
  the genericity condition \(\bar\Gamma_{\points,\ash}(\aevs,\bevs) \neq 0\) of
  \cref{cor:interp_generic_mindeg} holds, this reduces to
  \(
    {\textstyle\frac{4}{3}} \timepm{\lceil d/2 \rceil}\log_2(\lceil d/2 \rceil)
    + \bigO{d\log(d)}
  \).
\end{proposition}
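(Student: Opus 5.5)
The plan is to follow the proofs of \cref{prop:pmbasis:cx_worst_case,prop:pmbasis:cx_generic}: set up a divide and conquer recurrence for \(\Cxity{d}\), the cost of \(\Call{algo:PM-IntBasis2}{\points,\aevs,\bevs,\ash}\), and unroll it. Only the per-node cost changes. The base case again contributes \(\bigO{d}\) in total. \Cref{step:intbasis:residual:mul} costs \(\bigO{d}\). The basis evaluation at \cref{step:intbasis:residual:mpe} costs \(\frac{2}{3}\timepm{\lceil d/2 \rceil} + \bigO{d}\) by the second item of \cref{lem:radix2_fft} (four evaluations at \(\drecb = \lceil d/2\rceil\) points, each \(\frac32\drecb\log_2\drecb\), with \(\timepm{n} = 9n\log_2 n + \bigO{n}\) in the FFT context). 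The recursive calls act on \(\points_{0:\dreca}\) and \(\points_{\dreca:d}\). These are again consecutive sublists of the bit-reversed FFT list, so \cref{lem:radix2_fft} applies at every node, since its evaluation item allows arbitrary \(\ell\).

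For the basis multiplication at \cref{step:intbasis:basis}, I will use FFT evaluation and interpolation with the bound \(\lenP\). In the worst case the product is an \(\ash\)-weak Popov basis of \(\intmod{\points}{\aevs,\bevs}\), so \(\deg(\basisr\basis) \le d\) by \cref{cor:determinant_relmod}. I therefore take \(\lenP = d+1\), giving a cost of \(4\timepm{\lceil d/2\rceil} + \bigO{d}\). There is one subtlety. The four product entries \(\sum_j q_{ij}p_{jk}\) should be obtained with one transform per input entry and one inverse transform per output entry, that is, \(8+4\) transforms of size about \(d\) instead of \(8\) full multiplications. Counting transforms as \(\frac13\) of a multiplication each gives \(12\cdot\frac13\timepm{\lceil d/2\rceil}=4\timepm{\lceil d/2\rceil}\), up to \(\bigO{d}\). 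This gives the recurrence
\[
  \Cxity{d} = 2\Cxity{\lceil d/2\rceil} + \tfrac{14}{3}\timepm{\lceil d/2\rceil} + \bigO{d}.
\]
By \cref{lem:nondecr_cst_factor_general}, exactly as in \cref{eqn:use_lem_cst_factor}, it unrolls to \(\frac{7}{3}\timepm{\lceil d/2\rceil}\log_2(\lceil d/2\rceil) + \bigO{\timepm{d}}\). Here \(\timepm{d}\in\bigO{d\log d}\), which yields the stated \(\bigO{d\log(d)}\) error term.

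For the generic case, with \(\ash\in\{-1,0,1\}\) and \(\bar\Gamma_{\points,\ash}(\aevs,\bevs)\neq0\), I will invoke \cref{cor:interp_generic_mindeg}, as in \cref{prop:pmbasis:cx_generic}. It guarantees that the shifts stay of amplitude at most \(1\) throughout the recursion and that any intermediate basis at order \(k\) has degree at most \(\lceil (k+1)/2\rceil\). The evaluation step is unchanged at \(\frac23\timepm{\lceil d/2\rceil}\), since its cost depends only on the number of points. The product \(\basisr\basis\), however, now has degree at most about \(d/2+1\), so taking \(\lenP \approx d/2+2\) makes all transforms half as long. The multiplication then costs \(2\timepm{\lceil d/4\rceil}\cdot 2/2\), which by superlinearity is at most \(2\timepm{\lceil d/2\rceil}\cdot\frac12\cdot 2\), that is, \(2\timepm{\lceil d/2\rceil}\)-worth at half size, or \(\frac{2}{3}\cdot 3\) in the same units. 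The target is a per-node total of \(\frac83\timepm{\lceil d/2\rceil}\), so that halving in the unrolling gives \(\frac43\).

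The main obstacle is the accounting of the multiplication step as transform counts. The claim "\(4\timepm{\lceil\lenP/2\rceil}\)" is asserted only loosely in the text. I will make it precise by expressing \(\timepm{n}\) as three transforms of length \(2n\), and by checking that the \(12\) transforms of length \(\lenP\) (the next FFT-admissible length, handled with truncated transforms so that only \(\bigO{d}\) is lost) fit. In the generic case the same accounting should give \(12\) transforms of length about \(d/2\), i.e. \(2\timepm{\lceil d/2\rceil}\); combined with the unchanged \(\frac23\) evaluation cost this gives \(\frac83\) per node. The rest is routine unrolling.
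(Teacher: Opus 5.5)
Your proposal is correct and follows essentially the same route as the paper. Both arguments use per-node costs of \(\frac{2}{3}\timepm{\lceil d/2\rceil}\) for the basis evaluation and \(4\timepm{\lceil \lenP/2\rceil}\) for the FFT-based product, with \(\lenP = d+1\) in the worst case and \(\lenP \approx d/2+2\) generically. This gives recurrences with constants \(\frac{14}{3}\) and \(\frac{8}{3}\), which are unrolled via \cref{eqn:use_lem_cst_factor}.

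Your intermediate arithmetic for the generic product is garbled, but the count you end with is the right one. It is twelve transforms of length about \(d/2\), that is \(4\timepm{\lceil \lenP/2\rceil} + \bigO{d} = 2\timepm{\lceil d/2\rceil} + \bigO{d}\).
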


\begin{proof}
  Starting here with the complexity equation
  \begin{align*}
    \Cxity{d}
    & = 2\Cxity{\lceil d/2 \rceil} + 4\timepm{\lfloor d/2 \rfloor} + {\textstyle\frac{2}{3}}\timepm{\lceil d/2 \rceil} + \bigO{d} \\
    & = 2\Cxity{\lceil d/2 \rceil} + {\textstyle\frac{14}{3}}\timepm{\lceil d/2 \rceil} + \bigO{d},
  \end{align*}
  the worst-case bound follows from an analysis is similar to that in
  \cref{prop:pmbasis:cx_worst_case}. Indeed, let \(k = \lceil \log_2(d) \rceil
  - 2\) be the largest integer such that \(\lceil d / 2^k \rceil > 2\). By
  unrolling the above relation until reaching calls at order \(\lceil d/2^{k+1}
  \rceil \le 2\) (for which we use \(\Cxity{2} = \bigO{1}\)), we obtain
  \begin{equation*}
    % \label{eqn:recurrence_approx}
    \Cxity{d} = \frac{14}{3} \sum_{i=0}^{k} 2^i \timepm{\left\lceil \frac{d}{2^{i+1}} \right\rceil} + \bigO{d \log(d)}.
  \end{equation*}
  % where we have used \(\lceil d/2^{i+1} \rceil = \lceil \lceil d/2 \rceil/2^i \rceil\).
  From here, we conclude thanks to the bound on the sum given in \cref{eqn:use_lem_cst_factor}.
  Under the genericity assumption, the basis \(\basisr \basis\) has
  degree less than \(\lenP = \lceil (d+1)/2 \rceil + 1\) (see the proof of
  \cref{prop:pmbasis:cx_generic} where this is detailed for approximants), so
  that this product can be computed using \(4\timepm{\lceil \lenP / 2 \rceil} +
  \bigO{d} = 2\timepm{\lceil d / 2
  \rceil} + \bigO{d}\) operations. Adapting the above analysis with \(14/3 =
  4+2/3\) replaced by \(8/3 = 2+2/3\) gives the announced bound with constant
  factor \(4/3\).
  \qed
\end{proof}

Compared to the similar analyses of \algoName{algo:PM-Basis2} in
\cref{sec:approx:pmbasis:cx_worst_case,sec:approx:pmbasis:cx_generic}, here we
are in the case \(\cstmatmul = 4\), and the gain on the overall constant factor
comes from the fact the residual computation in those sections is costlier than
the basis evaluation here. Note that even in this context where one can use FFT
methods for computing the residual in \algoName{algo:PM-Basis2}, the latter
algorithm remains at a disadvantage: this residual will require a basis
evaluation just like the one here for \algoName{algo:PM-IntBasis2}, but will
also need to perform an additional evaluation of the input polynomials and then
an interpolation to recover the middle products.

In light of \cref{rmk:modified_relbas}, one could reuse the already known
evaluations of \(\firstbasis\) in order to multiply \(\secondbasis
\firstbasis\); this saves part of the time spent in this product.
%% NOTE this saves a third: computing \(\secondbasis \firstbasis\) takes \(8/3 \timepm{d/4} +
%% \bigO{d}\); the factor \(8/3 = 2+2/3\) becomes \(2 = 4/3 + 2/3\), and the
%% overall factor is simply \(1\).
Going further in this direction, this process of FFT multiplication also yields
evaluations of \(\secondbasis \firstbasis\), which could be returned as part of
the output: the advantage would be to retrieve some evaluations of
\(\secondbasis\) at \cref{step:intbasis:secondbasis}, and therefore fewer extra
evaluations would need to be computed in order to multiply \(\secondbasis
\firstbasis\). The algorithm in \cref{sec:interp:eval_intbasis} incorporates,
among others, these two ways of reducing the amount of evaluations to be done,
and quantifies the speed-up that this provides.

\subsubsection{Explicit basis when the output degree is maximal}
\label{sec:interp:pmintbasis:extremaldegree}

We end \cref{sec:interp:pmintbasis} with a remark that will prove useful for
designing variants of \cref{algo:PM-IntBasis2}. It shows that when the
interpolant basis output by this algorithm has the maximal possible degree,
we can give an explicit and simple description of this basis from the
input data.

\begin{lemma}
  \label{lem:intbasis_extremaldegree}%
  In the context of \cref{algo:PM-IntBasis2}, let \(\modulus = \prod_{0 \le k
  < d} (x - \point_k)\), \(\apo = \interpolate(\points, \aevs)\), and \(\bpo =
  \interpolate(\points, \bevs)\), and write \(\basis \in \mpR{2}{2}_{\le d}\)
  for the output basis. This basis reaches the maximal possible degree
  \(\deg(\basis) = d\) if and only if one of these two situations holds:
  \begin{align*}
    & \gcd(\apo,\modulus)=1
    \text{ and }
    \basis =
    \begin{bmatrix}
      \modulus & 0 \\
      -\bpo/\apo \bmod \modulus & 1
    \end{bmatrix},
    \\
    \text{or}\quad
    & \gcd(\bpo,\modulus)=1
    \text{ and }
    \basis =
    \begin{bmatrix}
      1 & -\apo/\bpo \bmod \modulus \\
      0 & \modulus
    \end{bmatrix}.
  \end{align*}
\end{lemma}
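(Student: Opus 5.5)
We will work with the invariants of \cref{sec:prelim:polmat,sec:prelim:relbas}. Write \(\basis = [p_{ij}]\) for the output and \(\gamma = \gcd(\apo,\bpo,\modulus)\). By \cref{cor:determinant_relmod}, \(\det(\basis) = \lambda\modulus/\gamma\) for some nonzero constant \(\lambda\). Since \(\basis\) is \(\ash\)-weak Popov, \cref{lem:degree_bounds_2x2} gives
\[
  \deg(p_{01}) + \deg(p_{10}) < \deg(p_{00}) + \deg(p_{11}) = \deg(\modulus/\gamma) \le d.
\]
The converse direction is immediate: in either displayed situation the matrix contains the entry \(\modulus\), of degree \(d\). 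We therefore concentrate on the forward direction and assume \(\deg(\basis) = d\).

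The first and hardest step is to show that the degree \(d\) is attained on the diagonal. If it is, say \(\deg(p_{00}) = d\), then the displayed inequality forces \(\deg(p_{11}) = 0\) and \(\deg(\modulus/\gamma) = d\), hence \(\gamma = 1\) and \(p_{00} = c\,\modulus\) with \(c \in \field \setminus \{0\}\). The difficulty is that the weak Popov conditions alone do not exclude an off-diagonal entry of degree \(d\) when \(\abs{\ash}\) is large: for instance \(\deg(p_{01}) < \deg(p_{00}) + \ash\) allows \(\deg(p_{01}) = d\) when \(\ash > 0\). To rule this out I will use the structure of \cref{algo:PM-IntBasis2}. I plan to prove by induction on \(d\) that every entry of the output satisfies \(\deg(p_{ij}) \le \max(\deg(p_{00}), \deg(p_{11}))\), so that \(\deg(\basis)\) is a diagonal degree. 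The base case holds because the output of \cref{algo:IntBasis2-base} is in \(\ash\)-Popov form. The inductive step uses the product \(\basisr\basis\) of \cref{lem:dnc_relbas}, whose diagonal degrees add up, combined with the induction hypothesis on both factors and the fact that \(\basisr\basis\) is \(\ash\)-weak Popov. If this invariant turns out to be false as stated, my fallback is a direct argument: if an off-diagonal entry has degree \(d\), the inequality above forces the other off-diagonal entry to be zero, so \(\basis\) is triangular. In that case \(p_{00}p_{11} = \lambda\modulus/\gamma\), and the row containing the degree-\(d\) entry must contradict the minimality of the \(\ash\)-minimal degree recalled in \cref{sec:prelim:polmat}.

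Now assume \(p_{00} = c\,\modulus\) and \(p_{11} = c' \in \field \setminus \{0\}\); the case \(\deg(p_{11}) = d\) is symmetric via \cref{lem:invinput}, which swaps the roles of \(\apo\) and \(\bpo\). Expanding the determinant, \(cc'\modulus - p_{01}p_{10} = \lambda\modulus\), so \(\modulus\) divides \(p_{01}p_{10}\). Since \(\deg(p_{01}) + \deg(p_{10}) < d\), this forces \(p_{01}p_{10} = 0\).
\begin{itemize}[noitemsep,topsep=2pt]
  \item If \(p_{01} = 0\), then \(\basis\) is lower triangular with diagonal \((c\modulus, c')\). Comparing with the lower Hermite basis of \cref{lem:relbas_hermite_bases}, whose diagonal is \((\modulus/\alpha, \alpha/\gamma)\), up to a constant unimodular factor gives \(\alpha = \gcd(\apo,\modulus) = 1\).
  \item If \(p_{10} = 0\), then \(\basis\) is diagonal, and the same comparison again gives \(\gcd(\apo,\modulus) = 1\).
\end{itemize}
In both cases the second row \((p_{10}, c')\) is a relation, so \(p_{10} \equiv -c'\,\bpo/\apo \bmod \modulus\).

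The remaining point is to identify \(p_{10}\) exactly with \(-\bpo/\apo \bmod \modulus\), that is, with a polynomial of degree \(< d\), and to fix the constants so that \(c = c' = 1\). For the degree bound, note that \(\deg(p_{10}) = d\) would violate the invariant of the first step, since the diagonal degree of that row is \(0\), so \(\deg(p_{10}) < d\). For the normalization, I will show by the same induction that the algorithm keeps diagonal entries monic: the base case is \(\ash\)-Popov, and a product of matrices with monic diagonal pivots again has monic diagonal pivots. If that normalization is not strictly maintained by the algorithm, the statement should be read up to left multiplication by a constant diagonal matrix.
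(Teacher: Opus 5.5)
There is a genuine gap in your first step, and everything after it depends on it. The invariant you plan to prove by induction is false: it says that every entry of an output of \cref{algo:PM-IntBasis2} has degree at most \(\max(\deg(p_{00}),\deg(p_{11}))\).

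Take \(d=4\) and \(\ash=-2\). Choose \(\aev_0,\aev_1\neq 0\) with \(\bev_0/\aev_0\neq\bev_1/\aev_1\), and \(\aev_2=\aev_3=0\) with \(\bev_2,\bev_3\neq 0\). Write \(\modulus_1=(x-\point_0)(x-\point_1)\) and \(\modulus_2=(x-\point_2)(x-\point_3)\).
\begin{itemize}[noitemsep,topsep=2pt]
  \item The first call returns the \(-2\)-Popov basis \(\firstbasis=[\begin{smallmatrix}\modulus_1 & 0\\ \tilde b & 1\end{smallmatrix}]\), where \(\deg(\tilde b)=1\).
  \item The updated shift is \(0\) and the residual \(\arevs\) vanishes, so the second call returns \(\secondbasis=\diag{1,\modulus_2}\).
  \item The output is \([\begin{smallmatrix}\modulus_1 & 0\\ \modulus_2\tilde b & \modulus_2\end{smallmatrix}]\). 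Its entry of degree \(3\) exceeds both diagonal degrees, which equal \(2\).
\end{itemize}
So the inductive step through \(\secondbasis\firstbasis\) cannot go through.

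Your later use of the invariant also misreads it. Once \(\deg(p_{00})=d\), the invariant only bounds \(\deg(p_{10})\) by \(d\), not by the degree of the diagonal entry in that row. What you would actually need is the Popov-type column condition \(\deg(p_{10})<\deg(p_{00})\), and the same example shows the recursion does not preserve it either.

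Your fallback cannot repair this. It uses only the determinant, the weak Popov conditions and minimality of pivot degrees, and these hold for every \(\ash\)-weak Popov basis of the module. For such bases the statement is false.
\begin{itemize}[noitemsep,topsep=2pt]
  \item Suppose \(\gcd(\apo,\modulus)=1\), \(f=-\bpo/\apo\bmod\modulus\), and \(\ash\le -d\). Then \([\begin{smallmatrix}\modulus & 0\\ f+\modulus & 1\end{smallmatrix}]\) is an \(\ash\)-weak Popov basis of \(\intmod{\points}{\aevs,\bevs}\) of degree \(d\). It is not of the displayed form, and it violates no minimality property, since the pivot of its second row has degree \(0\).
  \item Suppose \(\bevs=0\), all \(\aev_i\neq 0\), and \(\ash=1\). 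Then \([\begin{smallmatrix}\modulus & x^{d-1}\\ 0 & 1\end{smallmatrix}]\) is a \(1\)-weak Popov basis with \(p_{10}=0\) and \(p_{01}\neq 0\), whereas the lemma asserts \(\diag{\modulus,1}\).
\end{itemize}
The second example also shows that your case ``\(p_{10}=0\) implies \(\basis\) is diagonal'' is wrong: the matrix is only upper triangular, and \(p_{01}=0\) must come from the algorithm.

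The missing idea is to induct on the explicit shape itself, as the paper does, rather than on a degree invariant.
\begin{itemize}[noitemsep,topsep=2pt]
  \item By induction, any output on \(k\) points has degree at most \(k\). Hence \(d=\deg(\secondbasis\firstbasis)\le\deg(\secondbasis)+\deg(\firstbasis)\le\drecb+\dreca\), which forces both factors to have maximal degree.
  \item The induction hypothesis then gives each factor one of the two explicit shapes.
  \item The two mixed products have degree \(<d\). The other two are \([\begin{smallmatrix}\modulus & 0\\ f & 1\end{smallmatrix}]\) or \([\begin{smallmatrix}1 & f\\ 0 & \modulus\end{smallmatrix}]\) with \(\deg(f)<d\), and your relation argument then identifies \(f\).
  \item In the base case, the \(\ash\)-Popov form supplies exactly the properties your argument was missing: the matrix degree is attained on the diagonal, and the off-diagonal entries have degree \(<d\).
\end{itemize}
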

\begin{proof}
  We prove this inductively. Consider first the base case \(d \le 2\). Then
  \cref{algo:PM-IntBasis2} returns the \(\ash\)-Popov basis \(\basis\) of
  \(\intmod{\points}{\aevs,\bevs}\); we assume \(\deg(\basis) = d\), and we let
  \(\ddegP_{ij} = \deg(p_{ij})\). The \(\ash\)-Popov form implies \(d =
  \deg(\basis) = \max(\ddegP_{00}, \ddegP_{11})\), and we also know that
  \(\ddegP_{00} + \ddegP_{11} \le d\) thanks to
  \cref{lem:degree_bounds_2x2,cor:determinant_relmod}. Hence, either
  \(\ddegP_{00} = d\) and \(\ddegP_{11} = 0\), or  \(\ddegP_{00} = 0\) and
  \(\ddegP_{11} = d\). Suppose that we are in the first case. The
  \(\ash\)-Popov form definition implies that \(p_{11} = 1\) and \(p_{01} =
  0\), hence also \(\det(\basis) = p_{00}\). Since \(\det(\basis) = p_{00}\)
  divides \(\modulus\) by \cref{cor:determinant_relmod}, and both are monic of
  degree \(d\), we get \(p_{00} = \modulus\). We deduce that \(\aevs\) has no
  zero entries, or equivalently that \(\gcd(\apo,\modulus)=1\),
  since otherwise there would exist an element \((p,0)
  \in \intmod{\points}{\aevs,\bevs}\) with \(\deg(p) < d\) and this element
  cannot be generated by \(\basis\), which is absurd. It remains to
  observe that the relation \(p_{10} \apo +
  p_{11} \bpo = 0 \bmod \modulus\), with here \(\ddegP_{10} < \ddegP_{00} = d\)
  and \(p_{11} = 1\), yields \(p_{10} = -\bpo \apo^{-1} \bmod \modulus\). The
  proof in the other case \(\ddegP_{00} = 0\) and \(\ddegP_{11} = d\) is done
  using symmetric arguments.

  Now take \(d > 2\) and assume that the statement holds for any input
  instance of \cref{algo:PM-IntBasis2} with fewer than \(d\) points. Consider
  an instance with \(d\) points such that \cref{algo:PM-IntBasis2} returns a
  basis \(\secondbasis\firstbasis\) of
  \(\intmod{\points}{\aevs,\bevs}\) with \(\deg(\secondbasis\firstbasis) = d\).
  The interpolant bases \(\firstbasis\) and \(\secondbasis\), obtained from
  recursive calls at \(\dreca = \lfloor d/2 \rfloor\) and \(\drecb = \lceil d/2 \rceil\) points respectively, are such that
  \(\deg(\firstbasis) \le \dreca\) and \(\deg(\secondbasis) \le \drecb\).
  On the other hand, we have \(\dreca + \drecb = d =
  \deg(\secondbasis\firstbasis) \le \deg(\firstbasis) + \deg(\secondbasis)\).
  Thus \(\deg(\firstbasis) = \dreca < d\) and \(\deg(\secondbasis) = \drecb <
  d\). So, we can apply the induction hypothesis to both \(\secondbasis\) and
  \(\firstbasis\), which tells us that there exist \(\apor \in \pR_{<\dreca}\)
  and \(\bpor \in \pR_{<\drecb}\) such that
  \[
    \firstbasis =
    \begin{bmatrix}
      1 & \apor \\
      0 & \modulus_1
    \end{bmatrix}
    \text{ or }
    \firstbasis =
    \begin{bmatrix}
      \modulus_1 & 0 \\
      \apor & 1
    \end{bmatrix}
    \quad\text{ and }\quad
    \secondbasis =
    \begin{bmatrix}
      1 & \bpor \\
      0 & \modulus_2
    \end{bmatrix}
    \text{ or }
    \secondbasis =
    \begin{bmatrix}
      \modulus_2 & 0 \\
      \bpor & 1
    \end{bmatrix},
  \]
  where \(\modulus_1 = \prod_{0 \le k < \dreca} (x - \point_k)\) and
  \(\modulus_2 = \prod_{\dreca \le k < d} (x - \point_k)\). Degree
  considerations show that only two of the four situations are possible,
  specifically those that lead to
  \[
    \secondbasis \firstbasis =
    \begin{bmatrix}
      \modulus & 0 \\
      \gamma & 1
    \end{bmatrix}
    \text{ or }
    \secondbasis \firstbasis =
    \begin{bmatrix}
      1 & \gamma \\
      0 & \modulus
    \end{bmatrix}
    \text{ for some } \gamma \in \pR_{<d}.
  \]
  Using the same arguments as in the above case \(d\le 2\), in both cases one
  can relate \(\gamma\) to \((\apo,\bpo,\modulus)\), leading precisely to the
  matrices described in the statement, which concludes the proof.
  \qed
\end{proof}

\subsection{Acceleration using extrapolations and evaluated representations}
\label{sec:interp:eval_intbasis}

Following the remarks in \cref{sec:interp:pmintbasis:fft}, we design a variant
of the above algorithm called \nameref{algo:Eval-IntBasis2}, where we store and
reuse evaluations of \(\secondbasis\) and \(\firstbasis\) in order to speed-up
the basis multiplication \(\secondbasis \firstbasis\). For this, instead of
representing these interpolant bases as matrices of polynomials, we represent
them through their evaluations at sufficiently many points. These points are
selected from the input list \(\points\), which allows us to exploit the
evaluations of the first basis \(\firstbasis\) at \(\points_{\dreca:d}\), that
the approach asks to compute in any case.
In this evaluated representation, when we need to evaluate bases at some new
points, we make use of extrapolation: we never explicitly interpolate the
values into polynomials. In the case of FFT points, this makes no difference,
since FFT extrapolation starts with an interpolation step (see
\cref{lem:radix2_fft}), so that one could equivalently store at all times both
the polynomials and their values. However, avoiding interpolation and keeping
an always-evaluated representation does have a significant impact in the case
of points in geometric progression, thanks to the results in
\cref{lem:geometric_progression}. Indeed, for these points, direct
extrapolation is quite faster than interpolation followed by evaluation.

The difference is even more striking for points in arithmetic progressions: in
fact, \algoName{algo:Eval-IntBasis2} has exactly the same cost bound for
arithmetic and geometric progressions. This is because this algorithm only uses
extrapolations and this specific operation is performed with the same
efficiency for both types of points, see
\cref{lem:arithmetic_progression,lem:geometric_progression}. In contrast,
calling \nameref{algo:PM-IntBasis2} on an arithmetic progression would lead to
a cost bound in \(\bigO{\timepm{d} \log(d)^2}\).

One important note is that \algoName{algo:Eval-IntBasis2} takes as input
evaluations \(\aevs\) and \(\bevs\) and returns evaluations of the sought
interpolant basis. Thus, if one starts from polynomials \(\apo\) and \(\bpo\)
or wants the basis represented on the monomial basis, as in
\problemName{pbm:int}, then one must account for the cost of evaluating
\(\apo\) and \(\bpo\) at the points \(\points\) or interpolating the output
basis evaluations. If one deals with points in arithmetic progression, these
steps have some impact on the leading constant in front of the term in
\(\timepm{d}\log(d)\), unlike for FFT points or geometric progressions.

\subsubsection{Algorithm}
\label{sec:interp:eval_intbasis:algo}

To pick the right number of evaluation points for representing the bases that
are computed along the algorithm, we need some properties on the degrees of
their entries. A first observation is that, according to \cref{lem:dnc_relbas},
from the diagonal degrees of the bases \(\firstbasis\) and \(\secondbasis\)
obtained by recursive calls, we are able to deduce the diagonal degrees of the
output basis \(\secondbasis\firstbasis\) without doing any field operation. In our
context where we aim to store only evaluations of the computed bases, this
means that we can propagate through the recursion exact information on diagonal
degrees of these bases, even though we never interpolate the data back to
polynomials.

A crucial consequence of knowing diagonal degrees is that, even though we do
not know the polynomials \(p_{00}\) and \(p_{11}\) explicitly, we have direct
access to the updated shift \(\ash + \deg(p_{00}) - \deg(p_{11})\) for the
second recursive call (see  \cref{step:intbasis:secondbasis} of
\cref{algo:PM-IntBasis2}). Another consequence is that we can deduce bounds on
the off-diagonal degrees using weak Popov form properties, so that we have
degree bounds for all basis entries and we know how many evaluation points
suffice for an evaluated representation. Note that the output basis degree
might reach the number of points available in \(\points\), in which case one
would need one extra evaluation point; yet, this case is easily detected and
can be handled thanks to the explicit basis description in
\cref{lem:intbasis_extremaldegree}.

This leads to \algoName{algo:Eval-IntBasis2}, described in
\cref{algo:Eval-IntBasis2}. It does the exact same computations as
\algoName{algo:PM-IntBasis2}, but maintains an evaluated representation for the
intermediate and output interpolant bases. It contains two subprocedures:
\begin{itemize}[noitemsep,topsep=2pt]
  \item \textproc{BasisDegBounds}, which uses known diagonal degrees to deduce
    bounds on off-diagonal degrees, based on the definition of a shifted weak
    Popov form;
  \item \textproc{BasisExtrapolate}, which performs
    \(\extrapolate(\ptsinp, \ptsout, \evbasis)\) for some basis
    \(\evbasis\) given in evaluated representation, using known degree bounds on the
    basis entries while making sure to handle the case of
    a basis of maximal degree correctly based on \cref{lem:intbasis_extremaldegree}.
\end{itemize}
The correctness thus follows directly from that of
\algoName{algo:PM-IntBasis2}, along with easy verifications of the degree
considerations ensuring that bases are evaluated at sufficiently many points
during the whole computation. Another easy verification is about the inequality
\(d-\lenP \ge \dreca - \lenP_1\), ensuring that we do not try to access
evaluations of \(\firstbasis\) that we do not have at
\cref{algo:Eval-IntBasis2:basis_multiply}; this is implied by \(\lenP \le
\drecb+\lenP_1\), itself deduced from \(\lenP \le \lenP_1 + \lenP_2\). The
latter roughly translates the fact that \(\deg(\secondbasis\firstbasis) \le
\deg(\secondbasis)+\deg(\firstbasis)\) and appears in the construction of
\(\lenP\) at \cref{algo:Eval-IntBasis2:lenP_mul}.

\begin{algorithm}[htbp]
  \algoCaptionLabel{Eval-IntBasis2}{\points,\aevs,\bevs,\ash}
  \begin{algorithmic}[1]
    \Require pairwise distinct points \(\points \in \field^d\) for \(d \in \NN\),
             values \((\aevs,\bevs) \in (\field^d)^2\),
             shift \(\ash \in \mathbb{Z}\)%

   \Ensure a list of matrices \(\evbasis \in (\mKK{2}{2})^{\lenP}\) with \(\lenP \in \{0,\ldots,d\}\),
    and integers \(\ddegP_0,\ddegP_1 \in \{0,\ldots,d\}\),
    % and such that \(\lenP = \min(d, 1+\max(\textproc{BasisDegBounds}(d,\ddegP_0,\ddegP_1,\ash))) \in \{0,\ldots,d\}\)
    which are the evaluations \(\evbasis = \resbasis(\points_{d-\lenP:d})\)
    and diagonal degrees \((\ddegP_0,\ddegP_1)\)
    of an \(\ash\)-weak Popov basis \(\basis \in \mpR{2}{2}_{\le d}\) of
    \(\intmod{\points}{\aevs,\bevs}\)
    with either \(\deg(\basis) = \lenP = d\) or \(\deg(\basis) < \lenP\)

    \If{\(d \le 2\)}
      \State \(\basis = [\begin{smallmatrix} p_{00} & p_{01} \\ p_{10}& p_{11} \end{smallmatrix}] \in \mpR{2}{2}_{\le 2} \gets \Call{algo:IntBasis2-base}{\points, \aevs,\bevs, \ash}\);
      ~\(\lenP \gets \min(2, \deg(\resbasis) + 1)\)
      \label{algo:Eval-IntBasis2:base_lenP}
      \State \Return \(\resbasis(\points_{d-\lenP:d})\) and \((\deg(p_{00}),\deg(p_{11}))\)
      \label{algo:Eval-IntBasis2:base_return}
    \EndIf

    \State \(\dreca \gets \lfloor d/2 \rfloor\); ~\(\drecb \gets d - \dreca\)

    \State \(\evfirstbasis, (\ddegP_0,\ddegP_1) \gets \Call{algo:Eval-IntBasis2}{\points_{0:\dreca}, \aevs_{0:\dreca}, \bevs_{0:\dreca}, \ash}\)
    \MyComment{indexing \(\evfirstbasis = (\firstbasis_i)_{\dreca - \lenP_1 \le i < \dreca}\), length \(\lenP_1\)}
    \label{algo:Eval-IntBasis2:firstcall}

    \State \((\firstbasis_{i})_{\dreca \le i < d} \in (\mKK{2}{2})^{\drecb} \gets
    \textproc{BasisExtrapolate}(\points_{\dreca-\lenP_1:\dreca}, \points_{\dreca:d}, \evfirstbasis, \dreca, \ddegP_0, \ddegP_1, \ash)\)
      \label{algo:Eval-IntBasis2:residual_extrapolate}

    \State \((\arevs, \brevs) \in (\field^{d-\dreca})^2 \gets\) values such that
    \(([\begin{smallmatrix}\arev_i \\ \brev_i\end{smallmatrix}])_{0 \le i < \drecb} =
    (\firstbasis_{i} \, [\begin{smallmatrix} \aev_i \\ \bev_i \end{smallmatrix}])_{\dreca \le i < d}\)
    \label{algo:Eval-IntBasis2:residual_multiply}

    \State \(\evsecondbasis, (\ddegQ_0,\ddegQ_1) \gets \Call{algo:Eval-IntBasis2}{\points_{\dreca:d}, \arevs, \brevs, \ash + \ddegP_{0} - \ddegP_{1}}\)
    \MyComment{indexing \(\evsecondbasis = (\secondbasis_i)_{d - \lenP_2 \le i < d}\), length \(\lenP_2\)}
    \label{algo:Eval-IntBasis2:secondcall}

    \State \(\lenP \gets \min(d, 1+\max(\textproc{BasisDegBounds}(d,\ddegQ_0+\ddegP_0,\ddegQ_1+\ddegP_1,\ash)))\)
    \label{algo:Eval-IntBasis2:lenP_wpopov}

    \State\InlineIfElse{\(\dreca \in \{\ddegP_0, \ddegP_1\}\) \OR{} \(\drecb \in \{\ddegQ_0,\ddegQ_1\}\)}{%
      \(\lenP \gets \min(\lenP, \lenP_1 + \lenP_2)\)%
    }{%
      \(\lenP \gets \min(\lenP, \lenP_1 + \lenP_2 - 1)\)%
    }
    \label{algo:Eval-IntBasis2:lenP_mul}

    \State \((\secondbasis_{i})_{d-\lenP \le i < d-\lenP_2} \gets
    \textproc{BasisExtrapolate}(\points_{d-\lenP_2:d}, \points_{d-\lenP:d-\lenP_2}, \evsecondbasis, \drecb, \ddegQ_0, \ddegQ_1, \ash + \ddegP_{0} - \ddegP_{1})\)
    \label{algo:Eval-IntBasis2:basis_extrapolate}

    \State \Return \((\secondbasis_i \firstbasis_i)_{d-\lenP \le i < d}\) and \((\ddegP_0+\ddegQ_0, \ddegP_1+\ddegQ_1)\)
    \MyComment{\(\firstbasis_i\) is known since \(d-\lenP \ge \dreca - \lenP_1\)}
    \label{algo:Eval-IntBasis2:basis_multiply}

    \Statex
    \Subprocedure \(\textproc{BasisDegBounds}(d,\ddegP_0,\ddegP_1,\ash)\)

    \Require nonnegative integers \(d,\ddegP_0,\ddegP_1 \in \NN\),
             integer \(\ash \in \ZZ\)
     \MyComment{diagonal degrees \(\ddegP_0,\ddegP_1\)}

    \Ensure integer tuple in \(\ZZ^4\) with entries in \(\{-1,0,\ldots,d\}\)
    \MyComment{degree bounds on \(p_{00}, p_{01}, p_{10}, p_{11}\)}

    \State\InlineIf{\((\ddegP_0,\ddegP_1) = (d,0)\)}{%
      \Return \((d,-1,d-1,0)\)%
    }
    \MyComment{basis \([\begin{smallmatrix} \modulus & 0 \\ p_{01} & 1 \end{smallmatrix}]\),
    see \cref{lem:intbasis_extremaldegree}}

    \State\InlineElseIf{\((\ddegP_0,\ddegP_1) = (0,d)\)}{%
      \Return \((0,d-1,-1,d)\)%
    }
    \MyComment{basis \([\begin{smallmatrix} 1 & p_{01} \\ 0 & \modulus \end{smallmatrix}]\),
    see \cref{lem:intbasis_extremaldegree}}

    \State\InlineElse{% {\(\max(\ddegP_0,\ddegP_1) < d\)}{%
      \Return \((\ddegP_0, \max(-1, \min(d-1, \ddegP_0+\ash-1)), \max(-1, \min(d-1, \ddegP_1-\ash)), \ddegP_1)\)%
    }

    \Statex
    \Subprocedure \(\textproc{BasisExtrapolate}(\ptsinp,\ptsout,\evbasis,d,\ddegP_0,\ddegP_1,\ash)\)

    \Require pairwise distinct points \(\ptsinp = (\point_{i})_{0 \le i < \dinp} \in \field^\dinp\),
             target points \(\ptsout \in \field^\dout\),
             % = (\varpi_0,\ldots,\varpi_{\dout-1})
             list of matrix values \(\evbasis = (\basis_{i})_{0 \le i < \dinp} \in (\mKK{2}{2})^\dinp\),
             nonnegative integers \((d,\ddegP_0,\ddegP_1)\),
             integer \(\ash \in \ZZ\)
             such that \(\ddegP_0 + \ddegP_1 \le d\) and
             \(\dinp \ge \min(d, 1+\max(\textproc{BasisDegBounds}(d,\ddegP_0,\ddegP_1,\ash))) \in \{0,\ldots,d\}\)

    \Ensure list in \((\mKK{2}{2})^{\dout}\) of \(\dout\) matrix values

    \If{\(\max(\ddegP_0,\ddegP_1) < d\)} \MyComment{typical case \(\deg(P) < d\): do \(\extrapolate(\ptsinp, \ptsout, \evbasis)\)}

      \State \((\lenP_{00}, \lenP_{01}, \lenP_{10}, \lenP_{11}) \gets (1,1,1,1) + \textproc{BasisDegBounds}(d,\ddegP_0,\ddegP_1,\ash)\)
      \MyComment{\(\deg(p_{jk}) < \lenP_{jk} \le \dinp\)}
      \State \InlineFor{\(j, k \in \{0,1\}\)}{%
        \((E_{i,jk})_{0 \le i < \dout} \in \field^\dout
        \gets \extrapolate(\ptsinprange{\dinp-\lenP_{jk}:\dinp}, \ptsout, (\basis_{i,jk})_{\dinp-\lenP_{jk} \le i < \dinp})\)
      }
      \State \Return \((E_{i})_{0 \le i < \dout}\),
      where \(E_i \in \mKK{2}{2}\) has entry \((j,k)\) equal to \(E_{i,jk}\)
      \label{algo:BasisExtrapolate:direct}

    \EndIf

    \State \((u_i)_{0 \le i < \dout} \in \field^\dout \gets \evaluate(\ptsout, \modulus)\),
    where \(\modulus = \prod_{0 \le i < \dinp} (x - \point_i)\)

    \If{\(\ddegP_0 = d\)}
    \MyComment{case \((\ddegP_0,\ddegP_1) = (d,0)\) and \(\dinp = d\):
    evaluate \([\begin{smallmatrix} \modulus & 0 \\ p_{01} & 1 \end{smallmatrix}]\),
    see \cref{lem:intbasis_extremaldegree}}

      \State \Return \(([\begin{smallmatrix} u_i & 0 \\ v_i & 1 \end{smallmatrix}])_{0 \le i < \dout}\),
      where \((v_i)_{0 \le i < \dout} \in \field^\dout \gets \extrapolate(\ptsinp, \ptsout, (\basis_{i,10})_{0 \le i < \dinp})\)
      \label{algo:BasisExtrapolate:maxdeg0}

    \Else \MyComment{case \((\ddegP_0,\ddegP_1) = (0,d)\) and \(\dinp = d\):
    evaluate \([\begin{smallmatrix} 1 & p_{01} \\ 0 & \modulus \end{smallmatrix}]\),
    see \cref{lem:intbasis_extremaldegree}}

      \State \Return \(([\begin{smallmatrix} 1 & v_i \\ 0 & u_i \end{smallmatrix}])_{0 \le i < \dout}\),
      where \((v_i)_{0 \le i < \dout} \in \field^\dout \gets \extrapolate(\ptsinp, \ptsout, (\basis_{i,01})_{0 \le i < \dinp})\)
      \label{algo:BasisExtrapolate:maxdeg1}

    \EndIf
  \end{algorithmic}
\end{algorithm}

\subsubsection{Complexity bounds for points in arithmetic or geometric progressions}
\label{sec:interp:eval_intbasis:arith_geom}

We start with a complexity analysis for arithmetic or geometric progressions,
two types of points that share the same bound for the cost of extrapolation.

\begin{proposition}
  \label{prop:evalintbasis:geometric_complexity}%
  Let \(\points \in \field^d\) be points either in arithmetic progression as in
  \cref{lem:arithmetic_progression} or in geometric progression as in
  \cref{lem:geometric_progression}. For any input
  values \((\aevs,\bevs) \in (\field^d)^2\) and any shift \(\ash\), the call
  \(\Call{algo:Eval-IntBasis2}{\points, \aevs, \bevs, \ash}\) costs
  \[
    3\timepm{\lceil d/2 \rceil} \log_2(\lceil d/2 \rceil)
    + {\textstyle\frac{21}{2}}\timepm{d}
    + \bigO{d \log(d)}
  \]
  operations in \(\field\) in the worst case, which reduces to
  \[
    \textstyle
    \frac{5\cstpolmulrep{4}}{8} \timepm{\lceil d/2 \rceil}\log_2(\lceil d/2 \rceil)
    + \frac{25\cstpolmulrep{4}}{16}\timepm{d}
    + \bigO{d\log(d)}
  \]
  if \(\ash \in \{-1,0,1\}\) and the genericity condition
  \(\bar\Gamma_{\points,\ash}(\aevs,\bevs) \neq 0\) of
  \cref{cor:interp_generic_mindeg} holds.
\end{proposition}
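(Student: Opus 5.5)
The plan is to set up a divide-and-conquer recurrence for \(\Cxity{d}\), the cost of \algoName{algo:Eval-IntBasis2}, and then to unroll it exactly as in the proof of \cref{prop:pmbasis:cx_worst_case}. The recursive calls are at orders \(\dreca = \lfloor d/2 \rfloor\) and \(\drecb = \lceil d/2 \rceil\). Beyond these calls, the only non-linear work is the two calls to \textproc{BasisExtrapolate}, at \cref{algo:Eval-IntBasis2:residual_extrapolate,algo:Eval-IntBasis2:basis_extrapolate}. The pointwise \(2\times 2\) products at \cref{algo:Eval-IntBasis2:residual_multiply,algo:Eval-IntBasis2:basis_multiply} and the degree bookkeeping cost \(\bigO{d}\).

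In both calls the input and output point ranges are contiguous and disjoint slices of \(\points\), so the index hypotheses of \cref{lem:arithmetic_progression,lem:geometric_progression} hold. Each entry extrapolation therefore costs \(\timemp{\lenP_{jk}-1,\dout} + \bigO{d}\), which by \hyptime{hyp:timepm:midprod} is at most \(\timepm{\lfloor(\lenP_{jk}-1+\dout)/2\rfloor}+\bigO{d}\). The maximal-degree branch of \textproc{BasisExtrapolate} performs only one extrapolation plus an evaluation of \(\modulus\) at the output points. I expect the latter to be cheap for these special points: it is a closed-form product of differences, computable incrementally in \(\bigO{d}\), or failing that an \(\bigO{\timepm{d}}\) task. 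Either way it is dominated by the typical branch, and I will bound it by that case.

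\emph{Worst case.} The bounds returned by \textproc{BasisDegBounds} reflect \cref{lem:degree_bounds_2x2}: the diagonal bounds sum to at most the order, and so do the off-diagonal ones. I will group the four entries in the two pairs \((00,11)\) and \((01,10)\), exactly as in \cref{lem:appbas_cx:residual}, and apply superlinearity of \(\timepm{\cdot}\) to each pair.

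For the first basis \(\firstbasis\), extrapolated to \(\dout=\drecb\) points, each pair costs at most \(\timepm{\lceil 3d/4\rceil}\). For the second basis \(\secondbasis\), the number of output points is \(\lenP-\lenP_2 \le \lenP_1 \le \dreca\) and its entry degrees are bounded in terms of \(\drecb\), so each pair again costs at most \(\timepm{\lceil 3d/4 \rceil}\) up to \(\bigO{d}\). This gives
\[
  \Cxity{d} = 2\Cxity{\lceil d/2\rceil} + 4\timepm{\lceil 3d/4\rceil} + \bigO{d} \le 2\Cxity{\lceil d/2\rceil} + 3\timepm{d} + \bigO{d},
\]
where the inequality uses \hyptime{hyp:timepm:nondecr} as at the end of \cref{lem:appbas_cx:residual}. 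Unrolling, isolating the \(i=0\) term, and applying \cref{lem:nondecr_cst_factor_general} yields
\[
  3\timepm{d} + 3\timepm{\lceil d/2\rceil}\bigl(\log_2(\lceil d/2\rceil)+5\bigr) + \bigO{d\log(d)}.
\]
Together with \(\timepm{\lceil d/2\rceil}\le \frac12\timepm{d}+\bigO{d}\), the lower-order term becomes \(3+\frac{15}{2}=\frac{21}{2}\), as claimed.

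\emph{Generic case.} Here \cref{cor:interp_generic_mindeg} gives, as in \cref{prop:pmbasis:cx_generic}, that every intermediate basis at order \(k\) has degree at most \(\lceil (k+1)/2\rceil\). Hence \(\lenP_1\approx d/4\) and \(\lenP\approx d/2\), and the number of points extrapolated for \(\secondbasis\) is about \(d/4\).

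The key observation is that in the middle-product formulation of extrapolation, the second operand depends only on the points and on the index offsets. All four entries share the same input and output slices, since their lengths \(\lenP_{jk}\) are all about \(d/4\); padding them to a common length costs nothing asymptotically. So the four middle products reuse one fixed operand, and the constant \(\cstpolmulrep{4}\) applies. The first extrapolation costs \(\cstpolmulrep{4}\timepm{\approx 3d/8}\) and the second \(\cstpolmulrep{4}\timepm{\approx d/4}\), for a total of \(\frac{5\cstpolmulrep{4}}{8}\timepm{d}+\bigO{d}\). The same unrolling then gives the leading constant \(\frac{5\cstpolmulrep{4}}{8}\) in front of \(\timepm{\lceil d/2\rceil}\log_2(\lceil d/2\rceil)\).

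The main obstacle I expect is the bookkeeping of exact lengths. First, I must check that the \(+1\)'s in the \(\lenP_{jk}\), the ceilings, and the rule at \cref{algo:Eval-IntBasis2:lenP_mul} keep every argument within \(\lceil 3d/4\rceil\), respectively \(\approx 3d/8\) and \(d/4\), up to \(\bigO{1}\). Second, I must fit the generic lower-order constant to \(\frac{25\cstpolmulrep{4}}{16}\); for this I would sum the geometric series more carefully rather than rely on the crude \(+5\) from \cref{lem:nondecr_cst_factor_general}.
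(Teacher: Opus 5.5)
Your proposal follows the paper's proof. Only the two \textproc{BasisExtrapolate} calls matter. In the worst case, pairing the entries $(00,11)$ and $(01,10)$ with superlinearity and $\lenP-\lenP_2\le\lenP_1\le\dreca$ gives $4\timepm{\lceil 3d/4\rceil}\le 3\timepm{d}+\bigO{d}$ per level, and unrolling with the top term isolated gives $3+\frac{15}{2}=\frac{21}{2}$, exactly as in the paper. In the generic case, using common input/output slices lets the four middle products share one operand, so $\cstpolmulrep{4}$ applies. The worst case is complete as you wrote it, and the generic leading constant $\frac{5\cstpolmulrep{4}}{8}$ is right.

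The one thing to repair is the generic lower-order constant. The excess does not come from the $+5$ in \cref{lem:nondecr_cst_factor_general}; it comes from writing the toll as $\frac{5\cstpolmulrep{4}}{8}\timepm{d}$. Unrolling that toll as you did in the worst case (isolating $i=0$) adds an extra $\frac{5\cstpolmulrep{4}}{8}\timepm{d}$ on top of $\frac{25\cstpolmulrep{4}}{16}\timepm{d}$, giving $\frac{35\cstpolmulrep{4}}{16}$.

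The paper instead keeps the toll in terms of $\timepm{\lceil d/2\rceil}$. Since $\timepm{n}/n$ is nondecreasing, you have $\cstpolmulrep{4}\timepm{\lfloor(\lfloor d/4\rfloor+\lceil d/2\rceil)/2\rfloor}\le\frac{3\cstpolmulrep{4}}{4}\timepm{\lceil d/2\rceil}+\bigO{d}$ and $\cstpolmulrep{4}\timepm{\lceil d/4\rceil}\le\frac{\cstpolmulrep{4}}{2}\timepm{\lceil d/2\rceil}+\bigO{d}$. Hence $\Cxity{d}=2\Cxity{\lceil d/2\rceil}+\frac{5\cstpolmulrep{4}}{4}\timepm{\lceil d/2\rceil}+\bigO{d}$. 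Unrolling yields $\frac{5\cstpolmulrep{4}}{4}\sum_i 2^i\timepm{\lceil d/2^{i+1}\rceil}$, to which \cref{eqn:use_lem_cst_factor} applies with no isolated term. This gives $\frac{5\cstpolmulrep{4}}{8}\timepm{\lceil d/2\rceil}(\log_2(\lceil d/2\rceil)+5)$. Finally, $2\timepm{\lceil d/2\rceil}\le\timepm{d}+\bigO{d}$ turns $\frac{25\cstpolmulrep{4}}{8}\timepm{\lceil d/2\rceil}$ into exactly $\frac{25\cstpolmulrep{4}}{16}\timepm{d}$, so the lemma as stated suffices. This rewriting is unavailable in the worst case, where the argument $\lceil 3d/4\rceil$ exceeds $\lceil d/2\rceil$; that is why the top term must be isolated there.

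A small aside: your fallback that evaluating $\modulus$ might be an $\bigO{\timepm{d}}$ task ``dominated by the typical branch'' would not be harmless. The maximal-degree branch can be taken at every node, e.g.\ when $\bevs=0$ and $\aevs$ has no zero entry. However, the $\bigO{d}$ closed form you describe is correct: up to a constant factor, it is the $\sigma_u$ of \cref{lem:arithmetic_progression,lem:geometric_progression}, so nothing is lost.
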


\begin{proof}
  The two calls to \textproc{BasisExtrapolate} are the only steps that
  contribute to the terms in \(\timepm{d} \log_2(d)\) and \(\timepm{d}\).
  Consider the first such call at
  \cref{algo:Eval-IntBasis2:residual_extrapolate}. We first analyze the typical
  case \(\max(\ddegP_0,\ddegP_1) < \dreca\). Extrapolating the entry \((j,k)\)
  from \(\ptsinprange{\dinp-\lenP_{jk}:\dinp}\) of length \(\lenP_{jk}\) to
  \(\ptsout\) of length \(\drecb\) costs \(\timemp{\lenP_{jk}-1, \drecb} +
  \bigO{d}\) by \cref{lem:arithmetic_progression,lem:geometric_progression} (we
  ignore the possibility that \(\lenP_{jk}-1 = -1\), which does not impact the
  next arguments). In total, the call to \textproc{BasisExtrapolate} thus costs
  \[
    \sum_{jk \in \{00,01,10,11\}}
    \timepm{\left\lfloor \frac{\lenP_{jk}-1 + \lceil d/2 \rceil}{2}
    \right\rfloor} + \bigO{d}
    \;\;\subseteq\;\; 2\timepm{\left\lceil \frac{3d}{4} \right\rceil} + \bigO{d},
  \]
  according to \hyptime{hyp:timepm:midprod}, where the upper bound follows from
  the superlinearity of \(\timepm{\cdot}\) and from the bounds
  \(\lenP_{00}-1+\lenP_{11}-1 \le \dreca = \lfloor d/2 \rfloor\) and
  \(\lenP_{01}-1+\lenP_{10}-1 < \dreca\). Now, in the maximal degree
  case, say with \((\ddegP_0,\ddegP_1) = (\dreca,0)\) (the other case \((\ddegP_0,\ddegP_1) = (0,\dreca)\) is
  handled similarly), the evaluation of \(\modulus\) to obtain \((u_i)_{0 \le i
  < \drecb}\) is done in linear time \(\bigO{d}\), and the
  extrapolation of a single polynomial costs \(\timemp{\dreca-1, \drecb}\),
  which is within the above bound.
  A similar analysis for the second call to \textproc{BasisExtrapolate} shows
  that here too the typical case \(\max(\ddegQ_0,\ddegQ_1) < \drecb\) is
  costlier than the maximal degree cases, and that in this typical case the
  total cost is
  \begin{align*}
    & \sum_{jk \in \{00,01,10,11\}}
    \timepm{\left\lfloor \frac{\lenP_{jk}-1 + \lenP - \lenP_2}{2} \right\rfloor} + \bigO{d}
    \\
    & \quad \subseteq\;\;
    \timepm{\left\lceil \frac{d}{4} \right\rceil + \lenP_1}
    + \timepm{\left\lfloor \frac{d}{4} \right\rfloor + \lenP_1}
    + \bigO{d},
  \end{align*}
  with the \(\lenP_{jk}\)'s now related to the second basis \(\secondbasis\),
  and we used \(\lenP_{00}-1+\lenP_{11}-1 \le \drecb = \lceil d/2 \rceil\),
  \(\lenP_{01}-1+\lenP_{10}-1 < \drecb\), and \(\lenP - \lenP_2 \le \lenP_1\)
  to derive the right-hand side bound.

  With the worst-case bound \(\lenP_1 \le \lfloor d/2 \rfloor\), we get the
  same cost bound for the second extrapolation as for the first one,
  hence the complexity recurrence \(\Cxity{d} = 2\Cxity{\lceil d/2 \rceil} + 3
  \timepm{d} + \bigO{d}\). Using the same analysis as in the proof of
  \cref{prop:pmbasis:cx_worst_case},
  we obtain the general bound.

  Now, under genericity, we have the refined bounds \(\lenP_1 \le
  \lceil (\dreca+1)/2 \rceil = \lfloor d/4 \rfloor + 1\) and \(\lenP_2 \le
  \lceil (\drecb+1)/2 \rceil \le \lceil d/4 \rceil + 1\), as in the proof of
  \cref{prop:pmbasis:cx_generic}. Thanks to these bounds, we can assume that
  \textproc{BasisExtrapolate} enters the typical case for both computed bases.
  Similarly to what was done in \cref{prop:pmintbasis:geometric_complexity},
  we can use these uniform bounds for all matrix entries in order to exploit
  the constant \(\cstpolmulrep{4}\). The basis evaluation at
  \cref{algo:Eval-IntBasis2:residual_extrapolate} asks to extrapolate the four
  matrix entries of \(\evfirstbasis\) from the same
  \(\ptsinprange{\dinp-\lenP_{1}:\dinp}\) of length \(\le \lfloor d/4
  \rfloor+1\) to the same \(\ptsout\) of length \(\drecb\). This boils down to
  four middle products with a fixed operand, according to
  \cref{lem:arithmetic_progression,lem:geometric_progression}, which costs
  % \(\cstpolmulrep{4}\timemp{\lfloor d/4 \rfloor, \drecb} + \bigO{d}\)
  \(\cstpolmulrep{4}\timepm{\lfloor (\lfloor d/4 \rfloor + \drecb)/2 \rfloor} +
  \bigO{d}\) operations in \(\field\). Following the bounds in
  \cref{prop:pmintbasis:geometric_complexity}, this is in \(3\cstpolmulrep{4}/4
  \cdot \timepm{\lceil d / 2 \rceil} + \bigO{d}\). Similarly, the basis
  multiplication at \cref{algo:Eval-IntBasis2:basis_multiply} asks to 
  extrapolate the four matrix entries of \(\evsecondbasis\) from the same
  \(\lenP_2 \le \lceil d/4 \rceil + 1\) points to the same \(\lenP - \lenP_2
  \le \lenP_1 \le \lfloor d/4 \rfloor + 1\) points. Thanks to
  \cref{lem:arithmetic_progression,lem:geometric_progression} this costs
  % \(\cstpolmulrep{4}\timemp{\lceil d/4 \rceil, \lfloor d/4 \rfloor + 1} + \bigO{d}\)
  \(\cstpolmulrep{4}\timepm{\lfloor (\lceil d/4 \rceil + \lfloor d/4 \rfloor +
  1)/2 \rfloor} + \bigO{d}\). Since \(\lfloor (\lceil d/4 \rceil + \lfloor d/4
  \rfloor + 1)/2 \rfloor \le \lceil d/4 \rceil\), this is in 
  \(\cstpolmulrep{4}/2 \cdot \timepm{\lceil d/2 \rceil} + \bigO{d}\)
  operations in \(\field\). Altogether, we get the complexity recurrence
  \(\Cxity{d} = 2\Cxity{\lceil d/2 \rceil} + 5\cstpolmulrep{4}/4 \cdot
  \timepm{\lceil d/2 \rceil} + \bigO{d}\). Doing an analysis similar to the
  one in the proof of \cref{prop:pmbasis:cx_worst_case}, we get the second
  announced bound.
  \qed
\end{proof}

\subsubsection{Complexity bounds for radix-2 FFT points}
\label{sec:interp:eval_intbasis:fft}

Here, because we wish to call FFT extrapolation as described in
\cref{lem:radix2_fft}, we mostly focus on \(d\) being a power of
\(2\), and we list the FFT points in reverse order, \(\points =
(\ratio^{d-1-\bitrevlen{i}{d}})_{0 \le i < d} =
(\ratio^{\bitrevlen{d-1-i}{d}})_{0 \le i < d}\), to accommodate the fact that
\algoName{algo:Eval-IntBasis2} stores evaluations at points at the end of this
list. Thanks to this, during a call to this algorithm with input points
\(\points\), all extrapolations that occur fit in the framework of
\cref{lem:radix2_fft}. Indeed they have input points of the form
\(\points_{jk-\dinp:jk}\), where \(j, k, \dinp\) are such that \(0 \le jk < d\)
and \(k > 0\) is a power of \(2\) with \(\dinp \le k\), and
\(\points_{jk-\dinp:jk}\) consists of \(\dinp \le k\) consecutive FFT points
starting at a multiple of \(k\); more precisely,
\(\points_{jk-\dinp:jk} = (\ratio^{d - 1 - \bitrevlen{jk-m+i}{d}})_{i=0,1,\ldots,\dinp-1}
% = (\ratio^{\bitrevlen{d - 1 - jk+m-i}{d}})_{0 \le i < \dinp}
% = (\ratio^{\bitrevlen{(d/k - j)k + m - 1-i}{d}})_{0 \le i < \dinp}
= (\ratio^{\bitrevlen{(d/k - j)k + \bar\imath}{d}})_{\bar\imath =
\dinp-1,\ldots,1,0}\).

One recovers the cost bound for the general case, where \(d\) is not a power of
\(2\), by changing the splitting \(d = \dreca+\drecb\) for recursive calls:
we would now take \(\dreca\) as the largest power of \(2\) strictly less than
to \(d\). This is the other standard way to write the divide and conquer
recursion in such algorithms, from which it is also easily observed that the
overall leading constant in the cost bound remains the same as the leading
constant for the case where \(d\) is a power of \(2\); in the context of the
half-gcd algorithm, this is detailed in \cite[Sec.\,4.4]{vdHoeven2025}.

\begin{proposition}
  \label{prop:evalintbasis:fft_complexity}%
  Let \(\points = (\ratio^{\bitrev{d-1-i}})_{0\le i < d} \in \field^d\) be the
  reversed list of the first \(d\) points in a length-\(N\) list of radix-2 FFT
  points in bit-reversed order as in \cref{lem:radix2_fft}, for \(N\) some
  power of \(2\) such that \(N \ge d\). Consider the variant of
  \algoName{algo:Eval-IntBasis2} where one rather picks \(\dreca \gets
  2^{\lceil \log_2(d) \rceil - 1}\) which is the largest power of \(2\)
  less than \(d\). For any input values \((\aevs,\bevs) \in
  (\field^d)^2\) and any shift \(\ash\), the call
  \(\Call{algo:Eval-IntBasis2}{\points, \aevs, \bevs, \ash}\) costs
  \(
    \timepm{\lceil d/2 \rceil} \log_2(\lceil d/2 \rceil)
    + \bigO{d \log(d)}
  \)
  operations in \(\field\) in the worst case. If \(\ash \in \{-1,0,1\}\) and
  the genericity condition \(\bar\Gamma_{\points,\ash}(\aevs,\bevs) \neq 0\) of
  \cref{cor:interp_generic_mindeg} holds, this reduces to
  \(
    {\textstyle\frac{5}{6}} \timepm{\lceil d/2 \rceil}\log_2(\lceil d/2 \rceil)
    + \bigO{d\log(d)}
  \).
\end{proposition}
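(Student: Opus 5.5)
The plan is to reuse the structure of the proof of \cref{prop:evalintbasis:geometric_complexity}, replacing middle-product extrapolation costs by the FFT costs of \cref{lem:radix2_fft}, and then converting these into multiples of \(\timepm{\cdot}\).

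\textbf{Reduction to the power-of-two case.} First I treat \(d\) a power of \(2\), so that \(\dreca=\drecb=d/2\) at every level. The reversed bit-reversed ordering of \(\points\) makes every extrapolation input a block \(\points_{jk-\dinp:jk}\) of the form required by the third item of \cref{lem:radix2_fft}, as explained just before the statement. The general \(d\), with \(\dreca\) the largest power of \(2\) below \(d\), I would handle by the standard argument cited from \cite[Sec.\,4.4]{vdHoeven2025}: the leading constant is unchanged.

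\textbf{Cost per level.} In this FFT context polynomial multiplication is done by three transforms of size about \(2n\). So I would take \(\timepm{n} = 9n\log_2(n)+\bigO{n}\). Then an extrapolation from \(\dinp\) to \(\dout\) points, costing \(\frac32(\dinp+\dout)\log_2(\dinp+\dout)+\bigO{\dinp+\dout}\), is at most \(\frac{1}{6}(\dinp+\dout)\log_2 d\cdot 9 + \bigO{d}\). With \(\dinp+\dout\le d\), this is \(\frac{(\dinp+\dout)}{d}\cdot\frac13\timepm{d/2}+\bigO{d}\), up to lower-order terms. The pointwise products at \cref{algo:Eval-IntBasis2:residual_multiply,algo:Eval-IntBasis2:basis_multiply}, the evaluation of \(\modulus\) in the maximal-degree branch, and the base cases all cost \(\bigO{d}\) per level. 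So only the two calls to \textproc{BasisExtrapolate} matter. As in the geometric case, the maximal-degree branch involves a single extrapolation and is dominated by the typical branch.

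For the first call, the four entries go from \(\lenP_{jk}\) points to \(d/2\) points. The degree bounds \(\lenP_{00}+\lenP_{11}\le d/2+2\) and \(\lenP_{01}+\lenP_{10}\le d/2+1\) give a total size \(\sum_{jk}(\lenP_{jk}+d/2)\le 3d+\bigO{1}\). For the second call, the entries go from \(\lenP_{jk}\) points to \(\lenP-\lenP_2\le\lenP_1\) points, so the total size is at most \(d/2+4\lenP_1+\bigO{1}\).

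\textbf{Worst case.} Using \(\lenP_1\le d/2\), both totals are at most \(3d\). This yields \(\Cxity{d}=2\Cxity{d/2}+2\timepm{d/2}+\bigO{d}\). Unrolling as in \cref{prop:pmbasis:cx_worst_case} and applying \cref{lem:nondecr_cst_factor_general} gives \(2\cdot\frac12\timepm{d/2}\log_2(d/2)\), that is, constant \(1\).

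\textbf{Generic case.} Here \(\lenP_1,\lenP_2\le d/4+1\), exactly as in \cref{prop:pmbasis:cx_generic}, so \textproc{BasisExtrapolate} is always in the typical branch. The first call has total size \(4(d/4+d/2)=3d\), and the second has total size \(4(d/4+d/4)=2d\), up to \(\bigO{1}\). The recurrence becomes \(\Cxity{d}=2\Cxity{d/2}+\frac53\timepm{d/2}+\bigO{d}\), which gives \(\frac56\).

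\textbf{Main obstacle.} The delicate step is the uniform conversion of \(\frac32 k\log_2 k\) into a fraction of \(\timepm{d/2}\) across all recursion levels. The logarithm is \(\log_2(\dinp+\dout)\) rather than \(\log_2 d\), and \(\timepm{}\) is only an upper-bound function. I would bound \(\log_2(\dinp+\dout)\le\log_2 d\) at each node. Each node then contributes \(\bigO{d/2^i}\) excess, which sums to \(\bigO{d\log d}\) over the tree. Finally I would check that the bit-reversal index conditions \(\dinp\le k\) hold at every extrapolation, including the second one, whose input block has length \(\lenP_2\le d/2\) and ends at \(d\).
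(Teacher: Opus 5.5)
Your proposal is correct and follows the paper's proof essentially step for step. You restrict to \(d\) a power of \(2\), bound the total extrapolation size of the first call to \textproc{BasisExtrapolate} by about \(3d\) and that of the second by \(3d\) (worst case) or \(2d\) (generic), convert \(\frac{9}{2}d\log_2(d)\) into \(\timepm{d/2}+\bigO{d}\), and unroll \(\Cxity{d}=2\Cxity{d/2}+2\timepm{d/2}+\bigO{d}\) (resp.\ with \(\frac{5}{3}\timepm{d/2}\)) via \cref{lem:nondecr_cst_factor_general}, exactly as the paper does.

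One small slip: in the second call, the four \(\lenP_{jk}\) of \(\secondbasis\) sum to about \(2\drecb = d\), so the intermediate total should be \(d+4\lenP_1+\bigO{1}\) rather than \(d/2+4\lenP_1+\bigO{1}\). This does not affect your worst-case total \(3d\) or your directly computed generic total \(2d\), which are both right.
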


\begin{proof}
  As explained above, we focus on \(d\) being a power of \(2\), in which case
  \(\dreca = \drecb = d/2\), and all extrapolations that occur fit within the
  situation where the cost bound of \cref{lem:radix2_fft} is valid. Following
  the analysis done for \cref{prop:evalintbasis:geometric_complexity},
  from \cref{lem:radix2_fft}, we obtain that the first call to
  \textproc{BasisExtrapolate} costs
  \begin{equation}
    \label{eqn:evalintbasis:fftcomp:firstcall}%
    \sum_{jk \in \{00,01,10,11\}}
    {\textstyle\frac{3}{2}} (\lenP_{jk} + d/2) \log_2(\lenP_{jk} + d/2) + \bigO{d}
    \;\;\subseteq\;\;
    {\textstyle\frac{9}{2}} d \log_2(d) + \bigO{d}
  \end{equation}
% this is in \(\timepm{d/2} + \bigO{d}\).
  operations in \(\field\), whereas the second call costs
  \begin{align*}
    & \sum_{jk \in \{00,01,10,11\}}
    {\textstyle\frac{3}{2}} (\lenP_{jk} + \lenP - \lenP_2) \log_2(\lenP_{jk} + \lenP - \lenP_2) + \bigO{d}
    \\
    & \quad \subseteq\;\;
    {\textstyle\frac{3}{2}} (d + 4 \lenP_1) \log_2(d + 4 \lenP_1) + \bigO{d},
    % {\textstyle\frac{9}{2}} d \log_2(d) + \bigO{d},
  \end{align*}
  where we have used \(\lenP - \lenP_2 \le \lenP_1\).
  The worst case bound \(\lenP_1 \le d/2\) implies that the cost of the second
  call is also within \(\frac{9}{2} d \log_2(d) + \bigO{d}\). As noted in
  \cref{sec:interp:pmintbasis:fft}, this is \(\timepm{d/2} + \bigO{d}\) in
  this FFT context \cite[Thm.\,8.18]{GathenGerhard2013}. Thus we arrive at the
  complexity recurrence \(\Cxity{d} = 2\Cxity{d/2} + 2\timepm{d/2} +
  \bigO{d}\), which leads to the first bound in the statement.

  Now, under the genericity assumption, we can use the refined bound \(\lenP_1
  \le d/4 + 1\), showing that the second call to \textproc{BasisExtrapolate}
  costs \(\frac{2}{3}\timepm{d/2} + \bigO{d}\). This gives the same complexity
  recurrence as above with the term \(2\timepm{d/2}\) replaced by \(\frac{5}{3}
  \timepm{d/2}\), leading to the second announced bound.
  \qed
\end{proof}

\subsection{Further optimizations: deduce evaluations in linear cost}
\label{sec:interp:eval_intbasis_opti}

To conclude \cref{sec:interp}, we suggest two ideas that can lead to further
optimizations of \algoName{algo:Eval-IntBasis2}. We focus on describing these
optimizations and the gain they bring on the resulting leading constant in
complexity bounds. However, we leave as a perspective the study of genericity
aspects as well as the formal description of an algorithm incorporating
these optimizations. As outlined below, this algorithm would detect whether the
conditions allowing these optimizations are met and would otherwise switch to
the above version in \cref{algo:Eval-IntBasis2}. As can be inferred from the
explanations below, the detection step has low cost, since it only consists in
checking that some known evaluations are nonzero. Thus, we expect such an
optimized algorithm to never perform slower than \nameref{algo:Eval-IntBasis2}
called on the same input.

\subsubsection{First basis extrapolation: exploiting the known determinant.}

For generic input, the basis \(\firstbasis\) computed by the first recursive
call has determinant exactly \(\modulus_1 = \prod_{0 \le i < \dreca} (x -
\point_i)\); indeed, \(\det(\firstbasis)\) always divides \(\modulus_1\), and 
\cref{app:approx:genericity} shows that \(\deg(\det(\firstbasis)) = \dreca\)
generically. Furthermore, for points in geometric progression, or in arithmetic
progression, or for FFT points, the evaluations of \(\modulus_1\) at the second
half of points \(\points_{\dreca:d}\) can be determined in linear time, that
is, \(\bigO{d}\) operations in \(\field\). The first call to
\textproc{BasisExtrapolate} in \nameref{algo:Eval-IntBasis2} aims to compute
\(\firstbasis(\points_{\dreca:d})\), and can therefore exploit the knowledge of
\(\modulus_1(\points_{\dreca:d}) = \det(\firstbasis)(\points_{\dreca:d})\).
Indeed, we can deduce the evaluations of \(p_{11}\) from those of the three
other entries, using \(p_{11}(\point_i) = (\modulus_1(\point_{i}) +
p_{01}(\point_i) p_{10}(\point_i)) p_{00}(\point_i)^{-1}\), provided
\(p_{00}(\point_i) \neq 0\). Then, the algorithmic approach would be to use
extrapolations to find the needed values for the entries \(p_{00}\) and
\(p_{01}\); if there are zero values in both \(p_{00}(\points_{\dreca:d})\) and
\(p_{01}(\points_{\dreca:d})\), compute all extrapolations as done in
\nameref{algo:Eval-IntBasis2}; else if all values in
\(p_{00}(\points_{\dreca:d})\) are nonzero, extrapolate \(p_{10}\) and use the
above formula to deduce the values of \(p_{11}(\points_{\dreca:d})\) in linear
time; else if all values in \(p_{01}(\points_{\dreca:d})\) are nonzero,
extrapolate \(p_{11}\) and use the known determinant values to deduce those of
\(p_{10}(\points_{\dreca:d})\) in linear time. We expect the property
\(p_{00}(\point_i) \neq 0\) to hold throughout the whole recursion when the
algorithm has been called on generic input.

Assuming the genericity property in
\cref{prop:evalintbasis:geometric_complexity}, and assuming the conditions are
met to apply this optimization, here is the impact on the cost of the first
basis extrapolation. For FFT points, this is a plain 25\% saving for the cost
of the first call to \textproc{BasisExtrapolate}: in the relevant part of the
proof of \cref{prop:evalintbasis:fft_complexity}, the bound \(9/2 \cdot d
\log_2(d) + \bigO{d}\) in \cref{eqn:evalintbasis:fftcomp:firstcall} is reduced
to \(3/4 \cdot 9/2 \cdot d \log_2(d) + \bigO{d}\), which is \(3/4 \cdot
\timepm{d/2} + \bigO{d}\).
For points in arithmetic or geometric progression, this amounts to replacing
\(\cstpolmulrep{4}\) by \(\cstpolmulrep{3}\) in the last paragraph of the proof
of \cref{prop:evalintbasis:geometric_complexity}: the cost reduces from
\(3\cstpolmulrep{4}/4 \cdot \timepm{\lceil d / 2
\rceil} + \bigO{d}\) to \(3\cstpolmulrep{3}/4 \cdot \timepm{\lceil d / 2
\rceil} + \bigO{d}\). The latter is a 25\% saving in the general case with
\(\cstpolmulrep{3}=3\) and \(\cstpolmulrep{4} = 4\),
but a slightly smaller gain when better methods are available, typically
with \(\cstpolmulrep{3} = 7/3\) and \(\cstpolmulrep{4} = 3\).

\subsubsection{Second basis extrapolation: exploiting known relations.}

The basis \(\secondbasis\) computed by the second recursive call at
\cref{algo:Eval-IntBasis2:secondcall} of \nameref{algo:Eval-IntBasis2}
satisfies, by definition of interpolants,
\[
  \left\{
    \begin{array}{l}
  q_{00}(\point_{\dreca+i}) \arev_i + q_{01}(\point_{\dreca+i}) \brev_i = 0 \\
  q_{10}(\point_{\dreca+i}) \arev_i + q_{11}(\point_{\dreca+i}) \brev_i = 0
\end{array}
\right.
  \quad\text{for } 0 \le i < \drecb
  .
\]
This call only returns \(\lenP_2\) evaluations
\(\secondbasis(\points_{d-\lenP_2:d})\), and the call to
\textproc{BasisExtrapolate} then finds further evaluations
\(\secondbasis(\points_{d-\lenP:d-\lenP_2})\). For all those evaluations that
concern a point \(\point_{\dreca+i}\) in \(\points_{\dreca:d-\lenP_2}\), one
may use the above relations to deduce \(q_{01}(\point_{\dreca+i})\) and
\(q_{11}(\point_{\dreca+i})\) from \(q_{00}(\point_{\dreca+i})\) and
\(q_{10}(\point_{\dreca+i})\), provided that \(\brev_i \neq 0\). Starting from
generic input, we expect these values \(\arev_i\) and \(\brev_i\) to be nonzero
throughout the whole recursion.

The approach would be as follows: if there are zero values in both \(\arevs\)
and \(\brevs\), then compute all extrapolations as in
\nameref{algo:Eval-IntBasis2} (or try a determinant-based optimization as
above); else if all values in \(\brevs\) are nonzero, extrapolate \(q_{00}\)
and \(q_{10}\) and use the above formulas to deduce the evaluations of
\(q_{01}\) and \(q_{11}\); else if all values in \(\arevs\) are nonzero,
extrapolate \(q_{01}\) and \(q_{11}\) and use similar formulas to deduce the
evaluations of \(q_{00}\) and \(q_{10}\). While the actual extrapolations, say
of \(q_{00}\) and \(q_{10}\), determine all sought values at
\(\points_{d-\lenP:d-\lenP_2}\), deducing the remaining evaluations of
\(q_{01}\) and \(q_{11}\) is only feasible for points in
\(\points_{\dreca:d-\lenP_2}\), since the used formulas are not valid for
points in \(\points_{0:\dreca}\). Fortunately, one has \(\lenP \le \lceil
(d+1)/2 \rceil \le \drecb + 1\) in the generic case, so that \(d - \lenP \ge d
- (\drecb+1) = \dreca - 1\). This means that the above procedure misses at most
one value for \(q_{01}\) and \(q_{11}\), specifically the one at
\(\point_{\dreca-1}\). This value can be computed in linear time using the
formulas from \cref{lem:arithmetic_progression,lem:geometric_progression}.

Assuming the genericity property in
\cref{prop:evalintbasis:geometric_complexity}, and assuming the conditions are
met to apply this optimization, here is the impact on the cost of the second
basis extrapolation. For FFT points, this is a plain 50\% saving: the bound
\(2/3 \cdot \timepm{d/2} + \bigO{d}\) at the end of the proof of
\cref{prop:evalintbasis:fft_complexity} becomes \(1/3 \cdot \timepm{d/2} +
\bigO{d}\). For points in arithmetic or geometric progression, the bound
\(\cstpolmulrep{4}/2 \cdot \timepm{\lceil d / 2 \rceil} + \bigO{d}\) in the
last paragraph of the proof of \cref{prop:evalintbasis:geometric_complexity}
reduces to \(\cstpolmulrep{2}/2 \cdot \timepm{\lceil d / 2 \rceil} +
\bigO{d}\). The latter is a 50\% saving in the general case with
\(\cstpolmulrep{2}=2\) and \(\cstpolmulrep{4} = 4\), but a slightly smaller
gain when better methods are available, typically with \(\cstpolmulrep{2} =
5/3\) and \(\cstpolmulrep{4} = 3\).

\subsubsection{Cost bound when combining both optimizations.}

Here, we derive cost bounds in the most favorable case where both optimizations
can be applied at all stages of the algorithm. For points in geometric or
arithmetic progression, we get the complexity equation \(\Cxity{d} =
2\Cxity{\lceil d/2 \rceil} + (3\cstpolmulrep{3} + 2\cstpolmulrep{2})/4 \cdot
\timepm{\lceil d/2 \rceil} + \bigO{d}\). This yields the leading constant
\((3\cstpolmulrep{3} + 2\cstpolmulrep{2})/8\), which is always less than or
equal to \(13/8\), and is \(31/24\) if \(\cstpolmulrep{k} = (2k+1)/3\) is
feasible. For FFT points, the equation is \(\Cxity{d} = 2\Cxity{d/2} + (3/4 +
1/3) \timepm{d/2} + \bigO{d}\). This yields the leading constant
\(13/24\).

% Fakesection acknowledgements
%\begin{acknowledgements}
%If you'd like to thank anyone, place your comments here
%and remove the percent signs.
%\end{acknowledgements}

% Fakesection bibliography
%\bibliographystyle{spbasic}      % basic style, author-year citations
% \bibliographystyle{spmpsci}      % mathematics and physical sciences
%\bibliographystyle{spphys}       % APS-like style for physics

\appendix

\section{A sharp bound for divide and conquer approaches with two calls in half degree}
\label{app:dnc_factor_half}

The goal of this appendix is to give a sharp upper bound on \(\sum_{i=0}^{k-1} 2^i \timepm{n/2^i}\)
when \(k\) is roughly \(\log_2(n)\). This quantity often arises when analyzing divide and conquer
algorithms for univariate polynomials. This is the case for algorithms studied in this paper,
but also for example for fast multipoint evaluation or interpolation at general points
\cite[Chap.\,10]{GathenGerhard2013}: these algorithms have two recursive calls, each operating on
polynomials whose degree is half of the input degree. We first recall and prove a statement
from \cref{sec:approx:polmul_midprod}.

\begin{lemma}
  \label{lem:nondecr_cst_factor}%
  Assuming \cref{hyp:timepm:nondecr}, \(\sum_{i=0}^{k-1} 2^i \timepm{n/2^i}
  \leq \frac{1}{2}\timepm{n}(k+1)\) holds for any \(n = 2^k \in \ZZp\).
\end{lemma}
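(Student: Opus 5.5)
The plan is to bound each term using the monotonicity of \(\timepm{m}/(m\log_2 m)\) from \hyptime{hyp:timepm:nondecr}. Write \(n = 2^k\). For \(0 \le i < k\), the argument \(n/2^i = 2^{k-i}\) is at least \(2\), so \(\log_2(n/2^i) = k-i \ge 1\). Since \(n/2^i \le n\), the assumption gives
\[
  \frac{\timepm{n/2^i}}{(n/2^i)(k-i)} \le \frac{\timepm{n}}{n k},
\]
that is, \(2^i \timepm{n/2^i} \le \frac{k-i}{k}\timepm{n}\).

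Summing over \(0 \le i < k\) then yields
\[
  \sum_{i=0}^{k-1} 2^i \timepm{n/2^i} \le \frac{\timepm{n}}{k}\sum_{i=0}^{k-1}(k-i) = \frac{\timepm{n}}{k}\cdot\frac{k(k+1)}{2} = \frac{1}{2}\timepm{n}(k+1),
\]
which is exactly the claimed bound.

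The only point requiring care is the degenerate case \(k = 0\), i.e.\ \(n=1\). There the sum is empty and the inequality reads \(0 \le \frac12\timepm{1}\), which holds trivially. This case must be treated separately because the normalisation by \(\log_2(n) = 0\) is meaningless, and the monotonicity assumption itself only makes sense for arguments \(m \ge 2\). Everything else is routine; the mild subtlety is that \hyptime{hyp:timepm:nondecr} is applied only at arguments \(\ge 2\), which holds since \(i \le k-1\).
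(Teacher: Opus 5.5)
Your proof is correct and is essentially the paper's argument. The paper applies the monotonicity assumption only between consecutive levels $j$ and $j+1$, then multiplies these ratio inequalities so that they telescope to $2^i \timepm{n/2^i} \le (1 - i/k)\,\timepm{n}$. That is exactly the inequality you get in one step by comparing level $i$ directly with level $0$, and the final summation is identical. As a small bonus, your version covers $k=1$ uniformly and only needs $n=1$ as a separate case, whereas the paper sets aside both $n=1$ and $n=2$.
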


\begin{proof}
  The cases \(n = 1\) and \(n=2\) are clear, so we consider \(n>2\). Then \(k\ge
  2\) and \hyptime{hyp:timepm:nondecr} implies, for all \(j \in \{0,\ldots,k-2\}\),
  \[
    \frac{\timepm{n/2^{j+1}}}{\frac{n}{2^{j+1}}\log_2(n/2^{j+1})}
    \leq
    \frac{\timepm{n/2^{j}}}{\frac{n}{2^{j}}\log_2(n/2^{j})},
    \text{ hence }\;
    2 \frac{\timepm{n/2^{j+1}}}{\timepm{n/2^{j}}}
    \leq
    \frac{\log_2(n/2^{j+1})}{\log_2(n/2^{j})}
    .
  \]
  For a given \(i \in \{0, \ldots, k-1\}\), multiplying over all \(j \in
  \{0,\ldots,i-1\}\) the latter inequalities (of positive numbers) yields
  \[
    2^i \frac{\timepm{n/2^{i}}}{\timepm{n}}
    \leq
    \frac{\log_2(n/2^{i})}{\log_2(n)}
    = 1 - \frac{i}{k}.
  \]
  Summing these inequalities concludes the proof:
  \[
    \sum_{i=0}^{k-1} 2^i \timepm{\frac{n}{2^i}}
   \leq \sum_{i=0}^{k-1} \timepm{n} \left(1-\frac{i}{k} \right)
           = \frac{1}{2}\timepm{n}(k+1).
           \qquad\qed
  \]
\end{proof}

Now, we recall and prove \cref{lem:nondecr_cst_factor_general}, which is a
similar, more general statement involving an arbitrary number \(n\).

\begin{lemma}
  Let \(n \in \ZZp\) and \(k = \lfloor \log_2(n) \rfloor\). If
  \hyptime{hyp:timepm:nondecr} holds, then
  \[
    \sum_{i=0}^{k-1} 2^i \timepm{\left\lceil \frac{n}{2^i}\right\rceil}
    \;\;\le\;\; {\textstyle\frac{1}{2}}\timepm{n}(k+5).
  \]
\end{lemma}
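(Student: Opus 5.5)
The plan is to reduce to \cref{lem:nondecr_cst_factor} by comparing \(\lceil n/2^i\rceil\) with the power-of-two-based quantity \(m/2^i\), where \(m=2^k\le n<2^{k+1}\). The obstacle is that \(\lceil n/2^i\rceil\) is not \(n/2^i\). So we first bound each term using \hyptime{hyp:timepm:nondecr} in the form \(\timepm{a}\le \frac{a\log_2 a}{b\log_2 b}\timepm{b}\) for \(a\le b\). For \(i\le k-1\), set \(a_i=\lceil n/2^i\rceil\), so that \(a_i\ge 2\) and \(\log_2 a_i\ge 1\). We want to compare the term \(2^i\timepm{a_i}\) with \(\timepm{n}\) directly, imitating the telescoping argument of \cref{lem:nondecr_cst_factor} rather than passing through \(m\).

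Concretely, monotonicity of \(\timepm{\cdot}/(x\log_2 x)\) with \(a_i\le n\) gives
\[
2^i\timepm{a_i}\le 2^i\frac{a_i\log_2 a_i}{n\log_2 n}\timepm{n}.
\]
Now \(2^i a_i\le n+2^i\), so \(2^i a_i/n\le 1+2^i/n\). Also \(\log_2 a_i\le \log_2(n/2^i+1)\le \log_2(n/2^i)+2^i/(n\ln 2)\). Hence each term is at most
\[
\timepm{n}\Bigl(1+\tfrac{2^i}{n}\Bigr)\frac{\log_2 n - i + 2^i/(n\ln 2)}{\log_2 n}.
\]
Summing the main parts \(\sum_{i=0}^{k-1}(\log_2 n - i)/\log_2 n\) gives \(k - k(k-1)/(2\log_2 n)\). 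Since \(k\le\log_2 n<k+1\), this is at most \(k-(k-1)/2=(k+1)/2\) up to a bounded correction; more precisely we get \(\le \frac{k+1}{2}+\frac{k(k-1)}{2}\bigl(\frac1k-\frac1{\log_2 n}\bigr)\le\frac{k+1}{2}+\frac12\).

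The error parts are controlled by \(\sum_{i<k}2^i/n\le 2^k/n\le 1\). Since \((\log_2 n - i)/\log_2 n\le 1\), the cross term \((2^i/n)\cdot(\ldots)\) contributes at most \(1\) in total. The term \(2^i/(n\ln 2\log_2 n)\) likewise sums to at most \(1/(\ln 2\,\log_2 n)\cdot(1+1)\). That is at most about \(3\) when \(n\ge2\), and is bounded uniformly, since we only need \(n\ge 2\) (for \(n=1\) the sum is empty). Collecting everything gives a bound \(\frac12\timepm{n}(k+1+c)\) with a small absolute constant \(c\), and the remaining step is to check \(c\le 4\).

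The main obstacle is bookkeeping the constant to stay within \(+5\). The most delicate case is small \(n\), where \(1/\log_2 n\) is large. There I would handle \(n\le 3\) (so \(k\le1\)) directly: the sum is just \(\timepm{n}\le\frac12\timepm{n}(k+5)\). For \(n\ge4\), \(\log_2 n\ge2\) makes all error terms comfortably below \(2\) in total, giving the claim.
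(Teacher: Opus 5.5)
Your approach is essentially the paper's. You compare each term directly with $\mathsf{M}(n)$ through \hyptime{hyp:timepm:nondecr}, use $\lceil n/2^i\rceil\le n/2^i+1$, bound the resulting $x\log_2 x$ expression elementarily, and treat $n\le 3$ by hand. Your per-term inequality is correct. The gap is in the step you yourself flag as the crux: the constant check. Adding up the error bounds you actually state gives an excess over $\frac{k+1}{2}$ of at most $\frac12+1+\frac{2}{\ln 2\,\log_2 n}$. At $n=4$ this is about $2.94$, and it stays above $2$ until $\log_2 n\ge 4/\ln 2$, i.e.\ $n\ge 55$. Even with the sharper estimate $\sum_{i<k}4^i/n^2<\frac13$ for the product term, you get about $2.46$ at $n=4$. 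So ``for $n\ge 4$ all error terms are comfortably below $2$'' does not follow from your bounds. Since the whole content of the lemma is the constant $5$, the proof is incomplete as written.

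The repair can stay within your framework. Bounding the cross term by $1$ is too lossy, because the weight $(\log_2 n-i)/\log_2 n$ is small exactly where $2^i/n$ is large. Write $\log_2 n=k+\theta$ with $0\le\theta<1$ and $j=k-i$. Then $\sum_{i<k}\frac{2^i}{n}(\log_2 n-i)=2^{-\theta}\sum_{j=1}^{k}2^{-j}(j+\theta)<2^{-\theta}(2+\theta)\le 2$, so the cross term is below $2/\log_2 n$. Similarly, the logarithmic error terms total below $\frac{4}{3\ln 2\,\log_2 n}$. Keep your exact main correction $\frac{(k-1)\theta}{2\log_2 n}$ instead of bounding it by $\frac12$. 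The excess is then below $\frac{(k-1)\theta+c}{2(k+\theta)}$ with $c=2\bigl(2+\frac{4}{3\ln 2}\bigr)\approx 7.85$. This is at most $2$ for every $k\ge 2$, because $4k+(5-k)\theta\ge 8>c$.

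The paper avoids this bookkeeping altogether. It bounds $(\frac{n}{2^i}+1)\log_2(\frac{n}{2^i}+1)$ by the convexity chord of $\lambda\mapsto\lambda\log_2\lambda$ between $2$ and $n/2^{i-1}$, namely $1+\frac{n}{2^i}(\log_2 n+1-i)$. This sums in closed form, and then $2^k-1<n$ and $k-1\ge\log_2 n-2$ give $\frac{1}{\log_2 n}+\frac{k}{2}+2\le\frac{k+5}{2}$ at once.
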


\begin{proof}
  The cases \(n \in \{1, 2, 3\}\) are easily verified; in what follows,
  \(n \ge 4\) and \(k \ge 2\).
  % NOTE some details:
  % \(n = 1\): \(0 \le 5\timepm{1}\) \\
  % \(n = 2\): \(\timepm{2} \le \frac{1}{2} \timepm{2} (1+5)\) \\
  % \(n = 3\): \(\timepm{3} \le \frac{1}{2} \timepm{3} (1+5)\) \\
  Following the beginning of the proof above, while incorporating suitable
  ceilings, leads to
  % NOTE some details:
  % for all \(j \in \{0,\ldots,k-2\}\),
  % \[
  %   \frac{\timepm{\lceil n/2^{j+1} \rceil}}{\lceil \frac{n}{2^{j+1}} \rceil\log_2(\lceil n/2^{j+1} \rceil)}
  %   \leq
  %   \frac{\timepm{\lceil n/2^{j} \rceil}}{\lceil \frac{n}{2^{j}} \rceil\log_2(\lceil n/2^{j} \rceil)},
  %   \text{ hence }\;
  %   \frac{\lceil \frac{n}{2^{j}} \rceil \timepm{\lceil n/2^{j+1} \rceil}}{\lceil \frac{n}{2^{j+1}} \rceil\timepm{\lceil n/2^{j} \rceil}}
  %   \leq
  %   \frac{\log_2(\lceil n/2^{j+1} \rceil)}{\log_2(\lceil n/2^{j} \rceil)}
  %   ,
  % \]
  % and multiplying these for \(j \in \{0,\ldots,i-1\}\), we deduce
  \[
    2^i \timepm{\lceil n/2^{i} \rceil}
    \leq
    \frac{\timepm{n}}{n\log_2(n)}
    2^i \left\lceil \frac{n}{2^{i}} \right\rceil\log_2\!\left(\left\lceil \frac{n}{2^{i}} \right\rceil\right)
    \le
    \frac{\timepm{n}}{n\log_2(n)}
    2^i \left( \frac{n}{2^{i}} + 1 \right)\log_2\!\left( \frac{n}{2^{i}} + 1 \right),
  \]
  for all \(i \in \{0, \ldots, k-1\}\). Now, since the real-valued function
  \(\lambda \in [1,+\infty) \mapsto \lambda \log_2(\lambda)\) is convex,
  \[
    \left( \frac{n}{2^{i}} + 1 \right) \log_2\!\left( \frac{n}{2^{i}} + 1\right)
    \le
    \frac{1}{2}
    \left( \frac{n}{2^{i-1}} \log_2\!\left( \frac{n}{2^{i-1}}\right)
    + 2 \log_2(2) \right)
    =
    1 + n (\log_2(n) + 1 - i) / 2^{i} .
  \]
  Combining this with the previous inequality and summing over \(i \in \{0, \ldots, k-1\}\), we get
  \begin{align*}
    \sum_{i = 0}^{k-1} 2^i \timepm{\lceil n/2^{i} \rceil}
    & \leq \frac{\timepm{n}}{n\log_2(n)} \left( \sum_{i = 0}^{k-1} 2^i + n (\log_2(n) + 1 - i)\right) \\
    & = \frac{\timepm{n}}{n\log_2(n)} \left( 2^k - 1 + n \left((\log_2(n) + 1) k - \frac{k(k-1)}{2}\right)\right).
  \end{align*}
  Using \(k-1 \ge \log_2(n) - 2 \ge 0\), we get \((\log_2(n) + 1) k -
  \frac{k(k-1)}{2} \le k (2 + \log_2(n) / 2)\), and on the other hand we have
  \(2^k - 1 < n\). Hence
  \[
    \frac{1}{n\log_2(n)} \left( 2^k - 1 + n \left((\log_2(n) + 1) k - \frac{k(k-1)}{2}\right)\right)
    \le \frac{1}{\log_2(n)} + \frac{k}{2} + 2
    \le \frac{1}{2} (k+5),
  \]
  and the conclusion follows.
  \qed
\end{proof}

\section{Degrees of approximant bases and interpolant bases for generic input}
\label{app:approx:genericity}

For any integers \(d \in \NN\) and \(\ash \in \ZZ\), define
the pair \(\delta_{d,\ash} \in \{0,1,\ldots,d\}^2\) as
\begin{equation}
  \label{eqn:generic_mindeg}%
  \delta_{d,\ash}
  =
  \left\{
    \begin{array}{ll}
      (d,0) & \text{if } \ash \le -d, \\
      (0,d) & \text{if } \ash \ge d, \\
      (\lceil (d-\ash)/2 \rceil, \lfloor (d+\ash)/2 \rfloor) & \text{if } -d \le \ash \le d.
    \end{array}
  \right.
  %% NOTE
  %%  \delta_0 = \max(0,\min(d,\lceil (d-\ash)/2 \rceil))
  %%  \text{ and }
  %%  \delta_1 = \max(0,\min(d,\lfloor (d+\ash)/2 \rfloor))
\end{equation}
The following theorem states that, generically, modules of approximants at
order \(d\) and modules of interpolants at \(d\) distinct points have
\(\ash\)-minimal degree \(\delta_{d,\ash}\).

\begin{theorem}
  \label{thm:generic_mindeg}%
  Consider an integer \(d \in \NN\) and a shift \(\ash \in \ZZ\).
  \begin{itemize}[noitemsep,topsep=2pt]
    \item There exists a polynomial \(\Delta_{d,\ash}\) in \(2d\) variables
      over \(\field\), nonzero and of total degree at most \(d\), such that for
      all polynomials \(\apo =
      \apo_0 + \apo_1 x + \cdots + \apo_{d-1} x^{d-1}\) and \(\bpo = \bpo_0 +
      \bpo_1 x + \cdots + \bpo_{d-1} x^{d-1}\) in \(\pR\), the \(\pR\)-module
      \(\appmod{d}{\apo,\bpo}\) has \(\ash\)-minimal degree \(\delta_{d,\ash}\)
      if and only if
      \(\Delta_{d,\ash}(\apo_0,\ldots,\apo_{d-1},\bpo_0,\ldots,\bpo_{d-1}) \neq 0\).
    \item Let \(\points\in\field^d\) be pairwise distinct. There exists a polynomial
      \(\Gamma_{\points,\ash}\) in \(2d\) variables over \(\field\), nonzero and
      of total degree at most \(d\), such that for all tuples \(\aevs =
      (\aev_0, \aev_1, \ldots, \aev_{d-1})\) and \(\bevs = (\bev_0,
      \bev_1, \ldots, \bev_{d-1})\) in \(\field^d\), the \(\pR\)-module
      \(\intmod{\points}{\aevs,\bevs}\) has \(\ash\)-minimal degree \(\delta_{d,\ash}\)
      if and only if
      \(\Gamma_{\points,\ash}(\aev_0,\ldots,\aev_{d-1},\bev_0,\ldots,\bev_{d-1}) \neq
      0\).
  \end{itemize}
\end{theorem}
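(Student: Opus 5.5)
We will encode the \(\ash\)-minimal degree condition as the nonsingularity of a structured linear system, and take \(\Delta_{d,\ash}\) (resp.\ \(\Gamma_{\points,\ash}\)) to be a suitable product of minors of that system.

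First dispose of the extreme shifts. If \(\ash \le -d\), then \(\delta_{d,\ash}=(d,0)\), and this is the minimal degree exactly when \(\valuation{\apo}=0\), by \cref{lem:appbas:opti:shift}; so we take \(\Delta_{d,\ash}=\apo_0\). If \(\ash\ge d\), we take \(\Delta_{d,\ash}=\bpo_0\). In the interpolant case the same reasoning, via \cref{lem:intbasis_extremaldegree} and the Hermite forms of \cref{lem:relbas_hermite_bases}, gives \(\Gamma=\prod_i \aev_i\) or \(\prod_i\bev_i\), which has degree \(d\).

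Now assume \(-d<\ash<d\) and set \((\delta_0,\delta_1)=\delta_{d,\ash}\), so that \(\delta_0+\delta_1=d\). The key linear-algebra fact is as follows. The minimal degree equals \((\delta_0,\delta_1)\) if and only if the module contains no nonzero \((p,q)\) with \(\deg p<\delta_0\) and \(\deg q<\delta_1\) (up to the appropriate strict/non-strict convention dictated by the pivot rule).

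To see this, note that the minimal degree \((\rho_0,\rho_1)\) always satisfies \(\rho_0+\rho_1=\deg\det\le d\) by \cref{cor:determinant_relmod}. If \((\rho_0,\rho_1)\ne(\delta_0,\delta_1)\), then one of the two basis rows has its \(\ash\)-pivot degree strictly below the balanced value. Using the weak Popov degree bounds of \cref{lem:degree_bounds_2x2}, that row fits in the box \(\deg p<\delta_0\), \(\deg q<\delta_1\) (or the shifted variant \(\deg p\le\delta_0-1\), \(\deg q\le \delta_1-1\)). Conversely, a nonzero element in the box would have \(\ash\)-pivot degree below the corresponding minimal degree, contradicting the minimality property recalled in \cref{sec:prelim:polmat}. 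Carrying out this bookkeeping correctly for the two parities of \(d+\ash\), with the ceilings and floors in \(\delta_{d,\ash}\), is where I expect the main care to be needed.

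Granting that equivalence, the condition "no nonzero element in the box" says that the \(d\times d\) linear map \((p,q)\mapsto (p\apo+q\bpo)\rem x^d\), restricted to \(\pR_{<\delta_0}\times\pR_{<\delta_1}\), is injective. Its matrix is the block Toeplitz (Sylvester-like) matrix \([T_\apo\;T_\bpo]\), with columns made of shifted coefficient vectors of \(\apo\) and \(\bpo\). This matrix is square of size \(d\) and linear in the coefficients. We therefore define \(\Delta_{d,\ash}\) as its determinant, which has total degree at most \(d\). For interpolants, the map is \((p,q)\mapsto (\aev_i p(\point_i)+\bev_i q(\point_i))_i\), with matrix \([\mathrm{diag}(\aevs)V_{\delta_0}\;\mathrm{diag}(\bevs)V_{\delta_1}]\) built from Vandermonde blocks, and we define \(\Gamma_{\points,\ash}\) as its determinant.

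It remains to check that these determinants are nonzero polynomials, which we do by exhibiting a single specialization. For approximants, take \(\apo=1\) and \(\bpo=x^{\delta_0}\): the matrix becomes a permutation matrix. For interpolants, take \(\aevs=(1,\dots,1,0,\dots,0)\) with \(\delta_0\) ones and \(\bevs\) the complementary indicator. The matrix is then block triangular with two square Vandermonde blocks on distinct points, hence invertible.
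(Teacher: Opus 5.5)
Your proof is correct. It produces the same polynomial as the paper, but it justifies the key equivalence by a different route.

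The matrix \([T_\apo\;T_\bpo]\) of \((p,q)\mapsto (p\apo+q\bpo)\rem x^d\) on \(\pR_{<\delta_0}\times\pR_{<\delta_1}\) is, up to transposition and a row permutation, the top \(d\times d\) block \(D\) of the striped Krylov matrix in \cref{lem:approx_generic_mindeg_exists}. So your \(\Delta_{d,\ash}\) is the paper's up to sign.

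The paper imports the equivalence from Jeannerod--Neiger--Schost--Villard. There, \(\det D\neq 0\) holds iff the Krylov matrix has generic row rank profile, iff the minimal exponents \((\mu_0,\mu_1)\) equal \(\delta_{d,\ash}\), and these exponents are the \(\ash\)-minimal degree. You instead prove directly that the \(\ash\)-minimal degree \((\rho_0,\rho_1)\) equals \((\delta_0,\delta_1)\) iff the module meets the box only in zero. This is self-contained, and it applies verbatim to interpolants, since only \(\rho_0+\rho_1\le d\) and the weak Popov minimality are used.

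The bookkeeping you flagged does close, with exactly the box \(\deg p<\delta_0\), \(\deg q<\delta_1\) in both parities. However, the inequalities you need are the defining ones of the \(\ash\)-weak Popov form, not \cref{lem:degree_bounds_2x2}. For \(-d<\ash<d\) one has \(\delta_0+\ash-\delta_1\in\{0,1\}\). So for an \(\ash\)-weak Popov basis \([p_{ij}]\), \(\rho_0\le\delta_0-1\) gives \(\deg p_{01}\le\rho_0+\ash-1\le\delta_1-1\). Likewise \(\rho_1\le\delta_1-1\) gives \(\deg p_{10}\le\rho_1-\ash\le\delta_0-1\).

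Your nonzeroness witnesses are simpler than the paper's \cref{lem:approx_generic_mindeg:nonzero,lem:interp_generic_mindeg:nonzero}: the identity matrix from \((1,x^{\delta_0})\), and a block-diagonal Vandermonde matrix from the indicator vectors. The paper instead exhibits explicit weak Popov bases, which shows that any characterizing polynomial is nonzero, independently of the Krylov construction.

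Two loose ends remain:
\begin{itemize}
\item For \(\ash\le -d\), \cref{lem:appbas:opti:shift} gives only one direction. For the converse, note that \((x^{d-1},0)\) lies in the module when \(\apo_0=0\). Alternatively, your box argument also covers this case (one sub-case is vacuous) and yields \(\apo_0^d\), as in the paper.
\item For \(d=0\) there are no variables, so take \(\Delta_{0,\ash}=\Gamma_{\points,\ash}=1\).
\end{itemize}
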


We prove this result in two steps: in \cref{sec:approx:genericity:existence},
we show the existence of such polynomials \(\Delta_{d,\ash}\) and
\(\Gamma_{\points,\ash}\) with all the above properties barring nonzeroness;
before that, in \cref{sec:approx:genericity:nonzero} we prove that any such
\(\Delta_{d,\ash}\) or \(\Gamma_{\points,\ash}\) must be nonzero by giving an
explicit point in \(\field^{2d}\)
% \((\apo_0,\ldots,\apo_{d-1},\bpo_0,\ldots,\bpo_{d-1}) \in \field^{2d}\)
where it does not vanish.
While we give all details concerning \(\Delta_{d,\ash}\), the case of
\(\Gamma_{\points,\ash}\) is only sketched for conciseness, as the arguments
are very similar.
Before turning to this proof, we state basic properties of \(\delta_{d,\ash}\)
and a consequence of the above theorem. First, observe that the two entries of
\(\delta_{d,\ash}\) sum to \(d\) in all cases.
%%% NOTE keep for us in case
%% Let \((\delta_0,\delta_1)\) be the pair \(\delta_{d,\ash}\).
%% i.e.\ \(\delta_0 + \delta_1 = d\). Furthermore, when \(-d \le \ash \le d\), one
%% has \(\delta_{d+1,\ash} = (\delta_0,\delta_1+1)\) if \(d-\ash\) is
%% odd, while \(\delta_{d+1,\ash} = (\delta_0+1,\delta_1)\) if
%% \(d-\ash\) is even.
The next lemma highlights a compatibility between \(\delta_{d,\ash}\) and the
general recursive approach outlined in \cref{lem:dnc_relbas}.

\begin{lemma}
  \label{lem:approx_generic_mindeg}
  Consider integers \(\dreca, \drecb \in \NN\) and a shift \(\ash \in
  \ZZ\). Write \((\delta_0,\delta_1) = \delta_{\dreca,\ash}\) and define
  \(\ashr = \ash+\delta_0 - \delta_1\). Then
  \(\delta_{\dreca+\drecb,\ash} = \delta_{\dreca,\ash} +
  \delta_{\drecb,\ashr}\), where the sum is entry-wise.
\end{lemma}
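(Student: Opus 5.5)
This is a direct case analysis on the definition of \(\delta_{d,\ash}\) in \cref{eqn:generic_mindeg}. Set \(d = \dreca+\drecb\) and \((\delta_0,\delta_1) = \delta_{\dreca,\ash}\). The first step is to compute \(\ashr = \ash + \delta_0 - \delta_1\) in each of the three regimes for \(\ash\) relative to \(\dreca\):
\begin{itemize}[noitemsep,topsep=2pt]
  \item if \(\ash \le -\dreca\), then \(\ashr = \ash + \dreca\);
  \item if \(\ash \ge \dreca\), then \(\ashr = \ash - \dreca\);
  \item if \(-\dreca \le \ash \le \dreca\), then \(\ashr = \ash + \lceil (\dreca-\ash)/2 \rceil - \lfloor (\dreca+\ash)/2 \rfloor\).
\end{itemize}
In the third case, \(\dreca-\ash\) and \(\dreca+\ash\) have the same parity, so \(\ashr\) is \(0\) when \(\dreca+\ash\) is even and \(1\) when it is odd.

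For the extreme regime \(\ash \le -\dreca\), we have \(\delta_{\dreca,\ash} = (\dreca,0)\), and the claim becomes \(\delta_{d,\ash} = (\dreca,0) + \delta_{\drecb,\ash+\dreca}\). I would split according to where \(\ash\) sits relative to \(-d\). If \(\ash \le -d\), then \(\ashr = \ash+\dreca \le -\drecb\), so both sides equal \((d,0)\). If \(-d \le \ash \le -\dreca\), then \(-\drecb \le \ashr \le 0\). The right-hand side is then \((\dreca + \lceil(\drecb-\ash-\dreca)/2\rceil,\ \lfloor(\drecb+\ash+\dreca)/2\rfloor)\), which equals \((\lceil(d-\ash)/2\rceil, \lfloor(d+\ash)/2\rfloor)\) because the integer \(\dreca\) can be pulled inside the ceiling. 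The regime \(\ash \ge \dreca\) is symmetric, with the roles of the two entries exchanged. The boundary values \(\ash = \pm\dreca\), where two formulas of \cref{eqn:generic_mindeg} apply, are consistent, so the overlap causes no difficulty.

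The main step is the middle regime \(-\dreca \le \ash \le \dreca\). Here \(\ashr \in \{0,1\}\) and \(-d \le \ash \le d\), so the left-hand side is given by the middle formula. For the right-hand side, I first treat \(\drecb \ge 1\), so that \(|\ashr| \le \drecb\) and \(\delta_{\drecb,\ashr} = (\lceil(\drecb-\ashr)/2\rceil, \lfloor(\drecb+\ashr)/2\rfloor)\). Then I verify the identity
\[
\lceil(\dreca-\ash)/2\rceil + \lceil(\drecb-\ashr)/2\rceil = \lceil(d-\ash)/2\rceil
\]
by a parity check on \(\dreca+\ash\) and \(\drecb\), giving four subcases. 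For example, when \(\dreca+\ash\) is odd, \(\ashr = 1\); the first term is \((\dreca-\ash+1)/2\) and the second is \(\lceil(\drecb-1)/2\rceil\), and their sum is checked directly. The second entry then follows, since both sides sum to \(d\): the entries of every \(\delta\) sum to its order. The remaining case \(\drecb = 0\) is trivial, because \(\delta_{0,\ashr} = (0,0)\) by any of the formulas.

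The main obstacle is only bookkeeping: keeping track of the floors and ceilings and of the overlapping boundary cases. The parity subcase check in the middle regime is where errors are most likely, so I would write those four subcases out explicitly.
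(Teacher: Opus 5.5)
Your proof is correct, but it is organized differently from the paper's.

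\textbf{How the paper proceeds.} The paper splits on the position of \(\ash\) relative to the total order \(d=\dreca+\drecb\), not relative to \(\dreca\). The extreme cases \(\ash\le -d\) and \(\ash\ge d\) are handled as you do. For \(-d\le \ash\le d\), it uses one identity, \(d-\ash=\drecb-\ashr+2\delta_0\). This holds in every regime, because \(\ashr=\ash+\delta_0-\delta_1\) and \(\delta_0+\delta_1=\dreca\). It gives \(\lceil (d-\ash)/2\rceil=\lceil(\drecb-\ashr)/2\rceil+\delta_0\), and the analogous floor identity with \(\delta_1\), whenever \(-\drecb\le\ashr\le\drecb\). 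The remaining work is a short argument by contradiction that \(\ashr\) cannot leave \([-\drecb,\drecb]\). That argument in effect recovers your explicit values: \(\ashr\in\{0,1\}\) in the middle regime, and \(\ashr=\ash+\dreca\) in the low regime.

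\textbf{Comparison.} Your split on \(\dreca\) makes the range of \(\ashr\) immediate and needs no contradiction. The paper's split on \(d\) handles all non-extreme cases with one degree identity instead of regime-by-regime checks.

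\textbf{Simplification.} Your four parity subcases in the middle regime are unnecessary. The trick you already use in the extreme regimes, pulling an integer into the ceiling, works there too: \(\delta_0+\lceil(\drecb-\ashr)/2\rceil=\lceil(\drecb-\ashr+2\delta_0)/2\rceil=\lceil(d-\ash)/2\rceil\). This is exactly the paper's identity.
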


\begin{proof}
  %% NOTE This is a consequence of \cref{thm:generic_mindeg,lem:dnc_relbas}, but
  %% can also be proved concisely, as follows.
  Suppose first \(\ash \le -(\dreca+\drecb)\). By definition,
  \(\delta_{\dreca+\drecb,\ash} = (\dreca+\drecb,0)\) and
  \(\delta_{\dreca,\ash} = (\dreca,0)\). The latter implies \(\ashr
  = \ash+\delta_0-\delta_1 = \ash + \dreca \le -\drecb\), which
  ensures \(\delta_{\drecb,\ashr} = (\drecb,0)\) by definition again;
  this proves the sought identity. The proof is similar for the case
  \(\ash \ge \dreca+\drecb\). Now, in the rest of this proof, assume
  \(- (\dreca+\drecb) \le \ash \le \dreca+\drecb\). Observe that, since
  \(\delta_0+\delta_1 = \dreca\), we have \(\dreca + \drecb - \ash =
  \drecb - \ashr + 2\delta_0\), hence \(\lceil (\dreca + \drecb - \ash) / 2 \rceil
  = \lceil (\drecb - \ashr) / 2 \rceil + \delta_0\);
  similarly, \(\lfloor (\dreca + \drecb + \ash) / 2 \rfloor = \lfloor
  (\drecb + \ashr) / 2 \rfloor + \delta_1\). This shows the identity in
  the case \(-\drecb \le \ashr \le \drecb\). To conclude the proof, we
  show that our current assumptions forbid both \(\ashr < -\drecb\) and
  \(\ashr > \drecb\). Suppose by contradiction \(\ashr < -\drecb\), and
  further, \(\ash \ge -\dreca\). Since \(\ashr = \ash + \delta_0 - \delta_1 \le
  0\), we have \(\ash \le \delta_1 \le \dreca\). Thus \(-\dreca \le \ash \le \dreca\),
  hence
  \[
  \ashr = \ash + \lceil (d-\ash)/2 \rceil - \lfloor (d+\ash)/2 \rfloor
  = \lceil (d+\ash)/2 \rceil - \lfloor (d+\ash)/2 \rfloor
  \in \{0 , 1\}.
  \]
  Yet, having \(\ashr \in \{0 , 1\}\) is absurd since \(\ashr < -\drecb \le 0\).
  Therefore \(\ash < -\dreca\). So
  \((\delta_0,\delta_1) = (\dreca,0)\) and \(\ash = \ashr +
  \delta_1 - \delta_0 = \ashr - \dreca < -(\dreca+\drecb)\), which is
  again absurd. Therefore \(\ashr \ge -\drecb\). The same reasoning
  shows that \(\ashr \le \drecb\).
  \qed
\end{proof}

\begin{corollary}
  \label{cor:approx_generic_mindeg}%
  Consider an integer \(d \in \NN\) and a shift \(\ash \in \ZZ\). There exists
  a polynomial \(\bar\Delta_{d,\ash}\) in \(2d\) variables over \(\field\),
  nonzero and of total degree at most \(d(d+1)/2\), such that for all
  \(\apo = \apo_0 + \apo_1 x + \cdots + \apo_{d-1} x^{d-1}\) and \(\bpo =
  \bpo_0 + \bpo_1 x + \cdots + \bpo_{d-1} x^{d-1}\) in \(\pR\), if
  \(\bar{\Delta}_{d,\ash}(\apo_0,\ldots,\apo_{d-1},\bpo_0,\ldots,\bpo_{d-1})
  \neq 0\) then for all \(0 \le k \le d\) the \(\pR\)-module
  \(\appmod{k}{\apo,\bpo}\) has \(\ash\)-minimal degree
  \(\delta_{k,\ash}\).

  Assume
  \(\bar{\Delta}_{d,\ash}(\apo_0,\ldots,\apo_{d-1},\bpo_0,\ldots,\bpo_{d-1})
  \neq 0\). Let \(0 \le \dreca \le d\), let \(
  [\begin{smallmatrix}
    p_{00} & p_{01} \\
    p_{10} & p_{11}
  \end{smallmatrix}]
  \in \mpR{2}{2} \) be an \(\ash\)-weak Popov basis of
  \(\appmod{\dreca}{\apo,\bpo}\), and let \(\ashr = \ash+\deg(p_{00}) -
  \deg(p_{11})\). Then, for any \(0 \le \drecb \le d-\dreca\),
  the polynomials \(\apor = (x^{-\dreca}(p_{00} \apo + p_{01} \bpo)) \rem
  x^{\drecb}\) and \(\bpor = (x^{-\dreca}(p_{10} \apo + p_{11} \bpo) \rem
  x^{\drecb}\) are such that for all \(0 \le k \le \drecb\) the \(\pR\)-module
  \(\appmod{k}{\apor,\bpor}\) has \(\ashr\)-minimal degree
  \(\delta_{k,\ashr} = \delta_{\dreca+k,\ash} - \delta_{\dreca,\ash}\).
\end{corollary}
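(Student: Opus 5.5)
The plan is to define \(\bar\Delta_{d,\ash}\) as the product \(\prod_{k=1}^{d} \Delta_{k,\ash}\), where each \(\Delta_{k,\ash}\) is the polynomial from \cref{thm:generic_mindeg} for order \(k\). We view \(\Delta_{k,\ash}\) as a polynomial in the first \(k\) coefficients of \(\apo\) and \(\bpo\), hence in the \(2d\) variables. This is legitimate because \(\appmod{k}{\apo,\bpo}=\appmod{k}{\apo\rem x^k,\bpo\rem x^k}\). For \(k=0\) the minimal degree is trivially \((0,0)=\delta_{0,\ash}\), so no factor is needed there.

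The product has the required properties.
\begin{itemize}[noitemsep,topsep=2pt]
  \item It is nonzero, since \(\field[\text{vars}]\) is an integral domain and each factor is nonzero.
  \item Its total degree is at most \(\sum_{k=1}^d k = d(d+1)/2\).
  \item Its nonvanishing at the coefficients of \((\apo,\bpo)\) forces every factor \(\Delta_{k,\ash}\) to be nonzero there. \Cref{thm:generic_mindeg} then gives that \(\appmod{k}{\apo,\bpo}\) has \(\ash\)-minimal degree \(\delta_{k,\ash}\) for every \(0\le k\le d\).
\end{itemize}
This settles the first claim.

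For the second claim, fix \(\dreca\), the basis \(\basis=[p_{ij}]\) and the residuals \((\apor,\bpor)\). By the first claim, \((\deg p_{00},\deg p_{11}) = \delta_{\dreca,\ash}\), so \(\ashr = \ash + \delta_0 - \delta_1\) matches the shift in \cref{lem:approx_generic_mindeg}. Now take \(0\le k\le \drecb\) and apply \cref{lem:dnc_relbas} with \(\modulus_1 = x^{\dreca}\) and \(\modulus_2 = x^{k}\). The residuals modulo \(x^k\) are \(\apor\rem x^k\) and \(\bpor \rem x^k\), because \(k\le\drecb\) and the truncation at \(x^{\drecb}\) is compatible with the one at \(x^k\). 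Also, \(\appmod{k}{\apor,\bpor}\) depends only on these truncations. The lemma states that the \(\ash\)-minimal degree of \(\appmod{\dreca+k}{\apo,\bpo}\) equals \(\delta_{\dreca,\ash}\) plus the \(\ashr\)-minimal degree of \(\appmod{k}{\apor,\bpor}\). Since \(\dreca+k\le d\), the first claim identifies the former as \(\delta_{\dreca+k,\ash}\). Subtracting gives \(\delta_{\dreca+k,\ash}-\delta_{\dreca,\ash}\) for the \(\ashr\)-minimal degree of \(\appmod{k}{\apor,\bpor}\).

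It remains to see that this difference equals \(\delta_{k,\ashr}\). This is exactly \cref{lem:approx_generic_mindeg}, applied with the pair \((\dreca,k)\) in place of \((\dreca,\drecb)\).

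The only delicate point is bookkeeping. We must check that the residuals used in \cref{lem:dnc_relbas} coincide with the truncations of the given \(\apor,\bpor\). We must also note that the minimal degree is an invariant of the module, so it does not depend on which \(\ash\)-weak Popov basis \(\basis\) was chosen. Both checks are routine, so I expect no real obstacle: the substance is already contained in \cref{thm:generic_mindeg} and \cref{lem:approx_generic_mindeg}.
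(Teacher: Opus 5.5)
Your proposal is correct and follows the paper's proof essentially verbatim. You define \(\bar\Delta_{d,\ash}\) as the product of the \(\Delta_{k,\ash}\) (the paper also includes the trivial factor \(\Delta_{0,\ash}=1\)), and you get the second claim from the additivity of minimal degrees in \cref{lem:dnc_relbas} combined with \cref{lem:approx_generic_mindeg}. Your extra bookkeeping fills in details the paper leaves implicit: embedding \(\Delta_{k,\ash}\) into \(2d\) variables, identifying \(\ashr\) through \(\delta_{\dreca,\ash}\), and checking that the residuals modulo \(x^k\) are truncations of \(\apor,\bpor\).
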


\begin{proof}
  The first claim follows from \cref{thm:generic_mindeg} by taking for
  \(\bar\Delta_{d,\ash}\) the product of the polynomials
  \(\Delta_{k,\ash}\) for \(0\le k \le d\). The second claim follows from
  the properties of the general recursive approach. Indeed,
  \cref{lem:dnc_relbas} ensures that the \(\ash\)-minimal degree of
  \(\appmod{\dreca+k}{\apo,\bpo}\) (which is \(\delta_{\dreca+k,\ash}\) by
  assumption) is the entry-wise sum of the \(\ash\)-minimal degree of
  \(\appmod{\dreca}{\apo,\bpo}\) (which is \(\delta_{\dreca,\ash}\) by
  assumption) and of the \(\ashr\)-minimal degree of
  \(\appmod{k}{\apor,\bpor}\). So, the latter is \(\delta_{\dreca+k,\ash}
  - \delta_{\dreca,\ash}\), which is \(\delta_{k,\ashr}\) according
  to \cref{lem:approx_generic_mindeg}.
  \qed
\end{proof}

The following similar result holds for modules of interpolants.

\begin{corollary}
  \label{cor:interp_generic_mindeg}%
  Let \(\points\in\field^d\) be pairwise distinct and let \(\ash \in \ZZ\). There exists
  a polynomial \(\bar\Gamma_{\points,\ash}\) in \(2d\) variables over \(\field\),
  nonzero and of total degree at most \(d(d+1)/2\), such that for all tuples
  \(\aevs = (\aev_0, \aev_1, \ldots, \aev_{d-1})\) and \(\bevs =
  (\bev_0, \bev_1, \ldots, \bev_{d-1})\) in \(\field^d\), if
  \(\bar{\Gamma}_{\points,\ash}(\apo_0,\ldots,\apo_{d-1},\bev_0,\ldots,\bev_{d-1})
  \neq 0\) then for all \(0 \le k \le d\) the \(\pR\)-module
  \(\intmod{\points_{0:k}}{\aevs_{0:k},\bevs_{0:k}}\) has \(\ash\)-minimal degree
  \(\delta_{k,\ash}\).

  Assume
  \(\bar{\Gamma}_{\points,\ash}(\aev_0,\ldots,\aev_{d-1},\bev_0,\ldots,\bev_{d-1})
  \neq 0\). Let \(0 \le \dreca \le d\), let \(
  [\begin{smallmatrix}
    p_{00} & p_{01} \\
    p_{10} & p_{11}
  \end{smallmatrix}]
  \in \mpR{2}{2}\) be an \(\ash\)-weak Popov basis of
  \(\intmod{\points_{0:\dreca}}{\aevs_{0:\dreca},\bevs_{0:\dreca}}\), and let \(\ashr =
  \ash+\deg(p_{00}) - \deg(p_{11})\). For any \(0 \le \drecb \le
  d-\dreca\), the tuples \(\arevs = (p_{00}(\point_{i}) \aev_{i} + p_{01}(\point_{i})
  \bev_{i})_{\dreca \le i < \dreca+\drecb}\)
  and \(\brevs = (p_{10}(\point_{i}) \aev_{i} + p_{11}(\point_{i})
  \bev_{i})_{\dreca \le i < \dreca+\drecb}\)
  are such that, for all \(0 \le k \le \drecb\),
  % the \(\pR\)-module
  \(\intmod{\points_{\dreca:\dreca+k}}{\arevs_{0:k},\brevs_{0:k}}\) has \(\ashr\)-minimal degree
  \(\delta_{k,\ashr} = \delta_{\dreca+k,\ash} - \delta_{\dreca,\ash}\).
\end{corollary}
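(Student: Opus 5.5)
The plan is to mirror the proof of \cref{cor:approx_generic_mindeg}, using the second item of \cref{thm:generic_mindeg} in place of the first.

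\emph{First claim.} For each \(0 \le k \le d\), consider the prefix points \(\points_{0:k}\) and the polynomial \(\Gamma_{\points_{0:k},\ash}\) given by \cref{thm:generic_mindeg}. It is nonzero, has total degree at most \(k\), and involves only the \(2k\) variables \(\aev_0,\ldots,\aev_{k-1},\bev_0,\ldots,\bev_{k-1}\). We view it as a polynomial in all \(2d\) variables, which leaves it nonzero. Define
\[
  \bar\Gamma_{\points,\ash} = \prod_{0 \le k \le d} \Gamma_{\points_{0:k},\ash}.
\]
This product is nonzero because \(\field[\aev_0,\ldots,\bev_{d-1}]\) is an integral domain. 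Its total degree is at most \(\sum_{k=0}^{d} k = d(d+1)/2\). If \(\bar\Gamma_{\points,\ash}\) does not vanish at \((\aevs,\bevs)\), then no factor vanishes there. By \cref{thm:generic_mindeg}, each module \(\intmod{\points_{0:k}}{\aevs_{0:k},\bevs_{0:k}}\) then has \(\ash\)-minimal degree \(\delta_{k,\ash}\).

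\emph{Second claim.} Fix \(\dreca\), the \(\ash\)-weak Popov basis \(\basis\) of \(\intmod{\points_{0:\dreca}}{\aevs_{0:\dreca},\bevs_{0:\dreca}}\), and \(0 \le k \le \drecb\). We apply \cref{lem:dnc_relbas} with the following data:
\begin{itemize}[noitemsep,topsep=2pt]
  \item \(\modulus_1 = \prod_{i<\dreca}(x-\point_i)\) and \(\modulus_2 = \prod_{\dreca \le i < \dreca+k}(x-\point_i)\);
  \item \(\apo,\bpo\) the interpolants of \(\aevs_{0:\dreca+k}\) and \(\bevs_{0:\dreca+k}\).
\end{itemize}
The points are pairwise distinct, so \(\modulus_1\) and \(\modulus_2\) are coprime. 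By \cref{rmk:modified_relbas}, we may therefore take \(\gamma = 1\) for the residuals. The residual polynomials \((p_{00}\apo+p_{01}\bpo) \rem \modulus_2\) and \((p_{10}\apo+p_{11}\bpo) \rem \modulus_2\) evaluate at \(\points_{\dreca:\dreca+k}\) exactly to \(\arevs_{0:k}\) and \(\brevs_{0:k}\). Hence the second module in \cref{lem:dnc_relbas} is \(\intmod{\points_{\dreca:\dreca+k}}{\arevs_{0:k},\brevs_{0:k}}\), and the shift there is \(\ashr\).

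\cref{lem:dnc_relbas} then says that the \(\ash\)-minimal degree of the full module is the entry-wise sum of the two partial ones. By the first claim, the full module has \(\ash\)-minimal degree \(\delta_{\dreca+k,\ash}\) and the first module has \(\delta_{\dreca,\ash}\). Subtracting, the \(\ashr\)-minimal degree of the residual module is \(\delta_{\dreca+k,\ash} - \delta_{\dreca,\ash}\).

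It remains to identify this with \(\delta_{k,\ashr}\). Since \(\basis\) has diagonal degrees \(\delta_{\dreca,\ash}\), we have \(\ashr = \ash + \delta_0 - \delta_1\). \cref{lem:approx_generic_mindeg}, applied with \(\dreca\) and \(k\), then gives \(\delta_{\dreca+k,\ash} - \delta_{\dreca,\ash} = \delta_{k,\ashr}\).

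The main point needing care is the identification of the residual module via \cref{rmk:modified_relbas}: dividing by \(\modulus_1\) is replaced by \(\gamma=1\), and we must check that evaluating at the second block of points gives exactly the tuples \(\arevs,\brevs\). Once coprimality of \(\modulus_1\) and \(\modulus_2\) is noted, this is immediate.
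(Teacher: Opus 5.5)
Your proof is correct and follows the paper's route. The paper proves the approximant analogue, \cref{cor:approx_generic_mindeg}, by taking a product of the per-order polynomials and then combining \cref{lem:dnc_relbas} with \cref{lem:approx_generic_mindeg}; it states the interpolant version as similar, and your use of \cref{rmk:modified_relbas} with \(\gamma=1\) to identify the residual module with \(\intmod{\points_{\dreca:\dreca+k}}{\arevs_{0:k},\brevs_{0:k}}\) is the only interpolant-specific adjustment needed.
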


\subsection{If \texorpdfstring{\(\Delta_{d,\ash}\)}{Delta[d,s]}
or \texorpdfstring{\(\Gamma_{\points,\ash}\)}{Gamma[w,s]}
characterizes the minimal degree
\texorpdfstring{\(\delta_{d,\ash}\)}{delta[d,s]}, it is nonzero}
\label{sec:approx:genericity:nonzero}

By ``characterizing the minimal degree'', we mean more precisely that
\(\Delta_{d,\ash}\) is any polynomial in \(2d\) variables over \(\field\) which
vanishes at the coefficients of \(\apo,\bpo \in \pR_{<d}\) exactly when the
\(\ash\)-minimal degree of \(\appmod{d}{\apo,\bpo}\) differs from
\(\delta_{d,\ash}\). For the moment, we discard the question of the existence
of such a polynomial \(\Delta_{d,\ash}\): we simply prove that any polynomial
with this property must be nonzero. For this, we explicitly describe some
\(\apo,\bpo \in \pR_{<d}\) such that \(\appmod{d}{\apo,\bpo}\) has
\(\ash\)-minimal degree \(\delta_{d,\ash}\). Thus the coefficients of \(\apo\)
and \(\bpo\) provide an evaluation point in \(\field^{2d}\) at which
\(\Delta_{d,\ash}\) does not vanish. We detail this in the next lemma, and then
we sketch the analogous case of \(\Gamma_{\points,\ash}\) for interpolant
modules \(\intmod{\points}{\aevs,\bevs}\) in
\cref{lem:interp_generic_mindeg:nonzero}.

\begin{lemma}
  \label{lem:approx_generic_mindeg:nonzero}%
  Let \(d \in \NN\) and \(\ash \in \ZZ\). Write \((\delta_0,\delta_1)
  = \delta_{d,\ash}\). Define \(\apo,\bpo \in \pR_{<d}\) as
  \[
    (\apo,\bpo) =
    \left\{
    \begin{array}{ll}
      (1,0) & \text{if } \ash \le -d, \\
      (0,1) & \text{if } \ash \ge d, \\
      (1, x^{\delta_0}) & \text{if } {-d} < \ash \le 0, \\
      (x^{\delta_1}, 1) & \text{if } 0 < \ash < d.\\
    \end{array}
    \right.
  \]
  Then, \(\appmod{d}{\apo,\bpo}\) has \(\ash\)-minimal degree
  \(\delta_{d,\ash}\).
\end{lemma}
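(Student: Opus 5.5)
In each of the four cases I would write down an explicit candidate basis, check that it is a basis of \(\appmod{d}{\apo,\bpo}\) and in \(\ash\)-weak Popov form, and then read off its diagonal degrees. The tools are \cref{cor:determinant_relmod} for the basis property and the uniqueness of the \(\ash\)-minimal degree among all \(\ash\)-weak Popov bases (\cref{sec:prelim:polmat}). I would first record that \(\gamma=\gcd(\apo,\bpo,x^d)=1\) in every case, since one of \(\apo,\bpo\) equals \(1\). By \cref{cor:determinant_relmod}, a matrix whose rows lie in the module and whose determinant is \(\lambda x^d\) with \(\lambda\) a nonzero constant is then a basis.

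The extreme cases are immediate. If \(\ash\le -d\) and \((\apo,\bpo)=(1,0)\), take \(\diag{x^d,1}\); if \(\ash\ge d\) and \((\apo,\bpo)=(0,1)\), take \(\diag{1,x^d}\). Both are bases by the first item of \cref{lem:relbas_hermite_bases} (or by the determinant criterion), and a diagonal matrix is in \(\ash\)-Popov form for every shift. Their diagonal degrees are \((d,0)\) and \((0,d)\), which match \(\delta_{d,\ash}\).

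Now take \(-d<\ash\le 0\) with \((\apo,\bpo)=(1,x^{\delta_0})\), where \(\delta_0=\lceil (d-\ash)/2\rceil\) and \(\delta_1=\lfloor (d+\ash)/2\rfloor=d-\delta_0\), so that \(0\le\delta_1\le\delta_0\le d\). I would take
\[
  \basis=\begin{bmatrix} x^{\delta_0} & -1 \\ 0 & x^{\delta_1}\end{bmatrix}.
\]
The first row gives \(x^{\delta_0}\cdot 1-x^{\delta_0}=0\). The second row gives \(x^{\delta_1}x^{\delta_0}=x^d\equiv 0 \bmod x^d\). The determinant is \(x^d\), so \(\basis\) is a basis. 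For the \(\ash\)-weak Popov conditions, the first row needs \(0=\deg(-1)<\delta_0+\ash\). The second row needs \(\deg(p_{10})+\ash\le\delta_1\), which holds because \(p_{10}=0\) (with the convention \(\deg 0=-\infty\)). So only the inequality \(\delta_0+\ash>0\) needs an argument.

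That inequality is the one delicate point. Since \(\delta_0+\ash=\lceil(d+\ash)/2\rceil\) and \(d+\ash\ge 1\), we get \(\delta_0+\ash\ge 1\). The hypothesis \(\ash>-d\) is exactly what makes this work: at \(\ash=-d\) the first row would violate the condition, which is why that value belongs to the extreme case. The diagonal degrees of \(\basis\) are \((\delta_0,\delta_1)=\delta_{d,\ash}\).

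The remaining case \(0<\ash<d\) with \((\apo,\bpo)=(x^{\delta_1},1)\) is symmetric. I would take \(\basis=\left[\begin{smallmatrix} x^{\delta_0} & 0\\ -1 & x^{\delta_1}\end{smallmatrix}\right]\), whose rows lie in the module and whose determinant is \(x^d\). The first row needs \(\deg(p_{01})<\delta_0+\ash\), which holds trivially since \(p_{01}=0\). The second row needs \(0+\ash\le\delta_1=\lfloor(d+\ash)/2\rfloor\), which is equivalent to \(\ash\le d-\ash\) up to the floor, and the hypothesis \(\ash<d\) gives it. Alternatively, this case can be obtained from the previous one via \cref{lem:invinput}, using the shift \(1-\ash\).
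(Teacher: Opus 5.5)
Your proof is correct and follows the paper's argument. You use the same four explicit bases and get the basis property from \(\det = x^d\); you invoke \cref{cor:determinant_relmod} directly, while the paper phrases this via the common determinantal degree of all bases. The weak Popov check then reduces, as in the paper, to \(\delta_0+s>0\) and \(\delta_1-s\ge 0\).

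There is one small slip in the last case. The inequality \(s\le\lfloor (d+s)/2\rfloor\) is equivalent to \(s\le d\), that is, \(\lfloor (d-s)/2\rfloor\ge 0\). It is not equivalent to \(s\le d-s\), which \(s<d\) would not give. Your conclusion is nonetheless right.
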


\begin{proof}
  \emph{Case} \(\ash \le -d\). With our choice \((\apo,\bpo) = (1,0)\), it
  is easily proved that \(\appmod{d}{\apo,\bpo} = x^d\pR \times \pR\). This
  module has shifted Popov basis \([\begin{smallmatrix} x^d & 0 \\ 0 & 1
  \end{smallmatrix}]\) for any shift, including \(\ash\). Thus
  \(\appmod{d}{\apo,\bpo}\) has \(\ash\)-minimal degree \((d,0) =
  \delta_{d,\ash}\).

  \emph{Case} \(\ash \ge d\). This is similar to the previous case, with
    here \(\appmod{d}{\apo,\bpo} = \pR \times x^d\pR\), shifted Popov basis
    \([\begin{smallmatrix} 1 & 0 \\ 0 & x^d
  \end{smallmatrix}]\), and \(\ash\)-minimal degree \((0,d) =
  \delta_{d,\ash}\).

  \emph{Case} \(-d < \ash \le 0\). Here, \((\apo,\bpo) = (1,
  x^{\delta_0})\). It is known that \(\apo_0 \neq 0\) implies that all bases of
  \(\appmod{d}{\apo,\bpo}\) have determinantal degree \(d\): they all share the
  same determinantal degree, and for example one can easily prove that
  \(
  [\begin{smallmatrix}
    x^{d} & 0 \\
    -x^{\delta_0} & 1
  \end{smallmatrix}]
  \)
  is a basis of \(\appmod{d}{\apo,\bpo}\) (see
  \cref{ex:approx_generic:unbalanced_shift} for related considerations
  for more general \(\apo,\bpo\)). It follows that the matrix
  \(
  P =
  [\begin{smallmatrix}
    x^{\delta_0} & -1 \\
    0 & x^{\delta_1}
  \end{smallmatrix}]
  \)
  is a basis of \(\appmod{d}{\apo,\bpo}\), since its rows are in this module
  and \(\det(\det(P)) = \delta_0 + \delta_1 = d\). Thus, to obtain that the
  \(\ash\)-minimal degree of
  \(\appmod{d}{\apo,\bpo}\) is \((\delta_0,\delta_1) =
  \delta_{d,\ash}\), it suffices to prove that \(P\) is in
  \(\ash\)-weak Popov form, that is, the \(\ash\)-pivots of
  \(P\) are on its diagonal. Obviously the \(\ash\)-pivots of the
  second row \([0 \;\; x^{\delta_1}]\) is on the diagonal, so it remains
  to show that \(\deg(x^{\delta_0}) + \ash > \deg(-1)\), or equivalently
  \(\delta_0 + \ash > 0\). This follows from the definition of
  \(\delta_0 = \lceil (d-\ash)/2 \rceil\) and from the assumption
  \(d > -\ash\):
  \[
    \delta_0 + \ash
    \ge \frac{d-\ash}{2} + \ash
    = \frac{d + \ash}{2}
    > 0.
  \]
  As a side remark, \(P\) is in fact the \(\ash\)-Popov basis of
  \(\appmod{d}{\apo,\bpo}\) except in the case where \(\delta_1=0\),
  i.e., when \(d + \ash = 1\); in the latter case, \(\delta_0=d\) and
  the \(\ash\)-Popov form of \(P\) is simply
  \(
    [\begin{smallmatrix}
      x^{d} & 0 \\
      0 & 1
      \end{smallmatrix}]
  \).

  \emph{Case} \(0 < \ash < d\). Here, \((\apo,\bpo) = (x^{\delta_1},
  1)\). This is similar to the above case, but not exactly symmetrical because
  \(\delta_0\) involves a ceiling while \(\delta_1\) involves a floor (this
  comes from the tie breaking choice in the definition of pivots when degrees
  are equal). Following the same arguments as above, one can prove that
  \(P =
    [\begin{smallmatrix}
      x^{\delta_0} & 0 \\
      -1 & x^{\delta_1}
    \end{smallmatrix}]
  \)
  is a basis of \(\appmod{d}{\apo,\bpo}\), and it remains to prove that it is
  in \(\ash\)-weak Popov form. The latter is equivalent to
  \(\deg(x^{\delta_1}) \ge \deg(-1) + \ash\), i.e., \(\delta_1 - \ash \ge
  0\). As above, this follows from the definition of \(\delta_1 = \lfloor
  (d+\ash)/2 \rfloor\) and from the assumption \(\ash < d\):
  \[
    \delta_1 - \ash
    \ge \frac{d+\ash}{2} - \frac{1}{2} - \ash
    = \frac{d - \ash - 1}{2}
    \ge 0.
  \]
  The side remark also extends: here, \(\delta_0 > 0\) and \(P\) is in fact the
  \(\ash\)-Popov basis of \(\appmod{d}{\apo,\bpo}\).
  \qed
\end{proof}

\begin{lemma}
  \label{lem:interp_generic_mindeg:nonzero}%
  Let \(\points \in \field^d\) be pairwise distinct points and let \(\ash \in
  \ZZ\). Write \((\delta_0,\delta_1) = \delta_{d,\ash}\) and let
  \(\modulus_0 = \prod_{0 \le i < \delta_0} (x - \point_i)\)
  and \(\modulus_1 = \prod_{\delta_0 \le i < d} (x - \point_i)\).
  One has \(\deg(\modulus_1) = \delta_1\), since as noted above
  \(\delta_0 + \delta_1 = d\). Define  polynomials \(\apo,\bpo \in \pR_{<d}\)
  and their evaluations \(\aevs,\bevs \in \field^{d}\) as
  \[
    \left\{
      \begin{array}{l}
        \aevs = \evaluate(\points,\apo) \\
        \bevs = \evaluate(\points,\bpo) \\
      \end{array}
    \right.
    \;\;\;\text{where}\;\;\;
    (\apo,\bpo) =
    \left\{
    \begin{array}{ll}
      (1,0) & \text{if } \ash \le -d, \\
      (0,1) & \text{if } \ash \ge d, \\
      (1, \modulus_0) & \text{if } {-d} < \ash \le 0, \\
      (\modulus_1, 1) & \text{if } 0 < \ash < d.\\
    \end{array}
    \right.
  \]
  Then, \(\intmod{\points}{\aevs,\bevs}\) has \(\ash\)-minimal degree
  \(\delta_{d,\ash}\).
\end{lemma}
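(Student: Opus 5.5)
We mirror the proof of \cref{lem:approx_generic_mindeg:nonzero}, with \(x^{\delta_0}\) and \(x^{\delta_1}\) replaced by \(\modulus_0\) and \(\modulus_1\). Throughout, \(\modulus = \prod_{0\le i<d}(x-\point_i) = \modulus_0\modulus_1\). For each of the four cases, the plan is to exhibit a matrix \(P\) whose rows lie in \(\intmod{\points}{\aevs,\bevs}\) (equivalently in \(\relmod{\modulus}{\apo,\bpo}\)), whose determinant is \(\modulus/\gamma\) up to a constant, and which is in \(\ash\)-weak Popov form with diagonal degrees \((\delta_0,\delta_1)\). Then \cref{cor:determinant_relmod} shows that \(P\) is a basis, and the invariance of diagonal degrees among \(\ash\)-weak Popov bases gives the claimed minimal degree.

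The two extreme cases are immediate. If \(\ash\le -d\), then \((\apo,\bpo)=(1,0)\). The module is \(\modulus\pR\times\pR\), with basis \(\diag{\modulus,1}\), which is \(\ash\)-Popov for every shift. Its diagonal degrees are \((d,0)=\delta_{d,\ash}\). The case \(\ash\ge d\) is symmetric, with basis \(\diag{1,\modulus}\).

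For \(-d<\ash\le 0\), take \((\apo,\bpo)=(1,\modulus_0)\) and
\(P=[\begin{smallmatrix}\modulus_0 & -1\\ 0 & \modulus_1\end{smallmatrix}]\).
The first row satisfies \(\modulus_0\cdot 1 - \modulus_0 = 0\). The second row gives \(\modulus_1\modulus_0=\modulus\), which is \(0\bmod\modulus\). We have \(\gcd(\apo,\bpo,\modulus)=1\) and \(\det(P)=\modulus\), so \(P\) is a basis. It remains to check the weak Popov conditions. The second row is fine, since \(p_{10}=0\). For the first row we need \(\deg(-1)<\delta_0+\ash\), i.e.\ \(\delta_0+\ash>0\), and this holds by the same computation \(\delta_0+\ash\ge (d+\ash)/2>0\) as in the approximant case.

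For \(0<\ash<d\), take \((\apo,\bpo)=(\modulus_1,1)\) and
\(P=[\begin{smallmatrix}\modulus_0 & 0\\ -1 & \modulus_1\end{smallmatrix}]\).
The first row gives \(\modulus_0\modulus_1=\modulus\). The second row gives \(-\modulus_1+\modulus_1=0\). The determinant is \(\modulus\), and the remaining condition \(\ash\le\delta_1\) follows from \(\delta_1-\ash\ge(d-\ash-1)/2\ge 0\).

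The only point needing care is that the claimed degrees are right: \(\deg\modulus_0=\delta_0\) and \(\deg\modulus_1=\delta_1\), using \(\delta_0+\delta_1=d\). The definition of \(\modulus_1\) in the statement relies on exactly this identity. Otherwise the argument is a direct transcription of the approximant case, and I expect no real obstacle.
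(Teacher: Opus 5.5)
Your proposal is correct and follows the paper's own argument: the same four cases with the bases \(\diag{\modulus,1}\), \(\diag{1,\modulus}\), \([\begin{smallmatrix}\modulus_0 & -1\\ 0 & \modulus_1\end{smallmatrix}]\) and \([\begin{smallmatrix}\modulus_0 & 0\\ -1 & \modulus_1\end{smallmatrix}]\), checked through \cref{cor:determinant_relmod} and the pivot inequalities from \cref{lem:approx_generic_mindeg:nonzero}. You also write out the membership, determinant and pivot verifications that the paper only sketches. One small wrinkle in the statement itself: when \(\ash=1-d\) you get \(\bpo=\modulus_0=\modulus\) of degree \(d\), which does not affect your argument since \(\bevs=0\).
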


\begin{proof}
  We only sketch the proof, as it is very close to the one of
  \cref{lem:approx_generic_mindeg:nonzero}. If \(\ash \le -d\) (resp.\
  \(\ash \ge d\)), then our choice of \((\aevs,\bevs)\) ensures that
  \(\intmod{\points}{\aevs,\bevs}\) is \(\modulus\pR \times \pR\) (resp.\
  \(\pR \times \modulus\pR\)),
  where \(\modulus = \prod_{0 \le i < d} (x - \point_i)\).
  If \(-d < \ash \le 0\), one shows that
  \(
    P =
    [\begin{smallmatrix}
      \modulus_0 & -1 \\
      0 & \modulus_1
    \end{smallmatrix}]
  \)
  is an \(\ash\)-weak Popov basis of \(\intmod{\points}{\aevs,\bevs}\);
  and if \(0 < \ash < d\),
  \(P =
    [\begin{smallmatrix}
      \modulus_0 & 0 \\
      -1 & \modulus_1
    \end{smallmatrix}]
  \)
  is a \(\ash\)-weak Popov basis of \(\intmod{\points}{\aevs,\bevs}\).
  \qed
\end{proof}

\subsection{Existence of polynomials
\texorpdfstring{\(\Delta_{d,\ash}\)}{Delta[d,s]}
and \texorpdfstring{\(\Gamma_{\points,\ash}\)}{Gamma[w,s]}
characterizing the minimal degree \texorpdfstring{\(\delta_{d,\ash}\)}{delta[d,s]}}
\label{sec:approx:genericity:existence}

As mentioned above, we give details for \(\Delta_{d,\ash}\) and approximant
modules, and then we briefly sketch how this adapts to interpolant modules.
Fix polynomials \(\apo,\bpo \in \pR_{<d}\). From a linear algebra viewpoint,
approximants in \(\appmod{d}{\apo,\bpo}\) form
\(\field\)-linear combinations that are zero between the polynomials
\(\apo\), \(\bpo\), \(x\apo\), \(x\bpo\), \(x^2 \apo\), \(x^2 \bpo\), etc.,
all truncated modulo \(x^d\).
Thus, restricting to approximants of degree at most \(d\), these can be found
in the left nullspace of either of the matrices
\begin{equation}
  \label{eqn:krylov_popov_hermite}%
  \mathcal{K} =
  \begin{bmatrix}
    \apo_0 & \apo_1 & \cdots         & \apo_{d-1}     \\
    \bpo_0 & \bpo_1 & \cdots         & \bpo_{d-1}     \\
    0      & \apo_0 & \cdots         & \apo_{d-2}     \\
    0      & \bpo_0 & \cdots & \bpo_{d-2}     \\
    0      &  0     & \ddots & \vdots \\
    0      &  0     & 0              & \apo_0         \\
    0      &  0     & 0              & \bpo_0         \\
    0      &  0     & 0              & 0              \\
    0      &  0     & 0              & 0
  \end{bmatrix}
  \quad\text{and}\quad
  \overline{\mathcal{K}} =
  \begin{bmatrix}
    \apo_0 & \apo_1 & \cdots         & \apo_{d-1}     \\
    0      & \apo_0 & \cdots         & \apo_{d-2}     \\
    0      &  0     & \ddots & \vdots \\
    0      &  0     & 0              & \apo_0         \\
    0      &  0     & 0              & 0              \\
    \bpo_0 & \bpo_1 & \cdots         & \bpo_{d-1}     \\
    0      & \bpo_0 & \cdots & \bpo_{d-2}     \\
    0      &  0     & \ddots & \vdots \\
    0      &  0     & 0              & \bpo_0         \\
    0      &  0     & 0              & 0
  \end{bmatrix}
  \quad\text{both in } \mKK{2(d+1)}{d}.
\end{equation}
(Why we define two matrices will be made clear below.)

This point of view has been used in algorithms for finding bases of vector
interpolation modules which generalize the relation modules
\(\relmod{\modulus}{\apo,\bpo}\) of \cref{eqn:relmod}, thus covering in
particular \cref{pbm:app,pbm:int}. This was done notably in contexts of
fraction-free computations \cite{BeckermannLabahn2000} and for cases where
parameters lead to favor linear algebra tools over polynomial arithmetic
\cite{JeannerodNeigerSchostVillard2017}. These references describe a
\emph{(striped) Krylov matrix} \(\mathcal{K}_{d,\ash}(\apo,\bpo)\)
(\cite[Sec.\,2]{BeckermannLabahn2000} and
\cite[Sec.\,7.1]{JeannerodNeigerSchostVillard2017}) whose rank properties yield
the minimal degree of the relation module, and they show how the shifted Popov
basis of this module can be recovered from a well-chosen basis of the nullspace
of this Krylov matrix (\cite[Thm.\,7.3]{BeckermannLabahn2000} and
\cite[Sec.\,7.2 and 7.3]{JeannerodNeigerSchostVillard2017}).
In \cite[Sec.\,4]{BeckermannLabahn2000}, Beckermann and Labahn consider
regularity assumptions based on ranks and determinants of Krylov matrices;
thanks to the above link between ranks and minimal degree, these considerations
are closely related to the genericity properties of
\cref{thm:generic_mindeg}. Yet, we cannot apply the results from
\cite{BeckermannLabahn2000} as such, as they require a strong link between the
shift \(\ash\) and the order \(d\), which may not hold in our case. For
this reason, we will rely on the construction of the Krylov matrix in
\cite[Sec.\,7.1]{JeannerodNeigerSchostVillard2017}, which does not assume this
link, and we will briefly explain how some Krylov matrix determinant yields the
polynomial \(\Delta_{d,\ash}\).

\begin{lemma}
  \label{lem:approx_generic_mindeg_exists}%
  There exists a polynomial \(\Delta_{d,\ash}\) in \(2d\) variables
  \((X_0,\ldots,X_{d-1},Y_0,\ldots,Y_{d-1})\) over \(\field\), either zero or
  of total degree exactly \(d\), such that for any \(\apo,\bpo \in \pR\), the
  \(\pR\)-module \(\appmod{d}{\apo,\bpo}\) has \(\ash\)-minimal degree
  \(\delta_{d,\ash}\) if and only if
  \(\Delta_{d,\ash}(\apo_0,\ldots,\apo_{d-1},\bpo_0,\ldots,\bpo_{d-1})
  \neq 0\).
\end{lemma}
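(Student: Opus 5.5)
Our plan is to take \(\Delta_{d,\ash}\) to be the determinant of a suitable \(d\times d\) submatrix of a striped Krylov matrix whose rows are ordered by \(\ash\)-degree, following \cite[Sec.\,7.1]{JeannerodNeigerSchostVillard2017}. We first handle the extreme shifts \(\ash \le -d\) and \(\ash \ge d\) separately, where \(\delta_{d,\ash}\) is \((d,0)\) or \((0,d)\). In these cases the minimal degree is \((d,0)\) exactly when the lower Hermite basis has diagonal \((x^d,1)\), that is, when \(\apo_0\neq 0\); for any other \(\apo\), a relation \((x^{v},\ast)\) or a degree-\(0\) relation \((\ast,\ast)\) with pivot on the left appears earlier. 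So we take \(\Delta_{d,\ash}=X_0^d\) (respectively \(Y_0^d\)), which has total degree \(d\). We will check this using \cref{lem:appbas:opti:shift} and the valuation argument of \cref{lem:appbas:opti:valuation}.

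For \(-d<\ash<d\), consider the rows \(x^i\apo \bmod x^d\) and \(x^j\bpo\bmod x^d\) as vectors in \(\field^d\). Order them by increasing \(\ash\)-degree, giving \(x^i\apo\) weight \(i+\ash\) and \(x^j\bpo\) weight \(j\), with ties broken so that the \(\bpo\)-row comes first. This is consistent with the pivot convention, under which the right entry wins ties. Write \(\mathcal{K}_{d,\ash}\) for the resulting striped Krylov matrix, a row permutation of \(\mathcal{K}\) in \cref{eqn:krylov_popov_hermite}. The key fact, from \cite[Sec.\,7.2]{JeannerodNeigerSchostVillard2017}, is that the \(\ash\)-minimal degree \((\mu_0,\mu_1)\) is read off from the first linearly dependent row in each stripe. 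Specifically, \(\mu_0\) is the smallest \(i\) such that \(x^i\apo\) is a \(\field\)-combination of earlier rows, and similarly \(\mu_1\) for \(\bpo\). Such a dependency is exactly an approximant whose \(\ash\)-pivot is \(x^{i}\) in column \(0\), and by minimality of the \(\ash\)-Popov basis (\cref{sec:prelim:polmat}) no earlier pivot degree can occur.

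Since \(\mu_0+\mu_1=\deg\det\le d\), the minimal degree equals \(\delta_{d,\ash}=(\delta_0,\delta_1)\) if and only if the first \(\delta_0\) rows of the \(\apo\)-stripe and the first \(\delta_1\) rows of the \(\bpo\)-stripe are linearly independent. Their count is \(\delta_0+\delta_1=d\), so this amounts to the nonvanishing of the determinant of the \(d\times d\) matrix formed by these rows. One direction is immediate: independence of these rows forces \(\mu_0\ge\delta_0\) and \(\mu_1\ge\delta_1\), and then \(\mu_0+\mu_1\le d\) gives equality. The converse takes more care. If \(\mu=\delta_{d,\ash}\), no dependency occurs before those indices within each stripe in the interleaved order. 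The choice of \(\delta_{d,\ash}\) as the \(\lceil\cdot\rceil/\lfloor\cdot\rfloor\) split guarantees that these \(d\) rows are exactly the first \(d\) rows of \(\mathcal{K}_{d,\ash}\) in the \(\ash\)-ordering, up to the tie rule. We take \(\Delta_{d,\ash}\) to be this determinant. Each row is linear in the coefficients, so \(\Delta_{d,\ash}\) is homogeneous of degree \(d\), hence either zero or of total degree exactly \(d\).

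We expect the main obstacle to be the converse step just described: showing that ``minimal degree equals \(\delta_{d,\ash}\)'' really forces independence of exactly those \(d\) leading rows. This requires checking carefully that the interleaving order matches \(\delta_{d,\ash}\), including parity and the tie-breaking convention. We would first try to settle this by verifying that the number of \(\apo\)-rows among the first \(d\) rows in the ordering equals \(\lceil (d-\ash)/2\rceil\).
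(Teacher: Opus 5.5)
\textbf{A genuine gap: your tie-breaking rule is reversed.} Your strategy is the paper's. You take \(\Delta_{d,\ash}\) to be the determinant of the top \(d\times d\) block of the striped Krylov matrix of \cite[Sec.\,7.1]{JeannerodNeigerSchostVillard2017}. You then read the \(\ash\)-minimal degree off the indices of the first dependent row in each stripe. The reversed tie rule breaks exactly the step you flagged.

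Under the rightmost-pivot convention, a relation \((p,q)\) whose \(\ash\)-pivot is \(q\), with \(\deg(q)=j\), may have \(\deg(p)+\ash=j\). So the dependency expressing \(x^j\bpo\) can involve the row \(x^{j-\ash}\apo\), which has the same weight. Hence at equal weight the \(\apo\)-row must come first. This is what \(\mathcal{K}\) does for \(\ash=0\) (rows \(\apo,\bpo,x\apo,x\bpo,\ldots\)), and what the paper's example with \(d=6\), \(\ash=-2\) does (\(x^2\apo\) precedes \(\bpo\)).

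Your order is in fact the correct order for the shift \(\ash+1\). Two things follow:
\begin{itemize}
\item the first-dependent indices give the \((\ash+1)\)-minimal degree, not the \(\ash\)-minimal degree;
\item whenever \(d-\ash\) is odd, the first \(d\) rows contain \(\lfloor (d-\ash)/2\rfloor\) rows from \(\apo\), not \(\delta_0=\lceil (d-\ash)/2\rceil\).
\end{itemize}
For a concrete counterexample, take \(d=1\), \(\ash=0\), \(\apo=\bpo=1\). Your order \(\bpo,\apo,x\bpo,x\apo\) gives \((\mu_0,\mu_1)=(0,1)\). But the \(0\)-Popov basis \([\begin{smallmatrix} x & 0 \\ -1 & 1\end{smallmatrix}]\) has minimal degree \((1,0)=\delta_{1,0}\).

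\textbf{What survives, and what to correct.} The determinant you actually define, on the first \(\delta_0\) rows of the \(\apo\)-stripe and the first \(\delta_1\) rows of the \(\bpo\)-stripe, is nonetheless the right polynomial. Only the justification needs the corrected order. Under that order, a direct count shows these rows are exactly the first \(d\) rows for every \(-d\le\ash\le d\).

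You also placed the difficulty on the wrong side.
\begin{itemize}
\item The initial-segment property is what the direction you call immediate needs. Independence of the chosen rows says nothing about \(\mu_0,\mu_1\) unless every row preceding one of them is itself among them.
\item The converse holds for any order. If \((\mu_0,\mu_1)=\delta_{d,\ash}\), each chosen row lies outside the span of all rows before it, and that already forces independence.
\end{itemize}
With these corrections your argument coincides with the paper's. The separate treatment of \(\abs{\ash}\ge d\) is then correct but unnecessary, since the same construction gives \(X_0^d\) (resp.\ \(Y_0^d\)) directly.
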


\begin{proof}
  Following \cite[Sec.\,7.1]{JeannerodNeigerSchostVillard2017}, the Krylov
  matrix \(\mathcal{K}_{d,\ash}(\apo,\bpo) \in \mKK{2(d+1)}{d}\) is built
  as a row permutation of the matrix \(\mathcal{K}\) in
  \cref{eqn:krylov_popov_hermite}, with a permutation derived from the shift
  \(\ash\). The construction of this matrix depends on
  \(d,\ash\) but is independent from the particular coefficients of
  \(\apo\) and \(\bpo\): in other words, if one follows the same construction
  for generic polynomials \(X = \sum_{i < d} X_i x^i\) and \(Y = \sum_{i < d}
  Y_i x^i\) to obtain a matrix \(\mathcal{K}_{d,\ash}(X,Y)\), then
  replacing \(X,Y\) by any particular \(\apo,\bpo \in \pR\) will indeed provide
  \(\mathcal{K}_{d,\ash}(\apo,\bpo)\).
  We then let \(D\) be the \(d\times d\) submatrix of
  \(\mathcal{K}_{d,\ash}(X,Y)\) formed by its top \(d\) rows. One
  important feature of this construction is that \(D\) contains \(\delta_0\)
  rows derived from \(X\) (or the underlying \(\apo\)), and \(\delta_1\) rows
  derived from \(Y\) (or the underlying \(\bpo\)), where \(\delta_0\) and
  \(\delta_1\) are such that \(\delta_{d,\ash} = (\delta_0,\delta_1)\). We
  take \(\Delta_{d,\ash}\) as the determinant of \(D\). Since \(D\) has
  each of its entries in \(\{0,X_0,\ldots,X_{d-1},Y_0,\ldots,Y_{d-1}\}\), the
  polynomial \(\Delta_{d,\ash}\) is either zero, or homogeneous of total
  degree \(d\). Furthermore,
  \(\Delta_{d,\ash}(\apo_0,\ldots,\apo_{d-1},\bpo_0,\ldots,\bpo_{d-1})\)
  is nonzero if and only if the top \(d\times d\) submatrix of
  \(\mathcal{K}_{d,\ash}(\apo,\bpo)\) is invertible, i.e.\
  \(\mathcal{K}_{d,\ash}(\apo,\bpo)\) has generic row rank profile
  \((0,1,\ldots,d-1)\). The latter holds if and only if the minimal exponents
  \(\mu_0\) and \(\mu_1\) such that the row corresponding to \(x^{\mu_0} \apo\)
  (resp.\ \(x^{\mu_1} \bpo\)) in \(\mathcal{K}_{d,\ash}(\apo,\bpo)\) is a
  \(\field\)-linear combination of rows above it, are precisely \(\mu_0 =
  \delta_0\) and \(\mu_1 = \delta_1\). The conclusion then follows from the
  fact that these minimal exponents \((\mu_0,\mu_1)\) (also called minimal
  degree in \cite[Def.\,7.8]{JeannerodNeigerSchostVillard2017}) form precisely
  the \(\ash\)-minimal degree of \(\appmod{d}{\apo,\bpo}\), according to
  \cite[Cor.\,7.11 and Lem.\,7.12]{JeannerodNeigerSchostVillard2017}.
  \qed
\end{proof}

We now illustrate the above considerations through some examples. The first one
is the corner case \(d=0\), where one can easily derive
\(\Delta_{0,\ash}\) without considering Krylov matrices.

\begin{example}[order \(d\) is zero]
  \label{ex:approx_generic:zero_order}%
  Here, take an arbitrary shift \(\ash\). One has
  \(\appmod{0}{\apo,\bpo} = \pR^2\), so that its \(\ash\)-Popov basis is
  the \(2\times 2\) identity matrix for any \(\apo,\bpo\). Then the
  \(\ash\)-minimal degree of \(\appmod{0}{\apo,\bpo}\) is \((0,0) =
  \delta_{0,\ash}\), and one may take \(\Delta_{0,\ash} = 1\).
  \qed
\end{example}

The next two examples consider situations representing extremal values for the
shift \(\ash\): when it is highly unbalanced such as \(\ash = d\)
and leads to the Hermite normal form, and when it is uniform \(\ash = 0\)
and leads to the ``nonshifted'' Popov form. Finally, we consider a
case involving a non-extremal shift, illustrating the construction of the
Krylov matrix in that case.

\begin{example}[highly unbalanced shift]
  \label{ex:approx_generic:unbalanced_shift}%
  Assume here that \(\ash \le -d\), with \(d > 0\). Let \((\apo,\bpo) \in \pR^2\). The
  construction of the Krylov matrix leads to
  \(\mathcal{K}_{d,\ash}(\apo,\bpo) = \overline{\mathcal{K}}\), where the
  latter matrix is defined in \cref{eqn:krylov_popov_hermite}. In this case,
  the matrix \(D\) in the proof of \cref{lem:approx_generic_mindeg_exists} is
  the upper triangular \(d \times d\) Toeplitz matrix whose first row is \([X_0
  \;\; X_1 \;\; \cdots \;\; X_{d-2} \;\; X_{d-1}]\); it has determinant
  \(\Delta_{d,\ash} = X_0^d\). Thus, the \(\ash\)-minimal
  degree of the module of approximants is \((d,0) = \delta_{d,\ash}\) if
  and only if \(a_0 \neq 0\).

  If \(\apo_0 \neq 0\), one can derive the \(\ash\)-Popov approximant
  basis explicitly in terms of \(\apo\) and \(\bpo\). Let \(c = (\bpo \apo^{-1}) \rem x^d\)
  be the truncation at order \(d\) of the power series expansion of \(b
  a^{-1}\). Define the matrix \(P = [\begin{smallmatrix} x^d & 0 \\ -c & 1
  \end{smallmatrix}]\). Since \(\ash \le -d\),
  \(P\) is in \(\ash\)-Popov form (and, as a matter of fact, also in
  HNF). Furthermore, one easily verifies that each row of \(P\) is in
  \(\appmod{d}{\apo,\bpo}\), and that any element of this module is a
  \(\pR\)-linear combination of the rows of \(P\). Hence \(P\) is the
  \(\ash\)-Popov basis of \(\appmod{d}{\apo,\bpo}\), which confirms that
  the latter has \(\ash\)-minimal degree \((d,0)\).
  \qed
\end{example}

\begin{example}[uniform shift]
  Assume here that \(\ash=0\) and \(d > 0\).
  Let \((\apo,\bpo) \in \pR_{< d}^2\). The construction
  of the Krylov matrix leads to \(\mathcal{K}_{d,\ash}(\apo,\bpo) =
  \mathcal{K}\), where the latter matrix is defined in
  \cref{eqn:krylov_popov_hermite}. The shape of the matrix \(D\) formed by the
  first \(d\) rows of \(\mathcal{K}_{d,\ash}(X,Y)\) as in the proof of
  \cref{lem:approx_generic_mindeg_exists} differs slightly according to the
  parity of \(d\).  For example for \(d = 6\) and \(d=7\), the matrix \(D\) is
  \[
    \begin{bmatrix}
      X_0 & X_1 & X_2 & X_3 & X_4 & X_5 \\
      Y_0 & Y_1 & Y_2 & Y_3 & Y_4 & Y_5 \\
      0   & X_0 & X_1 & X_2 & X_3 & X_4 \\
      0   & Y_0 & Y_1 & Y_2 & Y_3 & Y_4 \\
      0   &  0  & X_0 & X_1 & X_2 & X_3 \\
      0   &  0  & Y_0 & Y_1 & Y_2 & Y_3
    \end{bmatrix}
    \text{ if } d=6, \text{ and }
    \begin{bmatrix}
      X_0 & X_1 & X_2 & X_3 & X_4 & X_5 & X_6 \\
      Y_0 & Y_1 & Y_2 & Y_3 & Y_4 & Y_5 & Y_6 \\
      0   & X_0 & X_1 & X_2 & X_3 & X_4 & X_5 \\
      0   & Y_0 & Y_1 & Y_2 & Y_3 & Y_4 & Y_5 \\
      0   &  0  & X_0 & X_1 & X_2 & X_3 & X_4 \\
      0   &  0  & Y_0 & Y_1 & Y_2 & Y_3 & Y_4 \\
      0   & 0   &  0  & X_0 & X_1 & X_2 & X_3
    \end{bmatrix}
    \text{ if } d = 7.
  \]
  The polynomial \(\Delta_{d,\ash}\) is the determinant of such a matrix
  \(D\). Here, it does not seem straightforward that this determinant is
  nonzero; if it is nonzero, it is homogeneous of degree \(d\). Recall that in
  this case, \(\delta_{d,\ash}\) is either \((d/2,d/2)\) if \(d\) is even,
  or \((\lceil d/2 \rceil,\lfloor d/2 \rfloor)\) if \(d\) is odd, which can be
  related to the number of shifts of \(X\) and \(Y\), respectively, appearing
  in \(D\), and to the exponents \(\mu_0\) and \(\mu_1\) as defined in the
  proof of \cref{lem:approx_generic_mindeg_exists}.
  \qed
\end{example}

\begin{example}[weakly unbalanced shift]
  Fix \(d=6\) and \(\ash = -2\), and let \((\apo,\bpo) \in \pR_{<
  d}^2\). Following \cite[Sec.\,7.1]{JeannerodNeigerSchostVillard2017}, the
  Krylov matrix is
  \[
    \mathcal{K}_{d,\ash}(\apo,\bpo) =
    \begin{bmatrix}
      \apo_0 & \apo_1 & \apo_2 & \apo_3 & \apo_4 & \apo_5 \\
      0      & \apo_0 & \apo_1 & \apo_2 & \apo_3 & \apo_4 \\
      0      &  0     & \apo_0 & \apo_1 & \apo_2 & \apo_3 \\
      \bpo_0 & \bpo_1 & \bpo_2 & \bpo_3 & \bpo_4 & \bpo_5 \\
      0      &  0     &  0     & \apo_0 & \apo_1 & \apo_2 \\
      0      & \bpo_0 & \bpo_1 & \bpo_2 & \bpo_3 & \bpo_4 \\
      0      &  0     &  0     & 0      & \apo_0 & \apo_1 \\
      0      &  0     & \bpo_0 & \bpo_1 & \bpo_2 & \bpo_3 \\
      0      &  0     &  0     & 0      & 0      & \apo_0 \\
      0      &  0     &  0     & \bpo_0 & \bpo_1 & \bpo_2 \\
      0      &  0     &  0     & 0      & 0      & 0      \\
      0      &  0     &  0     & 0      & \bpo_0 & \bpo_1 \\
      0      &  0     &  0     & 0      & 0      & \bpo_0 \\
      0      &  0     &  0     & 0      & 0      & 0
    \end{bmatrix}
    \in \mKK{2(d+1)}{d}.
  \]
  Its topmost \(6 \times 6\) submatrix contains \(4\) (resp.\ \(2\)) rows with
  coefficients of \(\apo\) (resp.\ \(\bpo\)) and its shifts. The results in
  \cite[Sec.\,7.2]{JeannerodNeigerSchostVillard2017} imply that this submatrix
  is invertible if and only if the \(\ash\)-minimal degree of
  \(\appmod{d}{\apo,\bpo}\) is \((4,2)\). One easily verifies that \((4,2) =
  \delta_{6,-2}\). For \(\Delta_{d,\ash}\), one can take the determinant
  of the matrix \(D\), which is the topmost \(6 \times 6\) submatrix of the
  above matrix with \(\apo_i\) replaced by \(X_i\) and \(\bpo_i\) replaced by
  \(Y_i\). Here as well, it does not seem straightforward that this determinant
  is nonzero; if it is nonzero, it is homogeneous of degree \(d\).
  \qed
\end{example}
% pring = PolynomialRing(QQ, 12, 'a')
% X = list(pring.gens()[:6])
% Y = list(pring.gens()[6:])
% K = matrix([
%       X,
%       [0] + X[:-1],
%       [0,0] + X[:-2],
%       Y,
%       [0,0,0] + X[:-3],
%       [0] + Y[:-1]
%     ])
% K.det()

The same idea shows the existence of \(\Gamma_{\points,\ash}\) for interpolant
modules. Fix pairwise distinct points \(\points\in\field^d\) and tuples
\(\aevs,\bevs \in \field^{d}\). From a linear algebra viewpoint, interpolants
in \(\intmod{\points}{\aevs,\bevs}\) form \(\field\)-linear combinations that
are zero between the vectors \(\aevs\), \(\bevs\), \(\aevs \Omega\), \(\bevs \Omega\),
\(\aevs \Omega^2\), \(\bevs \Omega^2\), etc., where \(\Omega = \diag{\points}
\in \mKK{d}{d}\). Thus, restricting to interpolants of degree at most \(d\),
these can be found in the left nullspace of either of the matrices
\begin{equation*}
  \label{eqn:krylov_popov_hermite_interpolant}%
  % \mathcal{K} =
  \begin{bmatrix}
    \aev_0            & \aev_1            & \cdots         & \aev_{d-1}                    \\
    \bev_0            & \bev_1            & \cdots         & \bev_{d-1}                    \\
    \aev_0 \point_0   & \aev_1 \point_1   & \cdots         & \aev_{d-1} \point_{d-1}       \\
    \bev_0 \point_0   & \bev_1 \point_1   & \cdots         & \bev_{d-1} \point_{d-1}       \\
    \aev_0 \point_0^2 & \aev_1 \point_1^2 & \cdots         & \aev_{d-1} \point_{d-1}^2     \\
    \bev_0 \point_0^2 & \bev_1 \point_1^2 & \cdots         & \bev_{d-1} \point_{d-1}^2     \\
    \vdots            &  \vdots           &                &  \vdots                       \\
    \aev_0 \point_0^d & \aev_1 \point_1^d & \cdots         & \aev_{d-1} \point_{d-1}^d     \\
    \bev_0 \point_0^d & \bev_1 \point_1^d & \cdots         & \bev_{d-1} \point_{d-1}^d     \\
  \end{bmatrix}
  \quad\text{and}\quad
  % \overline{\mathcal{K}} =
  \begin{bmatrix}
    \aev_0            & \aev_1            & \cdots         & \aev_{d-1}                    \\
    \aev_0 \point_0   & \aev_1 \point_1   & \cdots         & \aev_{d-1} \point_{d-1}       \\
    \aev_0 \point_0^2 & \aev_1 \point_1^2 & \cdots         & \aev_{d-1} \point_{d-1}^2     \\
    \vdots            &  \vdots           &                &  \vdots                       \\
    \aev_0 \point_0^d & \aev_1 \point_1^d & \cdots         & \aev_{d-1} \point_{d-1}^d     \\
    \bev_0            & \bev_1            & \cdots         & \bev_{d-1}                    \\
    \bev_0 \point_0   & \bev_1 \point_1   & \cdots         & \bev_{d-1} \point_{d-1}       \\
    \bev_0 \point_0^2 & \bev_1 \point_1^2 & \cdots         & \bev_{d-1} \point_{d-1}^2     \\
    \vdots            &  \vdots           &                &  \vdots                       \\
    \bev_0 \point_0^d & \bev_1 \point_1^d & \cdots         & \bev_{d-1} \point_{d-1}^d     \\
  \end{bmatrix},
\end{equation*}
both in \(\mKK{2(d+1)}{d}\). More generally, considering the suitable
\(s\)-shifted Krylov matrix leads to the following result, which is analogous
to \cref{lem:approx_generic_mindeg_exists} and proved similarly.

\begin{lemma}
  \label{lem:interp_generic_mindeg_exists}%
  There exists a polynomial \(\Gamma_{\points,\ash}\) in \(2d\) variables
  \((X_0,\ldots,X_{d-1},Y_0,\ldots,Y_{d-1})\) over \(\field\), either zero or
  of total degree \(d\), such that for any pairwise distinct
  \(\points\in\field^d\) and any \(\aevs,\bevs \in \field^d\),
  \(\intmod{d}{\aevs,\bevs}\) has \(\ash\)-minimal
  degree \(\delta_{d,\ash}\) if and only if
  \(\Gamma_{\points,\ash}(\aev_0,\ldots,\aev_{d-1},\bev_0,\ldots,\bev_{d-1})
  \neq 0\).
\end{lemma}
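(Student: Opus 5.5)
We mirror the proof of \cref{lem:approx_generic_mindeg_exists}, replacing the striped Krylov matrix of approximants by the one of interpolants. Fix pairwise distinct \(\points\) and put \(\Omega = \diag{\points}\). A relation \((p,q)\in\pR^2\) with \(\deg(p),\deg(q)\le d\) lies in \(\intmod{\points}{\aevs,\bevs}\) exactly when the coefficient vector of \((p,q)\), interleaved or stacked, is in the left nullspace of the matrix whose rows are \(\aevs\Omega^j\) and \(\bevs\Omega^j\), \(0\le j\le d\). This is because \(p(\point_i)\aev_i = \sum_j p_j \aev_i\point_i^j\). The module \(\intmod{\points}{\aevs,\bevs}\) is the relation module \(\relmod{\modulus}{\apo,\bpo}\) with \(\modulus=\prod_i(x-\point_i)\). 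This is exactly the setting of vector interpolation modules treated in \cite[Sec.\,7]{JeannerodNeigerSchostVillard2017}, where the multiplication matrix is the diagonal \(\Omega\) instead of the nilpotent shift used for \(x^d\).

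The first step is to build \(\mathcal{K}_{\points,\ash}(\aevs,\bevs)\in\mKK{2(d+1)}{d}\). We take the row permutation, prescribed by \(\ash\) as in \cite[Sec.\,7.1]{JeannerodNeigerSchostVillard2017}, of the rows \(\aevs\Omega^j,\bevs\Omega^j\). This permutation depends only on \(d\) and \(\ash\), not on \(\aevs,\bevs\), and not even on \(\points\). We then observe that the first \(d\) rows of this ordering consist of \(\delta_0\) rows \(\aevs\Omega^j\) with \(j<\delta_0\) and \(\delta_1\) rows \(\bevs\Omega^j\) with \(j<\delta_1\), where \((\delta_0,\delta_1)=\delta_{d,\ash}\). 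This is a counting statement about the shifted ordering \((j+\text{shift})\) with tie-breaking, and it is the same combinatorial fact already used in the approximant case, since the ordering is independent of the multiplication matrix.

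Next, define \(D\) as the top \(d\times d\) submatrix with \(\aev_i,\bev_i\) replaced by indeterminates \(X_i,Y_i\), the points \(\point_i\) being kept as constants, and set \(\Gamma_{\points,\ash}=\det(D)\). Each row of \(D\) is linear in a single block of variables: a row \(\aevs\Omega^j\) has entries \(X_i\point_i^j\). Hence \(\det(D)\) is either zero or homogeneous of total degree exactly \(d\). Evaluating \(D\) at \((\aevs,\bevs)\) gives the top block of the actual Krylov matrix. Therefore \(\Gamma_{\points,\ash}(\aevs,\bevs)\neq0\) if and only if this matrix has generic row rank profile \((0,\ldots,d-1)\). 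That in turn holds if and only if the minimal exponents \(\mu_0,\mu_1\) are \((\delta_0,\delta_1)\), where \(\mu_0\) (resp.\ \(\mu_1\)) is the first \(j\) for which \(\aevs\Omega^j\) (resp.\ \(\bevs\Omega^j\)) is a linear combination of earlier rows.

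The main obstacle is the last link: identifying \((\mu_0,\mu_1)\) with the \(\ash\)-minimal degree of \(\intmod{\points}{\aevs,\bevs}\). We plan to invoke \cite[Cor.\,7.11 and Lem.\,7.12]{JeannerodNeigerSchostVillard2017}, which hold for any multiplication matrix, here \(\Omega\), and so cover interpolants verbatim. If a self-contained argument is preferred, it goes as follows. A first linear dependency involving \(\aevs\Omega^{\mu_0}\) gives a relation whose \(\ash\)-pivot is the first entry, of degree \(\mu_0\), and symmetrically for \(\bevs\Omega^{\mu_1}\). Minimality of the shifted ordering then shows that these two relations form an \(\ash\)-weak Popov basis, because their pivot degrees are minimal among relations with the respective pivot position. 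Finally, the determinant degree is \(\mu_0+\mu_1\le d\), in accordance with \cref{cor:determinant_relmod}.
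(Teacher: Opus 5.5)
Your proposal is correct and takes essentially the same route as the paper, which proves this lemma just like \cref{lem:approx_generic_mindeg_exists}: you build the \(\ash\)-shifted striped Krylov matrix from the rows \(\aevs\Omega^j,\bevs\Omega^j\) with \(\Omega=\diag{\points}\), and take \(\Gamma_{\points,\ash}\) to be the determinant of its top \(d\times d\) block, which is zero or homogeneous of degree \(d\) and has \(\delta_0\) and \(\delta_1\) rows of each kind. You then link the generic rank profile to the \(\ash\)-minimal degree through the Krylov-matrix results of Jeannerod, Neiger, Schost and Villard, which hold for an arbitrary multiplication matrix. Your optional self-contained argument for that last link is a fair sketch, but it only asserts the generation step, namely that the two minimal relations span the whole module, including relations of degree above \(d\), which you would first reduce modulo \(\modulus\).
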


% % For one-column wide figures use
% \begin{figure}
% % Use the relevant command to insert your figure file.
% % For example, with the graphicx package use
%   \includegraphics{example.eps}
% % figure caption is below the figure
% \caption{Please write your figure caption here}
% \label{fig:1}       % Give a unique label
% \end{figure}
% %
% % For two-column wide figures use
% \begin{figure*}
% % Use the relevant command to insert your figure file.
% % For example, with the graphicx package use
%   \includegraphics[width=0.75\textwidth]{example.eps}
% % figure caption is below the figure
% \caption{Please write your figure caption here}
% \label{fig:2}       % Give a unique label
% \end{figure*}
% %
% % For tables use
% \begin{table}
% % table caption is above the table
% \caption{Please write your table caption here}
% \label{tab:1}       % Give a unique label
% % For LaTeX tables use
% \begin{tabular}{lll}
% \hline\noalign{\smallskip}
% first & second & third  \\
% \noalign{\smallskip}\hline\noalign{\smallskip}
% number & number & number \\
% number & number & number \\
% \noalign{\smallskip}\hline
% \end{tabular}
% \end{table}

\end{document}